\newif\ifblind
\newif\ifdraft
\numberwithin{equation}{section}
\declaretheoremstyle[bodyfont=\it,qed=\qedsymbol]{noproofstyle}
\declaretheorem[numberlike=equation]{observation}
\declaretheorem[name=Observation,numbered=no]{observation*}
\declaretheorem[numberlike=equation]{theorem}
\declaretheorem[name=Theorem,numbered=no]{theorem*}
\declaretheorem[numberlike=equation]{lemma}
\declaretheorem[name=Lemma,numbered=no]{lemma*}
\declaretheorem[numberlike=equation]{corollary}
\declaretheorem[name=Corollary,numbered=no]{corollary*}
\declaretheorem[name=Proposition,numbered=no]{proposition*}
\declaretheorem[numberlike=equation]{claim}
\declaretheorem[name=Claim,numbered=no]{claim*}
\declaretheorem[name=Conjecture,numbered=no]{conjecture*}
\declaretheorem[numberlike=equation]{question}
\declaretheorem[name=Question,numbered=no]{question*}
\declaretheoremstyle[bodyfont=\it,qed=$\lozenge$]{defstyle} 
\declaretheorem[numberlike=equation,style=defstyle]{definition}
\declaretheorem[unnumbered,name=Definition,style=defstyle]{definition*}
\declaretheorem[unnumbered,name=Example,style=defstyle]{example*}
\declaretheorem[unnumbered,name=Notation=defstyle]{notation*}
\declaretheorem[unnumbered,name=Construction,style=defstyle]{construction*}
\declaretheorem[numberlike=equation,style=defstyle]{remark}
\declaretheorem[unnumbered,name=Remark,style=defstyle]{remark*}
\renewcommand{\phi}{\varphi}
\renewcommand{\epsilon}{\varepsilon}
\newcommand{\size}{\operatorname{size}}
\newcommand{\SP}{\Sigma\Pi}
\newcommand{\SPsize}[1]{(\SP)^k\operatorname{-size}}
\newcommand{\PSymToESym}{\operatorname{PSymToESym}}
\newcommand{\ESymToPSym}{\operatorname{ESymToPSym}}
\newcommand{\DivTest}{\operatorname{DivTest}} 
\newcommand{\shortECCC}[2]{\texttt{\href{http://eccc.hpi-web.de/report/\ifnumcomp{#1}{>}{93}{19}{20}#1/#2/}{eccc:TR#1-#2}}}
\newcommand{\parseECCC}[1]{% Takes a string of the form TRxx/xxx or
%                          % TRxx-xxx and returns short ECCC link
\StrSubstitute{#1}{TR}{}[\tmpstring]%
\IfSubStr{\tmpstring}{/}{ %assuming string is of the form TRxx/xxx
\StrBefore{\tmpstring}{/}[\ecccyear]%
\StrBehind{\tmpstring}{/}[\ecccreport]%
}{% assuming string is of the form TRxx-xxx
\StrBefore{\tmpstring}{-}[\ecccyear]%
\StrBehind{\tmpstring}{-}[\ecccreport]%
}%
\shortECCC{\ecccyear}{\ecccreport}}
\newcommand{\homog}{\operatorname{Hom}}
\newcommand*\samethanks[1][\value{footnote}]{\footnotemark[#1]}
\newcommand{\RPnote}[1]{\textcolor{WildStrawberry}{\guillemotleft RP: #1 \guillemotright}}
\newcommand{\MKnote}[1]{\textcolor{BlueGreen}{\guillemotleft Mrinal: #1 \guillemotright}}
\newcommand{\VRnote}[1]{\textcolor{Blue}{\guillemotleft VR: #1 \guillemotright}}
\newcommand{\SBnote}[1]{\textcolor{OliveGreen}{\guillemotleft SB: #1 \guillemotright}}
\newcommand{\SSnote}[1]{\textcolor{BrickRed}{\guillemotleft SS: #1 \guillemotright}}
\newcommand{\gitinfonotecolour}{Gray}
\newcommand{\easteregg}{}
\newcommand{\RPnote}[1]{}
\newcommand{\MKnote}[1]{}
\newcommand{\VRnote}[1]{}
\newcommand{\SBnote}[1]{}
\newcommand{\SSnote}[1]{}
\newcommand{\gitinfonotecolour}{white}
\newcommand{\easteregg}{Easter-egg? Didn't factor that in...}
\newcommand{\ignore}[1]{}
\newcommand{\gitinfonote}{git info:~\gitAbbrevHash\;,\;(\gitAuthorIsoDate)\; \;\gitVtag}
\newcommand{\Res}[3]{\ensuremath{\operatorname{Res}_{#1}(#2,#3)}} 
\renewcommand{\vec}[1]{\ensuremath{\bm{#1}}}
\newcommand{\K}{\ensuremath{\mathbb{K}}}
\newcommand{\trunc}{\;\mathrm{trunc}\;}
\newcommand{\psym}{\mathbf{Psym}}
\newcommand{\esym}{\mathbf{Esym}}
\newcommand{\ki}{\mathbf{KI}}
\newcommand{\voidenvironment}[1]{%
  \expandafter\providecommand\csname env@#1@save@env\endcsname{}%
  \expandafter\providecommand\csname env@#1@process\endcsname{}%
  \@ifundefined{#1}{}{\RenewEnviron{#1}{}}%
}
\title{Deterministic factorization of constant-depth algebraic circuits in subexponential time} 
\author{
    {Somnath Bhattacharjee \thanks{University of Toronto, Canada. Email: \texttt{somnath.bhattacharjee@mail.utoronto.ca}}}
    \and
    {Mrinal Kumar \thanks{Tata Institute of Fundamental Research, Mumbai, India. Email: \texttt{\{mrinal, varun.ramanathan, ramprasad\}@tifr.res.in}.  Research supported by the Department of Atomic Energy, Government of India, under project number RTI400112, and in part by Google and SERB Research Grants. }}
    \and
    {Varun Ramanathan{\samethanks[2]}}
    \and
    {Ramprasad Saptharishi{\samethanks[2]}}
    \and
    {Shubhangi Saraf \thanks{University of Toronto, Canada. Email: \texttt{shubhangi.saraf@utoronto.ca}}}
}
\date{}
\begin{document}

\maketitle
\begin{abstract}
While efficient randomized algorithms for factorization of polynomials given by algebraic circuits have been known for decades, obtaining an even \emph{slightly non-trivial} deterministic algorithm for this problem has remained an open question of great interest. This is true even when the input algebraic circuit has additional structure, for instance, when it is a constant-depth circuit. Indeed, no efficient deterministic algorithms are known even for the seemingly easier problem of factoring sparse polynomials or even the problem of testing the irreducibility of sparse polynomials. 

In this work, we make progress on these questions: we design a deterministic algorithm that runs in subexponential time, and when given as input a constant-depth algebraic circuit $C$ over the field of rational numbers, it outputs  algebraic circuits (of potentially unbounded depth) for all the irreducible factors of $C$, together with their multiplicities. In particular, we give the first subexponential time deterministic algorithm for factoring sparse polynomials. 

For our proofs, we rely on a finer understanding of the structure of power series roots of constant-depth circuits and the analysis of the Kabanets-Impagliazzo  generator. In particular, we show that the Kabanets-Impagliazzo generator constructed using low-degree hard polynomials (explicitly constructed in the work of Limaye, Srinivasan \& Tavenas) preserves not only the non-zeroness of small constant-depth circuits (as shown by Chou, Kumar \& Solomon), but also their irreducibility and the irreducibility of their factors.  
\end{abstract}

\newpage
\tableofcontents

\section{Introduction} \label{sec:intro}
The main problem of interest in this work is that of polynomial factorization --- \emph{given a polynomial as input, output its decomposition into a product of irreducible polynomials.} 

For this paper, we work in the setting where the input is a multivariate polynomial which is specified by a (small) algebraic circuit computing it, and we are over the field $\Q$ of rational numbers. This problem saw significant progress starting in the 1980s where a sequence of results culminating in the works of Kaltofen \cite{K89} and Kaltofen \& Trager \cite{KT90} gave a randomized algorithm that when given a size $s$ algebraic circuit $C$ computing an $n$ variate degree $d$ polynomial over $\Q$, terminated in $\poly(s,d,n)$ time and output algebraic circuits for all the irreducible factors of $C$ (together with their multiplicities). A surprising fact that is implicit in these results is a \emph{closure result} for polynomials computable by small circuits - all irreducible factors of $C$ have algebraic circuits of size $\poly(s,d,n)$. This is necessary for us to be able to entertain any hopes of having a polynomial time algorithm for this problem since a priori it is not even clear if there is a description of the output (namely, the irreducible factors of $C$) that is polynomially bounded in the input parameters $s, d, n$. 

These polynomial factorization algorithms represent a significant landmark in our understanding of a fundamental problem in computational algebra on their own. However, in hindsight, the impact of this line of research seems to go far beyond this original context - these results and the techniques discovered in the course of their proofs have since found many diverse applications in algorithm design, pseudorandomness, coding theory and complexity theory, e.g. \cite{S97, GS99, Alekhnovich2005, bogdanov05, DGV24}. 

Among the most important problems in this broad area of polynomial factorization that continue to be open is that of derandomizing the results of Kaltofen \cite{K89} and Kaltofen \& Trager \cite{KT90}. In fact, even before randomized factoring algorithms were studied for general circuits, they were studied in the setting of sparse polynomials. A beautiful work by von zur Gathen and Kaltofen ~\cite{GathenKaltofen85} gave the first randomized factoring algorithm for sparse polynomials\footnote{The running time obtained was polynomial in the sparsity of the factors}. Even for sparse polynomials, the problem of derandomizing these factoring algorithms is of great interest and has received considerable attention over the last decade or two. 

We now know from the work of Shpilka \& Volkovich \cite{SV10} that efficient deterministic polynomial factorization is at least as hard as efficient deterministic polynomial identity testing- the reduction simply being that to check whether a multivariate $f(\vecx)$ is non-zero, we check if the polynomial $(f(\vecx) + yz)$ is irreducible. This connection clearly continues to hold for structured sub-classes for algebraic circuits like formulas, branching programs, sparse polynomials and constant-depth circuits. Thus, the task of derandomizing polynomial factorization for any such subclass of circuits must necessarily be preceded by a non-trivial deterministic polynomial identity testing algorithm for the subclass. 

Perhaps a bit surprisingly,  Kopparty, Saraf \& Shpilka \cite{KSS15} also showed a reduction in the other direction - efficient deterministic polynomial identity testing for algebraic circuits implies efficient deterministic polynomial factorization for algebraic circuits. This reduction between the problems of polynomial factorization and polynomial identity testing continue to hold in both the black box model (where we only have query access to the input algebraic circuits) and the white box model (where we can look inside the given circuits). An important aspect of the reduction in \cite{KSS15} is that even for the task of factorizing polynomials computed by very structured circuits like formulas, constant-depth circuits or even sparse polynomials, the PIT instances that we encounter on the way are for seemingly much more powerful circuit classes like algebraic branching programs. Thus, non-trivial PIT algorithms for a structured circuit class do not immediately yield a non-trivial deterministic algorithm for factorization of polynomials computed in this class. 

A very natural example of this phenomenon is that of sparse polynomials 
 (which are depth 2 circuits) and even general constant-depth circuits. 
 For the case of sparse polynomials, we have known polynomial-time PIT algorithms for a while from the work of Klivans \& Spielman \cite{KS01}, and for arbitrary constant-depth circuits, we also now have non-trivial deterministic algorithms for polynomial identity testing of constant-depth circuits. These follow from the lower bounds for constant-depth circuits in the recent work of Limaye, Srinivasan and Tavenas \cite{LST21} and the connections between hardness and derandomization for such circuits in the work of Chou, Kumar and Solomon \cite{ChouKS19}\footnote{An alternative deterministic subexponential time algorithm for PIT for constant-depth circuits  was given by Andrews and Forbes \cite{AF22} in a subsequent work.}. However,  these deterministic polynomial identity tests do not seem to immediately imply non-trivial deterministic factorization results for either sparse polynomials or polynomials computed by constant-depth circuits.  In fact, the following seemingly simpler question also appears to be open. 
\begin{question}\label{q:sparse-irreducibility}
  Design a subexponential-time deterministic algorithm that when given a sparse polynomial as input, decides if it is irreducible. 
\end{question}
More generally, as Forbes \& Shpilka mention (Questions 1.4 and 4.1 in \cite{Forbes-Shpilka-survey})  in their survey on polynomial factorization, very natural questions around derandomization of polynomial factorization algorithms are wide open. 
\subsection{Our results}
Our main result in this paper addresses this question and more generally, the question of deterministic polynomial factorization for constant-depth circuits. More precisely, we prove the following. 
\begin{theorem}[Informal version of \autoref{thm:main-algorithm}] \label{thm:main-intro}
    For every constant $\epsilon > 0$, constant $\Delta \in \N$ and sufficiently large $n$, the following statement is true. 
    
    There is a deterministic algorithm that takes as input an algebraic circuit $C$ over the field $\Q$ on $n$ variables with depth $\Delta$, and size and  degree $\poly(n)$, runs in time $\exp(O(n^{\epsilon}))$ and outputs algebraic circuits (of potentially unbounded depth) for all irreducible factors of $C$, together with their multiplicities. 
\end{theorem}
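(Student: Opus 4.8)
The plan is to reduce the problem to factoring an explicitly given polynomial in few variables, and then to Hensel-lift the resulting factorization back up to circuits for the factors of $C$. First I would preprocess $C$ (by an invertible linear change of variables that I undo at the very end) so that it is monic in $x_n$ of degree $d=\deg C$: the coefficient of $x_n^d$ after substituting $x_i\mapsto x_i+\alpha_i x_n$ for $i<n$ equals $C_{(d)}(\alpha_1,\dots,\alpha_{n-1},1)$, where $C_{(d)}$ is the top-degree homogeneous part of $C$, and this is a nonzero constant-depth circuit of size $\poly(s,d)$; hence a good $\vec\alpha$ is found deterministically in time $\exp(O(n^\epsilon))$ by scanning the Chou--Kumar--Solomon / Limaye--Srinivasan--Tavenas hitting set for small constant-depth circuits. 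Now let $\mathcal{G}\colon\Q^k\to\Q^{n-1}$ be the Kabanets--Impagliazzo generator built from the LST hard polynomials, with $k=O(n^{\epsilon'})$ seed variables and component degree $\poly(n)$, and set $\widetilde C(\vec y,x_n):=C(\mathcal{G}(\vec y),x_n)$, which is again monic in $x_n$ of degree $d$. The key structural input — the new contribution highlighted in the introduction — is that $\mathcal{G}$ preserves the irreducibility of the factors of $C$: if $C=\prod_i g_i^{e_i}$ is the irreducible factorization with each $g_i$ normalized monic in $x_n$, then each $\widetilde g_i:=g_i(\mathcal{G}(\vec y),x_n)$ is irreducible over $\Q$ and the $\widetilde g_i$ are pairwise non-associate, so $\widetilde C=\prod_i\widetilde g_i^{\,e_i}$ is the irreducible factorization of $\widetilde C$.

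Given this, I would compute $\widetilde C$ in dense representation — it has $(d\cdot\poly(n))^{k+1}$ monomials, which is $\exp(O(n^\epsilon))$ for a suitable $\epsilon'<\epsilon$ — by evaluation and interpolation, and factor it using the classical deterministic polynomial-time algorithm for factoring densely represented polynomials over $\Q$ (which rests on LLL, and is available precisely because we are in characteristic zero). This recovers the polynomials $\widetilde g_i$ together with the multiplicities $e_i$. Next I would find a point $\vec b\in\{0,1,\dots,\poly(n)\}^k$ such that the univariate polynomials $\widetilde g_i(\vec b,x_n)\in\Q[x_n]$ are pairwise coprime; such a $\vec b$ exists because the $\widetilde g_i$ are distinct monic irreducibles, hence pairwise coprime over $\Q(\vec y)$, so the product of the relevant resultants is a nonzero polynomial of degree $\poly(n)$ in $\vec y$ — and I can simply test grid points using gcd computations in $\Q[x_n]$. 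Setting $\vec a:=\mathcal{G}(\vec b)\in\Q^{n-1}$, we then have $\widetilde g_i(\vec b,x_n)=g_i(\vec a,x_n)$, and $C(\vec a,x_n)=\prod_i g_i(\vec a,x_n)^{e_i}$ is a factorization into pairwise coprime monic factors.

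Now I would run multivariate Hensel lifting in the complete local ring $\Q[[x_1-a_1,\dots,x_{n-1}-a_{n-1}]][x_n]$, starting from $C\equiv\prod_i g_i(\vec a,x_n)^{e_i}\pmod{\mathcal M}$ with $\mathcal M=\langle x_1-a_1,\dots,x_{n-1}-a_{n-1}\rangle$. The lift is unique, and since $\prod_i g_i^{e_i}$ is itself such a factorization of $C$, the lifted factors are exactly the $g_i^{e_i}$. Because $g_i^{e_i}$ has total degree at most $d$, after $O(\log d)$ doubling steps the truncation $g_i^{e_i}\bmod\mathcal M^{2^t}$ stabilizes to $g_i^{e_i}$ itself; carrying the iteration out symbolically (each step costs a $\poly(d)$ overhead for circuit truncation and roughly doubles the size) produces a circuit of size $\poly(s,d,n)$ for each $g_i^{e_i}$ — incidentally reproving the closure of small circuits under factoring along the way. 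Finally I would extract an $e_i$-th root: evaluating the circuit for $g_i^{e_i}$ at a point where it is nonzero recovers $g_i(\vec a,c_n)\in\Q$ as the (essentially unique) rational $e_i$-th root, after which expanding $(g_i^{e_i})^{1/e_i}$ as a power series about that point and truncating at degree $d$ yields a $\poly(s,d,n)$-size circuit for $g_i$. Undoing the preprocessing change of variables gives circuits for the irreducible factors of the original $C$ with their multiplicities $e_i$, and the running time is dominated by the explicit factorization of $\widetilde C$, namely $\exp(O(n^\epsilon))$.

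The heart of the argument — and the step I expect to be the main obstacle — is establishing that the KI generator preserves irreducibility of the factors of a small constant-depth circuit, i.e.\ that $g_i(\mathcal{G}(\vec y),x_n)$ stays irreducible. Preserving mere non-zeroness is the Chou--Kumar--Solomon analysis; preserving irreducibility requires controlling, along the image of $\mathcal{G}$, the power-series roots of $C$ over $\Q[[x_1,\dots,x_{n-1}]]$ and the minimal polynomials they generate, and arguing that a hypothetical nontrivial factorization of $g_i\circ\mathcal{G}$ would force the vanishing of some small constant-depth circuit derived from $C$ that $\mathcal{G}$ is guaranteed to keep nonzero. This is exactly where the ``finer understanding of the structure of power series roots of constant-depth circuits'' is needed, and I would expect the bulk of the technical work to be devoted to making it precise. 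A deliberate design choice in the above is never to pass to the squarefree part of $C$ — the radical is not obviously computed by a small constant-depth circuit — so that every deterministic search in the algorithm only needs hitting sets for genuinely constant-depth circuits.
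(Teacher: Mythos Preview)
Your overall plan is sound and is a legitimate alternative to the paper's algorithm: both hinge on the same structural input (the irreducibility-preservation property of the Kabanets--Impagliazzo map, \autoref{thm:irreducibility-preservation}), and once that is granted, reducing to dense factorization in $n^{\epsilon}$ variables and lifting back is routine. The algorithmic difference is in the lifting. The paper first computes the squarefree decomposition of $C$ via Andrews--Wigderson (\autoref{thm:andrews-wigderson-squarefree-decomposition}), then for each squarefree part lifts a single root over a number field $\Q[u]/(H(u))$ by Newton iteration and recovers the irreducible factor as the root's minimal polynomial by solving a linear system, with one more use of the KI map to locate a nonzero point of the determinant for Strassen division elimination. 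Your route --- Hensel-lift the pairwise-coprime blocks $g_i(\vec a,x_n)^{e_i}$ directly and then take $e_i$-th roots --- also works and has the pleasant feature of avoiding both the extension field and the linear system.

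One correction is worth flagging: your stated reason for never passing to the squarefree part is factually wrong. Andrews and Wigderson show precisely that the squarefree part (indeed the full squarefree decomposition) of a polynomial with a small constant-depth circuit is again given by a small constant-depth circuit, and the paper invokes this as its very first step. This matters for your argument because the formal irreducibility-preservation theorem is stated and proved only for \emph{squarefree}, $T$-regularized $P$ with $P(0,\veczero,z)$ squarefree; extending it to a non-squarefree $C$ --- and in particular justifying your assertion that the $\widetilde g_i$ remain pairwise non-associate, which you need in order to read off the correct multiplicities $e_i$ from the dense factorization of $\widetilde C$ --- goes through by first passing to the squarefree part, whose discriminant is then a nonzero constant-depth circuit kept nonzero by the KI map. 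So your plan is correct, but it is not in fact independent of the Andrews--Wigderson input you tried to design around; once you accept that result, the paper's upfront squarefree decomposition is the cleaner way to handle multiplicities.
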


This result in particular also gives the first subexponential deterministic factoring algorithm for sparse polynomials (which is the case when $\Delta =2$).

At the heart of our proof is the deterministic construction of a variable reduction map that reduces the number of variables in a constant-depth circuit substantially, while preserving its \emph{factorization profile}. The following theorem is our main technical result.\footnote{The theorem follows immediately from \autoref{thm:irreducibility-preservation} and our deterministic implementation of some (fairly standard) preprocessing steps in the factorization algorithm. In fact, by a slight modification of our proofs, a stronger version of this theorem can be shown to be true where the time complexity of constructing the map $\Gamma_{\epsilon, \Delta}$ is quasipolynomially bounded in $n$. However, this does not improve the overall time complexity of the algorithm in \autoref{thm:main-intro} since some steps in the algorithm, for instance, that of deterministic factorization of $n^{\epsilon}$ variate polynomials of degree $\poly(n)$, still run in time $\exp(O(n^{\epsilon}))$. } 

\begin{theorem}[Informal version of \autoref{thm:irreducibility-preservation}] \label{thm:main-intro-irreducibility-preserving}
    For every $\epsilon > 0$, every constant $\Delta \in \N$ and all sufficiently large $n$, there is a polynomial map $\Gamma_{\epsilon, \Delta}:\Q^{n^{\epsilon}} \to \Q^n$ of degree $o(\log n)$ with the following properties. 
    \begin{itemize}
        \item Given $\epsilon, \Delta$, the map $\Gamma_{\epsilon, \Delta}$ can be constructed deterministically in time $\exp(O(n^{\epsilon}))$. 
        \item $\Gamma_{\epsilon, \Delta}$ is a variable reduction map that preserves the irreducibility of $n$-variate polynomials that can be computed by depth $\Delta$ circuits of size and degree $\poly(n)$. In other words, an $n$-variate algebraic circuit $C$ of size and degree $\poly(n)$ is irreducible if and only if the $n^{\epsilon}$-variate polynomial $\Hat{C}$ obtained from $C$ by composing it with $\Gamma_{\epsilon, \Delta}$ is irreducible. 
        \item More generally, $\Gamma_{\epsilon, \Delta}$ is a variable reduction map that preserves the irreducibility of the factors of $n$-variate polynomials that can be computed by depth $\Delta$ circuits of size and degree $\poly(n)$. 
    \end{itemize}
\end{theorem}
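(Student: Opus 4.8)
The overall strategy I would pursue is to combine two ingredients: first, a structural result saying that the factors of a small constant-depth circuit admit small circuits — in fact, circuits that are in some weak sense "low-depth-ish" or at least amenable to the hitting-set machinery — and second, a hitting-set generator that is hard enough to fool the relevant class of circuits, namely the Kabanets–Impagliazzo (KI) generator instantiated with the low-degree hard polynomials from Limaye–Srinivasan–Tavenas. The map $\Gamma_{\epsilon,\Delta}$ will be (essentially) this KI generator with appropriately chosen seed length $n^\epsilon$ and low degree $o(\log n)$, possibly composed with some standard preprocessing (a generic linear shift to make the circuit monic in one variable, separating repeated factors, and reducing to the squarefree case). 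I would first carefully isolate the circuit class $\mathcal{D}$ that needs to be fooled: it must contain not only $C$ itself but also the resultants / discriminants / and the auxiliary polynomials whose non-vanishing certifies that a factorization does not "accidentally split" under $\Gamma$. The key point of the whole argument, already advertised in the abstract, is that these auxiliary polynomials are still computable by circuits that are only mildly larger than constant-depth — this is where one leans on the LST lower bounds plus the Chou–Kumar–Solomon hardness-to-pseudorandomness connection, which already handles non-zeroness for constant-depth circuits.

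The concrete steps, in order: (1) Reduce to the squarefree, monic case via a deterministic linear transformation — here I would invoke the preprocessing lemmas the excerpt promises, so that $C = \prod_i C_i$ with distinct irreducible $C_i$, each monic in a distinguished variable $x_n$, and with $C$ and $\partial C/\partial x_n$ coprime. (2) For the "only if" direction (if $C$ is irreducible then $\widehat{C}$ is irreducible), argue the contrapositive: if $\widehat C = A \cdot B$ nontrivially, then either this lifts back to a genuine factorization of $C$ (contradicting irreducibility), or it must be "spurious," i.e. some polynomial witnessing the genuine irreducibility of $C$ — roughly a resultant-type certificate or a Bertini/Noether-style witness polynomial — vanishes at the image of $\Gamma$, and I would show this witness polynomial is computable by a circuit in the class $\mathcal D$, so the KI generator's pseudorandomness forbids this. (3) For the "if" direction (if $\widehat C$ irreducible then $C$ irreducible), this is the easy direction: any factorization $C = A B$ composes to a factorization $\widehat C = \widehat A \widehat B$, and one only needs that the composition does not make a nontrivial factor become a unit, i.e. that $\widehat A$ and $\widehat B$ remain nonconstant — again a non-vanishing condition on leading coefficients, handled by the same generator fooling constant-depth circuits. (4) For the more general statement about factors, apply steps (2)–(3) to each irreducible factor $C_i$ in place of $C$, using that each $C_i$ has a $\poly(n)$-size circuit by the Kaltofen closure theorem — but crucially, I would need the generator to fool circuits of this somewhat-larger-but-still-controlled size and depth, so the seed length and the hardness of the LST polynomial must be set to accommodate it.

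The main obstacle — and the technical heart of the paper, I expect — is step (2): controlling the complexity of the "irreducibility certificate" polynomials well enough that the KI generator provably fools them. The factors of $C$ themselves are not constant-depth (only $\poly$-size of unbounded depth, by Kaltofen), so one cannot directly feed them to the Chou–Kumar–Solomon / LST pipeline, which is tuned to constant-depth circuits. The resolution, I anticipate, is to work not with the factors directly but with their \emph{power-series roots} truncated to low degree: a root of $C$ (as a polynomial in $x_n$) over the power series ring in the other variables can be approximated by a low-degree truncation that \emph{is} computable by a small constant-depth circuit, because Newton iteration applied to a constant-depth circuit, run for $O(\log d)$ steps and truncated, stays constant-depth up to a mild blow-up — this is the "finer understanding of the structure of power series roots of constant-depth circuits" the abstract refers to. One then phrases the whole irreducibility question in terms of these truncated roots and their minimal polynomials, so that every object whose non-vanishing must be preserved genuinely lives (after truncation) in the constant-depth world that the LST-based generator is designed to handle. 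Getting the truncation degrees, the circuit-size bookkeeping, and the interaction with the $o(\log n)$ degree bound on $\Gamma$ all to line up simultaneously is where the real work lies; the rest is assembling standard factorization-algorithm preprocessing into a deterministic form.
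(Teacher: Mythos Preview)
Your high-level plan is essentially the paper's: the map $\Gamma_{\epsilon,\Delta}$ is indeed the KI generator instantiated with the LST low-degree hard polynomials, the preprocessing (monic in $z$, squarefree, $T$-regularized) is as you describe, the easy direction is as you say, and the heart of the matter is showing the generator fools the PIT instances that certify irreducibility. You also correctly identify that those instances are built from approximate power-series roots and that quadratic-convergence ($O(\log k)$-step) Newton iteration is what makes the bookkeeping work.

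However, there is a subtle but important gap in your account of \emph{why} the generator fools these instances. You write that ``a low-degree truncation [of the root] \ldots\ is computable by a small constant-depth circuit'' and that after truncation ``every object whose non-vanishing must be preserved genuinely lives \ldots\ in the constant-depth world.'' This is not how the paper proceeds, and it is not clear your version can be made to work. The approximate root $\phi_{k-1}$ must be lifted to order $T^k$ with $k > \deg_T(P)$, i.e.\ $k=\poly(n)$, and the paper does \emph{not} know how to compute $\phi_{k-1}$ itself by a small constant-depth circuit. Truncating $\phi_{k-1}$ to low $T$-degree destroys the divisibility certificate; truncating it to low $\vecx$-degree gives an object that is no longer a root of anything. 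So one cannot simply ``put everything in the constant-depth world and then apply the off-the-shelf generator.''

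What the paper actually proves (\autoref{lem:low-deg-w-components-of-roots-mod-T}) is more delicate: for every $\ell$, the degree-$\le\ell$ homogeneous part of $\phi_{k-1}$ \emph{in the $\vecx$-variables} has a constant-depth circuit of size $\poly(s,D)\cdot(\log k)^{\poly(\ell)}$ --- this is where the $O(\log k)$-step quadratic Newton iteration is used, and the subtlety (see \autoref{rmk:why-quadratic-convergence}) is precisely that the truncation variable ($\vecx$) differs from the accuracy variable ($T$). The divisibility-test PIT instance then has the form $C(T,\vecx,\phi^{(1)},\ldots,\phi^{(D)})$ for a constant-depth $C$, and this composite is still \emph{not} constant-depth. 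The paper runs the KI hybrid argument directly on this non-constant-depth object (\autoref{thm:technical-theorem-2}): at the failing hybrid step one obtains that $(y-g(\vecw))$ divides the instance, and the CKS lemma (\autoref{lem:CKS-updated-rational-in-T-new}) then expresses $g$ in terms of the degree-$\le\deg(g)$ components of the instance --- which, by the structural lemma with $\ell=\deg(g)=o(\log n)$, \emph{are} constant-depth. The argument thus interleaves the low degree of the hard polynomial with the low-degree-component structure of the roots, rather than forcing the entire PIT instance into the constant-depth world up front.

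A smaller point: the irreducibility certificate is not a resultant or Bertini-type witness as you guess, but rather the following characterization (\autoref{lem:identifying-conjugates-to-divisibility-testing-II}): $P$ is irreducible iff for every proper nonempty subset $S$ of the approximate roots, $Q_S:=\prod_{i\in S}(z-\phi^{(i)})\trunc T^k$ does \emph{not} divide $P$. Each such non-divisibility becomes a single PIT via a pseudo-quotient construction (\autoref{lem:divisibility-test-to-PIT}) built from the Andrews--Wigderson constant-depth $\esym\leftrightarrow\psym$ transformation. These exponentially many instances are never enumerated by the algorithm; they appear only in the analysis, to argue that the one fixed generator preserves all of them simultaneously.
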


\paragraph{Obtaining \autoref{thm:main-intro} from \autoref{thm:main-intro-irreducibility-preserving}. }
The second item of the above theorem, when combined with known deterministic factorization algorithms for $n^{\epsilon}$-variate polynomials of degree at most $\poly(n)$ that run in time $n^{O(n^{\epsilon})}$ (polynomial in the size of the dense representation of such polynomials), essentially gives us a subexponential time deterministic irreducibility testing algorithm for sparse polynomials, and more generally for constant-depth circuits. Similarly, the third item of \autoref{thm:main-intro-irreducibility-preserving} can be combined with known ideas in algorithms for polynomial factorization (with some effort) to give \autoref{thm:main-intro}. 

We end this section with a brief discussion of some quantitative and qualitative aspects of the main theorems. 

\paragraph{Improvements in time complexity. } Given the reduction from polynomial identity testing to polynomial factorization in \cite{SV10}, we know that one cannot hope to improve the running time of the algorithm in \autoref{thm:main-intro} significantly for general constant-depth circuits, unless we have significantly better deterministic polynomial identity testing algorithms for these problems. However, for the case of sparse polynomials, we have known polynomial time PIT algorithms for more than two decades \cite{KS01}, and thus we can, in principle, expect faster factorization or irreducibility testing algorithms for sparse polynomials. It is a very natural question to explore further. 

\paragraph{Closure results for constant-depth circuits.} We do not know if irreducible factors of polynomials with small constant-depth circuits (or even the more restricted class of sparse polynomials) must be computable by small constant-depth circuits, and \autoref{thm:main-intro} does not appear to shine any light on such a closure result. As a consequence, the outputs of the algorithm in \autoref{thm:main-intro} are general unbounded depth circuits of polynomial size. In fact, we do not even know of non-trivial upper bounds on the size of constant-depth circuits for irreducible factors of sparse polynomials.  It would be very interesting to obtain such closure results or to gather some evidence that points towards such a statement being false.

\paragraph{Field dependence in \autoref{thm:main-intro}. } Our results in \autoref{thm:main-intro} are stated over the field of rational numbers. However, the results hold a little more generally - our proofs continue to work over any underlying field over which we have an efficient deterministic algorithm for factoring univariate polynomials, over which the lower bounds of Limaye, Srinivasan \& Tavenas \cite{LST21} for constant-depth circuits, the results of Andrews \& Wigderson \cite{AW24}, and the Taylor-expansion-based techniques of Chou, Kumar \& Solomon \cite{ChouKS19} continue to work. The field of rational numbers satisfies all these properties. The results also continue to hold over finite fields of moderately large characteristic (polynomially large in the degree and number of variables) since they satisfy all the properties stated above. However, for our presentation in this paper, we just work over the field of rational numbers and skip the minor technical changes needed to see the extension of the results to finite fields of moderately large characteristic. 

\subsection{Related prior work}
Following the subexponential time deterministic PIT algorithms for constant-depth algebraic circuits that follow from the lower bounds in the work of Limaye, Srinivasan \& Tavenas \cite{LST21}, there has been a renewed interest in obtaining deterministic algorithms for factorization of constant-depth circuits. This includes the results of Kumar, Ramanathan \& Saptharishi  \cite{KRS23}, who gave a subexponential time deterministic algorithm to compute all constant-degree factors of polynomials computed by constant-depth circuits and an alternative proof as well as a generalization of this result by Dutta, Sinhababu \& Thierauf \cite{DST24}. However, these results do not appear to give anything non-trivial for factors that are not low-degree. 

Another result that is very relevant here is a work of Kumar, Ramanathan, Saptharishi \& Volk \cite{KRSV} that gave a deterministic subexponential time algorithm that on input a constant-depth circuit outputs a list of circuits, of unbounded depth and with division gates, such that every irreducible factor of the input polynomial is computed by some circuit in this list. However, a significant drawback of this result is that the output list could contain circuits that do not correspond to any factor of the input, or aren't even valid circuits in the sense that they involve division by a circuit that computes an identically zero polynomial. In particular, their algorithm does not imply a deterministic subexponential time algorithm for testing irreducibility of a sparse polynomial since even when the input is irreducible, the algorithm could output a collection of circuits of tentative factors. However, in this case, none of the output circuits correspond to a true factor of the input polynomial. Unfortunately, since the depth of these output circuits is potentially large (and they contain division gates), we have no deterministic way of checking if there is a polynomial in the output list that corresponds to a true factor of the input. Thus, the question of pruning the output list deterministically in \cite{KRSV} to identify true factors is non-trivial and is essentially open. 

A recent work of Andrews \& Wigderson \cite{AW24} gives efficient parallel algorithms (essentially constant-depth circuits) for a variety of problems related to polynomial factorization, such as GCD and LCM computation of polynomials. As a consequence of their techniques, they show that the resultant and the discriminant polynomials can be computed by constant-depth circuits of polynomial size, and the squarefree part of a polynomial with small constant-depth circuits is a small constant-depth circuit. The techniques of Andrews \& Wigderson also give an alternative proof of the results in \cite{KRSV}. However, it suffers from the same drawback as \cite{KRSV} - some of the circuits in the output list might not correspond to any valid factors of the input, and might not even be valid circuits since they might have a division by an identically zero circuit built inside. 

Even though the aforementioned results represent some interesting progress towards the  question of obtaining deterministic factorization of constant-depth circuits continues, it seems fair to say that the original problem has largely remained open. In particular, these results do not say anything at all about even the question of sparse irreducibility testing (\autoref{q:sparse-irreducibility}), which seems like a natural and possibly simpler intermediate step on the way to obtaining deterministic factorization of constant-depth circuits.  

Factoring of sparse polynomials has been studied for a long time and it was initiated by the work of
von zur Gathen and Kaltofen almost four decades ago~\cite{GathenKaltofen85}. This work gave the first randomized factoring algorithm for sparse polynomials, and the running time obtained was polynomial in the sparsity of the factors. Ever since this work, it has been an interesting question to obtain deterministic factoring algorithms for sparse polynomials, especially since we have known how to derandomize PIT for this class for a while.

There are special cases of sparse polynomials for which we do know some results on irreducibility testing and factoring. The work of Bhargava, Saraf \& Volkovich~\cite{BSV18} gave a quasipolynomial time deterministic factoring algorithm for sparse polynomials where each variable has bounded degree. This extends prior works by Shpilka \& Volkovich~\cite{SV10} which gave deterministic factoring algorithms for multilinear sparse polynomials and Volkovich~\cite{Volkovich17} which deterministic factoring algorithms for multiquadratic sparse polynomials. 

As the main result of this paper (\autoref{thm:main-intro}), we  give a subexponential time deterministic algorithm for factorizing polynomials computable by small constant-depth circuits and, in particular, sparse polynomials (which are equivalent to depth-2 circuits). On the way to the proof, we obtain a deterministic subexponential time algorithm for testing the irreducibility of polynomials computable by small constant-depth circuits. In terms of techniques, we rely on some of the insights from prior work, e.g. \cite{ChouKS19, KRS23, KRSV, AW24} and introduce some new ideas on the way that might have further applications for problems of this nature. In particular, for our proofs, we crucially rely on  technical observations about the structure of power series roots of polynomials with small constant-depth circuits, and their interaction with tools from hardness-vs-randomness for algebraic circuits, e.g. the Kabanets-Impagliazzo generator. As our main technical statement, we show that this generator, when invoked with a low-degree polynomial that is hard for constant-depth circuits (as constructed  in \cite{LST21}) not only preserves the non-zeroness of constant-depth circuits\footnote{For this, we need to assume that the circuit has undergone some initial preprocessing, which again is done using a similar generator. } (as was already known), but also preserves their irreducibility (and the irreducibility of their factors). 

In the next section, we discuss these techniques in greater detail and give an overview of the proof of \autoref{thm:main-intro}. 

\section{Overview of proofs}\label{sec:proof-overview}

Almost all factorization algorithms proceed with some initial pre-processing to guarantee the following requirements:
\begin{itemize}\itemsep 0pt
\item $P(\vecx, z)$ is squarefree and monic in $z$.
\item We have that $P(\veczero, 0) = 0$ and $\partial_z(P)(\veczero, 0) \neq 0$. 
\end{itemize}
The above non-degeneracy conditions can typically be guaranteed relatively easily via a randomized algorithm. To do this deterministically, we use a recent result of Andrews \& Wigderson \cite{AW24} (see \autoref{thm:andrews-wigderson-squarefree-decomposition}) to get a squarefree decomposition of $P$. In fact, their theorem says something more that is useful for us: the squarefree parts of $P$ are computable by a small constant-depth circuit since $P$ is computable by a small constant-depth circuit. Moreover, we can obtain this decomposition via a deterministic subexponential time algorithm that uses the now known deterministic PIT algorithms for constant-depth circuits \cite{LST21, AF22}. 

We also translate the $\vecx$ variables appropriately (again using deterministic PIT for constant-depth circuits) to ensure that we are working with polynomials that are monic in $z$ and each of the squarefree parts remains squarefree even when all the $\vecx$ variables are set to zero. 

We then invoke a univariate factorization algorithm to find a root of $P(\veczero, z)$, which for this discussion we assume is zero. Furthermore, we replace each $x_i$ with $x_i T$, for a fresh variable $T$, to work with a polynomial of the form $P(T, \vecx, z)$. Note that $P(0, \vecx, z) \in \F[z]$; we will refer to such polynomials as $T$-regularized.\footnote{More generally, we say that a polynomial in $\F[T, \vecx, z]$ is $T$-regularized if every monomial with $\vecx$-degree at least one in the polynomial has $T$-degree at least one. } 

The general plan is to use Newton iteration, starting with $\phi_0(T, \vecx) = 0$, to get ``approximate roots'' $\phi_{k-1}(T, \vecx)$ satisfying $\deg_T(\phi_{k-1}) < k$ and $\phi_{k-1}(T, \vecx) = \phi_0(T, \vecx) \bmod{T}$ such that $P(T, \vecx, \phi_{k-1}(T, \vecx)) = 0 \bmod{T^k}$. We can obtain a small circuit for $\phi_{k-1}$, although not necessarily one of constant depth. The hope is that, if $k$ is large enough, then we can recover from $\phi_{k-1}$ a true factor of $P(T, \vecx, z)$. 

\subsection{Restricting to roots}

Let us start by focusing on just extracting factors of $P(T, \vecx, z)$ of the form $(z - f(T, \vecx))$. If $f(T, \vecx) = \phi_0(T, \vecx) \bmod{T}$, then uniqueness of Newton Iteration guarantees that $\phi_{k-1}(T, \vecx) = f(T, \vecx) \bmod{T^{k}}$ as well. This therefore leads to the following natural algorithm:
\begin{enumerate}\itemsep 0pt
\item Compute a circuit for $\phi_{k-1}(T, \vecx)$ for $k > \deg_T(P)$, with $\deg_T(\phi_{k-1}) <  k$, via Newton Iteration. 
\item Check if $P(T, \vecx, \phi_{k-1}(T, \vecx)) = 0$. 
\end{enumerate}
If it were the case that $\phi_{k-1}$ was actually computable by a constant-depth circuit, then the circuit $P(T, \vecx, \phi_{k-1})$ is also constant-depth circuit and we can check for zeroness via known subexponential time PIT for constant-depth circuits. Unfortunately, we do not (yet) know if $\phi_{k-1}(T, \vecx)$ is computable by small constant-depth circuits. \\

The main insight is to notice that the circuit for $\phi_{k-1}$ that we obtain is structured enough that perhaps we can show that $P(T, \vecx, \phi_{k-1}(T, \vecx)) \stackrel{?}{=} 0$ can be tested via the same PIT nevertheless. We show that we can analyze the standard Kabanets-Impagliazzo generator instantiated with a suitable hard polynomial for such structured circuits and prove that it does preserve nonzeroness. The details can be found in \autoref{sec:root-preservation}. 

\subsection{Analyzing the KI generator on such circuits}

The main structural lemma that drives our analysis is the following that shows that ``low degree parts'' of approximate roots can indeed be computed via not-too-large constant-depth circuits. 

\begin{lemma*}[Informal version of \autoref{lem:low-deg-w-components-of-roots-mod-T}]
Suppose $\phi_{k-1}(T, \vecx)$ with $\deg_T \phi_{k-1} < k$ is an approximate $z$-root of $P(T, \vecx, z)$ as above. Then, for every integer $\ell \geq 0$, there is a circuit $C_\ell(T, \vecx)$ of constant depth and ``not-too-large'' size that agrees with $\phi_{k-1}$ on all monomials of degree at most $\ell$. That is,
\[
C_\ell(T, \vecx) = \phi_{k-1}(T, \vecx) \bmod \inangle{\vecx}^\ell,
\]
where ``not-too-large'' is $\poly(\size(P), \deg(P)) \cdot (\log k)^{\poly(\ell)}$. 
\end{lemma*}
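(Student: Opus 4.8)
The plan is to analyze the Newton iteration explicitly and track how the depth and size of the approximate root grow as we truncate modulo increasing powers of $\inangle{\vecx}$. The starting observation is that $\phi_{k-1}(T,\vecx)$ is obtained from $\phi_0(T,\vecx) = 0$ by the standard lifting recurrence $\phi_{2j} = \phi_j - P(T,\vecx,\phi_j)/\partial_z P(T,\vecx,\phi_j) \bmod T^{2j}$, so after $O(\log k)$ doubling steps we obtain $\phi_{k-1}$. Each step introduces a division by $\partial_z P(T,\vecx,\phi_j)$, but since $P$ is $T$-regularized and $\partial_z P(\veczero,0)\neq 0$, this denominator is a unit in the power series ring $\F[[T,\vecx]]$, and moreover — after scaling — has constant term $1$, so $1/\partial_z P(T,\vecx,\phi_j)$ can be expanded as a geometric series $\sum_{i\ge 0}(1 - \partial_z P)^i$. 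The key point is that modulo $\inangle{\vecx}^\ell$ we only need the first $\ell$ terms of this geometric series (since $1 - \partial_z P$ has no constant term in $\vecx$ once we also account for the $T$-regularization), and modulo $T^k$ we only need $O(\log k)$ Newton steps.

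With that in hand, I would prove the lemma by induction on $\ell$, or more cleanly, directly: define $C_\ell(T,\vecx)$ to be the circuit obtained by running the $O(\log k)$ Newton doubling steps symbolically, where at each step the division is replaced by the truncated geometric series $\sum_{i=0}^{\ell-1}(1-\partial_z P(T,\vecx,\phi_j))^i$, and where after each step we also discard (formally — by not needing to compute them, this is a structural claim about the polynomial being computed, so really we argue the circuit computes the right truncation) monomials of $\vecx$-degree $\ge \ell$. Since truncation modulo $\inangle{\vecx}^\ell$ is a ring homomorphism that commutes with the Newton recurrence, the polynomial computed by this circuit agrees with $\phi_{k-1} \bmod \inangle{\vecx}^\ell$: the truncated geometric series agrees with the true inverse modulo $\inangle{\vecx}^\ell$, and truncation is preserved through the $O(\log k)$ algebraic operations of each step. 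The depth of $C_\ell$ is a constant depending only on $\Delta$ and on the (constant) number of arithmetic operations per Newton step needed to implement a power of $(1-\partial_z P)$ of degree $\ell$ — here one uses that repeated squaring computes the $(\ell-1)$-st power in $O(\log \ell)$ multiplications, each of which is a constant-depth operation on constant-depth circuits, so the per-step depth is $O(\log\log k \cdot \text{const})$... wait, that is not constant, so instead one should fix $\ell$ and observe $O(\log \ell)$ extra depth layers; since $\ell$ is treated as a parameter and the bound is allowed to depend on $\ell$, and the problem only requires "constant depth" for fixed $\ell, \Delta$, this is fine. For the size: each of the $O(\log k)$ Newton steps multiplies the size by a $\poly(\size(P),\deg(P),\ell)$ factor coming from the degree-$\ell$ geometric series, giving a total size of $\poly(\size(P),\deg(P)) \cdot (\log k)^{\poly(\ell)}$, matching the claimed bound.

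The main obstacle I anticipate is bounding the depth uniformly in $k$. A naive recursive unrolling of $O(\log k)$ Newton steps gives depth $\Theta(\log k)$, which is not constant. The resolution must be that the truncation modulo $\inangle{\vecx}^\ell$ collapses the iteration: once we work modulo $\inangle{\vecx}^\ell$, the approximate root stabilizes in a bounded number of *distinct* structural forms, because raising $T$ does not help beyond what is forced, and the $\vecx$-adic precision is capped at $\ell$. Concretely, I would argue that $\phi_{k-1} \bmod \inangle{\vecx}^\ell$ can be written as a polynomial in $T$ of degree $< k$ whose coefficients are (bounded-degree-in-$\vecx$) polynomials, each expressible by flattening the Newton recurrence into a single sum over a controlled index set — essentially a combinatorial sum indexed by the "shape" of the iteration — so that the circuit is a depth-$O(\Delta + \log\ell)$ sum of products, with the $T$-degree-$< k$ structure absorbed into the $(\log k)^{\poly(\ell)}$ size blowup rather than into the depth. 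Making this flattening precise — showing the iterated substitution collapses to a bounded-depth expression once truncated — is the crux, and I would model it on the "Taylor expansion" style arguments of Chou–Kumar–Solomon that are already invoked in this paper.
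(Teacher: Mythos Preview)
Your proposal correctly identifies the two key ingredients --- quadratic-convergence Newton iteration (to get $O(\log k)$ steps) and truncation of the derivative's inverse modulo $\inangle{\vecx}^\ell$ --- but it has a genuine gap at exactly the point you flag as ``the crux.'' Running $O(\log k)$ Newton steps symbolically, even with the geometric-series trick, composes the circuit for $\phi_j$ into $P$ at every step; this yields depth $\Theta(\Delta \log k)$, not $O(\Delta)$, and size $(\poly(s,\deg P,\ell))^{O(\log k)} = k^{O(\log(s\ell\deg P))}$, not $\poly(s,\deg P)\cdot (\log k)^{\poly(\ell)}$. Your suggestion that ``the truncation collapses the iteration'' or that one can ``flatten into a single sum over a controlled index set'' is in the right spirit but is not a proof; you have not identified \emph{what} structure survives the flattening.

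The missing idea is this: instead of composing circuits, maintain inductively that $\phi_i \bmod \inangle{\vecx}^\ell$ (and simultaneously $\psi_i \bmod \inangle{\vecx}^\ell$, the approximate inverse of $\partial_z P$) is a polynomial with coefficients in $\F[T]$ in a small explicit set of \emph{generators}, each of the fixed form $\partial_{z^j} P(T,\vecx,\gamma)$ for some $j\le \ell+1$ and some $\gamma\in\F[T]$. The inductive step uses a Taylor expansion of $P(T,\vecx,\phi_i)$ around $\gamma_i := \phi_i(T,\veczero)\in\F[T]$: since $\phi_i-\gamma_i$ has $\vecx$-degree at least one, only the first $\ell$ terms survive modulo $\inangle{\vecx}^\ell$, which introduces at most $O(\ell)$ new generators per step and expresses $\phi_{i+1}$ as a new polynomial in the enlarged generating set. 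After $m=O(\log k)$ steps there are $O(\ell\log k)$ generators; crucially, each generator has a depth-$(\Delta+O(1))$ circuit directly from the circuit for $P$ (via interpolation), with no iterated composition. The outer polynomial $Q_m$ has $\vecu$-degree at most $\ell$ (again because each generator minus its $\vecx$-free part is in $\inangle{\vecx}$), so is depth two with size $\binom{O(\ell\log k)+\ell}{\ell}=(\log k)^{\poly(\ell)}$. This is what keeps the depth independent of $k$ and gives the correct size bound. Your geometric-series approach for the inverse can be folded into this picture (the paper instead tracks $\psi_i$ explicitly), but the essential step you are missing is replacing ``compose $\phi_j$ into $P$'' by ``Taylor-expand $P$ around $\phi_j(T,\veczero)\in\F[T]$.''
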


While this is reminiscent of a lemma of \cite{ChouKS19} (\cref{lem:CKS-updated}) that allows one to argue that low-degree components of a root have small constant-depth circuits, there is an important difference in our statement: $\phi_{k-1}(T,\vecx)$ is a root modulo $T^k$, whereas we want a small constant-depth circuit to compute the root modulo $\inangle{\vecx}^{\ell}$ i.e. a \emph{different} set of variables. This difference demands a little more care (see \cref{rmk:why-quadratic-convergence}), although morally we still use a Taylor-expansion based approach along the lines of \cite{ChouKS19}.

With the above lemma, we would be able to show that the Kabanets-Impagliazzo generator will indeed preserve the nonzeroness of $P(T, \vecx, \phi_{k-1}(T, \vecx))$ when instantiated with a hard polynomial for constant-depth circuits. 

\begin{theorem*}[Informal version of \autoref{thm:technical-theorem-1}]
Let $f$ be a polynomial that requires ``large'' constant-depth circuits. Then, 
\[
    P(T, \vecx, \phi_{k-1}(T, \vecx)) \neq 0 \Longrightarrow P(T, \vecx, \phi_{k-1}(T, \vecx)) \circ \ki_{f} \neq 0
\]
where $\ki_{f}: \F[\vecx] \rightarrow \F[\vecw]$ is the Kabanets-Impagliazzo generator instantiated with the polynomial $f$. 
\end{theorem*}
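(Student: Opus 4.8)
The plan is to prove the contrapositive via a hybrid argument on the Kabanets--Impagliazzo generator, using \autoref{lem:low-deg-w-components-of-roots-mod-T} to get around the fact that $\phi_{k-1}$ is not known to be computed by a small constant-depth circuit. Write $Q := P(T,\vecx,\phi_{k-1}(T,\vecx))$, assume $Q \neq 0$, and suppose for contradiction that $Q \circ \ki_{f} = 0$. Recall that $\ki_{f}$ acts by $x_i \mapsto \mathcal{G}_i(\vecw) := f(\vecw|_{S_i})$ for a combinatorial design $(S_1,\dots,S_n)$, so each $\mathcal{G}_i$ is a polynomial of degree $\delta := \deg(f) = o(\log n)$ with $\mathcal{G}_i(\veczero)=0$ (after a harmless translation of $f$). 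Forming the hybrids $h_j(T,\vecw,x_{j+1},\dots,x_n) := Q(T,\mathcal{G}_1(\vecw),\dots,\mathcal{G}_j(\vecw),x_{j+1},\dots,x_n)$, we have $h_0 = Q \neq 0$ and $h_n = Q\circ\ki_{f} = 0$, so there is an index $i$ with $h_{i-1} \neq 0$ and $h_i = h_{i-1}|_{x_i = \mathcal{G}_i(\vecw)} = 0$; thus $\mathcal{G}_i(\vecw)$ is a root in the variable $x_i$ of the nonzero polynomial $h_{i-1}$.

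The goal from here is to extract a small constant-depth circuit for $\mathcal{G}_i$ --- hence for $f$, contradicting its hardness --- via the extraction mechanism of \cite{ChouKS19}: after replacing $h_{i-1}$ by an appropriate iterated $x_i$-derivative and dividing out a unit (using $\partial_z P(\veczero,0)\neq 0$ from the preprocessing, propagated through the substitutions, which does not increase depth), $\mathcal{G}_i$ becomes a simple $x_i$-root, Newton iteration reconstructs it, and since $\deg(\mathcal{G}_i)=\delta$ the relevant degree-$\le\delta$ truncation of this root equals $\mathcal{G}_i$ itself, which by the truncation bound \autoref{lem:CKS-updated} is computed by a constant-depth circuit of size $\poly(s,D)\cdot D^{\poly(\delta)}$ in terms of the size $s$ and degree $D$ fed into the bound; for $\delta = o(\log n)$ this is ``not-too-large''. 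To make this legitimate we must supply a constant-depth circuit approximating $h_{i-1}$ well enough: by \autoref{lem:low-deg-w-components-of-roots-mod-T} there is a constant-depth circuit $C_\ell$ of size $\poly(\size(P),\deg(P))\cdot(\log k)^{\poly(\ell)}$ with $C_\ell \equiv \phi_{k-1}\bmod\inangle{\vecx}^{\ell}$; setting $Q' := P(T,\vecx,C_\ell(T,\vecx))$ and $h'_j$ analogously, we get $Q - Q' \in \inangle{\vecx}^{\ell}$ (since $P$ is a polynomial in $z$), and because each $\mathcal{G}_j(\vecw)\in\inangle{\vecw}$ and $x_i,\dots,x_n$ are kept as variables --- it is essential not to specialize them, lest high-degree monomials collapse to low degree --- the substituted difference lies in $\inangle{\vecw,x_i,\dots,x_n}^{\ell}$, so $h_{i-1} \equiv h'_{i-1}\bmod\inangle{\vecw,x_i,\dots,x_n}^{\ell}$ with $h'_{i-1}$ constant-depth of ``not-too-large'' size, and one runs the extraction on $h'_{i-1}$ with $\ell = \delta+O(1)$.

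The genuine obstacle, and what \cref{rmk:why-quadratic-convergence} flags, is that $Q$ is supported on \emph{high}-degree monomials, so the plan just sketched cannot be carried out verbatim: $Q\equiv 0\bmod T^k$, and because $P$ is $T$-regularized the power-series root it approximates is a function of $T\vecx$, which forces $Q$ to vanish to order $\ge k$ in $\vecx$ as well --- indeed $Q = \bar Q(T\vecx)$ where $\bar Q(\vecx) = \bar P(\vecx,\bar\psi^{<k}(\vecx))$ for the original circuit $\bar P$ and its $z$-root $\bar\psi$, and $\bar Q = -\bar\psi^{\ge k}\cdot(\text{unit})$. Consequently $h_{i-1}$, and any truncation of it to degree $\delta \ll k$, vanishes identically, so truncating naively destroys all information. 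Overcoming this is where the real work lies: one must not treat $Q$ as a black box but exploit that $Q\circ\ki_{f}\neq 0 \iff \bar\psi_j\circ\ki_{f}\neq 0$ for some $j\ge k$ --- reducing to whether $\ki_{f}$ preserves the nonzeroness of a homogeneous component of $\bar\psi$ --- and then relate that component, through the Newton-iteration structure of $\bar\psi^{<k}$, back to the \emph{low}-degree $\vecx$-adic truncations of $\bar\psi$ that \autoref{lem:low-deg-w-components-of-roots-mod-T} actually controls. Getting the bookkeeping between the $T$-adic picture (in which $\phi_{k-1}$ is computed) and the $\vecx$-adic picture (in which the structural lemma is phrased) to line up while keeping the size at $\poly(\size(P),\deg(P))\cdot(\log k)^{\poly(\deg f)}$ is the main technical content, and the compatibility of this bound with the hardness of the Limaye--Srinivasan--Tavenas polynomial \cite{LST21} underlying $\ki_{f}$ is what closes the contradiction.
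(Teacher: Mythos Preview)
Your hybrid argument is the right starting point, and you correctly sense that a naive plan can stall. But the ``genuine obstacle'' you describe is an artifact of the \emph{order} in which you are invoking the structural lemma, and your proposed workaround through homogeneous components $\bar\psi_j$ is not needed.

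The paper does \emph{not} apply \autoref{lem:low-deg-w-components-of-roots-mod-T} to the original pair $(P,\phi_{k-1})$ with truncation in $\inangle{\vecx}$ and then run the hybrid on the resulting surrogate $Q'$. Instead it runs the hybrid \emph{first}, then --- contrary to your stipulation that ``it is essential not to specialize'' --- sets $x_{j+1},\dots,x_n$ and the $\vecw$-coordinates outside $S_j$ to field constants so that the hybrid polynomial $B_j(T,\vecw,y)=\hat A(T,\vecw,y,\hat\Phi(T,\vecw,y))$ stays nonzero and still has $(y-g_j(\vecw))$ as a factor. This specialization is precisely what destroys the ``$T$-degree $=$ $\vecx$-degree'' coupling you are worried about: high-$\vecx$-degree monomials collapse to low $(\vecw,y)$-degree, so the low-degree-in-$\vecw$ part of $B_j$ is no longer forced to vanish. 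After reducing the multiplicity of $(y-g_j)$ by taking $y$-derivatives and performing a shift $\vecw\mapsto\vecw+\veca$ to ensure $\partial_y\hat P(T,\veczero,g_j(\veczero))\neq 0$, one is in position to apply \autoref{lem:CKS-updated-rational-in-T-new}. Only \emph{now} is \autoref{lem:low-deg-w-components-of-roots-mod-T} (via \autoref{cor:low-deg-w-components-of-roots-mod-T-composed}) invoked --- but with respect to the new variables $\vecw$ and on the pair $(\hat A(T,\vecw,\beta,z),\hat\Phi(T,\vecw,\beta))$, which by \autoref{lem:preserving-root-properies-under-homomorphism} still satisfies the hypotheses. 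This supplies small constant-depth circuits for the truncations $h_j\trunc\inangle{\vecw}^{d+1}$, and the CKS composition then yields a small constant-depth circuit for $g_j$, contradicting hardness.

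In short: your obstacle is real for ``structural lemma first, hybrid second,'' and your fix via $\bar\psi_j$ is both vague and unnecessary. The correct remedy is to reverse the order --- hybrid and specialize first, apply the structural lemma in the $\vecw$ variables afterward --- which is exactly the content of \autoref{lem:low-deg-y-roots-of-hybrid-poly} in the paper.
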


We now briefly outline why this is true. Let us define $R(T,\vecx) := P(T, \vecx, \phi_{k-1}(T, \vecx))$. 
To prove the above lemma, assume on the contrary that the generator $\ki_{f}$ did not preserve the non-zeroness of $R(T, \vecx)$. By the standard hybrid argument (with additional substitutions), there are ``easy-to-compute'' polynomials $f_i$'s such that 
\begin{align*}
R(f_1,f_2, \ldots, f_{i-1}, x_i, f_{i+1}, \ldots, f_n, T) & \neq 0\\
\text{but }R(f_1,f_2, \ldots, f_{i-1}, f, f_{i+1},\ldots, f_n, T) & = 0.
\end{align*}
($f_{i+1}, \ldots, f_n$ are just field constants). Therefore, the polynomial $x_i - f$ divides the polynomial $R(f_1,\ldots, f_{i-1}, x_i, f_{i+1},\ldots, f_n, T)$. At this point, we make use of the following result by Chou, Kumar and Solomon (stated informally).

\begin{theorem*}[Chou-Kumar-Solomon (follows immediately from \autoref{lem:CKS-updated})] Let $R(\vecw, y)$ be a non-zero polynomial with $R(\vecw, g(\vecw)) = 0$ for some polynomial $g$ of degree at most $\ell$, with $\partial_y R (0, g(\veczero)) \neq 0$. \\
For all $i \leq \ell$, suppose there are small circuits $C_i(\vecw, y)$ such that 
\[
C_i(\vecw, y) = \partial_{y^i} R \bmod{\inangle{\vecw}^\ell}. 
\]
If $\ell$ is ``small'', then  $g$ is a ``small'' composition of the circuits $C_1,\ldots, C_\ell$. In particular, if $\ell$ is small and each $C_i$ is a small, constant-depth circuit, then $g$ is also computable by a ``small-ish'', constant-depth circuit.
\end{theorem*}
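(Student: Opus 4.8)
\noindent\emph{Proof plan.} The plan is to reconstruct the root $g$ by a \emph{truncated Newton iteration} in the $\vecw$-variables, invoking the hypothesis only through the circuits $C_i$; the crucial feature is that we never need a circuit for $R$ itself, only for its low-order $y$-derivatives truncated modulo $\inangle{\vecw}^{\ell}$. Since $\deg g < \ell$, it suffices to compute $g \bmod \inangle{\vecw}^{\ell}$, and I would work in $\F[\vecw]/\inangle{\vecw}^{\ell}$ throughout. First set $y_0 := g(\veczero)$ (a root of the univariate $R(\veczero,y)$, which is nonzero since its $y$-derivative at $y_0$ does not vanish; over $\Q$ it is found by univariate root-finding), and write $g = y_0 + h$ with $h \in \inangle{\vecw}$. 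Taylor-expanding $R$ in its last argument about $y = y_0$ (a finite sum, as $R$ is a polynomial in $y$), substituting $y = g(\vecw)$, and using $h^i \in \inangle{\vecw}^i$ to discard high-order terms gives
\[
0 \;=\; R(\vecw, g(\vecw)) \;\equiv\; \sum_{i=0}^{\ell - 1} \frac{1}{i!}\,(\partial_{y^i} R)(\vecw, y_0)\cdot h(\vecw)^i \pmod{\inangle{\vecw}^{\ell}}.
\]
Now $(\partial_{y^i} R)(\vecw, y_0) \equiv C_i(\vecw, y_0) \pmod{\inangle{\vecw}^{\ell}}$, and wiring the input $y$ of $C_i$ to the constant $y_0$ changes neither its size nor its depth. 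So, setting $R_i(\vecw) := \frac{1}{i!}\,C_i(\vecw, y_0)$ and $F(\vecw, y) := \sum_{i=0}^{\ell-1} R_i(\vecw)\, y^i$, we get a circuit for $F$ of size $O\!\big(\sum_i \size(C_i)\big) + \poly(\ell)$ and depth $\max_i \operatorname{depth}(C_i) + O(1)$, with $F(\vecw, h) \equiv 0 \pmod{\inangle{\vecw}^{\ell}}$, $F(\veczero, 0) = 0$, and $\partial_y F(\veczero, 0) = \partial_y R(\veczero, y_0) \neq 0$.

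With this surrogate $F$ in hand, I would run Newton iteration in the $\vecw$-variables: $h_0 := 0$ and
\[
h_{t+1} \;:=\; \Big(h_t \;-\; \frac{F(\vecw, h_t)}{\partial_y F(\vecw, h_t)}\Big) \ \bmod \inangle{\vecw}^{\ell}.
\]
Since $h_t(\veczero) = 0$, the denominator $\partial_y F(\vecw, h_t)$ has nonzero constant term, so its truncated inverse modulo $\inangle{\vecw}^{\ell}$ is a geometric series of at most $\ell$ terms, computable by a circuit of $\poly(\ell)$ size and $O(\log \ell)$ extra depth (via prefix products). The standard quadratic-convergence argument shows $h_t \equiv h \pmod{\inangle{\vecw}^{2^{t}}}$, so after $T := \lceil \log_2 \ell \rceil$ rounds, $h_T \equiv h \pmod{\inangle{\vecw}^{\ell}}$; extracting the homogeneous components of $h_T$ of degree $< \ell$ (the usual interpolation over geometric scalings of $\vecw$, costing a $\poly(\ell)$ factor and $O(1)$ depth) recovers $h$ exactly, and $g = y_0 + h$.

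Each Newton round substitutes the circuit built so far, $h_t$, into $F$ and $\partial_y F$, forms the truncated inverse of the denominator, multiplies, subtracts, and re-truncates --- a constant number of constant-depth gluings together with the $O(\log\ell)$-depth inversion and truncation gadgets. Over $T = O(\log\ell)$ rounds this yields a circuit for $g$ of depth $\max_i \operatorname{depth}(C_i) + O(\log\ell)$ and size at most $\big(\sum_i \size(C_i)\cdot\poly(\ell)\big)^{O(\log \ell)}$. When $\ell$ is small this is polynomial in $\max_i\size(C_i)$ and of depth $\max_i \operatorname{depth}(C_i) + O(1)$; in particular, if each $C_i$ is a small constant-depth circuit then so is $g$, which is the claimed conclusion (and is the trade-off recorded precisely in the Chou, Kumar \& Solomon lemma we invoke).

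I expect the main obstacle to be controlling the \emph{depth}. A naive fixed-point iteration $h_{t+1} = -R_1^{-1}\sum_{i\geq 2} R_i\, h_t^{i}$ converges only linearly and would need $\Theta(\ell)$ rounds, each a fresh composition of the $C_i$, destroying constant depth; this is precisely why the quadratically-convergent Newton scheme is essential and why the hypothesis ``$\ell$ small'' is needed --- it caps the number of composition rounds at $O(\log\ell)$. The second point requiring care is that the quadratic-convergence estimate is a statement about the \emph{exact} power-series root of $F$, whereas $F$ only agrees with the true Taylor expansion of $R$ modulo $\inangle{\vecw}^{\ell}$; this causes no trouble because Newton iteration modulo $\inangle{\vecw}^{\ell}$ depends only on $F \bmod \inangle{\vecw}^{\ell}$, and the residue modulo $\inangle{\vecw}^{\ell}$ of a power-series root with prescribed constant term is unique, so $h_T$ is forced to equal $h \bmod \inangle{\vecw}^{\ell}$ regardless.
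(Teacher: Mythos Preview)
Your outline is sound --- build the surrogate $F$ from the truncated derivatives, recover the root by Newton iteration modulo $\inangle{\vecw}^{\ell+1}$ --- but the final step differs from the paper in a way that costs you the depth bound. You thread $h_t$ back into $F$ at each of the $O(\log\ell)$ rounds, and every round adds at least a constant to the depth; the result is depth $\max_i\operatorname{depth}(C_i)+\Theta(\log\ell)$, not $+O(1)$. Your assertion that this collapses to $+O(1)$ ``when $\ell$ is small'' is only valid if $\ell$ is an absolute constant, whereas in the applications here $\ell$ is the degree of the hard polynomial, $\Theta(\log\log n)$. Downstream one needs the circuit for $g$ to have depth $\Delta+O(1)$ to invoke the constant-depth lower bound, so the $+\Theta(\log\log\log n)$ overhead is a genuine loss. (Minor: since $\deg g \le \ell$, work modulo $\inangle{\vecw}^{\ell+1}$, not $\inangle{\vecw}^{\ell}$.)

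The paper's route --- which is exactly \autoref{lem:CKS-updated} --- avoids the iteration in the circuit entirely. It records the \emph{existential} consequence of the Newton argument: $g\equiv Q(h_0,\ldots,h_\ell)\bmod\inangle{\vecw}^{\ell+1}$ for a single $(\ell{+}1)$-variate polynomial $Q$ of degree at most $\ell$, where $h_i(\vecw)=(\partial_{y^i}R)(\vecw,g(\veczero))-(\partial_{y^i}R)(\veczero,g(\veczero))\in\inangle{\vecw}$. The degree bound $\deg Q\le\ell$ is forced because each generator lies in $\inangle{\vecw}$, so monomials of higher degree vanish modulo $\inangle{\vecw}^{\ell+1}$. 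Writing $Q$ as its sum of at most $\binom{2\ell+1}{\ell}=2^{O(\ell)}$ monomials gives a depth-$2$ circuit; composing once with the $h_i$'s (same complexity as the $C_i$'s) and a single interpolation-truncation yields depth $\max_i\operatorname{depth}(C_i)+O(1)$ at size $2^{O(\ell)}\cdot\sum_i\operatorname{size}(C_i)$. Your approach in fact delivers a smaller size overhead ($\ell^{O(\log\ell)}$ versus $2^{O(\ell)}$), but that gain is irrelevant at these parameters, while the depth difference is what matters. The easy fix to your argument is to make this same observation at the end: after establishing that $h$ is a polynomial in the $R_i-R_i(\veczero)$, truncate that polynomial to degree $\le\ell$ and write it out explicitly rather than building it round by round.
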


For our setting, let $\tilde{R}(\vecw, y, T) = R(f_1,f_2, \ldots, f_{i-1}, y, f_{i+1}, \ldots, f_n, T)$ which has $y - f(\vecw)$ as a factor. Note that this is not a constant-depth circuit as $R(T, \vecx) = P(T, \vecx, \phi_{k-1}(T, \vecx))$ and $\phi_{k-1}$ is not known to have a constant-depth circuits. However, if $\deg(f)$ is small enough, then the above theorem of Chou, Kumar and Solomon asserts that we can construct a constant-depth circuit for $(y - f(\vecw))$ if we are able to build constant-depth circuits for $\homog_{\leq \deg(f)}\inparen{\partial_{y^i} \tilde{R}(\vecw, y, T)}$. Fortunately, our main structural lemma states that low-degree homogeneous parts of $\phi_{k-1}$ do have not-too-large constant-depth circuits. Putting this together, we are able to assert that $f(\vecw)$ must also be computable by a not-too-large constant-depth circuit, thus contradicting the hardness of $f$. 

\subsection{Computing general factors}

For computing general factors, at a high level, the proof proceeds in three broad steps. We first show a way of characterizing irreducibility of polynomials by (exponentially many) divisibility tests involving the power series roots of these polynomials. We then show that these divisibility tests can be reduced to PIT instances for circuits that might not be constant-depth,  but have some additional structure. And finally, in spite of being unable to show that these PIT instances are for constant-depth circuits, we manage to show that the Kabanets-Impagliazzo hitting set generator \cite{KI04} invoked with low degree hardness for constant-depth circuits from \cite{LST21} preserves the non-zeroness of these PIT instances. This part of the proof again builds upon the techniques from \cite{ChouKS19}. We now discuss these steps in a bit more detail and refer to \autoref{sec:irreducibility-preservation} for the full proof.

 The main technical insight of our proof is that the Kabanets-Impagliazzo generator invoked with low-degree hard polynomials for constant-depth circuits  in fact preserves irreducibility of small constant-depth circuits and their factors. Let us consider the polynomial $P(\veczero, 0, z)$ over the algebraic closure $\overline{\Q}$, and let us assume this splits as $(z - \zeta_1) \cdots (z - \zeta_d)$. Let $\phi^{(1)}(T, \vecx), \ldots, \phi^{(d)}(T, \vecx)$ be the approximate root lifted from $\zeta_i$ --- i.e., it satisfies $P(T, \vecx, \phi^{(i)}) = 0\bmod T^k$ and $\phi^{(i)}(\veczero,0) = \zeta_i$, where $k$ is large enough. Then, each true factor $Q(T, \vecx, z)$ of $P(T, \vecx, z)$ corresponds to some subset $S_Q$ of these approximate roots. That is, we must have 
\[
Q(T, \vecx, z) = \prod_{i\in S_Q} (z - \phi^{(i)}(T, \vecx)) \trunc T^k
\]
where ``$\trunc T^k$'' denotes the operation of discarding all monomials that have degree $k$ or more in $T$. As a consequence, if $P(T, \vecx, z)$ is indeed irreducible, then for every subset $\emptyset \neq S \subsetneq [d]$ we have $Q_S(T, \vecx, z)$ does not divide $P(T, \vecx, z)$ as a polynomial, where
\[
    Q_S(T, \vecx, z) = \prod_{i\in S} (z - \phi^{(i)}(T, \vecx)) \trunc T^k \, .
\]

Turns out, the non-divisibility of such $Q_S, P$ can be expressed as an appropriate PIT via standard reductions from divisibility testing to PIT \cite{Forbes15, AW24}. Once again, this is not a PIT of a constant-depth circuit due to the presence of the $\phi^{(i)}(T, \vecx)$ sub-circuits which are not known to have small constant-depth circuits. Nevertheless, since we are able to show that $\phi^{(i)}(T, \vecx)$ has enough structure to allow us to build small constant-depth circuits for their ``low-degree'' components, we are able to argue that the Kabanets-Impagliazzo generator, when instantiated with a sufficiently hard low-degree polynomial, will preserve the non-zeroness of such PIT instances. Furthermore, our algorithms do not proceed by directly derandomizing these PIT instances since there are exponentially many of them to deal with. We only use this reduction in the analysis of our algorithms. 

Overall, this yields a variable reduction (to $n^\epsilon$ variables, for any $\epsilon > 0$) that allows us to preserve the factorization pattern of any constant-depth circuit. From this, recovering circuits (although not necessarily constant-depth) for the factors is relatively straightforward, given the known algorithms for polynomial factorization.  

\subsection*{Organization}
The rest of the paper is organized as follows. We begin by recalling some standard notations and  preliminaries in \autoref{sec:prelims}, and some preliminaries related to polynomial factorization in \autoref{sec:prelims-for-factorization}. We describe the details of our algorithms in \autoref{sec:algorithms}, and discuss their analysis, assuming some technical results in \autoref{sec:analysis-of-algorithms}. We discuss the proof of these technical results in detail in \autoref{sec:technical}, \autoref{sec:root-preservation} and \autoref{sec:irreducibility-preservation}. 

For readers familiar with the general area of algebraic circuit complexity and some familiarity with  polynomial factorization algorithms, we recommend skipping the preliminaries and going to \autoref{sec:algorithms}, and then referring to \autoref{sec:prelims} and \autoref{sec:prelims-for-factorization} as and when needed. 

\section{Standard preliminaries}\label{sec:prelims}

\subsection{Notation}
\begin{itemize}
    \item Throughout the paper $\F$ denotes a field and $\Q$ denotes the field of rational numbers. 
    \item We use letters like $x, y, z$ to refer to formal variables and letters like $a, b, c$ as constants, as would be clear from the context. Boldface letters $\vecx, \vecy, \vecz, \veca, \vecb$ etc. refer to tuples of such objects. The arity of the tuple is generally specified unless it is clear from the context.  
    \item We say that a multivariate polynomial $P$ is \emph{monic} in a specific variable $y$ if the coefficient of the highest degree monomial in $y$ in $P$ is a non-zero field constant. 
    \item An algebraic circuit over a field $\F$ on variables $\vecx$ is a directed acyclic graph with internal nodes being labeled by product $(\times)$ or sum $(+)$, and the leaves (nodes of in-degree zero) being labeled by variables in $\vecx$ or constants from $\F$. All the fan-ins are unbounded. 
    
    Such a circuit computes a polynomial in a natural sense - a leaf computes the polynomial that is equal to its label, a sum gate computes the sums of its inputs and a product gate computes the product of its inputs. The size of an algebraic circuit equals the number of edges in it and the depth equals the length of the longest path from a leaf  to an output node (a node of out-degree zero). We refer to the surveys \cite{SY10, S15} for detailed discussions on algebraic circuits.  

    \item Polynomial identity testing or PIT refers to the decision problem where the input is an algebraic circuit, and the goal is to decide if the polynomial computed by the circuit is identically zero. 

    \item For a polynomial $P$ and a variable $y$, $\deg_y(P)$ refers to the degree of $P$ with respect to $y$. Similarly, for a tuple $\vecx$ of variables, $\deg_{\vecx}(P)$ refers to the total degree of $P$ with respect to variables in $\vecx$. 

\end{itemize}

\subsection{Truncation}
We start with the important, although non-standard definitions that are helpful in succinctly expressing our technical statements in the paper. 
\begin{definition}[Polynomial truncation]
    \label{defn:truncation}
    For a polynomial $Q(\vecx) \in R[\vecx]$ and a positive integer $k$, $Q\trunc \inangle{\vecx}^k$ denotes the unique polynomial $\tilde{Q}$ with $\vecx$-degree less than $k$ such that $\tilde{Q} \equiv Q \bmod\inangle{\vecx}^k$. We extend this notation to $Q(T, \vecx)\in \F[\vecx][T]$ (as well as power-series in $\F[\vecx]\llbracket T \rrbracket$) and use $Q \trunc T^k$ to denote the truncation interpreting $Q(T, \vecx) \in R[T]$ (or $R\llbracket T \rrbracket$) for $R = \F[\vecx]$.
\end{definition}

While it is common to slightly overload notation and use $\bmod \inangle{\vecx}^k$ to also denote $\trunc \inangle{\vecx}^k$, we choose to separate these notations for reasons of clarity.

\subsection{Polynomial Identity Lemma}
We now recall the statement of the polynomial identity lemma. 
\begin{lemma}[Polynomial Identity Lemma \cite{O22,DL78,S80,Z79}]\label{lem:SZ}
Let $P$ be an $n$-variate non-zero polynomial of degree at most $d$ over a field $\F$. And, let $S$ be any subset of $\F$. 

Then, the number of zeroes of $P$ on the product set $S \times S \times \cdots \times S$ is at most $d|S|^{n-1}$. In particular, if $|S| > d$, then $P$ is non-zero on at least one point on $S \times S \times \cdots \times S$. 
\end{lemma}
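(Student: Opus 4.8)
The plan is to prove the lemma by induction on the number of variables $n$; this is the classical argument behind the Schwartz--Zippel--DeMillo--Lipton--Ore bound, and I do not expect to need anything beyond it.

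For the base case $n = 1$, I would simply invoke the fact that a non-zero univariate polynomial of degree at most $d$ over a field has at most $d$ roots in $\F$, and therefore at most $d = d\,|S|^{0}$ roots in $S$.

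For the inductive step, assuming the statement for $n-1$ variables, I would regard $P$ as a univariate polynomial in $x_n$ with coefficients in $\F[x_1,\dots,x_{n-1}]$, writing
\[
P(x_1,\dots,x_n) \;=\; \sum_{i=0}^{k} x_n^{\,i}\, P_i(x_1,\dots,x_{n-1}),
\]
where $k \le d$ is the largest index with $P_k \not\equiv 0$, and I would record that $\deg(P_k) \le d-k$. Then, fixing a tuple $\vec{a} = (a_1,\dots,a_{n-1}) \in S^{n-1}$, I would split into two cases. If $P_k(\vec{a}) \neq 0$, then $P(\vec{a}, x_n)$ is a non-zero univariate polynomial in $x_n$ of degree exactly $k$, so it vanishes at most $k$ times over $S$; there are at most $|S|^{n-1}$ such tuples. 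If $P_k(\vec{a}) = 0$, I would make no claim about $x_n$ and pessimistically allow all $|S|$ values to be zeros, but by the induction hypothesis applied to $P_k$ there are at most $(d-k)\,|S|^{n-2}$ such tuples $\vec{a}$. Summing the two contributions,
\[
k \cdot |S|^{n-1} \;+\; (d-k)\,|S|^{n-2} \cdot |S| \;=\; d \cdot |S|^{n-1},
\]
which is the desired bound. The ``in particular'' clause then follows immediately: if $|S| > d$, then $|S|^{n} > d\,|S|^{n-1}$, so the zero set of $P$ cannot be all of $S^{n}$, i.e.\ $P$ is non-zero at some point of $S \times \cdots \times S$.

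I do not anticipate a genuine obstacle here. The one place that requires a little care is the degree bookkeeping in the inductive step: one must use $\deg(P_k) \le d-k$ (rather than merely $\deg(P_k) \le d$) so that the induction hypothesis contributes the factor $(d-k)$, which is exactly what makes the two terms above combine cleanly into $d\,|S|^{n-1}$.
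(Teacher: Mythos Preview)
Your proof is correct and is the standard induction argument for the Schwartz--Zippel--DeMillo--Lipton--Ore bound. The paper itself does not prove this lemma; it is stated in the preliminaries with citations and used as a black box, so there is no paper proof to compare against.
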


\subsection{Interpolation}
We now recall the standard interpolation lemma for extracting coefficients of univariates from their evaluations. 
    \begin{lemma}[Interpolation (cf. {\cite[Lemma 5.3]{S15}})]
        Let $R$ be a commutative ring that contains a field $\F$ of at least $d+1$ elements, and let $\alpha_0,\ldots, \alpha_d$ be distinct field elements in $\F$. Then, for each $i \in \set{0,\ldots, d}$, there are constants $\beta_{i0}, \ldots, \beta_{id} \in \F$ such that for every $f(x) = f_0 + f_1x + \cdots + f_d x^d \in R[x]$ we have
        \[
        f_i = \sum_{j=0}^d \beta_{ij} \cdot f(\alpha_{j}).
        \]
    \end{lemma}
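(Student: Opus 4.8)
The plan is to reduce the statement to the single classical fact that the Vandermonde matrix on $d+1$ distinct points is invertible over a field, and then to observe that an invertible $\F$-matrix may be applied entrywise over any commutative $\F$-algebra $R$.

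First I would set up the linear-algebraic picture. Writing $f(x) = \sum_{k=0}^d f_k x^k$ with $f_k \in R$, let $V \in \F^{(d+1)\times(d+1)}$ be the Vandermonde matrix with entries $V_{jk} = \alpha_j^{\,k}$ for $0 \le j, k \le d$. Then evaluating $f$ at the $\alpha_j$'s is exactly multiplication by $V$: for each $j$ we have $f(\alpha_j) = \sum_{k=0}^d V_{jk}\, f_k$, an identity that makes sense over $R$ because the entries of $V$ lie in $\F \subseteq R$. Next I would invoke the standard determinant evaluation $\det V = \prod_{0 \le j < k \le d}(\alpha_k - \alpha_j)$. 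Since the $\alpha_j$ are pairwise distinct and $\F$ is a field, each factor $\alpha_k - \alpha_j$ is a nonzero element of $\F$, hence a unit, so $\det V \in \F^{\times}$ and $V$ is invertible with $V^{-1} \in \F^{(d+1)\times(d+1)}$ — for instance by Cramer's rule, whose entries are cofactors divided by $\det V$ and therefore lie in $\F$. Setting $\beta_{ij} := (V^{-1})_{ij} \in \F$ and left-multiplying the vector of evaluations by $V^{-1}$ (a legitimate manipulation over $R$, since $V^{-1}$ has entries in $\F$) yields $f_i = \sum_{j=0}^d \beta_{ij}\, f(\alpha_j)$ for every $i$. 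The constants $\beta_{ij}$ depend only on $d$ and the chosen points, not on $f$, as required.

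There is essentially no obstacle to overcome; the one point deserving care is the assertion that $V^{-1}$ has entries in $\F$ itself (rather than merely in some larger ring), which is precisely where the hypothesis that $\F$ is a field with at least $d+1$ elements is used: it both provides the $d+1$ distinct points $\alpha_j$ and makes $\det V$ a unit of $\F$. As an alternative presentation, one could simply take $\beta_{ij}$ to be the coefficient of $x^i$ in the Lagrange basis polynomial $L_j(x) = \prod_{k \ne j} \frac{x - \alpha_k}{\alpha_j - \alpha_k} \in \F[x]$: the identity $f(x) = \sum_{j=0}^d f(\alpha_j)\, L_j(x)$ holds in $\F[x]$ for every $f$ of degree at most $d$ (both sides have degree at most $d$ and agree at the $d+1$ distinct points $\alpha_j$, and over the field $\F$ a nonzero polynomial of degree at most $d$ cannot have $d+1$ roots), and since this is an $\F$-linear identity among the coefficients it remains valid after base change to $R[x]$; comparing coefficients of $x^i$ then gives the claim.
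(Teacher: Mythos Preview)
Your proof is correct. The paper does not actually prove this lemma; it is stated without proof and attributed to \cite[Lemma~5.3]{S15}, so there is nothing to compare against beyond noting that your Vandermonde/Lagrange argument is exactly the standard one such a citation points to.
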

The following corollary invokes this lemma in some of the contexts that appear in our proof. The proof is immediate from the lemma.  
    \begin{corollary}[Standard consequences of interpolation]
        \label{cor:interpolation-consequences}
        Let $\alpha_0,\ldots, \alpha_d$ be distinct elements in  $\F$. Then, 
        \begin{enumerate}\itemsep0pt
        \item \textbf{[Partial derivatives]} If $C(\vecx, y)$ has degree $d$ in the variable $y$, then the $i$-th order partial derivative of $C$ with respect to $y$ can be expressed as an $\F[y]$-linear combination of $\setdef{C(\vecx, \alpha_j)}{j\in \set{0,\ldots, d}}$. That is, there are polynomials $\mu_0(y), \ldots, \mu_d(y)$ (not depending on $C$) of degree at most $d$ such that 
        \[
            \partial_{y^i} C(\vecx, y) = \mu_0(y) \cdot C(\vecx, \alpha_0) + \cdots + \mu_d(\vecy) \cdot C(\vecx, \alpha_d).
        \]
        \item \textbf{[Homogeneous components]} Let $C(\vecx)$ be a degree $d$ polynomial. Then, for any subset $\vecx_S \subseteq \vecx$ and any $i \in [d]$, the degree $i$ homogeneous part of $C$ with respect to $\vecx_S$, denoted by $\homog_{\vecx_S, i}(C)$, can be expressed as
        \[
            \homog_{\vecx_S, i}(C) = \sum_{j=0}^d \beta_{i,j} \cdot C(\alpha_j \cdot \vecx_S, \vecx_{\overline{S}})
        \]
        for some constants $\beta_{i,j} \in \F$ (not depending on $C$). 
        \item \textbf{[Truncation]} Let $C(\vecx)$ be a degree $d$ polynomial and let $\vecx_S \subseteq \vecx$ be a subset of variables. Then, for any $\ell \in [d]$, the truncation $C(\vecx) \trunc \inangle{\vecx_S}^{\ell+1}$ can be expressed as 
        \[
            C(\vecx) \trunc \inangle{\vecx_S}^{\ell+1} = \sum_{j=0}^d \gamma_{\ell, j} \cdot C(\alpha_j \cdot \vecx_S, \vecx_{\overline{S}})
        \]
        for some constants $\gamma_{\ell,j} \in \F$ (not depending on $C$). 
        \end{enumerate}
        In particular, if $C$ is computable by a size $s$, depth $\Delta$ circuit, then all of the above operations yield a circuit of size $\poly(d,s)$ and depth $\Delta + O(1)$. 
    \end{corollary}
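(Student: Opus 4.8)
The plan is to derive all three items from the single fact underlying the Interpolation lemma: over a commutative ring $R$ containing a field $\F$ with $|\F|\ge d+1$ (e.g. $\Q$), a polynomial of degree at most $d$ in one auxiliary variable is determined by its evaluations at the $d+1$ distinct points $\alpha_0,\ldots,\alpha_d$, and each of its coefficients is a \emph{fixed} $\F$-linear combination of those evaluations — the Lagrange/Vandermonde data, which is independent of the polynomial. Each of the three operations in the corollary amounts to reading off, or $\F$-linearly combining, certain such coefficients of a suitably chosen auxiliary univariate, so the proof is essentially a matter of exhibiting the right univariate in each case.

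For item~1 I would take $R=\F[\vecx]$ and regard $C(\vecx,y)\in R[y]$, of degree at most $d$ in $y$. Lagrange interpolation at $\alpha_0,\ldots,\alpha_d$ gives the identity $C(\vecx,y)=\sum_{j=0}^d L_j(y)\,C(\vecx,\alpha_j)$, where $L_j(y)\in\F[y]$ is the $j$-th Lagrange basis polynomial, of degree $d$ and independent of $C$. Applying the $\F$-linear operator $\partial_{y^i}$ to both sides (it acts only on the $y$-part, the $C(\vecx,\alpha_j)$ being elements of $R$) yields $\partial_{y^i}C(\vecx,y)=\sum_{j=0}^d \mu_j(y)\,C(\vecx,\alpha_j)$ with $\mu_j(y):=\partial_{y^i}L_j(y)$, of degree at most $d$ and not depending on $C$, as required.

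For items~2 and~3 I would introduce a fresh variable $t$ and consider, over $R=\F[\vecx]$, the univariate obtained by rescaling the $\vecx_S$-variables: $C(t\cdot\vecx_S,\vecx_{\overline{S}})=\sum_{i=0}^d \homog_{\vecx_S,i}(C)\,t^i$, whose coefficient of $t^i$ is by definition the degree-$i$ homogeneous component of $C$ with respect to $\vecx_S$. The Interpolation lemma in the variable $t$ with points $\alpha_0,\ldots,\alpha_d$ then gives $\homog_{\vecx_S,i}(C)=\sum_{j=0}^d \beta_{i,j}\,C(\alpha_j\cdot\vecx_S,\vecx_{\overline{S}})$ with constants $\beta_{i,j}$ independent of $C$, which is item~2. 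For item~3, since $C(\vecx)\trunc\inangle{\vecx_S}^{\ell+1}=\sum_{i=0}^{\ell}\homog_{\vecx_S,i}(C)$ is a fixed $\F$-linear combination of those coefficients, summing the relevant $\beta$'s gives $\gamma_{\ell,j}:=\sum_{i=0}^{\ell}\beta_{i,j}$ and the claimed expression.

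For the final circuit bound, note that in every case the output is an $\F[y]$-linear (item~1) or $\F$-linear (items~2,~3) combination of the $d+1$ polynomials $C(\vecx,\alpha_j)$, respectively $C(\alpha_j\cdot\vecx_S,\vecx_{\overline{S}})$; each is $C$ precomposed with a substitution that merely scales variables by field constants, hence of size at most $s$ and depth at most $\Delta$. Realising the linear combination costs one extra layer of addition gates, plus, in item~1, one extra layer of product gates to attach the coefficient univariates $\mu_j(y)$ (each of degree at most $d$, computable in size $O(d)$ and depth $2$); this gives size $\poly(d,s)$ and depth $\Delta+O(1)$. There is no genuine obstacle here; the only points that need a little care are checking that the interpolation coefficients truly do not depend on $C$ (they are Lagrange data attached to the fixed points $\alpha_j$) and that the correct auxiliary ring and auxiliary variable are used in each invocation of the lemma.
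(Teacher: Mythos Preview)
Your proposal is correct and matches the paper's intended approach: the paper states only that the proof is immediate from the interpolation lemma, and your argument---Lagrange interpolation in $y$ followed by differentiation for item~1, and interpolation in the auxiliary scaling variable $t$ for items~2 and~3---is exactly that immediate deduction. The circuit-size reasoning is also accurate.
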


\subsection{Strassen's division elimination}
We recall a classical theorem of Strassen for division elimination in algebraic circuits. 

\begin{theorem}[Division elimination in algebraic circuits \cite{Strassen1973}]\label{thm:div-elimination}
    Let $C$ be an algebraic circuit of size $s$ with division gates that computes a multivariate polynomial $P$ of degree $d$ over any sufficiently large field $\F$. 

    Then, there is an algebraic circuit $\Hat{C}$ of size at most $\poly(s,d)$ that computes the polynomial $P$ and does not have any division gates. 
\end{theorem}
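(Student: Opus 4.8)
The plan is to follow Strassen's classical argument: translate the variables to a generic point where no intermediate denominator vanishes, work with power series truncated to degree $d$, and replace each division gate by a truncated geometric series.

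First, since $C$ is a legal computation, each gate $v$ computes a rational function $r_v \in \F(\vecx)$, and at each division gate the divisor is a \emph{non-zero} rational function; writing each such divisor as $p_i/q_i$ with $p_i,q_i \in \F[\vecx]\setminus\set{0}$, set $g := \prod_i p_i q_i$, a non-zero polynomial of degree bounded by a function of $s$ alone. Since $\F$ is sufficiently large, the Polynomial Identity Lemma (\autoref{lem:SZ}) furnishes a point $\veca \in \F^n$ with $g(\veca)\neq 0$; then every divisor in $C$ is defined and non-zero at $\veca$, so every $r_v$ is defined at $\veca$. Perform the syntactic substitution $x_i \mapsto x_i + a_i$ at the leaves (adding only $O(n)$ gates and no divisions) to get a circuit $C'$ computing $P(\vecx+\veca)$, a degree-$d$ polynomial; now every gate of $C'$ computes a rational function $s_v := r_v(\vecx+\veca)$ that lies in $\F\llbracket\vecx\rrbracket$, and every divisor of $C'$ has a non-zero constant term.

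Next I would build, by induction on the structure of $C'$, a \emph{division-free} circuit that at each gate $v$ computes the truncation $\hat{s}_v := s_v \trunc \inangle{\vecx}^{d+1}$, i.e.\ the homogeneous components of $s_v$ of degrees $0,\dots,d$. Leaves and constants are immediate; at a sum gate we add the two truncations; at a product gate $v=u\cdot w$ we form $\hat{s}_u\cdot \hat{s}_w$ and re-truncate to degree $d$ (which costs $\poly(d)$ additional division-free gates, e.g.\ via the truncation operation of \autoref{cor:interpolation-consequences}, valid since $\F$ is large). At a division gate $v=u/w$, write $\hat{s}_w = c + h$ with $c = s_w(\veczero)\in\F^\times$ and $h\in\inangle{\vecx}$; since $h^m\in\inangle{\vecx}^m$, we have $1/s_w \equiv c^{-1}\sum_{m=0}^{d}(-h/c)^m \bmod\inangle{\vecx}^{d+1}$, and this truncated geometric series is computed from $\hat{s}_w$ by a division-free sub-circuit of size $\poly(d)$ (iteratively multiply by $-h/c$ and re-truncate); multiplying by $\hat{s}_u$ and re-truncating yields $\hat{s}_v$. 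Each of the $\le s$ gates of $C'$ spawns $\poly(d)$ new gates, so the result has size $\poly(s,d)$. At the output gate it computes $P(\vecx+\veca)\trunc\inangle{\vecx}^{d+1} = P(\vecx+\veca)$, since $\deg P = d$. Finally, undo the translation by the substitution $x_i \mapsto x_i - a_i$ at the leaves to obtain a division-free circuit $\Hat{C}$ of size $\poly(s,d)$ computing $P$.

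The only genuinely delicate point is the handling of the division gate: one must verify that the truncated geometric series really agrees with $1/s_w$ modulo $\inangle{\vecx}^{d+1}$ (immediate from $h\in\inangle{\vecx}$) and, crucially, that it can be realized with only $\poly(d)$ division-free gates \emph{while keeping the $\vecx$-degree bounded by $d$ throughout} --- which is exactly why every product in the induction must be followed by a re-truncation. Everything else (existence of $\veca$ from \autoref{lem:SZ}, the graded arithmetic for $+$ and $\times$, and the $\poly(s,d)$ size accounting) is routine.
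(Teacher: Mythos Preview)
The paper does not give its own proof of this statement; it is quoted as a classical preliminary result with a citation to \cite{Strassen1973} (and the standard exposition in \cite{SY10}). Your argument is precisely the standard Strassen proof --- translate to a point where all intermediate denominators are invertible, compute truncations modulo $\inangle{\vecx}^{d+1}$ gate by gate, and replace each division by a truncated geometric series --- so there is nothing to compare.
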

As a matter of notation, algebraic circuits throughout this paper do not have division gates. We explicitly mention if we have to deal with circuits with division gates at any point in our proof. 

In our proofs, we rely on the following consequence of the above theorem. The proof easily follows from the standard proof of \autoref{thm:div-elimination}, for instance in \cite{SY10}. 
\begin{lemma}[Algorithmic division elimination]\label{lem:div-elimination-algorithm}
Let $\F$ be any field. 

There is a deterministic algorithm that takes as input algebraic circuits of size at most $s$ computing $n$ variate polynomials $A, B$ over $\F$, a point $\vecu \in \F^n$ and a degree parameter $d$ such that (a) $B$ divides $A$, (b) the quotient $A/B$ has degree at most $d$, and (c) $B(\vecu) \neq 0$, and outputs a (division free) algebraic circuit for the quotient $A/B$. Moreover, the algorithm runs in time $\poly(s,d)$. 
\end{lemma}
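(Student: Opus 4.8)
The plan is to follow the standard proof of Strassen's division elimination (\autoref{thm:div-elimination}), keeping track of the fact that we are handed the point $\vecu$ and therefore never need the underlying field to be large. First I would apply the invertible linear change of variables $\vecx \mapsto \vecx + \vecu$ to both input circuits; this is division-free, increases the size by only an additive $O(n)$, commutes with taking quotients, and reduces everything to the case $\vecu = \veczero$. So assume $b_0 := B(\veczero) \neq 0$; this scalar is obtained by evaluating the circuit for $B$ at $\veczero$ in $O(s)$ field operations, and $b_0^{-1}$ by a single inversion in $\F$. Write $B = b_0 \cdot (1 - B')$ where $B' := 1 - b_0^{-1}B$ has no constant term, i.e.\ $B' \in \inangle{\vecx}$. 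As a power series, $B^{-1} = b_0^{-1}\sum_{j\geq 0}(B')^j$, and since every monomial of $(B')^j$ has degree at least $j$, for the quotient $Q := A/B$ (which has degree at most $d$ by hypothesis (b)) we get the \emph{exact} polynomial identity
\[
Q = \left( b_0^{-1}\, A \cdot \sum_{j=0}^{d} (B')^j \right) \trunc \inangle{\vecx}^{d+1},
\]
because the tail terms $A\,(B')^j$ with $j > d$ all lie in $\inangle{\vecx}^{d+1}$ and are killed by the truncation, as is everything else of degree larger than $d$, while $Q$ itself is unaffected by it.

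Next I would write down an explicit division-free circuit $E$ for $b_0^{-1}A\cdot\sum_{j=0}^{d}(B')^j$: evaluating the geometric sum in the Horner-like nested form $1 + B'(1 + B'(\cdots(1 + B')))$ with $d$ nestings costs a circuit of size $O(ds)$ on top of the circuit for $B'$, and one further multiplication by the size-$s$ circuit for $A$ together with scaling by $b_0^{-1}$ produces a circuit for $E$ of size $\poly(s,d)$ with no division gates.

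Finally, I would invoke the standard homogenization of algebraic circuits (Strassen, cf.\ \cite{SY10}): given a size-$s'$ division-free circuit for $E$ one can, over \emph{any} field and deterministically in time $\poly(s',d)$, construct a circuit of size $\poly(s',d)$ computing all homogeneous components $\homog_0(E),\ldots,\homog_d(E)$ simultaneously --- replace each wire by $d+1$ wires carrying these components, turn each product gate into the degree-$\le d$ part of the convolution of its inputs' component vectors, and each sum gate into componentwise sums. Summing the output components of degrees $0$ through $d$ yields a division-free circuit for $E \trunc \inangle{\vecx}^{d+1} = Q$ of size $\poly(s,d)$; composing with $\vecx \mapsto \vecx - \vecu$ to undo the initial translation gives the claimed circuit, and the total running time is manifestly $\poly(s,d)$.

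I do not anticipate a genuine obstacle. The only two points that need care are: (i) observing that, because $\vecu$ is supplied as input, we sidestep the "sufficiently large field" hypothesis of \autoref{thm:div-elimination} (which is needed there only to \emph{find} a good evaluation point via the polynomial identity lemma), so the lemma indeed holds over any field; and (ii) the degree bookkeeping that makes the displayed truncation identity exact, namely that $(B')^j \in \inangle{\vecx}^j$ so the $j>d$ terms contribute nothing modulo $\inangle{\vecx}^{d+1}$. The homogenization step is entirely routine, so the "hard part" is merely checking that it can be carried out algorithmically within the stated time bound, which is immediate from the explicit description above.
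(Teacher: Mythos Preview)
Your proposal is correct and follows exactly the standard Strassen division-elimination argument that the paper itself defers to; the paper does not give its own proof of this lemma but simply states that it ``easily follows from the standard proof of \autoref{thm:div-elimination}, for instance in \cite{SY10}.'' Your write-up is precisely that standard proof, with the correct observation that the supplied point $\vecu$ removes the need for a large field.
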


\subsection{Resultant and Discriminant}
The notion of resultant and its close connection to GCD of two univariate polynomials plays an important role in our proof (and in most polynomial factorization algorithms). We start by recalling the definition. 

\begin{definition}[Sylvester Matrix and Resultant]\label{def:resultant}
    Let $\F$ be any field, and let $P$ and $Q$ be univariates over $\F$ of degree equal to $a \geq 1$ and $b \geq 1$ respectively. 
    
    Let $\Gamma_{P,Q}: \F^{b} \times \F^{a} \to \F^{a+b}$ be the $\F$ linear map that maps a pair $(U, V)$ of univariates over $\F$ with degree of $U$ at most $(b-1)$ and degree of $V$ at most $(a-1)$ to the polynomial $(UP + VQ)$ of degree at most $(a+b-1)$.

    Then, the Sylvester matrix of $P$ and $Q$ is the $(a+b) \times (a+b)$ matrix for the $\F$-linear map $\Gamma_{P,Q}$, when the inputs and the outputs are represented as their coefficient vectors. 
    
    And, the Resultant of $P, Q$ is defined as the determinant of the Sylvester Matrix of $P$ and $Q$. 
\end{definition}
Clearly, the entries of the Sylvester matrix as defined above are the coefficients of $P$ and $Q$. 

In some applications in this paper, we end up invoking the notion of the resultant while working with multivariates $P$, $Q$. In these applications, we think of these multivariates as univariates in one of the variables, with the coefficients coming from the field of rational functions in the other variables. Unless otherwise clear from the context, we indicate the variable with respect to which these definitions are invoked. 

The resultant has a deep and extremely useful connection to the GCD of two polynomials as the following classical theorem indicates. We refer to Chapter 6 in the book \cite{GG13} for a proof. 

\begin{theorem}[Resultants and GCD {\cite[Corollary 6.20]{GG13}}]\label{thm:resultants-gcd}
        Let $\mathcal{R}$ be a unique factorization domain, and let $P,Q \in \mathcal{R}[z]$ be non-zero polynomials. Then, $\deg_z(\gcd(P,Q)) \geq 1$ if and only if $\Res{y}{P}{Q} = 0$, where $\gcd(P,Q) \in \mathcal{R}[z]$ and $\Res{z}{P}{Q} \in \mathcal{R}$. Moreover, there exist polynomials $A, B \in \mathcal{R}[z]$ satisfying $\Res{z}{P}{Q} = AP + BQ$. 
\end{theorem}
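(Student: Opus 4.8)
The plan is to turn both assertions into statements about the $K$-linear map $\Gamma_{P,Q}$ underlying the Sylvester matrix, where $K = \operatorname{Frac}(\mathcal{R})$, and then descend back to $\mathcal{R}$ using that $\mathcal{R}$ is a UFD. Write $a = \deg_z P \ge 1$, $b = \deg_z Q \ge 1$, and let $S$ be the Sylvester matrix; its entries are coefficients of $P$ and $Q$, hence lie in $\mathcal{R}$, and regarded over $K$ it represents the map $(U,V) \mapsto UP + VQ$ on the $(a+b)$-dimensional space of pairs with $\deg_z U < b$, $\deg_z V < a$. By definition $\Res{z}{P}{Q} = \det S$.

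For the equivalence, I would first observe that $\det S = 0$ over $K$ precisely when $\Gamma_{P,Q}$ has a nonzero kernel, i.e. there is $(U,V) \ne (0,0)$ with $\deg_z U < b$, $\deg_z V < a$ and $UP = -VQ$. If $\gcd_{K[z]}(P,Q)$ has degree $0$, then $P,Q$ are coprime in the PID $K[z]$, so $Q \mid UP$ forces $Q \mid U$; since $\deg_z U < b = \deg_z Q$ this gives $U = 0$ and then $V = 0$, a contradiction --- so having no common factor of positive degree implies $\Res{z}{P}{Q} \ne 0$. Conversely, if $g = \gcd_{K[z]}(P,Q)$ has $\deg_z g \ge 1$, then $P = g P_1$, $Q = g Q_1$ and $(U,V) = (Q_1, -P_1)$ is a nonzero kernel element (its degree bounds hold because $\deg_z g \ge 1$, and $Q_1 P - P_1 Q = 0$), so $\Res{z}{P}{Q} = 0$. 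Finally, to pass from $\gcd_{K[z]}$ to $\gcd_{\mathcal{R}[z]}$, I would invoke Gauss's lemma: in a UFD, writing each of $P, Q$ as content times primitive part, the primitive parts have the same irreducible factorization over $\mathcal{R}[z]$ as over $K[z]$, so $\deg_z \gcd_{\mathcal{R}[z]}(P,Q) = \deg_z \gcd_{K[z]}(P,Q)$, which yields the stated equivalence for $\gcd$ taken in $\mathcal{R}[z]$.

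For the identity $\Res{z}{P}{Q} = AP + BQ$ with $A,B \in \mathcal{R}[z]$, the point is to stay over $\mathcal{R}$ rather than solve a linear system over $K$. Ordering coefficient vectors so the constant term occupies the last coordinate, the identity $S \cdot \operatorname{adj}(S) = (\det S)\, I = \Res{z}{P}{Q} \cdot I$, valid over any commutative ring, shows that the last column $w$ of $\operatorname{adj}(S)$ satisfies $S w = \Res{z}{P}{Q} \cdot e_{a+b}$; since $S$ has entries in $\mathcal{R}$, so does $\operatorname{adj}(S)$, hence $w \in \mathcal{R}^{a+b}$. Reading the two blocks of $w$ as coefficient vectors of polynomials $A$ (of degree $< b$) and $B$ (of degree $< a$) over $\mathcal{R}$, the equation $S w = \Res{z}{P}{Q} \cdot e_{a+b}$ is exactly $AP + BQ = \Res{z}{P}{Q}$. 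The main thing requiring care is the interface between $\mathcal{R}[z]$ and $K[z]$ --- both the Gauss's lemma step in the first part and the insistence on $\operatorname{adj}$ (rather than Cramer's rule over $K$, which would only yield coefficients in $K$) in the second --- while the linear-algebraic core over the field $K$ is routine.
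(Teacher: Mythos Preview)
Your proof is correct and follows the standard textbook argument: analyze the Sylvester linear map over the fraction field $K$ to get the equivalence $\Res{z}{P}{Q}=0 \Leftrightarrow \deg_z\gcd_{K[z]}(P,Q)\ge 1$, descend to $\mathcal{R}[z]$ via Gauss's lemma, and use the adjugate identity $S\cdot\operatorname{adj}(S)=(\det S)I$ over $\mathcal{R}$ to obtain $A,B\in\mathcal{R}[z]$ with $AP+BQ=\Res{z}{P}{Q}$. The paper does not supply its own proof of this preliminary fact --- it simply cites \cite[Corollary~6.20]{GG13} --- and what you have written is essentially the argument one finds there.
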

A special case of the resultant that is very natural to study is when we consider the resultant of a polynomial $P(z)$ and its derivative $\frac{dP}{dz}$. This resultant is referred to as the \emph{discriminant} of $P$, and unless the derivative vanishes for trivial reasons (for instance over fields of small characteristic), the discriminant captures the squarefreeness of $P$. More precisely, we have the following theorem. 
\begin{theorem}[Discriminant and squarefreeness {\cite[Lemma 12]{DSS22-closure}}]\label{thm:discriminant-squarefreeness}
        Let $\F$ be any field of characteristic zero, and let $P(z)$ be a univariate over $\F$ of degree at least one. Then, $P(z)$ is squarefree if and only if its discriminant $\Res{z}{P}{\frac{\partial P}{\partial z}}$ is non-zero. 
\end{theorem}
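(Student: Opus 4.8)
The plan is to reduce the statement to the resultant--GCD correspondence (\autoref{thm:resultants-gcd}), applied to the pair $P$ and $P' := \frac{\partial P}{\partial z}$ over the ground field $\F$, which is a UFD (indeed a PID). Since $\F$ has characteristic zero and $\deg P \geq 1$, the derivative $P'$ is a nonzero polynomial of degree $\deg P - 1$, so the hypotheses of that theorem are satisfied and it yields
$\Res{z}{P}{\frac{\partial P}{\partial z}} = 0$ if and only if $\deg_z \gcd(P, P') \geq 1$. (The case $\deg P = 1$ is a harmless boundary case: then $P'$ is a nonzero constant, $\Res{z}{P}{\frac{\partial P}{\partial z}}$ is a power of its leading coefficient, hence nonzero, matching the fact that linear polynomials are squarefree; I would dispatch it in one line.) Thus it remains to establish the purely algebraic equivalence: $P$ is \emph{not} squarefree if and only if $\gcd(P, P')$ is non-constant.

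For this, I would write $P = c \prod_{i=1}^m P_i^{e_i}$ with $c \in \F^\times$ and the $P_i$ distinct monic irreducibles, so that $P' = c \sum_{i} e_i P_i^{e_i - 1} P_i' \prod_{j \neq i} P_j^{e_j}$. If some $e_i \geq 2$, then $P_i$ divides both $P$ and every summand of $P'$, hence $P_i \mid \gcd(P, P')$ and the GCD is non-constant. Conversely, assume all $e_i = 1$, i.e. $P = c\prod_i P_i$ is squarefree; I claim $P_i \nmid P'$ for each $i$. Reducing modulo $P_i$, every summand of $P'$ in which some $P_j$ with $j \neq i$ was differentiated still contains $P_i$ as a factor and vanishes, leaving $P' \equiv c\, P_i' \prod_{j\neq i} P_j \pmod{P_i}$. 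This is exactly where characteristic zero is used: $\deg P_i \geq 1$ forces $P_i' \neq 0$ with $\deg P_i' < \deg P_i$, so $P_i \nmid P_i'$, and $P_i$ is prime and divides none of the $P_j$ with $j \neq i$; hence $P_i \nmid P'$. Since $\gcd(P,P')$ divides the squarefree $P$, it equals $c'\prod_{i \in I} P_i$ for some $c' \in \F^\times$ and $I \subseteq [m]$; but $i \in I$ would give $P_i \mid P'$, contradicting the claim, so $I = \emptyset$ and $\gcd(P, P')$ is a nonzero constant.

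Chaining the two equivalences gives: $P$ is not squarefree $\iff \deg_z \gcd(P, P') \geq 1 \iff \Res{z}{P}{\frac{\partial P}{\partial z}} = 0$, and negating both sides proves the theorem. The only genuinely delicate point is the step $P_i \nmid P_i'$, which is precisely where the characteristic-zero hypothesis is indispensable (it fails for inseparable irreducibles in positive characteristic); apart from that and the degree-one boundary case, the argument is a direct application of \autoref{thm:resultants-gcd} together with unique factorization in $\F[z]$.
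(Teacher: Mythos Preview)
The paper does not give its own proof of this statement; it is recorded as a preliminary and cited from \cite[Lemma 12]{DSS22-closure}. Your argument is the standard one and is correct: you invoke \autoref{thm:resultants-gcd} to convert the vanishing of the discriminant into the condition $\deg_z \gcd(P,P') \geq 1$, and then establish directly from the irreducible factorization of $P$ that this is equivalent to $P$ having a repeated factor, correctly isolating the characteristic-zero hypothesis at the step $P_i \nmid P_i'$. The only nitpick is that the paper's \autoref{def:resultant} formally requires both inputs to have degree at least one, so the $\deg P = 1$ case (where $P'$ is a nonzero constant) is not literally covered by that definition; you rightly flag this and dispatch it via the standard convention, so there is no gap.
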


A beautiful result of Andrews and Wigderson shows that the resultant can be computed by a constant-depth circuit.
\begin{theorem}[Computing resultants via constant-depth circuits {\cite[Theorem 6.1]{AW24}}]\label{thm:resultant-constant-depth-andrews-wigderson}
    Let $\F$ be a field of characteristic zero. For a fixed $\Delta \in \N$, there is a family of depth-$\Delta$ circuits $\{C_n\}_{n\in \N}$ with size $\leq \poly(n)$ such that the following is true. If $f, g \in \F[z]$ are degree-$n$ univariate polynomials given by their coefficients, then $C_n$ takes the coefficients of $f$ and $g$ as input, and computes the resultant $\Res{z}{f}{g}$.
\end{theorem}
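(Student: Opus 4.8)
The plan is to realise $\Res{z}{f}{g}$ as a highly \emph{structured} determinant and then work to bring the depth down to a constant. By \autoref{def:resultant}, $\Res{z}{f}{g} = \det(\mathrm{Syl}_{f,g})$, where the $(a+b)\times(a+b)$ Sylvester matrix is banded: its rows are shifted copies of the coefficient vectors of $f$ and of $g$. Hence $\Res{z}{f}{g}$ is a single fixed polynomial of degree $a+b \le 2n$ in the $2n+2$ coefficient indeterminates. One cannot simply quote a constant-depth circuit for this determinant, since no polynomial-size constant-depth circuit for the determinant of a \emph{generic} matrix is known. Instead I would use two interchangeable structural handles: (i) $\mathrm{Syl}_{f,g}$ is a vertical stack of two Toeplitz-structured blocks, so it has bounded displacement rank; equivalently (ii) up to a factor $\mathrm{lc}(f)^{\deg g}$ (itself trivially computable in depth $O(1)$), $\Res{z}{f}{g}$ is the \emph{norm} of $g$ in $\F[z]/(f)$, i.e. the determinant of the $a \times a$ matrix of the ``multiply by $g$'' operator in the basis $1, z, \dots, z^{a-1}$.

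Concretely, I would try to reduce everything to polynomial \emph{division with remainder} in constant depth: given $f, g$, produce $q, r$ with $f = qg + r$ and $\deg r < \deg g$. Reversing the coefficient order turns this into truncated power-series inversion of a polynomial with invertible constant term, $\tilde g^{-1} \bmod z^{N}$ with $N \le n+1$, followed by a multiplication. Granting this as a $\poly(n)$-size, depth-$O(1)$ operation, the classical subresultant / remainder-sequence identities (equivalently, a structured-linear-algebra solve associated with the Bézout relation of \autoref{thm:resultants-gcd}) express $\Res{z}{f}{g}$ — and then the $a \times a$ norm determinant above — by a constant-depth computation in the coefficients. Any division gates introduced along the way are removed at the very end via \autoref{thm:div-elimination} and \autoref{lem:div-elimination-algorithm}, using that the relevant denominators (e.g. leading coefficients) are nonzero, so that the final circuit $C_n$ is genuinely division-free, of size $\poly(n)$ and depth $\Delta$.

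The hard part — where essentially all of the work lies — is the depth bound itself: each natural ingredient (truncated power-series inversion, powering of the companion matrix of $f$, running the remainder sequence) is \emph{a priori} an $O(\log n)$- or $O(\log^2 n)$-depth computation, and collapsing it to \emph{constant} depth with only polynomial blow-up is the entire point. This is forced to exploit the special features of the instance: the output polynomial has degree only $2n$, and the matrices in play have $O(1)$ displacement rank, so their inverses and determinants admit short Toeplitz-generator descriptions; turning these into an explicit constant-depth formula (rather than an $\mathrm{NC}$ circuit) is the crux of the Andrews--Wigderson argument and the step I expect to be the main obstacle. Once $\Res{z}{\cdot}{\cdot}$ is available in constant depth, the discriminant $\Res{z}{P}{\partial_z P}$ of a constant-depth $P$ is constant depth, and via \autoref{thm:discriminant-squarefreeness} this yields the squarefreeness test (and, with a little more bookkeeping, the squarefree decomposition) that the rest of the paper relies on.
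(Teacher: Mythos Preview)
The paper does not prove this statement at all: it is quoted verbatim as \cite[Theorem 6.1]{AW24} and used as a black box in the preprocessing step (to certify that the discriminant of a constant-depth circuit is itself constant-depth). So there is no ``paper's own proof'' to compare against.

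That said, your proposal is not a proof either, and you say so yourself: the entire content of the theorem is precisely the step you flag as ``the main obstacle'' and do not carry out. Everything before that --- Sylvester/Toeplitz structure, bounded displacement rank, reduction to remaindering and truncated inversion --- gets you an $\mathrm{NC}$ circuit, which was already classical; the theorem is exactly the assertion that one can go from $O(\log n)$ or $O(\log^2 n)$ depth down to $O(1)$ depth with only polynomial size. Your write-up outlines the landscape around the theorem but does not supply the mechanism that collapses the depth.

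For what it is worth, the route Andrews and Wigderson actually take is not via displacement structure or remainder sequences. It goes through symmetric functions: one writes $\Res{z}{f}{g}$ (up to a leading-coefficient power) as $\prod_i g(\alpha_i)$ over the roots $\alpha_i$ of $f$, observes that the power sums $\sum_i g(\alpha_i)^k$ expand as polynomials in the power sums $\sum_i \alpha_i^j$, and then uses their constant-depth conversions between elementary symmetric polynomials and power sums (the very \autoref{lem:esym-psym-transform-constant-depth} that this paper also imports) to pass from coefficients of $f$ to $\psym$'s, compute the needed $\psym$'s of the $g(\alpha_i)$, and convert back to recover the product. That machinery is what makes the depth constant; your Toeplitz/displacement picture, while correct as structure, does not by itself give a constant-depth algorithm.
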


\subsection{Gauss' lemma}
The following basic lemma of Gauss is useful for us in our proof. 
\begin{lemma}[Gauss' lemma {\cite[Section 6.2, Corollary 6.10]{GG13}}]\label{lem:gauss}
Let $\mathcal{R}$ be a unique factorization domain with the field of fractions $\mathcal{K}$. Then, a monic polynomial $P(z)$ is irreducible in $\mathcal{R}[z]$ if and only if it is irreducible in $\mathcal{K}[z]$. \\
In particular, for any monic $P(z)$, the factorization of $P(z)$ into its irreducible factors in $\mathcal{R}[z]$ is identical to the factorization of $P(z)$ into its irreducible factors in $\mathcal{K}[z]$.
\end{lemma}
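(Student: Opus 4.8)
The plan is to follow the classical content-and-primitivity route. For a nonzero $f \in \mathcal{R}[z]$, let $\operatorname{cont}(f)$ denote a gcd of its coefficients (well-defined up to units of $\mathcal{R}$, since $\mathcal{R}$ is a UFD), and call $f$ \emph{primitive} if $\operatorname{cont}(f)$ is a unit; since $P$ is monic, $\operatorname{cont}(P)$ divides $\mathrm{lc}(P) = 1$, so $P$ is in particular primitive. Every nonzero $g \in \mathcal{K}[z]$ can be written as $g = c \cdot g^{*}$ with $c \in \mathcal{K}^{\times}$ and $g^{*} \in \mathcal{R}[z]$ primitive, uniquely up to units of $\mathcal{R}$. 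The single nontrivial ingredient is the multiplicativity of primitivity: \emph{if $f, g \in \mathcal{R}[z]$ are primitive, then $fg$ is primitive}. I expect this to be the main obstacle; everything else is bookkeeping with contents.

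To prove multiplicativity, suppose $fg$ were not primitive, so some prime $p \in \mathcal{R}$ divides every coefficient of $fg$. Since $\mathcal{R}$ is a UFD, $(p)$ is a prime ideal, so $\mathcal{R}/(p)$, and hence $(\mathcal{R}/(p))[z]$, is an integral domain. Reducing coefficients modulo $p$ gives $\bar f \cdot \bar g = \overline{fg} = 0$ in $(\mathcal{R}/(p))[z]$, forcing $\bar f = 0$ or $\bar g = 0$; that is, $p$ divides all coefficients of $f$ or all coefficients of $g$, contradicting primitivity.

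Next I would deduce the irreducibility equivalence, using that the units of $\mathcal{R}[z]$ are exactly $\mathcal{R}^{\times}$. If a monic $P$ factors as $P = AB$ in $\mathcal{R}[z]$ with $A,B$ non-units, then $\mathrm{lc}(A)\,\mathrm{lc}(B) = 1$, so both leading coefficients are units; rescaling by units we may take $A,B$ monic, and a monic polynomial of degree $0$ is $1$, a unit, so $\deg A, \deg B \geq 1$ — a nontrivial factorization in $\mathcal{K}[z]$. Conversely, if $P = fg$ in $\mathcal{K}[z]$ with $\deg f, \deg g \geq 1$, write $f = c_f f^{*}$ and $g = c_g g^{*}$ with $f^{*}, g^{*}$ primitive in $\mathcal{R}[z]$; then $P = (c_f c_g)\, f^{*} g^{*}$, where $f^{*} g^{*}$ is primitive by the multiplicativity lemma and $P$ is primitive. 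Writing $c_f c_g = a/b$ with $a,b \in \mathcal{R}$ coprime and comparing contents in $bP = a\, f^{*} g^{*}$ shows $a$ and $b$ are associates, so $c_f c_g \in \mathcal{R}^{\times}$ and $P = u\, f^{*} g^{*}$ with $u \in \mathcal{R}^{\times}$ factors nontrivially in $\mathcal{R}[z]$. Hence $P$ is irreducible in $\mathcal{R}[z]$ iff it is irreducible in $\mathcal{K}[z]$.

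Finally, for the "in particular" clause, factor $P = \prod_i P_i$ with each $P_i$ monic and irreducible in $\mathcal{K}[z]$ (possible because $P$ is monic, by rescaling the factors). The same content computation, applied to any partial product $Q \mid P$ that is monic in $\mathcal{K}[z]$ — writing $Q = a Q^{*}$ and $P/Q = b R^{*}$ with $Q^{*}, R^{*}$ primitive in $\mathcal{R}[z]$ — forces $ab \in \mathcal{R}^{\times}$, and matching leading coefficients, $a\,\mathrm{lc}(Q^{*}) = 1 = b\,\mathrm{lc}(R^{*})$, then forces $a,b \in \mathcal{R}^{\times}$, so $Q \in \mathcal{R}[z]$ and is primitive. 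Thus $P = \prod_i P_i$ is already a factorization in $\mathcal{R}[z]$, and by the first part each $P_i$ is irreducible there. Since $\mathcal{R}$ is a UFD, so is $\mathcal{R}[z]$, and by uniqueness of factorization this is the irreducible factorization of $P$ in $\mathcal{R}[z]$, coinciding with the one over $\mathcal{K}$.
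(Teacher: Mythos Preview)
Your proof is correct and follows the standard content-and-primitivity argument for Gauss' lemma. The paper does not actually prove this statement; it simply cites it as \cite[Section 6.2, Corollary 6.10]{GG13} and uses it as a black box, so there is no paper proof to compare against.
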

Moreover, for a monic $P(z) \in \mathcal{K}[z]$, its factors can also be assumed to be monic in $z$ without loss of generality.

\subsection{The Kabanets-Impagliazzo generator}
The Kabanets-Impagliazzo hitting set generator \cite{KI04} for algebraic circuits is an adaptation of the Nisan-Wigderson generator in classical complexity to the algebraic setting. It allows us to obtain non-trivial derandomization for PIT for algebraic circuits from sufficiently hard explicit polynomial families. More precisely, Kabanets \& Impagliazzo showed in \cite{KI04} that given an explicit polynomial family that requires exponential size algebraic circuits, there is a deterministic algorithm for PIT for algebraic circuits (with size and degree polynomially bounded in the number of variables) that runs in quasipolynomial time.  

This generator and the details of its analysis play an important role in the proofs in this paper. We start by recalling the notion of combinatorial designs, and then define the generator. 
\begin{definition}[Combinatorial designs]\label{def:designs}
Let $n, \sigma, \mu, \rho \in \N$. A family of subsets $\mathcal{S} = (S_1, \dots, S_n)$ is a $(n,\sigma,\mu,\rho)$ design if:
\begin{itemize}
    \item $\forall i\in[n]: \, S_i \subseteq [\mu]$
    \item $\forall i\in[n]: \, |S_i| = \sigma$
    \item For any $i,j \in [n]$ s.t. $i \neq j:$ $|S_i \cap S_j| < \rho$
\end{itemize}
\end{definition}
The following theorem gives an explicit construction of such designs. 
\begin{lemma}[Explicit construction of designs \cite{NW94}] \label{lem:explicit-designs}  For any positive integers $n,\sigma$ with $n<2^{\sigma}$, there exists an explicit $(n,\sigma,\mu,\rho)$-design with $\mu=\displaystyle\frac{\sigma^2}{\log n}$ and $\rho=\log n$. 

Moreover, each subset inside the design can be computed in $\poly(n,2^{\mu})$ time deterministically.

\end{lemma}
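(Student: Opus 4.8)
The plan is to use the standard polynomial-based (Reed--Solomon-type) construction of combinatorial designs from \cite{NW94}. It is cleanest to aim for a design of the special ``graph'' shape: identify the universe $[\mu]$ with a product $D \times \Sigma$, where $|D| = \sigma$ serves as a set of ``coordinates'' and $|\Sigma| = \mu/\sigma = \sigma/\log n$ serves as an ``alphabet'', and let each $S_i$ be the graph $\{(a, g_i(a)) : a \in D\}$ of a function $g_i \colon D \to \Sigma$. With sets of this form the first two requirements come essentially for free: $|S_i| = |D| = \sigma$, and $|S_i \cap S_j|$ is exactly the number of coordinates on which $g_i$ and $g_j$ agree. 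So it suffices to exhibit, explicitly, $n$ functions $g_1, \dots, g_n \colon D \to \Sigma$ --- equivalently, a code of block length $\sigma$ over an alphabet of size $\sigma/\log n$ with at least $n$ codewords --- any two of which agree on fewer than $\rho = \log n$ coordinates. One obtains such a family from low-degree polynomials: fix a prime power $q$ of size about $\sigma/\log n$ (a power of two in $[\sigma/\log n,\, 2\sigma/\log n)$ works, with the constant factors absorbed into the stated parameters), and let the $g_i$ range over the codewords of a suitable explicit code of dimension $\Theta(\log n)$ over $\mathbb{F}_q$ --- concretely a Reed--Solomon code, passing to an extension field $\mathbb{F}_{q^r}$ with $q^r \geq \sigma$ and concatenating with a trivial inner map so that the final alphabet is again $\mathbb{F}_q$, or else the explicit family of polynomials of degree less than $\log n$ from \cite{NW94} directly.

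Granting such a code, the intersection bound is immediate from the minimum distance: two distinct codewords of a code with minimum distance greater than $\sigma - \log n$ agree in fewer than $\log n$ of the $\sigma$ coordinates, which in the polynomial realisation is simply the fact that two distinct polynomials of degree less than $\log n$ agree in at most $\log n - 1$ points. The count is handled by the usual parameter balance: over an alphabet of size $q \approx \sigma/\log n$ and block length $\sigma$, a code of dimension $k$ has $q^k$ codewords, and one may take $k$ as large as $\lceil \log n / \log q\rceil \leq \log n$ while keeping the distance above $\sigma - \log n$; this already gives $q^k \geq q^{\,\log n/\log q} = n$, so there are enough codewords (the hypothesis $n < 2^\sigma$ keeps these constraints non-degenerate, up to the constant slack noted above).

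Finally, explicitness and the claimed running time are routine: the prime power $q$ is found by inspection; the relevant codewords --- at most $\poly(n)$ of them --- are enumerated directly; and each is evaluated at the $\sigma$ fixed coordinates to write down the corresponding $S_i$, all over a field of size $\poly(\sigma)$, for a total of $\poly(n,\sigma)$ time, comfortably within $\poly(n, 2^{\mu})$. The one genuinely delicate point is the joint calibration of the three parameters: bringing the universe down to $\sigma^2/\log n$ (rather than the $\sigma^2$ one gets from plain Reed--Solomon over a field of size $\approx \sigma$) while simultaneously keeping the number of codewords at least $n$ and the pairwise agreement below $\log n$. This forces the densest admissible choice of $q$, and hence the small-alphabet code above, together with some care about prime-power existence and rounding; every other step is standard.
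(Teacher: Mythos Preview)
The paper does not prove this lemma; it is quoted from \cite{NW94}. That said, your proposal has a real gap at the point you flag as ``genuinely delicate.'' You need a code of block length $\sigma$, alphabet size $q \approx \sigma/\log n$, with $n$ codewords and pairwise agreement strictly below $\log n$ --- i.e., minimum distance exceeding $\sigma - \log n$. This is essentially an MDS requirement with block length much larger than the alphabet, and your suggested construction does not deliver it. Reed--Solomon over $\mathbb{F}_{q^r}$ with $q^r \geq \sigma$ gives codewords over the \emph{large} alphabet $\mathbb{F}_{q^r}$; your ``trivial inner map'' $\mathbb{F}_{q^r} \to \mathbb{F}_q^r$ then multiplies the block length by $r$ (so the set size becomes $r\sigma$, not $\sigma$) and, worse, destroys the agreement bound: two distinct outer symbols can map to $r$-tuples that agree on up to $r-1$ coordinates, so the inner-level agreement can be close to $r\sigma$ even when the outer agreement is tiny. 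No simple projection $\mathbb{F}_{q^r} \to \mathbb{F}_q$ avoids this, because any such map is $q^{r-1}$-to-one and collapses many distinct symbols. In short, the polynomial/RS construction naturally yields the trade-off $\mu = \Theta(\sigma^2)$ with $\rho = \Theta(\log n/\log\sigma)$, not the $\mu = \Theta(\sigma^2/\log n)$ with $\rho = \Theta(\log n)$ stated here.

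The argument in \cite{NW94} for these particular parameters is not algebraic but greedy: one shows, by a counting argument over a universe of size $O(\sigma^2/\log n)$, that after fixing $S_1,\ldots,S_{i-1}$ there is always a size-$\sigma$ set meeting each of them in fewer than $\log n$ elements, and one finds it by exhaustive or conditional-expectation search. This is exactly why the running time is stated as $\poly(n,2^{\mu})$ rather than the $\poly(n,\sigma)$ you would get from an algebraic construction --- the bound itself is a hint that the proof is not the Reed--Solomon one. (For what it is worth, the weaker $\mu = O(\sigma^2)$ from plain Reed--Solomon would also suffice for every application in this paper, since only $\mu = n^{O(\varepsilon)}$ is ever used; but it does not prove the lemma as stated.)
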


\begin{definition}[KI-generator $\ki_{g,\mathcal{S}}$ {\cite{KI04}}]\label{def:KI-generator}
    Let $\F$ be a field, $n,\sigma$ be positive integers with $n<2^\sigma$ and $g(\vecx)$ be a $\sigma$-variate polynomial. Let $\mathcal{S} = (S_1,\dots,S_n)$ be an explicit $(n,\sigma,\mu,\rho)$-design from \cref{lem:explicit-designs}. The Kabanets-Impagliazzo generator given by $g$ and $\mathcal{S}$ is a polynomial map $\ki_{g,\mathcal{S}}: \F^\mu \to \F^n$ defined by $\vecw \mapsto (g_1(\vecw), \dots, g_n(\vecw))$, where for each $i\in [n]$, $g_i(\vecw) := g(\vecw_{S_i})$ for $\vecw_{S_i}:= \{w_j: j \in S_i\}$. Thus, $\ki_{g,\mathcal{S}}$ also defines a homomorphism from $\F[x_1, \dots, x_n]$ to $\F[w_1, \dots, w_\mu]$ that maps each $x_i$ to $g_i(\vecw)$.
\end{definition}

\subsection{Lower bounds and PIT for constant-depth circuits}
We now recall the results of \cite{LST21} that prove superpolynomial lower bounds for constant-depth circuits. 
\begin{theorem}[Lower bounds for constant-depth circuits {\cite[Corollary 4]{LST21}}]
    \label{LST-hardness}  Suppose $n,d \in \N$ with $d \leq \log n/100$ and $\F$ is a field with $\operatorname{char}(\F) = 0$ or greater than $d$. Then, for any product-depth $\Delta\in \N$, there exists an explicit $n$-variate polynomial $P(x_1, \dots, x_n) \in \F[x_1,\dots,x_n]$ of degree $d$ such that any algebraic circuit of product-depth at most $\Delta$ must have size at least $n^{d^{\exp(-O(\Delta))}}$. 
\end{theorem}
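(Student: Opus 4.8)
The plan is to follow the two-stage strategy of Limaye, Srinivasan and Tavenas \cite{LST21}. In the first stage one shows that a lower bound for general product-depth-$\Delta$ circuits follows from a lower bound for \emph{set-multilinear} circuits of comparable product-depth, via homogenization and set-multilinearization. In the second stage one proves the set-multilinear lower bound through a rank-based complexity measure on partial-derivative matrices taken with respect to a carefully chosen, imbalanced bipartition of the colour classes. The explicit hard polynomial $P$ of degree $d$ on $n$ variables will be a set-multilinear polynomial engineered so that these partial-derivative matrices have nearly full rank --- one concrete choice is an iterated matrix multiplication polynomial with suitably skewed matrix dimensions, another is a Nisan--Wigderson-type design polynomial.

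For the reduction, given a size-$s$, product-depth-$\Delta$ circuit for a degree-$d$ polynomial, first unwind re-used gates to obtain a product-depth-$\Delta$ formula of size $s^{O(\Delta)}$ (polynomial in $s$ for constant $\Delta$), and then homogenize and set-multilinearize it with respect to the degree-$d$ partition $\vecx = X_1 \sqcup \cdots \sqcup X_d$; these steps increase the product-depth by at most a constant factor and blow up the size by a multiplicative factor $d^{O(d)}$. Since $d \le \log n / 100$, we have $d^{O(d)} = 2^{O(d \log d)}$, which is negligible next to the target bound $n^{d^{\exp(-O(\Delta))}} = 2^{\Omega(d^{\exp(-O(\Delta))} \log n)}$, because $d \le \log n$ and $\log d \ll d^{\exp(-O(\Delta))}$. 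Hence it suffices to prove that the set-multilinearization $\operatorname{sml}(P)$ requires set-multilinear formulas of product-depth $O(\Delta)$ and size at least $n^{d^{\exp(-O(\Delta))}}$.

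The core of the argument is this set-multilinear lower bound. For a set-multilinear polynomial $f$ over colour classes $X_1, \dots, X_d$ and a bipartition $[d] = A \sqcup B$, let $M_{A,B}(f)$ be the coefficient matrix whose rows are indexed by the set-multilinear monomials supported on $\bigcup_{i \in A} X_i$ and whose columns by those supported on $\bigcup_{i \in B} X_i$; the measure is the \emph{relative rank} $\operatorname{relrk}_{A,B}(f) := \operatorname{rank}(M_{A,B}(f)) / \min(r,c)$, where $r,c$ are the numbers of rows and columns. One fixes the class sizes $|X_i|$ according to a geometrically skewed ``word'' and draws $(A,B)$ from a distribution adapted to that word. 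Two facts are immediate: the measure is sub-additive across $+$-gates and multiplicative across set-multilinear $\times$-gates. The crucial quantitative lemma is that a $\times$-gate whose factors are individually balanced with respect to the word is nonetheless forced to lose a multiplicative factor strictly less than $1$ in relative rank, governed by how unbalanced the induced partition of colours is; iterating this through the $O(\Delta)$ product layers --- where each layer refines the degree and the skewed word can force imbalances totalling roughly $d^{\exp(-O(\Delta))}$ colour coordinates, each carrying weight $\Theta(\log n)$ --- shows that any product-depth-$O(\Delta)$ set-multilinear formula of size $s$ has relative rank at most $s \cdot 2^{-\Omega(d^{\exp(-O(\Delta))}\log n)} = s \cdot n^{-\Omega(d^{\exp(-O(\Delta))})}$. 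Finally one verifies that the explicit hard polynomial is constructed so that $\operatorname{relrk}_{A,B}(\operatorname{sml}(P)) \ge n^{-o(d^{\exp(-O(\Delta))})}$ --- indeed close to $1$ --- for every $(A,B)$ in the support of the word's distribution; comparing the two estimates forces $s \ge n^{d^{\exp(-O(\Delta))}}$, and unwinding the reductions gives the theorem.

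The main obstacle, and essentially the entire content of \cite{LST21}, is the quantitative depth-sensitive analysis: one must design the skewed word and the distribution over bipartitions so precisely that the relative-rank loss compounds to exactly $n^{-\Omega(d^{\exp(-O(\Delta))})}$ through the $\Delta$ product layers --- losing more weakens the bound, while a loss this small must be shown to be unavoidable against \emph{every} small low-depth formula --- and one must exhibit an \emph{explicit} polynomial that simultaneously retains near-maximal relative rank against the whole family of adversarial bipartitions. Tracking the product-depth increase and the $d^{O(d)}$-type size blow-ups in homogenization and set-multilinearization so that they remain far below the target is comparatively routine, but it is precisely what forces the hypothesis $d \le \log n / 100$.
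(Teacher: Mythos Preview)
The paper does not prove this theorem at all: it is stated as a preliminary result, imported verbatim from \cite[Corollary 4]{LST21}, and used as a black box throughout. There is nothing to compare your proposal against in the paper itself.

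That said, your sketch is a faithful high-level summary of the actual argument in \cite{LST21}: the reduction from general low-depth circuits to set-multilinear ones via homogenization and set-multilinearization (with the $d^{O(d)}$ blow-up controlled by $d \le (\log n)/100$), the relative-rank measure on coefficient matrices with respect to an imbalanced bipartition governed by a skewed ``word'', and the explicit hard polynomial (iterated matrix multiplication with skewed dimensions) chosen to have near-full relative rank. So as an outline of the source proof it is accurate, but for the purposes of this paper the correct ``proof'' is simply the citation.
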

In \cite{ChouKS19}, it was shown that explicit \emph{low degree} hard polynomials for constant-depth circuits as shown in the above theorem imply non-trivial deterministic PIT algorithms for constant-depth algebraic circuits. An alternative route to achieving the same result was shown later by Andrews \& Forbes. We recall this theorem below.  
\begin{theorem}[Subexponential time deterministic PIT for constant-depth circuits \cite{LST21, AF22}] \label{thm:lst-PIT}
    Let $\varepsilon>0$ be a real number and $\F$ a field of characteristic 0. Let $C$ be an algebraic circuit of size $s \leq \poly(n)$ and depth $\Delta = o(\log \log \log n)$, computing an $n$-variate polynomial . Then, there is a deterministic algorithm that can decide whether the polynomial computed by $C$ is identically zero or not, in time ${(s^{\Delta+1}\cdot n)^{O(n^{\varepsilon})}}$.
\end{theorem}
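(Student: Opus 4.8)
The plan is to prove this via the hardness-versus-randomness paradigm, following Chou--Kumar--Solomon~\cite{ChouKS19}: instantiate the Kabanets--Impagliazzo generator (\autoref{def:KI-generator}) with the explicit \emph{low-degree} polynomial hard for constant-depth circuits from \autoref{LST-hardness}, and show this generator preserves the non-zeroness of size-$s$, product-depth-$\Delta$ circuits. Concretely, fix $\epsilon>0$ and the product-depth $\Delta$; set a depth budget $\Delta' = \Delta + O(1)$ (the $O(1)$ will absorb the depth increase of the manipulations below) and a slowly growing degree parameter $d = \Theta(\log n)$, chosen precisely at the end. Invoke \autoref{LST-hardness} with product-depth $\Delta'$ to obtain an explicit $\sigma$-variate polynomial $g$ of degree $d$ such that every product-depth-$\Delta'$ circuit computing $g$ has size at least $\sigma^{d^{\exp(-O(\Delta'))}}$, where $\sigma$ is chosen so that the seed length $\mu = \sigma^2/\log n$ of the associated $(n,\sigma,\mu,\log n)$-design (\autoref{lem:explicit-designs}) is $\Theta(n^\epsilon)$; let $\ki_{g,\mathcal S}\colon \F^\mu \to \F^n$ be the resulting generator. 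Note that $C\circ\ki_{g,\mathcal S}$ is a $\mu$-variate polynomial of degree at most $d\cdot\deg(C) \le \poly(n)$ (a size-$s$ circuit of bounded product-depth has $\poly(n)$ degree). \emph{Granting} that $\ki_{g,\mathcal S}$ preserves non-zeroness of $C$, the algorithm is simply: build the design and generator (time $\poly(n,2^\mu)$, \autoref{lem:explicit-designs}) and test whether $C\circ\ki_{g,\mathcal S}$ vanishes on a grid $S^\mu$ with $|S| > d\deg(C)$, correctness by \autoref{lem:SZ}; the running time is $|S|^\mu$ times the cost of one evaluation, namely $(s^{\Delta+1}n)^{O(n^\epsilon)}$.

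It remains to show $\ki_{g,\mathcal S}$ preserves non-zeroness, and here the shifts introduced below belong to the \emph{analysis} only, so they need merely exist (by \autoref{lem:SZ}), not be computed. Suppose $C \neq 0$ but $C \circ \ki_{g,\mathcal S} = 0$. By the standard hybrid argument over the coordinates there are an index $i$ and field constants $a_{i+1},\dots,a_n$ with $R(\vecw,x_i) := C(g(\vecw_{S_1}),\dots,g(\vecw_{S_{i-1}}),x_i,a_{i+1},\dots,a_n)$ non-zero but $R(\vecw,g(\vecw_{S_i})) = 0$; since $x_i - g(\vecw_{S_i})$ is monic in $x_i$, it divides $R$ in $\F[\vecw][x_i]$. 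Now substitute generic field elements (from a set larger than the total degree, which exists as $\operatorname{char}\F=0$) for every $\vecw$-variable outside $S_i$: this keeps $R$ non-zero and leaves $g(\vecw_{S_i})$ untouched, so the resulting $\tilde R(\vecw_{S_i},x_i)$ is non-zero and still divisible by $x_i - g(\vecw_{S_i})$. Crucially, by the design property $|S_i\cap S_j| < \rho = \log n$, so after the substitution each $g(\vecw_{S_j})$ with $j<i$ becomes a polynomial in fewer than $\log n$ variables of degree $d = O(\log n)$, whose dense representation is a depth-$2$ circuit with $\binom{\log n + d}{d} = n^{O(1)}$ monomials. Substituting these (together with $x_i$ and the $a_j$) into the circuit for $C$ exhibits $\tilde R$ as a circuit of size $\poly(s,n)$ and product-depth $\Delta + O(1)$.

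The final step is to reconstruct from $\tilde R$ a product-depth-$\Delta'$ circuit for $g$ that is too small, contradicting \autoref{LST-hardness}. After an analysis-only coordinate shift making $\tilde R$ monic in $x_i$ with the appropriate $x_i$-derivative non-vanishing at the shifted origin --- equivalently, passing to the squarefree part of $\tilde R$ in $x_i$, which by \cite{AW24} is again a small constant-depth circuit, so that $g$ is a \emph{simple} root --- $g$ is the unique power series in $\vecw_{S_i}$ with $\tilde R(\vecw_{S_i},g)=0$ through that point, and is recovered by Newton iteration. Since $g$ equals its own truncation modulo $\inangle{\vecw_{S_i}}^{d+1}$, only $O(\log d)$ steps are needed; each step evaluates $\tilde R$ and $\partial_{x_i}\tilde R$ at the current approximant, extracts low-degree homogeneous parts, and inverts a power series modulo $\inangle{\vecw_{S_i}}^{d+1}$ (done directly as the truncated geometric series $\sum_{\ell\le d}(\cdot)^\ell$). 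Using the interpolation identities of \autoref{cor:interpolation-consequences} for the homogeneous-part extraction and truncation, each step increases depth by only $O(1)$ and multiplies size by only $\poly(d,\deg\tilde R)$; this is the Taylor-expansion approach of \cite{ChouKS19}. Hence $g$ has a product-depth-$\Delta'$ circuit of size $\poly(s,n)\cdot(\poly(d,\deg\tilde R))^{O(\log d)} = n^{O(\log\log n)}$. Finally one picks $d=\Theta(\log n)$ (so $d \le \log n/100$ and \autoref{LST-hardness} applies) large enough that, with $\sigma = n^{\Theta(\epsilon)}$, the bound $\sigma^{d^{\exp(-O(\Delta'))}} = n^{\Omega(\epsilon\,(\log n)^{\exp(-O(\Delta'))})}$ exceeds $n^{O(\log\log n)}$; this is exactly where the hypothesis $\Delta = o(\log\log\log n)$ is used, since it is what makes $(\log n)^{\exp(-O(\Delta'))}$ grow faster than $\log\log n$. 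This contradiction proves that $\ki_{g,\mathcal S}$ preserves non-zeroness, and hence the theorem.

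\textbf{Main obstacle.} The entire difficulty lives in the last step: controlling \emph{both} the depth and the size of the reconstructed circuit for $g$ through the Newton iteration. The classical Kabanets--Impagliazzo analysis only yields an unbounded-depth circuit for $g$; here it is essential that $g$ has degree $O(\log n)$, so that the iteration terminates in $O(\log d)$ steps and the power-series inversions expand as short truncated series of bounded depth. Making the parameters line up --- $d$ large enough for $\ki_{g,\mathcal S}$ to fool every size-$\poly(n)$ product-depth-$\Delta$ circuit, yet small enough (and $\Delta$ small enough) that the \cite{LST21} lower bound still beats the $n^{O(\log\log n)}$ reconstruction blow-up --- is the delicate quantitative heart of the argument.
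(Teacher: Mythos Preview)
The paper does not prove \autoref{thm:lst-PIT}; it is cited as a prior result of \cite{LST21} combined with \cite{ChouKS19} (or alternatively \cite{AF22}). Your high-level plan --- KI generator with the low-degree hard polynomial of \autoref{LST-hardness}, hybrid argument, then reconstruct $g$ as a root of the hybrid circuit --- is indeed the \cite{ChouKS19} route and is correct in outline.

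There is, however, a genuine gap in your reconstruction step. You describe building the circuit for $g$ by running $O(\log d)$ steps of (quadratic-convergence) Newton iteration, ``each step increases depth by only $O(1)$''. That yields a circuit of depth $\Delta + O(\log d) = \Delta + O(\log\log n)$, \emph{not} the depth $\Delta' = \Delta + O(1)$ you fixed at the outset. This discrepancy is fatal: applying \autoref{LST-hardness} at product-depth $\Delta + O(\log\log n)$ gives a lower bound of only $\sigma^{d^{\exp(-O(\log\log n))}}$, and since $\exp(O(\log\log n)) = (\log n)^{\Omega(1)}$, the inner exponent is $d^{(\log n)^{-\Omega(1)}} = (\log n)^{(\log n)^{-\Omega(1)}} = 1 + o(1)$. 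The lower bound collapses to $\sigma^{1+o(1)} = n^{O(\epsilon)}$, which does not contradict your $n^{O(\log\log n)}$ upper bound. The hypothesis $\Delta = o(\log\log\log n)$ cannot absorb an additive $O(\log\log n)$ in depth.

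The fix is precisely what \cite{ChouKS19} actually do, stated in this paper as \autoref{lem:CKS-updated}: rather than composing circuits iteratively, one observes that the truncated root equals a single degree-$d$ polynomial $Q$ in the $d{+}1$ fixed generators $h_j(\vecw) = \partial_{x_i^j}\tilde R(\vecw, g(\mathbf 0)) - \partial_{x_i^j}\tilde R(\mathbf 0, g(\mathbf 0)) \trunc \inangle{\vecw}^{d+1}$. The outer $Q$ is written as a depth-$2$ sum of at most $\binom{2d+1}{d} = n^{O(1)}$ monomials (here $d = O(\log n)$), and each $h_j$ has depth $\Delta + O(1)$ and size $\poly(s,\deg \tilde R)$ via \autoref{cor:interpolation-consequences}. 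This produces a circuit for $g$ of depth $\Delta + O(1)$ and size $\poly(s^{\Delta}, n)$, after which the parameter comparison in your final paragraph goes through verbatim. The conceptual point is that the Newton iteration is unrolled \emph{symbolically} into one low-degree polynomial in a fixed set of generators, not executed as a chain of circuit compositions whose depths accumulate.
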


\section{Preliminaries for polynomial factorization}\label{sec:prelims-for-factorization}

\subsection{Regularized polynomials}

\begin{definition}[$T$-regularized polynomials and non-degenerate, truncated, approximate $z$-roots of order $k$]
    \label{defn:t-regularized-non-degenerate-approximate-root}

    Let $P(T, \vecx, z)$ be a polynomial in $\F[T, \vecx, z]$ that is monic in the variable $z$, and $\Phi(T, \vecx) \in \F[\vecx]\indsquare{T}$ be a power series. 
    \begin{itemize}\itemsep 0pt
        \item $\Phi(T,\vecx)$ is said to be an \emph{approximate $z$-root of order $k$ with respect to $T$} if 
        
        \[
        P(T, \vecx, \Phi(T, \vecx)) \equiv 0 \bmod{T^k}.
        \]
        Moreover, an approximate $z$-root $\Phi$ of order $k$ is \emph{truncated} if $\deg_T \Phi < k$. Throughout the paper, approximate roots will be defined modulo $T^k$. For the sake of brevity, we sometimes refer to an approximate $z$-root of order $k$ with respect to $T$ as just an approximate $z$-root of order $k$. 
        
        \item $P(T,\vecx,z)$ is \emph{$T$-regularized with respect to $z$} if $P(0, \vecx, z)  \in \F[z]$, that is, every monomial that depends on $\vecx$ is divisible by $T$. 
        \item $\Phi(T,\vecx)$ is a \emph{non-degenerate} approximate $z$-root of $P(T,\vecx,z)$ if 
        \[
        (\partial_z P)(0, \veczero, \alpha) \neq 0
        \]
        where $\alpha = \Phi(0, \veczero) \in \F$; that is, the constant term of $\Phi$ is not a repeated root of $P(0, \veczero, z)$. 
        \qedhere
    \end{itemize}
\end{definition}

The following simple observation follows immediately from the above definitions.

\begin{observation} \label{obs:approx-root-simple-obs}
    Let $P(T, \vecx, z)$ be a polynomial in $\F[T, \vecx, z]$ that is monic in the variable $z$, $T$-regularized, and let $\Phi(T, \vecx) \in \F[\vecx]\indsquare{T}$ be a power series. \\
    If $\Phi(T, \vecx)$ is an approximate $z$-root of $P(T,\vecx,z)$, then $\Phi(0, \vecx)$ -- a root of the univariate $P(0, \vecx, z) \in \F[z]$ --  is a scalar in $\F$, implying that $\Phi(0,\vecx) = \Phi(0,\veczero)$. Moreover, $(\partial_z P)(0, \veczero,z) = (\partial_z P)(0, \vecx, z)$.
\end{observation}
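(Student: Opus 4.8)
The plan is to simply unwind the definitions; the only non-formal ingredient is the elementary fact that a polynomial in $\F[\vecx]$ satisfying a monic univariate equation over $\F$ must be a constant.

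First I would instantiate the approximate-root relation at $T = 0$. Since $\Phi$ is a (truncated or not) approximate $z$-root of $P$ of order $k$ for some $k \geq 1$, we have $P(T, \vecx, \Phi(T, \vecx)) \equiv 0 \bmod T^k$, and reducing modulo $T$ gives $P(0, \vecx, \Phi(0, \vecx)) = 0$; here $\Phi(0, \vecx)$, the coefficient of $T^0$ in the power series $\Phi$, lies in $\F[\vecx]$. By $T$-regularity of $P$ with respect to $z$, the polynomial $P(0, \vecx, z)$ has no dependence on $\vecx$ at all, so I may write $p(z) := P(0, \vecx, z) = P(0, \veczero, z) \in \F[z]$. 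It is monic of degree $d := \deg_z P$, and $d \geq 1$ (otherwise $p$ would be a nonzero field constant and $p(g) = 0$ impossible), because passing from $P$ to $P(0, \vecx, z)$ leaves the leading $z$-coefficient — already a nonzero field constant — untouched. Thus $g := \Phi(0, \vecx) \in \F[\vecx]$ satisfies $p(g) = 0$.

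Next I would show $g$ is a scalar by a degree count. If $g$ had positive total degree $\delta$, then in $p(g) = g^d + a_{d-1} g^{d-1} + \cdots + a_0$ the summand $g^d$ has total degree $d\delta$, strictly larger than that of every other summand, and its top-degree form is nonzero (a $d$-th power of the leading form of $g$ over the field $\F$), so $p(g) \neq 0$ — a contradiction. Hence $g \in \F$, i.e.\ $\Phi(0, \vecx) = \Phi(0, \veczero)$, and this constant is a root of the univariate $P(0, \veczero, z)$, exactly as asserted. For the "moreover", I would use that differentiation in $z$ commutes with setting $T = 0$: $(\partial_z P)(0, \vecx, z) = \partial_z\bigl(P(0, \vecx, z)\bigr) = p'(z)$, which is again free of $\vecx$, hence equals $(\partial_z P)(0, \veczero, z)$.

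I do not anticipate any real obstacle: the whole argument is a short chain of definitional manipulations, the only step deserving a line of justification being the claim that $g = \Phi(0, \vecx)$ is forced to be constant, which the total-degree count above handles uniformly over every field, so that no assumption on $\operatorname{char} \F$ enters.
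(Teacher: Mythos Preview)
Your proof is correct. The paper does not give an explicit argument for this observation, remarking only that it ``follows immediately from the above definitions''; your write-up is precisely the natural unwinding of those definitions, including the one step that deserves a line of justification (the total-degree count forcing $\Phi(0,\vecx)$ to be a constant since it satisfies a monic equation over $\F$).
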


\subsection{Squarefree decomposition}
We say that a polynomial $P$ is \emph{squarefree} if it is not divisible by the square of another polynomial. In particular, every irreducible factor of $P$ appears with multiplicity one in the unique factorization of $P$.

We now define the notion of squarefree decomposition of a polynomial. 
\begin{definition}[Squarefree decomposition]
    Let $F \in \F[\vecx]$ be a polynomial such that $F(\vecx) = \prod_{i=1}^m G_i(\vecx)^{e_i}$. Let $r = \max_{i\in [m]} e_i$, where each $G_i$ is irreducible. Then the squarefree decomposition of $F$ is $(F_1, F_2, \dots, F_r)$, where for each $i\in [r]$, $F_i := \prod_{j\in[m]: e_j = i} G_j$.  
\end{definition}
The following theorem of Andrews \& Wigderson shows that squarefree parts of a polynomial computable by a small constant-depth circuit have small constant-depth circuits, and moreover, we can compute such a decomposition deterministically given an appropriate PIT oracle. 
\begin{theorem}[Squarefree decomposition \cite{AW24}]\label{thm:andrews-wigderson-squarefree-decomposition}
    Let $\F$ be a field of characteristic zero or characteristic greater than $D$. Let $\mathcal{O}$ be an oracle that solves polynomial identity testing for constant-depth circuits. Then, there is a deterministic polynomial-time algorithm with oracle access to $\mathcal{O}$ which does the following:
    \begin{enumerate}
        \item The algorithm receives as a input a constant-depth circuit that computes a polynomial $F$ of degree $D$.
        \item The algorithm outputs a collection of constant-depth circuits $C_1, \dots, C_r$ such that $C_i$ computes $F_i$, where $(F_1, \dots, F_r)$ is the squarefree decomposition of $F$.
    \end{enumerate}
\end{theorem}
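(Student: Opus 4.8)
The plan is to derandomize the classical (Yun-style) squarefree-factorization algorithm, using the oracle $\mathcal{O}$ for constant-depth $\PIT$ to carry out the steps that are randomized in the usual setting, and using constant-depth circuits for resultants to keep every intermediate polynomial computable by a small constant-depth circuit. Throughout I work over $\K := \F(x_2,\ldots,x_n)$, viewing $F$ as a univariate in $z := x_1$. The first step is a deterministic invertible linear change of variables, found via $\poly(n,D)$ queries to $\mathcal{O}$: extract the top-degree part $H := \homog_{\vecx, D}(F)$, which is a $\poly(s,D)$-size constant-depth circuit by \autoref{cor:interpolation-consequences}, and run the standard variable-by-variable search (using \autoref{lem:SZ}) for a point off its zero set; applying the corresponding substitution makes $F$ monic in $z$ with $\deg_z F = \deg F = D$. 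Comparing degrees in $F = \prod_j G_j^{e_j}$ then forces $\deg_z G_j = \deg G_j \geq 1$ and, after normalizing, $G_j$ monic in $z$ for every irreducible factor, so each squarefree part $F_i = \prod_{j:\,e_j = i} G_j$ is monic in $z$; by Gauss's lemma (\autoref{lem:gauss}) its squarefree decomposition over $\K[z]$ agrees with the one over $\F[x_2,\ldots,x_n][z] = \F[\vecx]$. At the end the change of variables is undone, which only permutes the $F_i$'s and preserves multiplicities.

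Next I reduce each $F_i$ to a \emph{constant} number of primitive operations, to avoid the $\Theta(D)$ depth that a naive iteration of Yun's algorithm would incur. Since $\operatorname{char}(\F) = 0$ or $\operatorname{char}(\F) > D$, for each $i$ one has $\gcd(F, \partial_z F, \ldots, \partial_z^{\,i-1} F) = \prod_j G_j^{\max(e_j - i + 1,\, 0)}$, whose radical is $F_{\geq i} := \prod_{j:\,e_j \geq i} G_j$, so $F_i = F_{\geq i} / F_{\geq i+1}$ with $F_{\geq i+1} \mid F_{\geq i}$ and $\gcd(F_i, F_{\geq i+1}) = 1$. Using \autoref{lem:SZ} and $\mathcal{O}$, I replace the $i$-fold gcd by a single pairwise gcd $\gcd(F,\ \sum_{l=1}^{i-1} \lambda_l\, \partial_z^{\,l} F)$ for a scalar tuple $\lambda$ found deterministically with $\poly(D)$ oracle calls. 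Thus each $F_i$ is built from $F$ by a bounded number of layers of three operations: (i) $z$-derivatives and $\K$-linear combinations; (ii) $\gcd$ of two polynomials; (iii) exact division $A/B$ with $B \mid A$ (used to form the radical $P/\gcd(P,\partial_z P)$ and the quotient $F_{\geq i}/F_{\geq i+1}$). Type (i) trivially keeps the circuit $\poly$-size and constant-depth (\autoref{cor:interpolation-consequences}), so the whole task reduces to realizing (ii) and (iii) in constant depth.

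For (ii) and (iii) I would first determine the relevant $z$-degrees — $\deg_z \gcd(A,B)$ and hence $\deg_z(A/B) = \deg_z A - \deg_z \gcd(A,B)$ — by querying $\mathcal{O}$ on the successive subresultants of $A, B$ (equivalently, iterating \autoref{thm:resultants-gcd} and \autoref{thm:discriminant-squarefreeness} on suitable truncations). Knowing $k := \deg_z \gcd(A,B)$, the gcd is, up to normalization, the $k$-th subresultant polynomial, and the cofactor $A/\gcd(A,B)$ is read off from the companion cofactor sequence of the subresultant PRS; all of these are structured minors of the Sylvester matrix of $A$ and $B$ (whose $z$-coefficients are themselves constant-depth in $\vecx$ by \autoref{cor:interpolation-consequences}), and the point is that — exactly as for the resultant itself in \autoref{thm:resultant-constant-depth-andrews-wigderson} — these admit $\poly(n,D)$-size, $O(1)$-depth circuits. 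Composing the bounded number of constant-depth layers then yields the desired circuit $C_i$ for $F_i$, and $r$ is the largest $i$ with $F_i \neq 1$, detected using $\mathcal{O}$.

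The main obstacle is precisely operation (iii) in genuinely constant depth: the textbook route via Strassen's division elimination (\autoref{thm:div-elimination}, \autoref{lem:div-elimination-algorithm}) produces a division-free circuit for $A/B$ but passes through power-series inversion and inflates the depth by $\Theta(\log D)$, which is fatal since each $F_i$ results from several such divisions, and even the Cramer/subresultant expressions for a quotient are not obviously monomial once one looks past tiny cases. Showing that gcds, radicals, and exact quotients of constant-depth circuits are again constant-depth — where $B$ being monic in $z$ (so its leading $z$-coefficient is a unit) and the careful degree bookkeeping both matter — is the technical heart, and is exactly where the constant-depth resultant/subresultant machinery of Andrews \& Wigderson underlying \autoref{thm:resultant-constant-depth-andrews-wigderson} is indispensable. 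Everything else — the change of variables, the linear-combination trick for the multi-gcd, the degree determinations, and detecting $r$ — is a routine derandomization via \autoref{lem:SZ} and the oracle $\mathcal{O}$, giving the claimed deterministic polynomial-time algorithm.
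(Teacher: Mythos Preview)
The paper does not prove this theorem: it is stated as a black-box citation of \cite{AW24}, with no proof or sketch given. So there is no ``paper's own proof'' to compare against; your proposal is effectively a sketch of the Andrews--Wigderson argument itself.

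That said, your outline is on the right track and matches the overall architecture of \cite{AW24}: reduce to a monic univariate via a deterministic change of coordinates found with $\mathcal{O}$; express each $F_i$ via a bounded number of gcds, radicals, and exact divisions (your identity $\gcd(F,\partial_z F,\ldots,\partial_z^{i-1}F)=\prod_j G_j^{\max(e_j-i+1,0)}$ is correct in the stated characteristic); and realize gcd and exact division in constant depth via the subresultant machinery underlying \autoref{thm:resultant-constant-depth-andrews-wigderson}. You are also candid that the genuinely hard step is (iii), constant-depth exact division, and that Strassen-style inversion is too deep --- this is exactly the technical core of \cite{AW24}, which establishes that gcds, subresultants, and exact quotients of constant-depth circuits remain constant-depth.

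One small point to tighten: your reduction of the $i$-fold gcd to a pairwise gcd $\gcd(F,\sum_l \lambda_l\,\partial_z^l F)$ is correct for generic $\lambda$, but to find such $\lambda$ deterministically with $\mathcal{O}$ you must exhibit the specific nonzero constant-depth polynomial in $\lambda$ that certifies genericity (a suitable subresultant of $F$ and the linear combination, viewed as a polynomial in $\lambda$). This is doable but deserves a sentence. Otherwise, your sketch is a faithful high-level account of the cited result, not an alternative to it.
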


\subsection{Newton iteration}
We now recall various flavors of Newton iteration that we use in our proofs. 
\begin{lemma}[Newton iteration with linear convergence {\cite[Lemma 5.1]{ChouKS19}}]
    \label{lem:newton-iteration-linear}
    Let $R = \F[\vecx]$ be a polynomial ring, and let $H(\vecx, z) \in R[z]$. 
    Suppose $\phi \in \F\llbracket\vecx\rrbracket$ is a power-series such that $H(\vecx, \phi) = 0 \bmod \inangle{\vecx}^{m}$ and $\partial_z H (\veczero, \phi(\veczero)) \neq 0$. Then, \[\phi' := \phi - \frac{H(\vecx, \phi)}{\partial_z H(\veczero,\varphi(\veczero))}\] satisfies $H(\vecx, \phi') = 0 \bmod{\inangle{\vecx}^{m+1}}$ and $\phi' = \phi \bmod{\inangle{\vecx}^{m}}$. Furthermore, such an extension $\phi'$ of $\phi$ is unique in the sense that any $\phi''$ that satisfies $H(\vecx, \phi'') = 0\bmod{\inangle{\vecx}^{m+1}}$ and $\phi'' = \phi \bmod{\inangle{\vecx}^{m}}$ must satisfy
    \[
    \phi' = \phi'' \bmod{\inangle{\vecx}^{m+1}}.
    \]
\end{lemma}

\begin{corollary}[{\cite[Corollary 5.5]{ChouKS19}}, {\cite[Lemma 3.1]{KRSV}}]
    \label{lem:newton-iteration-linear-circuit-and-uniqueness}
    Let $R = \F[\vecx]$ be a polynomial ring, and let $H(\vecx, z) \in R[z]$ be a polynomial of degree $D$ and a circuit of size $s$. Suppose $u \in \F$ such that:
    \begin{align*}
        H(\veczero,u) &= 0 \\
        \frac{\partial H}{\partial z}(\veczero,u) &\neq 0
    \end{align*}
    Then for every $k \in \N$, there is a unique truncated, non-degenerate, approximate $z$-root $\Phi_k(\vecx)$ of order $k$ with respect to $\vecx$ for $H(\vecx,z)$, satisfying $\Phi_k(\veczero) = u$. Moreover, there is a deterministic algorithm that runs in time $\poly(s,D,k)$ and outputs a circuit of size $\poly(s,D,k)$ for $\Phi_k(\vecx)$. 
\end{corollary}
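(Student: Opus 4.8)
The plan is to produce $\Phi_k$ by iterating the linear-convergence Newton step of \cref{lem:newton-iteration-linear}, starting from the constant power series $\phi_1 := u$, and then truncating. Since $H(\veczero,u) = 0$ we have $H(\vecx, \phi_1) \equiv 0 \bmod \inangle{\vecx}$, and $\partial_z H(\veczero, \phi_1(\veczero)) = \partial_z H(\veczero, u) \neq 0$ by hypothesis, so \cref{lem:newton-iteration-linear} applies with $m = 1$. The point that lets us keep applying it is that the lemma guarantees each new iterate satisfies $\phi_{m+1} \equiv \phi_m \bmod \inangle{\vecx}^m$, so in particular $\phi_{m+1}(\veczero) = \phi_m(\veczero) = u$; hence the non-degeneracy hypothesis $\partial_z H(\veczero, \phi_m(\veczero)) = \partial_z H(\veczero, u) \neq 0$ persists, and we may apply \cref{lem:newton-iteration-linear} a total of $k-1$ times to get $\phi_k \in \F\llbracket\vecx\rrbracket$ with $H(\vecx, \phi_k) \equiv 0 \bmod \inangle{\vecx}^k$ and $\phi_k(\veczero) = u$. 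Then $\Phi_k := \phi_k \trunc \inangle{\vecx}^k$ is a truncated (degree $< k$), non-degenerate, approximate $z$-root of order $k$ with $\Phi_k(\veczero) = u$, which gives existence.

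For uniqueness, I would show by induction on $m \in \{1, \dots, k\}$ that every truncated approximate $z$-root $\Psi$ of order $k$ with $\Psi(\veczero) = u$ satisfies $\Psi \equiv \Phi_k \bmod \inangle{\vecx}^m$. The base case $m = 1$ is exactly $\Psi(\veczero) = u = \Phi_k(\veczero)$. For the inductive step, put $\phi := \Psi \trunc \inangle{\vecx}^m = \Phi_k \trunc \inangle{\vecx}^m$ (these agree by the inductive hypothesis); then $\phi(\veczero) = u$ and $H(\vecx, \phi) \equiv 0 \bmod \inangle{\vecx}^m$, so the uniqueness clause of \cref{lem:newton-iteration-linear} forces both $\Psi$ and $\Phi_k$ to agree modulo $\inangle{\vecx}^{m+1}$ with the unique Newton extension of $\phi$, hence with one another. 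Taking $m = k$ and using that $\Psi$ and $\Phi_k$ both have $\vecx$-degree $< k$ gives $\Psi = \Phi_k$.

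For the algorithmic claim, I would first compute the nonzero scalar $c := \partial_z H(\veczero, u)$ in time $\poly(s,D)$ using the partial-derivative interpolation formula of \cref{cor:interpolation-consequences}. The subtle point — which I expect to be the main obstacle — is the circuit-size bookkeeping: maintaining $\phi_m$ as a single circuit and iterating $\phi_{m+1} = \phi_m - c^{-1} H(\vecx, \phi_m)$ lets $\deg \phi_m$ grow geometrically (so a final truncation would require an exponentially large interpolation), whereas truncating $\phi_m$ at every step via \cref{cor:interpolation-consequences} multiplies the circuit size at each of the $k$ iterations, again giving an exponential bound. The way around this is to carry the homogeneous components $\Phi_m^{(0)} = u, \Phi_m^{(1)}, \dots, \Phi_m^{(m-1)}$ of $\Phi_m$ as nodes of one shared circuit: since $H(\vecx, \Phi_m) \equiv 0 \bmod \inangle{\vecx}^m$, the only new component at step $m$ is $\Phi_{m+1}^{(m)} = -c^{-1}\, \homog_{\vecx, m}\!\big(H(\vecx, \Phi_m)\big)$, and this single homogeneous part can be extracted by running the standard degree-$m$ homogenization of the size-$s$ circuit for $H$ — which has size $O(s m^2)$ — on the inputs $x_1, \dots, x_n$ and the already-built components $\Phi_m^{(0)}, \dots, \Phi_m^{(m-1)}$. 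Summing $O(sm^2)$ over $m \le k$, the total circuit and the time to build it are $O(sk^3) = \poly(s, D, k)$, and the output $\Phi_k = \sum_{j=0}^{k-1}\Phi_k^{(j)}$ is read off at the end; correctness of this component-wise recursion is immediate from \cref{lem:newton-iteration-linear} exactly as in the existence argument. (This circuit is not claimed to have bounded depth — the homogenization increases depth — which is consistent with the statement.)
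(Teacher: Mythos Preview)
The paper does not supply its own proof of this corollary; it is stated with citations to \cite[Corollary~5.5]{ChouKS19} and \cite[Lemma~3.1]{KRSV} and used as a black box. Your argument is correct and is essentially the standard proof one finds in those references: iterate the linear Newton step of \cref{lem:newton-iteration-linear} for existence, use its uniqueness clause inductively for uniqueness, and for the circuit bound build the homogeneous components $\Phi_m^{(0)},\dots,\Phi_m^{(m-1)}$ one at a time via the degree-stratified (homogenized) circuit for $H$, adding $O(sm^2)$ gates per step and $O(sk^3)=\poly(s,D,k)$ in total.
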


\begin{lemma}[Newton iteration with quadratic convergence {\cite[Lemma 9.21, Lemma 9.27]{GG13}}]
    \label{lem:newton-iteration-quadratic-generic}
    Let $R = \F[\vecx]$ be a polynomial ring, and let $H(\vecx, z) \in R[z]$. 

    Suppose $\phi \in \F\llbracket\vecx\rrbracket$ is a power-series such that $H(\vecx, \phi) = 0 \bmod \inangle{\vecx}^{m}$ and $\partial_z H (\veczero, \phi(\veczero)) \neq 0$. Then, for any power series $\sigma \in \F\llbracket\vecx\rrbracket$ satisfying
    \begin{align*}
        \sigma(\vecx) & = \frac{1}{\partial_z H(\vecx, \phi)} \bmod{\inangle{\vecx}^{m}}
    \end{align*}
    we have that $\phi' := \phi - H(\vecx, \phi)\sigma$ satisfies $H(\vecx, \phi') = 0 \bmod{\inangle{\vecx}^{2m}}$ and $\phi' = \phi \bmod{\inangle{\vecx}^{m}}$. Furthermore, such an extension $\phi'$ of $\phi$ is unique in the sense that any $\phi''$ that satisfies $H(\vecx, \phi'') = 0\bmod{\inangle{\vecx}^{2m}}$ and $\phi'' = \phi \bmod{\inangle{\vecx}^{m}}$ must satisfy
    \[
    \phi' = \phi'' \bmod{\inangle{\vecx}^{2m}}.
    \]
\end{lemma}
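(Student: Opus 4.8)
The plan is to reduce both the quadratic-convergence claim and the uniqueness claim to a single division-free Taylor expansion of $H$ in its last slot, together with the fact that $\partial_z H(\vecx,\phi)$ is a unit in $\F\llbracket\vecx\rrbracket$. Concretely, writing $H(\vecx,z)=\sum_i c_i(\vecx)z^i$ and expanding $(\phi+\delta)^i$ by the binomial theorem, one obtains, for a fresh variable $\delta$,
\[
H(\vecx,\phi+\delta)\;=\;H(\vecx,\phi)\;+\;\partial_z H(\vecx,\phi)\cdot\delta\;+\;\delta^2\cdot E(\vecx,\delta)
\]
for some $E\in\F\llbracket\vecx\rrbracket[\delta]$; this involves no division, so it is valid over any field. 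Moreover, since $\partial_z H(\veczero,\phi(\veczero))\neq 0$, the series $\partial_z H(\vecx,\phi)$ has nonzero constant term and is therefore a unit, so its inverse modulo $\inangle{\vecx}^m$ exists and the hypothesised $\sigma$ is meaningful.

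For the convergence claim I would set $\delta:=\phi'-\phi=-H(\vecx,\phi)\,\sigma$. Since $H(\vecx,\phi)\equiv 0 \bmod\inangle{\vecx}^m$ we get $\delta\equiv 0 \bmod\inangle{\vecx}^m$, which already yields $\phi'\equiv\phi \bmod\inangle{\vecx}^m$, and hence $\delta^2\equiv 0 \bmod\inangle{\vecx}^{2m}$, killing the $E$-term. Substituting into the expansion,
\[
H(\vecx,\phi')\;\equiv\;H(\vecx,\phi)+\partial_z H(\vecx,\phi)\cdot\delta\;=\;H(\vecx,\phi)\bigl(1-\sigma\,\partial_z H(\vecx,\phi)\bigr)\pmod{\inangle{\vecx}^{2m}}.
\]
By the defining property of $\sigma$ the factor $1-\sigma\,\partial_z H(\vecx,\phi)$ lies in $\inangle{\vecx}^m$, and $H(\vecx,\phi)$ lies in $\inangle{\vecx}^m$, so their product lies in $\inangle{\vecx}^{2m}$; hence $H(\vecx,\phi')\equiv 0 \bmod\inangle{\vecx}^{2m}$, as claimed.

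For uniqueness, take any $\phi''$ with $H(\vecx,\phi'')\equiv 0 \bmod\inangle{\vecx}^{2m}$ and $\phi''\equiv\phi \bmod\inangle{\vecx}^m$, and put $\eta:=\phi''-\phi'$. Since $\phi'$ and $\phi''$ both agree with $\phi$ modulo $\inangle{\vecx}^m$ we have $\eta\equiv 0 \bmod\inangle{\vecx}^m$, so $\eta^2\equiv 0 \bmod\inangle{\vecx}^{2m}$. Applying the Taylor identity around $\phi'$ with increment $\eta$ and reducing modulo $\inangle{\vecx}^{2m}$ (using $H(\vecx,\phi')\equiv 0$ from the previous paragraph) gives $\partial_z H(\vecx,\phi')\cdot\eta\equiv 0 \bmod\inangle{\vecx}^{2m}$. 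Because $\phi'(\veczero)=\phi(\veczero)$ (here $m\geq 1$), the series $\partial_z H(\vecx,\phi')$ again has nonzero constant term and is a unit, so it may be cancelled, yielding $\eta\equiv 0 \bmod\inangle{\vecx}^{2m}$, i.e. $\phi'\equiv\phi'' \bmod\inangle{\vecx}^{2m}$.

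I do not expect a genuine obstacle here: the whole proof is the Taylor identity above plus bookkeeping of ideal memberships. The only points that warrant a little care are (i) writing the expansion in the division-free form so that nothing breaks in positive characteristic, and (ii) checking at each cancellation step that the power series being divided out ($\partial_z H(\vecx,\phi)$, respectively $\partial_z H(\vecx,\phi')$) is a unit, which in both cases follows from the non-degeneracy hypothesis $\partial_z H(\veczero,\phi(\veczero))\neq 0$ together with $\phi'\equiv\phi \bmod\inangle{\vecx}$.
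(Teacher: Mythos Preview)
Your proof is correct and follows the standard Taylor-expansion argument. The paper does not give its own proof of this lemma; it is stated with a citation to \cite[Lemma 9.21, Lemma 9.27]{GG13}, and your argument is essentially the textbook proof found there.
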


\begin{lemma}[Quadratic-convergence Newton Iteration without divisions]
    \label{lem:newton-iteration-quadratic-div-free}
    Let $R = \F[\vecx]$ be a polynomial ring, and let $H(\vecx, z) \in R[z]$. Suppose there exists $\alpha \in \F$ such that $H(\veczero, \alpha) = 0$ and $\partial_z H (\veczero, \alpha) = \beta \neq 0$. For each $i \geq 0$, define polynomials $\phi_i, \sigma_i \in \F[\vecx]$ as follows:
    \begin{align*}
    \phi_0 & := \alpha, & \sigma_0 &:= (1/\beta),\\
    \text{For $i \geq 0$,}\quad \phi_{i+1} &:= \phi_i - H(\vecx, \phi_i) \cdot \sigma_i, & \sigma_{i+1} &:= 2\sigma_i - \sigma_i^2 \cdot \partial_zH (\vecx, \phi_{i+1}).
    \end{align*}
    Then $H(\vecx, \phi_i) = 0\bmod{\inangle{\vecx}^{2^i}}$, $\phi_{i+1} = \phi_i \bmod{\inangle{\vecx}^{2^i}}$, and $\sigma_i \cdot \partial_z H(\vecx,\varphi_i) = 1 \bmod \inangle{\vecx}^{2^i}$. 
    
\end{lemma}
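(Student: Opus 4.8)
The final claim is Lemma (Quadratic-convergence Newton Iteration without divisions): with $\phi_0 = \alpha$, $\sigma_0 = 1/\beta$, and the recurrences $\phi_{i+1} = \phi_i - H(\vecx,\phi_i)\sigma_i$, $\sigma_{i+1} = 2\sigma_i - \sigma_i^2\,\partial_z H(\vecx,\phi_{i+1})$, one has $H(\vecx,\phi_i) \equiv 0 \bmod \inangle{\vecx}^{2^i}$, $\phi_{i+1} \equiv \phi_i \bmod \inangle{\vecx}^{2^i}$, and $\sigma_i \cdot \partial_z H(\vecx,\phi_i) \equiv 1 \bmod \inangle{\vecx}^{2^i}$.

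\medskip

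\noindent\textbf{Proof proposal.}

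The plan is to prove all three congruences simultaneously by induction on $i$, since each step of each recurrence consumes the precision guarantees of the others. The base case $i=0$ is immediate: $H(\vecx,\phi_0) = H(\veczero,\alpha) = 0 \bmod \inangle{\vecx}^{1}$ need not hold with more precision, $\phi_1 = \phi_0 \bmod \inangle{\vecx}^{1}$ trivially (both have the same constant term $\alpha$), and $\sigma_0 \cdot \partial_z H(\vecx,\phi_0) = (1/\beta)\partial_z H(\veczero,\alpha) = 1 \bmod \inangle{\vecx}^{1}$ by the hypothesis $\partial_z H(\veczero,\alpha) = \beta$. For the inductive step, assume the three statements hold for index $i$; I want to deduce them for index $i+1$, i.e. mod $\inangle{\vecx}^{2^{i+1}}$.

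First I would handle the $\sigma$-iteration. Write $E_i := 1 - \sigma_i \cdot \partial_z H(\vecx,\phi_i)$, so by the inductive hypothesis $E_i \equiv 0 \bmod \inangle{\vecx}^{2^i}$. The standard Newton-for-inverse identity gives, with $\sigma_{i+1} = 2\sigma_i - \sigma_i^2 \partial_z H(\vecx,\phi_{i+1})$,
\[
1 - \sigma_{i+1}\,\partial_z H(\vecx,\phi_{i+1}) = (1 - \sigma_i\,\partial_z H(\vecx,\phi_{i+1}))^2 .
\]
This is a purely algebraic identity (expand the right side). So it suffices to show $1 - \sigma_i\,\partial_z H(\vecx,\phi_{i+1}) \equiv 0 \bmod \inangle{\vecx}^{2^i}$; then squaring gives the claim mod $\inangle{\vecx}^{2^{i+1}}$. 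Now $\phi_{i+1} \equiv \phi_i \bmod \inangle{\vecx}^{2^i}$ (this is part of what I must also establish, but note it follows directly from $\phi_{i+1} - \phi_i = -H(\vecx,\phi_i)\sigma_i$ together with $H(\vecx,\phi_i)\equiv 0 \bmod \inangle{\vecx}^{2^i}$ from the IH), so $\partial_z H(\vecx,\phi_{i+1}) \equiv \partial_z H(\vecx,\phi_i) \bmod \inangle{\vecx}^{2^i}$, and hence $1 - \sigma_i \partial_z H(\vecx,\phi_{i+1}) \equiv 1 - \sigma_i \partial_z H(\vecx,\phi_i) = E_i \equiv 0 \bmod \inangle{\vecx}^{2^i}$, as needed.

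Second I would handle the $\phi$-iteration, which is exactly \autoref{lem:newton-iteration-quadratic-generic} applied with $m = 2^i$, the current iterate $\phi = \phi_i$, and the approximate inverse $\sigma = \sigma_i$: the hypotheses of that lemma are $H(\vecx,\phi_i) \equiv 0 \bmod \inangle{\vecx}^{2^i}$ (IH), $\partial_z H(\veczero,\phi_i(\veczero)) \neq 0$ (which holds since $\phi_i(\veczero) = \alpha$ by an easy induction on the constant terms, and $\partial_z H(\veczero,\alpha) = \beta \neq 0$), and $\sigma_i \equiv 1/\partial_z H(\vecx,\phi_i) \bmod \inangle{\vecx}^{2^i}$ — which is precisely the rearrangement of the IH $\sigma_i \partial_z H(\vecx,\phi_i) \equiv 1 \bmod \inangle{\vecx}^{2^i}$ (valid because $\partial_z H(\vecx,\phi_i)$ is a unit in $\F\llbracket\vecx\rrbracket$, its constant term being $\beta$). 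The lemma then yields $H(\vecx,\phi_{i+1}) \equiv 0 \bmod \inangle{\vecx}^{2^{i+1}}$ and $\phi_{i+1} \equiv \phi_i \bmod \inangle{\vecx}^{2^i}$, completing the induction.

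\medskip

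\noindent\textbf{Main obstacle.} There is no deep difficulty here — this is a bookkeeping lemma — but the one place to be careful is the interdependence of the three invariants within a single inductive step: the $\sigma$-update's correctness needs the freshly-computed $\phi_{i+1}$ to agree with $\phi_i$ to order $2^i$, while the $\phi$-update's correctness (via \autoref{lem:newton-iteration-quadratic-generic}) needs $\sigma_i$ to be a good enough inverse at order $2^i$ — so one must order the deductions so that each invariant is used only at the precision already guaranteed. The clean way, as above, is: (1) derive $\phi_{i+1} \equiv \phi_i \bmod \inangle{\vecx}^{2^i}$ purely from the $\phi$-recurrence and the old $H(\vecx,\phi_i) \equiv 0$; (2) upgrade $H(\vecx,\phi_{i+1}) \equiv 0$ to order $2^{i+1}$ via \autoref{lem:newton-iteration-quadratic-generic} using only the old $\sigma_i$ guarantee; (3) upgrade the $\sigma$-inverse to order $2^{i+1}$ using the squaring identity and step (1). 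A secondary routine point is to confirm the division-freeness and degree/size bounds for the circuit interpretation used elsewhere in the paper, but the lemma as stated only asserts the congruences, so that is not required here.
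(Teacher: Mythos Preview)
Your proposal is correct and follows essentially the same approach as the paper's own proof: induction on $i$, invoking \autoref{lem:newton-iteration-quadratic-generic} for the $\phi$-update and the algebraic squaring identity $1 - \sigma_{i+1}\,\partial_z H(\vecx,\phi_{i+1}) = (1 - \sigma_i\,\partial_z H(\vecx,\phi_{i+1}))^2$ for the $\sigma$-update. The paper presents the two updates in the opposite order (first $\phi$, then $\sigma$), but the content and dependencies are identical.
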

\begin{proof}
    We prove this by induction on $i$. The base case $i=0$ follows by definition.\\
    Now suppose that for some $i \geq 0$,  $H(\vecx, \phi_i) = 0\bmod{\inangle{\vecx}^{2^i}}$ and ${\sigma_i \cdot \partial_z H(\vecx,\varphi_i)} = 1\bmod \inangle{\vecx}^{2^i}$. \\
    By \cref{lem:newton-iteration-quadratic-generic}, it follows that $\varphi_{i+1}:= \phi_i - H(\vecx, \phi_i) \cdot \sigma_i$ satisfies $H(\vecx,\varphi_{i+1}) = 0 \bmod \inangle{\vecx}^{2^{i+1}}$ and $\varphi_{i+1} = \varphi_i \bmod \inangle{\vecx}^{2^i}$. 
    \begin{align*}
        \sigma_{i+1} \cdot \partial_z H(\vecx, \phi_{i+1}) - 1 & = (2 \sigma_i - \sigma_i^2 \cdot \partial_z H(\vecx, \phi_{i+1})) \cdot \partial_z H(\vecx, \phi_{i+1}) - 1 \\
        & = \inparen{\sigma_i \cdot \partial_z H(\vecx, \phi_{i+1}) - 1} - \inparen{(\sigma_i \cdot \partial_z H(\vecx, \phi_{i+1}))^2 - \sigma_i\cdot \partial_z H(\vecx, \phi_{i+1})}\\
        & = - \inparen{\sigma_i \cdot \partial_z H(\vecx, \phi_{i+1}) - 1}^2\\
        & = 0 \bmod{\inangle{\vecx}^{2^{i+1}}}
    \end{align*}
    where the last equality follows because $\varphi_{i+1} = \varphi_i \bmod \inangle{\vecx}^{2^i}$ and $\sigma_{i} \cdot \partial_z H(\vecx, \phi_{i}) - 1 = 0\bmod{\inangle{\vecx}^{2^i}}.$
\end{proof}

\subsection{A lemma of Chou, Kumar \& Solomon}
The following technical lemma of Chou, Kumar \& Solomon \cite{ChouKS19} is used in the analysis of Kabanets-Impagliazzo generator for constant-depth circuits. The lemma, both as a blackbox and the technical ideas therein are important for our proofs. 
\begin{lemma}[Lemma 5.2 and Lemma 5.3 in \cite{ChouKS19}] \label{lem:CKS-updated}
    Let $P\in \F[\vecx,y]$ and let $R(\vecx)$ be polynomials such that $R$ is of degree at most $d$, $P(\vecx, R(\vecx)) \equiv 0$ and $\frac{\partial P}{\partial y}(\mathbf{0}, R(\mathbf{0}))$ is non-zero. 

    Then, there exists a $(d+1)$-variate polynomial $Q(\vecz)$ of degree at most $d$ such that 
    \[
        R(\vecx) \equiv  Q(h_0(\vecx), h_1(\vecx), \ldots, h_d(\vecx)) \mod \langle \vecx \rangle^{d+1} \, ,
    \]
    where for every $i \in \{0, 1, \ldots, d\}$, $h_i(\vecx)$ is defined as 
    \[
        h_i(\vecx) := \frac{\partial P}{\partial y^j}(\vecx, R(\mathbf{0})) - \frac{\partial P}{\partial y^j}(\mathbf{0}, R(\mathbf{0}))  \trunc \langle \vecx \rangle^{d+1} \, .
    \]
\end{lemma}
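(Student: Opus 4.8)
The plan is to eliminate the variable $y$ from the relation $P(\vecx, R(\vecx)) = 0$ using a Taylor expansion of $P$ about the constant term of $R$, and then to solve the resulting identity for $R$ by an $\langle\vecx\rangle$-adic fixed-point iteration whose only ingredients are the polynomials $h_0, \dots, h_d$ and finitely many field constants. To set up, put $u := R(\mathbf{0}) \in \F$ and $g(\vecx) := R(\vecx) - u$, so that $g$ has zero constant term and $\deg g \le d$. Since $\operatorname{char}\F = 0$ (or is larger than $\deg_y P$), the Taylor expansion of $P$ about $y = u$ reads $P(\vecx, y) = \sum_{i \ge 0} \frac{1}{i!}\,\partial_{y^i}P(\vecx, u)\,(y - u)^i$, and substituting $y = R(\vecx)$ together with $P(\vecx, R(\vecx)) = 0$ gives the identity $\sum_{i \ge 0} \frac{1}{i!}\,\partial_{y^i}P(\vecx, u)\, g(\vecx)^i = 0$.

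Now reduce this identity modulo $\langle\vecx\rangle^{d+1}$. Because $g$ has zero constant term, $g^i \in \langle\vecx\rangle^{d+1}$ for every $i > d$, so only the terms $0 \le i \le d$ remain. Writing $c_i := \partial_{y^i}P(\mathbf{0}, u) \in \F$, the definition of $h_i$ is exactly the statement that $\partial_{y^i}P(\vecx, u) \equiv c_i + h_i(\vecx) \pmod{\langle\vecx\rangle^{d+1}}$ with each $h_i$ having zero constant term; moreover $c_0 = P(\mathbf{0}, u) = 0$ (set $\vecx = \mathbf{0}$ in $P(\vecx, R(\vecx)) = 0$), while $c_1 = \partial_y P(\mathbf{0}, R(\mathbf{0})) =: \beta \ne 0$ by hypothesis. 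Hence, modulo $\langle\vecx\rangle^{d+1}$,
\[
  h_0 + (\beta + h_1)\,g + \sum_{i=2}^{d} \frac{1}{i!}(c_i + h_i)\,g^i \equiv 0,
  \qquad\text{equivalently}\qquad
  g \equiv \Psi(g) := -\frac{1}{\beta}\Bigl( h_0 + h_1 g + \sum_{i=2}^{d}\frac{1}{i!}(c_i + h_i)\,g^i \Bigr).
\]

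The crucial observation is that $\Psi$ is contracting in the $\langle\vecx\rangle$-adic sense on zero-constant-term polynomials: if $g_1 - g_2 \in \langle\vecx\rangle^m$, then in every term of $\Psi(g_1) - \Psi(g_2)$ the factor $g_1 - g_2$ is multiplied by an element of $\langle\vecx\rangle$ — namely $h_1$, or $g_1^{\,i-1} + \dots + g_2^{\,i-1} \in \langle\vecx\rangle^{i-1}$ for $2 \le i \le d$ — so $\Psi(g_1) - \Psi(g_2) \in \langle\vecx\rangle^{m+1}$; and $\Psi$ clearly sends zero-constant-term polynomials to zero-constant-term polynomials. Therefore, starting from $g^{(0)} := 0$ and setting $g^{(t+1)} := \Psi(g^{(t)})$, an induction using $g = \Psi(g)$ yields $g^{(t)} - g \in \langle\vecx\rangle^{t+1}$, and in particular $g^{(d)} \equiv g \pmod{\langle\vecx\rangle^{d+1}}$. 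Running the very same iteration symbolically — over $\F[z_0, \dots, z_d]$ with the indeterminate $z_j$ in place of $h_j$ — produces an honest polynomial $Q_0(z_0, \dots, z_d)$, and since substitution commutes with $\Psi$ we get $Q_0(h_0, \dots, h_d) = g^{(d)} \equiv g \pmod{\langle\vecx\rangle^{d+1}}$. Setting $Q := u + Q_0$ then gives $R(\vecx) \equiv Q(h_0(\vecx), \dots, h_d(\vecx)) \pmod{\langle\vecx\rangle^{d+1}}$.

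It remains to force $\deg Q \le d$. Since every $h_i \in \langle\vecx\rangle$, any $z$-monomial $z_{i_1}\cdots z_{i_t}$ with $t \ge d+1$ is mapped by $z_j \mapsto h_j$ into $\langle\vecx\rangle^{t} \subseteq \langle\vecx\rangle^{d+1}$, so it contributes nothing modulo $\langle\vecx\rangle^{d+1}$; hence we may delete all monomials of degree exceeding $d$ from $Q$ without affecting the congruence, leaving a polynomial of degree at most $d$, as claimed. The one place demanding care is the $\langle\vecx\rangle$-adic valuation bookkeeping — the zero-constant-term property of $g$ and of each $h_i$, and the fact that for $i \ge 2$ the ``error'' picks up an extra factor of $\langle\vecx\rangle$ — which is precisely what makes both the contraction estimate and the final truncation valid; the rest is a mechanical expansion. (If one additionally wanted a small circuit for $Q$, one would instead use the quadratic-convergence Newton iteration of \cref{lem:newton-iteration-quadratic-generic} to reduce the number of rounds to $O(\log d)$, but this plays no role in the existence statement being proved here.)
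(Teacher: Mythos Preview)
Your proof is correct. The paper itself does not prove this lemma --- it is quoted from \cite{ChouKS19} (their Lemmas~5.2 and~5.3) --- but your argument is essentially the same as theirs: a short computation shows that your map $\Psi$ satisfies $\Psi(\psi) \equiv \psi - P(\vecx, u+\psi)/\beta \pmod{\langle\vecx\rangle^{d+1}}$, which is exactly the linear Newton iteration step of \autoref{lem:newton-iteration-linear} in the shifted variable $\psi = R - u$, so your $d$-fold fixed-point iteration is the same $d$-step Newton iteration that \cite{ChouKS19} runs, just phrased in contraction-mapping language.
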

For our proofs, we end up invoking \autoref{lem:CKS-updated} in settings where the polynomial $P$ also depends on an additional variable $T$ (whereas $R$ does not). In this case, the quantity $\frac{\partial P}{\partial y}(T,\mathbf{0}, R(\mathbf{0}))$ potentially depends on the variable $T$. Since our root $R$ does not depend on the variable $T$, we can set $T$ to some field constant without disturbing the starting conditions for Newton iteration, and thus we can perform Newton iteration as usual. 

\begin{lemma}\label{lem:CKS-updated-rational-in-T-new}
    Let $P\in \F[T,\vecx,y]$ and let $R(\vecx)$ be polynomials such that $R$ is of degree at most $d$, $P(T, \vecx, R(\vecx)) \equiv 0$ and $\frac{\partial P}{\partial y}(T, \mathbf{0}, R(\mathbf{0}))$ is non-zero. 

    Then, there exists a $\kappa \in \F$ and a $(d+1)$-variate polynomial $Q(\vecz) \in \F[\vecz]$ of degree at most $d$ such that 
    \[
        R(\vecx) \equiv  Q(h_0(\kappa,\vecx), h_1(\kappa,\vecx), \ldots, h_d(\kappa,\vecx)) \mod \langle \vecx \rangle^{d+1} \, ,
    \]
    where for every $i \in \{0, 1, \ldots, d\}$, $h_i(\vecx)$ is defined as 
    \[
        h_i(T,\vecx) := \frac{\partial P}{\partial y^i}(T, \vecx, R(\mathbf{0})) - \frac{\partial P}{\partial y^i}(T, \mathbf{0}, R(\mathbf{0}))  \trunc \langle \vecx \rangle^{d+1} \, .
    \]
\end{lemma}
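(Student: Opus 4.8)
The plan is to deduce this from \autoref{lem:CKS-updated} by freezing $T$ to a suitably generic scalar. The only place $T$ enters the hypotheses is through the requirement that $\partial_{y} P(T,\veczero,R(\veczero))$ — a \emph{nonzero} univariate polynomial in $T$ — be nonzero, and the only place it enters the conclusion is inside the auxiliary polynomials $h_i(T,\vecx)$. Since the root $R$ itself does not involve $T$, one expects that setting $T=\kappa$ for an appropriate $\kappa\in\F$ leaves the Newton-iteration setup of \autoref{lem:CKS-updated} intact, and then the conclusion of that lemma is exactly the desired statement with $h_i(T,\vecx)$ replaced by $h_i(\kappa,\vecx)$.

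Concretely, first I would consider $g(T) := \partial_{y} P(T,\veczero,R(\veczero)) \in \F[T]$, which by assumption is a nonzero polynomial of degree at most $\deg_T P$. Since $\F$ is infinite (or, more generally, has more than $\deg g$ elements, which suffices by \autoref{lem:SZ}), there exists $\kappa\in\F$ with $g(\kappa)\neq 0$. Set $\tilde P(\vecx,y) := P(\kappa,\vecx,y)\in\F[\vecx,y]$. Substituting $T=\kappa$ into the identity $P(T,\vecx,R(\vecx))\equiv 0$ gives $\tilde P(\vecx,R(\vecx))\equiv 0$, and because $T$ is a distinct variable from $y$, differentiation in $y$ commutes with the substitution $T\mapsto\kappa$, so $\partial_{y}\tilde P(\veczero,R(\veczero)) = g(\kappa)\neq 0$. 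Hence $(\tilde P, R)$ satisfies the hypotheses of \autoref{lem:CKS-updated}. Applying that lemma produces a $(d+1)$-variate $Q(\vecz)\in\F[\vecz]$ of degree at most $d$ with
\[
R(\vecx) \equiv Q\big(\tilde h_0(\vecx),\ldots,\tilde h_d(\vecx)\big) \bmod \inangle{\vecx}^{d+1},
\]
where $\tilde h_i(\vecx) = \big(\partial_{y^i}\tilde P(\vecx,R(\veczero)) - \partial_{y^i}\tilde P(\veczero,R(\veczero))\big)\trunc\inangle{\vecx}^{d+1}$. Unwinding $\tilde P(\vecx,y)=P(\kappa,\vecx,y)$ and again using that $\partial/\partial y$ commutes with $T\mapsto\kappa$, we get $\tilde h_i(\vecx) = h_i(\kappa,\vecx)$ for every $i$, which is precisely the claimed identity.

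I do not expect a genuine obstacle here: this is essentially a "generic substitution" wrapper around \autoref{lem:CKS-updated}. The only points that need a moment's attention are (i) legitimacy of the choice of $\kappa$, which requires $\F$ to be large enough — true over $\Q$, and over small finite fields one would instead pick $\kappa$ in an extension, at the mild cost that $Q$ and the evaluations $h_i(\kappa,\cdot)$ then live over that extension; and (ii) checking that every quantity appearing in \autoref{lem:CKS-updated} specializes correctly under $T\mapsto\kappa$, which is immediate because $T$ is disjoint from both $\vecx$ and $y$ so all substitutions and $y$-derivatives commute with it. This matches the informal remark already made in the paper right after \autoref{lem:CKS-updated}.
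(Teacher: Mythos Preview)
Your proposal is correct and follows essentially the same route as the paper's own proof: pick $\kappa\in\F$ with $\partial_y P(\kappa,\veczero,R(\veczero))\neq 0$, set $\tilde P(\vecx,y)=P(\kappa,\vecx,y)$, apply \autoref{lem:CKS-updated} to $(\tilde P,R)$, and identify the resulting $\tilde h_i$ with $h_i(\kappa,\cdot)$.
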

\begin{proof}
    Since $\frac{\partial P}{\partial y}(T,\veczero,R(\veczero)) = \delta(T)\in \F[T]$ for some $\delta(T) \not\equiv 0$, there exists a $\kappa$ such that $\delta(\kappa) = \delta_0 \neq 0$. Thus, the polynomial $\tilde{P}(\vecx,y) := P(\kappa,\vecx,y)$ satisfies $\tilde{P}(\vecx,R(\vecx)) \equiv 0$ and $\frac{\partial \tilde{P}}{\partial y}(\veczero,R(\veczero)) = \delta_0 \neq 0$, for some $\delta_0 \in \F$. Applying \Cref{lem:CKS-updated} on $\tilde{P}$ and $R(\vecx)$ tells us that there exists a $(d+1)$-variate polynomial $Q(\vecz)$ of degree at most $d$ such that 
    \[
        R(\vecx) \equiv  Q(h_0(\vecx), h_1(\vecx), \ldots, h_d(\vecx)) \mod \langle \vecx \rangle^{d+1} \, ,
    \]
    where for every $i \in \{0, 1, \ldots, d\}$, $h_i(\vecx)$ is defined as 
    \[
        h_i(\vecx) := \frac{\partial \tilde{P}}{\partial y^j}(\vecx, R(\mathbf{0})) - \frac{\partial \tilde{P}}{\partial y^j}(\mathbf{0}, R(\mathbf{0}))  \trunc \langle \vecx \rangle^{d+1} \, .
    \] 
    Since $\tilde{P}(\vecx,y) := P(\kappa,\vecx,y)$, the required statement follows.
\end{proof}
For most of this paper, the only operation we are allowed on the $T$-variable is scaling by a field element since we care about roots $\bmod T^k$, and we want to preserve the $T$-degree of monomials during any such operations/substitutions. But the above lemma will be invoked at a point when we are concerned about roots $\mod \inangle{\vecx}^d$, not $\bmod T^k$, which is why it will be okay to replace $T$ by a field element $\kappa$. 

\subsection{Deterministic factorization}
For our proof, we need the following classical theorem of Lenstra, Lenstra and Lovasz. 
\begin{theorem}[Factorizing polynomials with rational coefficients \cite{LLL82, GG13}]\label{thm:-LLL-univariate-factorization}
  Let $P \in \Q[x]$ be a monic polynomial of degree $d$. Then there is a deterministic algorithm computing all the irreducible factors of $P$ that runs in time $\poly(d, t)$, where $t$ is the maximum bit-complexity of the coefficients of $f$. 
\end{theorem}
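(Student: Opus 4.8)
The statement is the classical Lenstra--Lenstra--Lov\'asz univariate factorization algorithm, and I would establish it by recalling the structure of that algorithm. The central idea is to reduce the recovery of a genuine integer factor of $P$ from an \emph{approximate} $p$-adic factor to finding a short vector in an explicitly described integer lattice, a task that the LLL basis-reduction algorithm solves deterministically in polynomial time. The algorithm peels off one irreducible factor at a time.

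\emph{Preprocessing.} First I would reduce to the case where $P\in\mathbb{Z}[x]$ is primitive and squarefree: squarefreeness is obtained by repeatedly replacing $P$ by $P/\gcd(P,P')$ using the Euclidean algorithm over $\Q$, and integrality by clearing denominators and dividing out the content. Both steps increase the degree and bit-complexity only polynomially, so it suffices to compute a single irreducible factor of such a $P$ and recurse on the quotient. Next, find by trial a prime $p$ of bit-size $\poly(d,t)$ dividing neither the leading coefficient of $P$ nor its discriminant $\Res{x}{P}{P'}\neq 0$, so that $P\bmod p$ is squarefree; factor $P\bmod p$ into monic irreducibles over $\mathbb{F}_p$ by Berlekamp's algorithm, which is deterministic and runs in time $\poly(d,p)$ for $p$ of polynomial size (linear algebra over $\mathbb{F}_p$ together with deterministic splitting of a polynomial all of whose roots lie in $\mathbb{F}_p$). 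Pick one factor $h_0\mid P\bmod p$ and Hensel-lift the coprime factorization $P\equiv h_0 g_0 \bmod p$ to $P\equiv h g \bmod{p^k}$ with $h$ monic and $h\equiv h_0\bmod p$, where $k=\poly(d,t)$ is chosen so that $p^k$ exceeds twice the Mignotte bound $2^{O(d)}\cdot 2^{O(t)}$ on the coefficients of any factor of $P$ in $\mathbb{Z}[x]$.

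\emph{The lattice step} is the heart of the argument. For each degree bound $m=\deg h_0, \deg h_0+1,\ldots,d-1$, consider the lattice $L_m$ of coefficient vectors of integer polynomials $v(x)$ with $\deg v\le m$ and $h\mid v \bmod{p^k}$; an explicit basis is $\{p^k x^i\}\cup\{x^j h\}$ truncated to degree $m$. Run LLL on $L_m$ and let $b(x)$ be the shortest vector of the reduced basis. The key lemma is that the unique monic irreducible factor $h^\ast$ of $P$ over $\mathbb{Z}$ with $h_0\mid h^\ast\bmod p$ satisfies: once $\deg h^\ast\le m$, one has $h^\ast=\gcd_{\mathbb{Z}[x]}(b,P)$. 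Indeed $h^\ast\in L_{\deg h^\ast}$, so $\lambda_1(L_m)\le\|h^\ast\|$ is small, whence the LLL guarantee $\|b\|\le 2^{m/2}\lambda_1(L_m)$ forces $b$ to be small; and a resultant argument shows any sufficiently small element of $L_m$ must be divisible by $h^\ast$ in $\Q[x]$, since otherwise $\Res{x}{b}{h^\ast}$ would be a nonzero integer that is simultaneously bounded by norm estimates and divisible by $p^k$, contradicting the choice of $k$. Sweeping $m$ upward, the first $m\ge\deg h^\ast$ recovers $h^\ast$; replacing $P$ by the exact quotient $P/h^\ast\in\mathbb{Z}[x]$ and repeating (at most $d$ rounds, with coefficient sizes staying $\poly(d,t)$ by Mignotte/Gelfond bounds) yields all irreducible factors in time $\poly(d,t)$.

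The main obstacle is making the lattice lemma quantitatively tight: one must choose the $p$-adic precision $p^k$ large enough, relative to the LLL approximation factor $2^{O(m)}$ and the a priori bounds on the coefficients of factors, that a short vector of $L_m$ is \emph{provably} a multiple of the true factor $h^\ast$ rather than some spurious low-degree polynomial that merely happens to be divisible by $h$ modulo $p^k$. This balancing --- via Mignotte's bound and the resultant divisibility estimate --- together with the polynomial-time guarantee and quality bound for LLL-reduced bases, is the technical core of the proof; the remaining ingredients (Euclidean algorithm, Berlekamp, Hensel lifting, exact division) are routine and each runs in time $\poly(d,t)$.
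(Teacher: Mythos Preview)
The paper does not prove this theorem at all: it is stated in the preliminaries as a classical black-box result and simply cited to \cite{LLL82, GG13}. Your sketch is a faithful outline of the standard Lenstra--Lenstra--Lov\'asz argument from those references (squarefree reduction, Berlekamp mod a small prime, Hensel lifting, then the lattice short-vector step with the resultant/Mignotte analysis), so there is nothing to compare --- you have supplied precisely the proof the paper chose to import rather than reproduce.
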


For our proofs, we also rely on a deterministic algorithm for factoring $n$ variate degree $d$ polynomials that run in time $d^{O(n)}$. This is implicit in many known algorithms for polynomial factorization, for instance, in the results of Kopparty, Saraf, Shpilka \cite{KSS15}. Such a statement can also be inferred from our proofs in this paper. We recall a formal statement of this nature from a work of Lecerf below.  
\begin{theorem}[{\cite[Proposition 4]{lecerf2007}}]\label{thm:dense-rep-factorization}
    Suppose $P(x_1, \dots, x_n,y) \in \Q[\vecx,y]$ be a polynomial that is monic in $y$ with total degree $d$. Further, suppose that $P$ is squarefree and $P(\veczero,y)$ is squarefree. Then, there is a deterministic algorithm that takes $P$ as input in the dense representation and outputs each of its irreducible factors in time $\leq O(N^2)$, where $N = \binom{n+d+1}{n}$.
\end{theorem}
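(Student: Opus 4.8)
The plan is to carry out the classical Hensel-lifting-and-recombination paradigm for multivariate factorization, but with every step made deterministic using the tools already available in the excerpt.

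\emph{Base case via LLL.} Since $P(\veczero, y) \in \Q[y]$ is squarefree of degree $d$, first use the LLL-based univariate factorization of \cref{thm:-LLL-univariate-factorization} to compute the complete factorization $P(\veczero, y) = \prod_{j=1}^{r} g_j^{(0)}(y)$ into monic irreducible factors over $\Q$; squarefreeness of $P(\veczero,y)$ guarantees that the $g_j^{(0)}$ are pairwise coprime in $\Q[y]$. This costs time $\poly(d,t)$, which is well within $O(N^2)$. Note that an irreducible factor of $P$ over $\Q$ need \emph{not} reduce mod $\inangle{\vecx}$ to a single $g_j^{(0)}$ (it may reduce to a product of several), which is precisely why a recombination step is unavoidable.

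\emph{Hensel lifting in the $\vecx$-variables.} Next, lift the coprime factorization $P \equiv \prod_j g_j^{(0)} \bmod \inangle{\vecx}$ to precision $2^\ell$ where $\ell = \lceil \log_2(d+1)\rceil$, using the multi-factor analogue of the quadratic-convergence Newton iteration of \cref{lem:newton-iteration-quadratic-generic} (i.e. standard Hensel lifting; one doubles the precision at each of $\ell$ steps by a constant number of multiplications and divisions of dense polynomials of $\vecx$-degree $<2^\ell$ and $y$-degree $\le d$). This produces monic-in-$y$ polynomials $g_1,\dots,g_r \in \Q[\vecx][y]$ with $g_j \equiv g_j^{(0)} \bmod \inangle{\vecx}$ and $P \equiv \prod_j g_j \bmod \inangle{\vecx}^{2^\ell}$, each of dense size $O(N)$, and with controlled bit-complexity; total cost $O(N^2)$. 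Now every monic irreducible factor $Q$ of $P$ over $\Q$ reduces mod $\inangle{\vecx}$ to a product of a sub-multiset of the $g_j^{(0)}$, and by uniqueness of the Hensel lift ($P$ being squarefree makes the associated subsets $S_Q \subseteq [r]$ a genuine partition of $[r]$) we get $Q \equiv \prod_{j \in S_Q} g_j \bmod \inangle{\vecx}^{2^\ell}$. Since $\deg_{\vecx} Q \le \deg Q \le d < 2^\ell$ and $Q$ is monic in $y$, in fact $Q = \bigl(\prod_{j\in S_Q} g_j\bigr)\trunc \inangle{\vecx}^{2^\ell}$ exactly. So factoring $P$ is now equivalent to \emph{finding the correct partition of $[r]$.}

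\emph{Deterministic recombination — the main obstacle.} The naive approach, trying all $2^r$ subsets and testing each by trial division against $P$, is far too slow: $r$ can be $\Theta(d)$ while $N$ can be as small as $\Theta(d)$ (e.g. when $n=1$). To keep the running time polynomial in $N$, the core technical ingredient is Lecerf's deterministic recombination \cite{lecerf2007}, carried out via the logarithmic-derivative/trace linear-algebra method: the partition of $[r]$ is recovered as a $0/1$-basis of the solution space of an explicit $\Q$-linear system obtained by matching the truncated power series $\partial_y P / P \equiv \sum_{j} \partial_y g_j / g_j \bmod \inangle{\vecx}^{2^\ell}$ against the candidate partial sums $\sum_{j \in S}\partial_y g_j/g_j$; solving this system with rational linear algebra and reading off its integer-matrix structure takes time polynomial in $N$. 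Outputting $\bigl(\prod_{j\in S_i} g_j\bigr)\trunc \inangle{\vecx}^{2^\ell}$ for each block $S_i$ of the resulting partition then yields all irreducible factors of $P$, for a total running time $O(N^2)$. (An alternative, closer to \cite{KSS15}, is to apply a deterministic linear change of the $\vecx$-variables to reduce to the bivariate case and invoke the dense bivariate factorization subroutine there, but the recombination difficulty is the same and is still the crux of the argument.)
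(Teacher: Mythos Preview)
The paper does not actually prove this theorem: it is stated as a cited result from \cite{lecerf2007}, followed only by a one-paragraph informal explanation. That explanation takes a \emph{different} route from yours. The paper says: run the standard Hensel-lifting/Newton-iteration factorization algorithm and observe (via \cite{KSS15}) that every use of randomness is a PIT instance on $O(n)$ variables of degree $\poly(d)$, which can be derandomized by brute-force evaluation over a grid using the Polynomial Identity Lemma. This yields time $d^{O(n)}$, which is in general \emph{weaker} than the stated $O(N^2)$ bound (think of constant $d$ and growing $n$), but is all the paper needs downstream, since it only invokes the theorem after reducing to $n^{\varepsilon}$ variables.

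Your proposal, by contrast, follows Lecerf's actual method more faithfully: LLL for the univariate base case, quadratic-convergence Hensel lifting in $\inangle{\vecx}$, and then \emph{linear-algebraic recombination} (logarithmic-derivative/trace system) to avoid the $2^r$ subset search. This is the right outline, and you correctly identify recombination as the crux and the reason naive trial division fails to meet the $O(N^2)$ bound. Your description of the linear system is somewhat impressionistic (the actual system in \cite{lecerf2007} is set up so that the indicator vectors of the true blocks $S_Q$ span the kernel, and one then extracts a $\{0,1\}$-basis), and one should be a bit careful that the lifting precision suffices for the linear system to have exactly the right solution space, but these are details Lecerf handles and you are pointing at the right mechanism. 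In short: your sketch is sound and closer to the cited source than the paper's own informal justification, which trades the sharper $O(N^2)$ bound for a simpler brute-force-PIT argument that is adequate for the paper's purposes.
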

Intuitively, the proof of the theorem follows from standard Hensel Lifting or Newton Iteration based algorithms for polynomial factorization, and observing (as formally done in \cite{KSS15}) that at every stage of the algorithm, randomness is needed only for polynomial identity testing. Moreover, these PIT instances are all polynomials of degree $\poly(d)$ and on $O(n)$ variables, and hence by \autoref{lem:SZ} can be solved deterministically in time $d^{O(n)}$. 

\section{Algorithm} \label{sec:algorithms}

\subsection{Main theorems and high-level overview}

First, we describe our main structural result, which states that the Kabanets-Impagliazzo hitting-set generator, when instantiated with a sufficiently hard low-degree polynomial and an appropriate combinatorial design, preserves the irreducibility of the factors of constant-depth circuits. The precise statement requires a few more conditions, and these conditions are without loss of generality.

Throughout the paper, we will use $\mathcal{G}=\{g_m\}_{m\in \N}$ to denote a family of explicit polynomials such that for every $m\in \N$, $g_m \in \F[x_1, \dots, x_m]$, $d_m := \deg(g_m) \leq O(\log \log (m))$. Further, $\mathcal{G}$ has the property that for any depth $\Delta \in \N$, if $\mathcal{C} = \{C_m\}_{m\in \N}$ is a family of depth-$\Delta$ circuits computing $\mathcal{G}$, then $\mathcal{C}$ requires size $m^{d_m^{\exp(-O(\Delta))}}$, which is $m^{\omega(1)}$. \cref{LST-hardness} gives us such a family of explicit low-degree polynomials that are hard for constant-depth circuits.

\begin{restatable}[Irreducibility-preserving variable reduction]{theorem}{irredpreserve}\label{thm:irreducibility-preservation}
    Fix any $\Delta \in \N$ and $\varepsilon \in (0,0.5)$. For an absolute constant\footnote{If $a(n) = O(b(n))$, then there exists some $C$ such that for all $n>C$, $a(n) \leq b(n)$; the $C_{\Delta,\varepsilon}$ in our statement is for this purpose. The precise value of $C_{\Delta,\varepsilon}$ depends on the exact hardness of the polynomial in $\mathcal{G}$ and the upper bounds obtained in our proofs.} $C_{\Delta,\varepsilon} \in \N$, let $n \in \N, n \geq C_{\Delta,\varepsilon}$ and $\vecx:=(x_1, \dots, x_n)$. Let $P(T, \vecx, z)$ be a nonzero polynomial with the following properties.
    \begin{itemize}
        \item $P(T,\vecx,z)$ is computable by a size $s \leq \poly(n)$ and depth $\Delta$ circuit.
        \item $P(T,\vecx,z)$ is monic in $z$ and $T$-regularized, with $\deg(P) = D \leq \poly(n)$.
        \item $P(T,\vecx,z)$ and $P(0,\vecx,z) = P(0,\veczero,z)$ are squarefree.
    \end{itemize}
    Let $\sigma = O(n^\varepsilon),\mu = O(\frac{n^{2\varepsilon}}{\log(n)}),\rho = O(\log(n))$, and let $\mathcal{S}$ be an $(n,\sigma ,\mu ,\rho)$-design. Let $\ki_{g_{\sigma},\mathcal{S}}: \F^{\mu} \to \F^{n}$ be the polynomial map in \autoref{def:KI-generator} defined using the design $\mathcal{S}$ and the polynomial $g_{\sigma}$ from the family of hard polynomials  $\mathcal{G}$.
    Then, the following is true. 

    A polynomial $F(T,\vecx,z)$ is an irreducible factor of $P(T,\vecx,z)$ if and only if $F(T,\ki_{g_\sigma,\mathcal{S}}(\vecw), z)$ is an irreducible factor of $P(T,\ki_{g_\sigma,\mathcal{S}}(\vecw),z)$.
\end{restatable}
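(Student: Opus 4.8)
The claimed equivalence splits naturally into the two implications; the forward direction (a factor of $P$ maps to a factor of the image) is essentially a homomorphism argument plus Gauss's lemma, while the backward direction (irreducibility is not destroyed) is where the real work lies and is exactly what the structural lemmas are built for.

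\textbf{Forward direction.} Suppose $F(T,\vecx,z)$ is an irreducible factor of $P(T,\vecx,z)$; write $P = F \cdot G$. Applying the ring homomorphism $\ki_{g_\sigma,\mathcal{S}} \colon \F[\vecx] \to \F[\vecw]$ (extended to fix $T$ and $z$) gives $P(T,\ki(\vecw),z) = F(T,\ki(\vecw),z) \cdot G(T,\ki(\vecw),z)$, so $F(T,\ki(\vecw),z)$ at least \emph{divides} $P(T,\ki(\vecw),z)$. Since $P$ is monic in $z$ and the substitution is only in the $\vecx$ variables, $P(T,\ki(\vecw),z)$ remains monic in $z$, and likewise $F(T,\ki(\vecw),z)$ stays monic in $z$ (its leading $z$-coefficient is a field constant). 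The only thing to check is that $F(T,\ki(\vecw),z)$ is still irreducible; this is precisely the backward-direction statement applied with the pair $(F,F)$ in place of $(P, F)$ — i.e. we need the theorem's hypotheses (squarefree, $T$-regularized, monic, small constant-depth) to be inherited by $F$. Squarefreeness of $F$ and of $F(0,\vecx,z)$ follows from that of $P$ and $P(0,\vecx,z)$ since $F \mid P$; $T$-regularity and monicity in $z$ are inherited; and $F$ has a small constant-depth circuit by the Andrews--Wigderson-style closure results cited in the preliminaries (this is where one invokes that factors of small constant-depth circuits are small constant-depth — or, more carefully, one runs the whole argument with $P$ itself and reads off the factor structure, rather than re-invoking the theorem on $F$). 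I would in fact present the theorem so that both directions are consequences of a single statement: \emph{the map $S \mapsto (\text{set of approximate roots})$ commutes with applying $\ki$}, and factors correspond to root-subsets.

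\textbf{Backward direction (the crux).} Here the plan follows the proof overview almost verbatim. Over $\overline{\Q}$, factor $P(\veczero,0,z) = \prod_{i=1}^d (z-\zeta_i)$; by the non-degeneracy/squarefree hypotheses each $\zeta_i$ is a simple root, so Newton iteration (\cref{lem:newton-iteration-linear-circuit-and-uniqueness}) lifts each $\zeta_i$ uniquely to a truncated non-degenerate approximate $z$-root $\phi^{(i)}(T,\vecx)$ of order $k$ for $k > \deg_T(P)$, and every true factor $Q$ of $P$ equals $\prod_{i \in S_Q}(z - \phi^{(i)})\trunc T^k$ for a unique $S_Q \subseteq [d]$. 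Thus $P$ is reducible iff some $\emptyset \neq S \subsetneq [d]$ gives $Q_S \mid P$. Now I reduce each non-divisibility $Q_S \nmid P$ to a PIT instance (via the divisibility-to-PIT reductions of \cite{Forbes15, AW24}, using resultants), obtaining a polynomial $R_S(T,\vecx)$ that is nonzero iff $Q_S \nmid P$. The circuit for $R_S$ is not constant-depth because it contains the $\phi^{(i)}$ sub-circuits, but — and this is the whole point — the key structural lemma (\cref{lem:low-deg-w-components-of-roots-mod-T}) says each $\phi^{(i)}$ has not-too-large constant-depth circuits for its low-degree (in $\vecx$) homogeneous parts. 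Feeding this into the Chou--Kumar--Solomon machinery (\cref{lem:CKS-updated}, \cref{lem:CKS-updated-rational-in-T-new}) exactly as in \autoref{thm:technical-theorem-1}, one shows $\ki_{g_\sigma,\mathcal{S}}$ preserves the nonzeroness of each $R_S$: if it didn't, a hybrid argument would force $x_i - g_\sigma$ to divide a partial substitution of $R_S$, and then CKS would build a not-too-large constant-depth circuit for $g_\sigma$, contradicting its hardness — here the parameter choices $\sigma = \Theta(n^\varepsilon)$, $\mu = \Theta(n^{2\varepsilon}/\log n)$, $\rho = \Theta(\log n)$, $\deg(g_\sigma) = O(\log\log n)$ are tuned so that the size bound $\poly(\size P, \deg P)\cdot(\log k)^{\poly(\ell)}$ with $\ell = O(\deg g_\sigma)$ stays below $\sigma^{\deg(g_\sigma)^{\exp(-O(\Delta))}}$. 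Combined with the observation that the factor structure of $P\circ\ki$ is governed by the approximate roots $\phi^{(i)}\circ\ki$ of $P\circ\ki$ (which are the $\ki$-images of the $\phi^{(i)}$, by uniqueness of Newton iteration and the fact that $\ki$ commutes with the iteration since $g_\sigma(\veczero)=0$ keeps the $T$-regularity and starting conditions intact), this shows: $Q_S \mid P \iff Q_S\circ\ki \mid P\circ\ki$, hence $P$ irreducible $\iff P\circ\ki$ irreducible, and more generally the factorization profiles match.

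\textbf{Main obstacle.} The delicate point — flagged in the overview as \cref{rmk:why-quadratic-convergence} — is the mismatch of variables in the structural lemma: Newton iteration produces $\phi^{(i)}$ as a root modulo $T^k$ (a power series in $T$ over $\F[\vecx]$), but to run CKS I need small constant-depth circuits for $\phi^{(i)}$ modulo $\inangle{\vecx}^\ell$, i.e. truncation in a \emph{different} variable set. Getting the size bound $(\log k)^{\poly(\ell)}$ rather than $\poly(k)$ requires the quadratic-convergence version of Newton iteration (\cref{lem:newton-iteration-quadratic-div-free}) so that only $O(\log k)$ iterations are needed, and then carefully tracking how the $\inangle{\vecx}^\ell$-truncation interacts with each iteration step and with the Taylor-expansion trick of \cite{ChouKS19}; keeping the $\ell$-dependence in the exponent of $\log k$ (and not in the exponent of the main $\poly(n)$ factor) is what makes the final hardness contradiction go through. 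A secondary bookkeeping obstacle is making every reduction — squarefree preprocessing, $T$-regularization, the divisibility-to-PIT step — commute cleanly with the substitution $\ki_{g_\sigma,\mathcal{S}}$, which is why the theorem is stated with the ``without loss of generality'' normalizations already in place.
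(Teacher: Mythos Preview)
Your overall architecture matches the paper's: characterize irreducible factors of $P$ via subsets of its approximate $z$-roots (\cref{lem:identifying-conjugates-to-divisibility}, \cref{lem:identifying-conjugates-to-divisibility-testing-II}), reduce the resulting divisibility tests to PIT instances (\cref{cor:constant-depth-PIT-instance-for-conjugates-refined}), and show the KI generator preserves each instance via the structural lemma plus the CKS hybrid argument (\cref{thm:technical-theorem-2}, \cref{thm:ki-preserves-divisibility}). The paper does not split the theorem into forward and backward implications at all; it proves the single equivalence chain
\[
F_S \text{ irreducible factor of } P \;\Longleftrightarrow\; \text{(divisibility conditions on $Q_S,Q_U$)} \;\Longleftrightarrow\; \text{(same conditions after }\ki\text{)} \;\Longleftrightarrow\; F_S\circ\ki \text{ irreducible factor of } P\circ\ki,
\]
which is exactly your parenthetical ``single statement'' suggestion.

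There is, however, a genuine error in your forward-direction sketch. You write that ``$F$ has a small constant-depth circuit by the Andrews--Wigderson-style closure results cited in the preliminaries.'' No such closure result is known: the paper says explicitly (end of the introduction) that closure of constant-depth circuits under taking irreducible factors is open, and the Andrews--Wigderson result used here (\cref{thm:andrews-wigderson-squarefree-decomposition}) only gives constant-depth circuits for the squarefree \emph{parts}, not for individual irreducible factors. So your attempt to re-invoke the theorem on the pair $(F,F)$ fails for lack of a constant-depth circuit for $F$. Your own fallback---run the entire root-subset argument with $P$ and read off the factor structure---is the correct fix, and is precisely what the paper does; you should promote it from an afterthought to the main line.

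One minor point: you do not need $g_\sigma(\veczero)=0$ for the approximate roots to behave well under $\ki$. The relevant facts (that $P\circ\ki$ stays $T$-regularized and monic in $z$, and that $\phi^{(i)}\circ\ki$ is the unique truncated non-degenerate approximate $z$-root of $P\circ\ki$ with constant term $\zeta_i$) follow directly from \cref{lem:preserving-root-properies-under-homomorphism} and \cref{lem:unique-truncated-nondeg-approx-roots}, with no constraint on $g_\sigma(\veczero)$: since $P$ is $T$-regularized, $\phi^{(i)}(0,\vecx)$ is already a field constant independent of $\vecx$, so any substitution in $\vecx$ preserves the starting conditions.
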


A few remarks regarding the choice of parameters in \cref{thm:irreducibility-preservation}:
\begin{itemize}
    \item The family $\mathcal{G}$ has degree $d_m \leq O(\log \log (m))$ so that $(\log(m))^{\poly(d_m)}$ is $\poly(m)$. We can work with any $d_m \leq O((\log m)^{\alpha})$ for some small enough $\alpha$ depending on the exponent in $\poly(d_m)$, but $\log \log (m)$ works in every case and makes it simpler to state the theorems.
    \item The design $\mathcal{S}$ has $\sigma = O(n^\varepsilon)$ because $\mathcal{G}$ is guaranteed to be superpolynomially hard for constant-depth circuits. Stronger hardness guarantees can be used with smaller $\sigma$ to get algorithms with better time complexity, as is usually the case in hardness-vs-randomness results.
\end{itemize}

We now describe our main algorithmic result. Informally, the result states that for every choice of depth $\Delta \in \N$, there is a \emph{deterministic} algorithm $\mathcal{A}_{\Delta}$ such that $\mathcal{A}_{\Delta}$ takes a depth-$\Delta$ circuit for a polynomial $P$ as input, and outputs small (but unbounded depth) circuits along with multiplicity information for each irreducible factor of $P$. Moreover, $\mathcal{A}_{\Delta}$ runs in time subexponential in the input size. 

\begin{restatable}[Deterministic subexponential time algorithm for factorization of constant-depth circuits]{theorem}{mainalgotheorem}\label{thm:main-algorithm}
Fix any $\Delta \in \N$ and $\varepsilon \in (0,0.5)$. There exists an algorithm $\mathcal{A}_{\Delta, \varepsilon}$ which, for all sufficiently large $n$, 
\begin{itemize}
    \item takes as input a polynomial $P(\vecx) \in \Q[x_1, \dots, x_n]$ of degree $D \leq \poly(n)$ with a depth-$\Delta$, size $s\leq \poly(n)$ circuit;
    \item outputs $\poly(s,D)$-sized circuits for each irreducible factor of $P$, along with the multiplicity of each such factor; and
    \item runs in time $\poly(s, D)^{O(n^{2\varepsilon})}$.
\end{itemize}    
\end{restatable}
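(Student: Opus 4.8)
The plan is to reduce the factorization problem for a size-$s$, depth-$\Delta$ circuit $P(\vecx)$ on $n$ variables to a factorization problem on only $O(\mu) = O(n^{2\varepsilon}/\log n)$ variables, where we can afford a dense-representation factorization algorithm running in time $\mathrm{poly}$ in the dense size, i.e.\ $\poly(s,D)^{O(n^{2\varepsilon})}$. I would structure the algorithm around the deterministic preprocessing steps together with the irreducibility-preserving variable reduction $\ki_{g_\sigma,\mathcal{S}}$ of \autoref{thm:irreducibility-preservation}. Concretely: first, introduce a fresh variable $z$ and work with $P(\vecx) + y_1 z$-type tricks, or rather follow the standard route of adjoining $z$ and taking a generic linear form, to put ourselves in a setting where the target polynomial is monic in $z$; this and all subsequent PIT-gated steps are done deterministically using \autoref{thm:lst-PIT}. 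Second, use the Andrews--Wigderson squarefree decomposition (\autoref{thm:andrews-wigderson-squarefree-decomposition}) with the PIT oracle instantiated by \autoref{thm:lst-PIT} to write $P = \prod_i F_i^{i}$ where each $F_i$ is squarefree and computed by a constant-depth circuit; it then suffices to factor each squarefree $F_i$ and record the multiplicity $i$. Third, apply a deterministic shift of the $\vecx$ variables (found via the Polynomial Identity Lemma over a large enough product set, or via \autoref{thm:lst-PIT} applied to the discriminant, which is constant-depth by \autoref{thm:resultant-constant-depth-andrews-wigderson}) so that $F_i$ is monic in $z$ and $F_i(\veczero,z)$ remains squarefree; then rescale $x_i \mapsto x_i T$ to make the circuit $T$-regularized, landing exactly in the hypotheses of \autoref{thm:irreducibility-preservation}.

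Fourth — the heart of the algorithm — apply \autoref{thm:irreducibility-preservation}: with $\sigma = O(n^\varepsilon)$, $\mu = O(n^{2\varepsilon}/\log n)$, $\rho = O(\log n)$, and $\mathcal{S}$ the explicit design of \autoref{lem:explicit-designs}, the composed polynomial $\widehat{F_i}(T,\vecw,z) := F_i(T, \ki_{g_\sigma,\mathcal{S}}(\vecw), z)$ has the property that $G \mapsto G \circ \ki_{g_\sigma,\mathcal{S}}$ is a bijection between irreducible factors of $F_i$ and irreducible factors of $\widehat{F_i}$. The map $\ki_{g_\sigma,\mathcal{S}}$ has degree $d_\sigma = O(\log\log n)$, so $\widehat{F_i}$ has $O(\mu)$ variables and degree $\poly(n)$, and its dense representation has size $N = \binom{O(\mu) + \poly(n)}{O(\mu)} \le \poly(s,D)^{O(n^{2\varepsilon})}$. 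We can write down this dense representation deterministically in that much time by evaluating the circuit for $\widehat{F_i}$ (size $\poly(s,D) \cdot (\log n)^{\poly(d_\sigma)} = \poly(s,D)$ by the design bound, since $(\log n)^{\poly(\log\log n)} = \poly(n)$) on a product grid and interpolating. Then invoke \autoref{thm:dense-rep-factorization} on $\widehat{F_i}$ — whose hypotheses (monic in $z$, squarefree, squarefree at $\veczero$) are inherited from $F_i$ via the bijection, since $\ki_{g_\sigma,\mathcal{S}}$ fixes $T$ and $z$ and the factorization profile is preserved — to get all irreducible factors $\widehat{H}_j$ of $\widehat{F_i}$ in time $O(N^2)$.

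Fifth, recover circuits for the true factors $H_j$ of $F_i$. Here I would \emph{not} try to invert $\ki$ directly; instead, for each $\widehat{H}_j$ (which, by the bijection, equals $H_j \circ \ki$ for a unique irreducible factor $H_j$ of $F_i$), use the standard factorization-via-Newton-iteration machinery: $\widehat{H}_j$ being monic in $z$ of known degree $e_j$ tells us the subset $S_j \subseteq [D]$ of roots of $F_i(T,\veczero,z)$ it corresponds to; we lift each such root $\zeta$ to a truncated approximate $z$-root $\phi^{(\zeta)}(T,\vecx)$ of order $k > \deg_T$ via \autoref{lem:newton-iteration-linear-circuit-and-uniqueness} (polynomial-size, unbounded-depth circuit, computed deterministically), form the candidate factor $\prod_{\zeta \in S_j}(z - \phi^{(\zeta)}(T,\vecx)) \trunc T^k$, verify divisibility of $F_i$ by it using \autoref{thm:lst-PIT} and resultant-based non-divisibility tests (which is valid because the verification PIT is exactly the kind handled in the analysis of \autoref{thm:irreducibility-preservation}), and then extract the quotient circuit via \autoref{lem:div-elimination-algorithm}; finally undo the $x_i \mapsto x_i T$ rescaling and the linear shift. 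The multiplicities come from the squarefree decomposition, and univariate/field issues are handled by \autoref{thm:-LLL-univariate-factorization}. Summing the running times: each preprocessing PIT costs $\poly(s,D)^{O(n^\varepsilon)}$ by \autoref{thm:lst-PIT}, the dense factorization costs $\poly(s,D)^{O(n^{2\varepsilon})}$, and there are $\poly(s,D)$ many factors and lifts, so the total is $\poly(s,D)^{O(n^{2\varepsilon})}$ as claimed.

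The main obstacle I anticipate is Step Five: ensuring that the factors of $\widehat{F_i}$ produced over $\vecw$ can be \emph{lifted back} to honest circuits over $\vecx$ deterministically, and that the candidate-factor verification is a PIT instance we are actually allowed to solve. The subtlety is that the candidate factor $\prod_{\zeta\in S_j}(z-\phi^{(\zeta)})\trunc T^k$ involves the Newton-iterate sub-circuits $\phi^{(\zeta)}$, which are not constant-depth; checking $F_i / (\text{candidate})$ is a polynomial, or equivalently running the divisibility/resultant test, is therefore not a priori a constant-depth PIT. This is precisely the point where the structural analysis underlying \autoref{thm:irreducibility-preservation} (that low-degree parts of these Newton iterates have small constant-depth circuits, hence $\ki_{g_\sigma,\mathcal{S}}$ preserves non-zeroness of these particular PIT instances) must be reused — one composes the verification PIT with $\ki_{g_\sigma,\mathcal{S}}$ and then applies \autoref{thm:lst-PIT} on the resulting low-variable instance. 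Getting the bookkeeping right — that the \emph{same} design and hard polynomial work simultaneously for the irreducibility-preservation statement and for every verification query the algorithm makes — is the delicate part, and it is why the theorem is stated as "follows from \autoref{thm:irreducibility-preservation} plus deterministic implementation of standard preprocessing" rather than proved from scratch.
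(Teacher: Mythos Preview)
Your Steps 1--4 (squarefree decomposition via Andrews--Wigderson, monic/discriminant preprocessing via \autoref{thm:lst-PIT}, $T$-regularization, applying $\ki_{g_\sigma,\mathcal{S}}$, and dense-representation factorization of the $O(n^{2\varepsilon})$-variate image) match the paper's \autoref{alg:all-factors} and \autoref{alg:factors-squarefree} essentially line-for-line.

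The recovery step (your Step 5) diverges from the paper and has a genuine gap. You propose lifting \emph{every} root $\zeta \in S_j$ and forming $\prod_{\zeta\in S_j}(z-\phi^{(\zeta)})\trunc T^k$. Two problems: (i) the $\zeta$'s live in the splitting field of $\tilde{G}_j(0,\veczero,z)$ over $\Q$, whose degree can be as large as $|S_j|!$, so neither the arithmetic nor the circuit constants are $\poly(D)$-bounded over $\Q$; (ii) your plan to verify divisibility by composing with $\ki$ and then invoking \autoref{thm:lst-PIT} does not type-check --- after $\ki$-composition the circuit is still not constant depth (the $\phi^{(\zeta)}$ subcircuits are not), so \autoref{thm:lst-PIT} is inapplicable. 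What one can do after $\ki$ is brute-force PIT over a product grid in the $\mu$ variables, but then one needs a separate argument that $\ki$ preserves nonzeroness of \emph{that specific} instance. More fundamentally, since you already know $S_j$ is correct from the dense factorization, no verification is needed at all; the only task is to produce a $\Q$-circuit for the factor, and your product-of-lifts does not deliver one.

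The paper's \autoref{alg:univariate-to-factor} handles recovery differently: it adjoins a \emph{single} root $u$ of one irreducible factor $H(z)$ of $\tilde{G}_j(0,\veczero,z)$, works over $\K=\Q[u]/H(u)$ (degree $\le D$), lifts only $u$ to a single approximate root $\varphi\in\K[T,\vecx]$, and then sets up a linear system over $\Q[\vecx]$ for the coefficients of the monic minimal polynomial of $\varphi$ of the known bi-degree --- whose unique solution is $\tilde{G}_j$ by a resultant argument (\autoref{lem:unique-minimal-polynomial}). Cramer's rule writes each coefficient as a ratio with denominator $\det(M_\varphi^T M_\varphi)$; the key step (\autoref{thm:algo-circuit-min-poly-of-approx-root}) is that $\ki_{g_\sigma,\mathcal{S}}$ preserves this determinant's nonzeroness precisely \emph{because} it preserves irreducibility and hence uniqueness of the linear system's solution. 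One then brute-forces a nonzero evaluation point in the $\mu$ variables and applies \autoref{lem:div-elimination-algorithm} to obtain a division-free $\Q$-circuit. The paper also remarks in a footnote that Hensel-lifting the entire univariate $\tilde{G}_j(0,\veczero,z)$ is an alternative recovery route --- closer in spirit to your idea, but it lifts factors rather than individual roots, and so avoids the splitting-field blow-up you would incur.
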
 
\subsubsection{High-level overview of the algorithm}\label{sec:algo-overview}
\begin{enumerate}
    \item Suppose $P(\vecx)$ is the polynomial that we would like to factor. We first use the algorithm by Andrews and Wigderson (\cref{thm:andrews-wigderson-squarefree-decomposition}) to compute the squarefree decomposition of $P$. Now, we deal with each squarefree part separately.
    \item For a specific squarefree part $P_r(\vecx)$, we perform some of the standard transformations ($x_i \mapsto T\cdot x_i + a_i \cdot z + b_i$) so that the polynomial $P_r(T,\vecx,z)$ is monic in $z$ variable, $T$-regularized, and $P_r(0,\vecx,z) = P_r(0,\veczero,z)$ is squarefree.
    \item We apply $\ki_{g,\mathcal{S}}(\vecw)$ on $P_r(T,\vecx,z)$ (instantiated with an appropriately chosen low-degree hard polynomial $g$ and a design $\mathcal{S})$. This maintains the irreducibility of each factor (\cref{thm:irreducibility-preservation}). At this point, we are working with an $n^\varepsilon$-variate polynomial for some $\varepsilon\in (0,1)$. Thus, we can use a brute-force / dense-representation factorization algorithm to factorize $P_r(T,\ki_{g,\mathcal{S}}(\vecw),z)$ in subexponential time (\cref{thm:dense-rep-factorization}). 
    \item Since $P_r(0,\vecx,z) = P_r(0,\ki_{g,\mathcal{S}}(\vecw),z) = P_r(0,\veczero,z)$, taking an irreducible factor of $P_r(T,\ki_{g,\mathcal{S}}(\vecw),z)$ and setting $T$ to 0 precisely tells us $G(0,\vecx,z)$ for each irreducible factor $G(T,\vecx,z)$ of $P_r(T,\vecx,z)$. Thus, we now have $G(0,\vecx,z) = G(0,\veczero,z)$ for each $G$ that is an irreducible factor of $P$, and we can deal with each $G$ separately.\footnote{At this point, with the right univariate projections $G(0,\veczero,z)$ for each irreducible factor $G(T,\vecx,z)$, one way to get the irreducible factors is to perform \emph{Hensel lifting} (for instance, see \cite[Lecture 7]{Sudan98} or \cite{ST20}). While Hensel lifting usually ends with a \emph{reconstruction step} that involves solving a linear system, the clean-up for us would just be a truncation since Hensel lifting guarantees that if we start off with the right univariate projection, we will retrieve the right factor, modulo higher degree terms that might have accumulated in the process.}
    \item Suppose $G(T,\vecx,z)$ is an irreducible factor, and we have access to the univariate $G(0,\vecx,z) = G(0,\veczero,z)$. We factorize $G(0,\veczero,z)$ using \cref{thm:-LLL-univariate-factorization}. $G(0,\veczero,z)$ could have a linear factor of the form $(z-\alpha)$; in this case, we use Newton iteration (\cref{lem:newton-iteration-linear-circuit-and-uniqueness}) to lift the root $\alpha$ to a truncated, approximate $z$-root $\varphi(T,\vecx)$ of $P(T,\vecx,z)$ of sufficiently high accuracy. But all the factors of $G(0,\veczero,z)$ might be non-linear. In this situation, we artificially \emph{add a root} $u$ of $G(0,\veczero,z)$ to the field; more precisely, if $H(z)$ is an arbitrary irreducible factor of $G(0,\veczero,z)$, we work over the field $\frac{\Q[u]}{H(u)}$ so that $u$ is now a root of $G(0,\veczero,z)$. We can efficiently simulate arithmetic in this field. Thus, we can lift the root $u$ to a truncated, approximate $z$-root $\varphi(T,\vecx)$ of $P(T,\vecx,z)$ of sufficiently high accuracy.
    \item (\cref{lem:unique-minimal-polynomial}) Given a truncated, approximate $z$-root $\varphi$ of $P(T,\vecx,z)$, we set up a linear system whose solution will give us a minimal polynomial of the root $\varphi$. Since we know $G(0,\vecx,z)$, we know the $z$-degree of $G$; this is essentially the information we need to set up the linear system in a way that ensures that the solution is the same as the irreducible factor $G$.
    \item (\cref{thm:algo-circuit-min-poly-of-approx-root}) The solution to the linear system can be represented as a small circuit using Cramer's rule, but this involves a division by a determinant. We shall now use Strassen's division elimination to represent the solution as an arithmetic circuit without division gates; this requires a point where the denominator evaluates to a nonzero value. To this end, we again compose the denominator with $\ki_{g,\mathcal{S}}$ to get an $n^\varepsilon$-variate polynomial; since this denominator essentially captures the uniqueness of the minimal polynomial, we can prove that $\ki_{g,\mathcal{S}}$, by maintaining irreducibility of factors, also maintains the non-zeroness of the denominator. We can now use a brute-force derandomization of the Polynomial Identity Lemma to find a point where the denominator evaluates to a nonzero value, and hence, carry out Strassen's division elimination. Thus, we have division-free circuits for each irreducible factor $G(T,\vecx,z)$ of $P(T,\vecx,z)$. 
    \item Finally, we can undo our initial transformations and output circuits for the irreducible factors of the input polynomial $P(\vecx)$, along with their multiplicities. 
\end{enumerate} 

\subsection{The algorithm}

Fix any $\Delta \in \N$ and $\varepsilon \in (0,0.5)$. The following algorithm is the ``outermost'' wrapper for our complete algorithm, and it can be thought of as the algorithm $\mathcal{A}_{\Delta,\varepsilon}$ in \cref{thm:main-algorithm}. The algorithm computes the squarefree decomposition of the input polynomial, and then uses \cref{alg:factors-squarefree} on each squarefree part of the decomposition.

{
\small
\begin{algorithm}[H]
\onehalfspacing
\caption{All factors of depth-$\Delta$ circuits, parameter $\varepsilon \in (0,0.5)$}
\label{alg:all-factors}
\SetKwInOut{Input}{Input}\SetKwInOut{Output}{Output} \SetKwInOut{Prereq}{Prerequisites}
\Input{
A depth-$\Delta$ circuit $C_P(\vecx)$ of size $s \in \N$, computing a degree-$D$ polynomial $P(\vecx) = \prod_{i=1}^{m}G_i(\vecx)^{e_i} \in \Q[\vecx]$, where $\vecx = (x_1, \dots, x_n)$.
}
\Output{A list $L = \{(C_{G_1}(\vecx),e_1), \dots, (C_{G_m}(\vecx),e_m)\}$, such that each $C_{G_i}(\vecx)$ is a circuit of size $\poly(s,D)$ and depth $\poly(D)$, which computes the irreducible factor $G_i(\vecx)$, and $e_i$ is the multiplicity of $G_i$ in $P$.}
\BlankLine

Run Andrews-Wigderson (\autoref{thm:andrews-wigderson-squarefree-decomposition}) on $C_P(\vecx)$ to get circuits $C_{P_1}(\vecx), \dots, C_{P_r}(\vecx)$ of depth $\Delta' = \Delta + O(1)$, computing $P_1(\vecx), \dots, P_r(\vecx)$ where $r = \max_{i\in [m]} e_i$ and $P_i(\vecx) = \prod_{j \in S_i}G_j(\vecx)$ for $S_i = \{j \in [m]: e_j = i\}$ \label{algo1-line:andrews-wigderson-squarefree}

Initialize $L \leftarrow \emptyset$

\ForAll{$i \in [r]$}{
    Run \Cref{alg:factors-squarefree} for parameters $\Delta'$ and $\varepsilon$ on $C_{P_i}(\vecx)$ to get a list $L_i := \{C_{G_j}(\vecx): j \in S_i\}$ \label{algo1-line:run-algo2-squarefree}
    
    For each $C_{G_j}(\vecx) \in L_i$, add $(C_{G_j}(\vecx),i)$ to $L$. \label{algo1-line:add-multiplicity-info}
}

\Return{$L$}.

\end{algorithm}
}

\subsection{Algorithm for factors of squarefree polynomials}

In this section, we will describe the algorithm for steps 3 and 4 of the overview in \cref{sec:algo-overview}. Fix any $\Delta \in \N$. 

Let $\mathcal{G}=\{g_m\}_{m\in \N}$ be a family of polynomials such that for every $m\in \N$, $g_m \in \F[x_1, \dots, x_m]$, $d_m := \deg(g_m) \leq O(\log \log (m))$. Further, $\mathcal{G}$ has the property that for any depth $\Delta \in \N$, if $\mathcal{C} = \{C_m\}_{m\in \N}$ is a family of depth-$\Delta$ circuits computing $\mathcal{G}$, then $\mathcal{C}$ requires size $m^{d_m^{\exp(-O(\Delta))}}$, which is $m^{\omega(1)}$. \cref{LST-hardness} gives us such a family of explicit low-degree polynomials that are hard for constant-depth circuits.

Let $\varepsilon \in (0,0.5)$. Let $\sigma = O(n^\varepsilon),\mu = O(\frac{n^{2\varepsilon}}{\log(n)}),\rho = O(\log(n))$, and let $\mathcal{S}$ be an $(n,\sigma ,\mu ,\rho)$-design. Let $\ki_{g_{\sigma},\mathcal{S}}: \F^{\mu} \to \F^{n}$ be the polynomial map in \autoref{def:KI-generator} defined using the design $\mathcal{S}$ and the polynomial $g_{\sigma}$ from the family of hard polynomials  $\mathcal{G}$.

{
\small
\begin{algorithm}[H]
\onehalfspacing
\caption{Factors of squarefree polynomial with depth-$\Delta$ circuits, parameter $\varepsilon \in (0,0.5)$}
\label{alg:factors-squarefree}
\SetKwInOut{Input}{Input}\SetKwInOut{Output}{Output} \SetKwInOut{Prereq}{Prerequisites}
\Input{
A depth-$\Delta$ size-$s$ circuit $C_{P}$ computing the squarefree degree-$D$ polynomial $P(\vecx)= \prod_{j \in [m]} G_j(\vecx)$, where the $G_j$s are the irreducible factors of $P$.
}
\Output{A list $L = \{C_{G_1}(\vecx), \dots, C_{G_m}(\vecx)\}$, such that each $C_{G_j}(\vecx)$ is a circuit of size $\poly(s,D)$ and depth $\poly(D)$, which computes the irreducible factor $G_j(\vecx)$.}
\BlankLine

Compute $\veca \in \Q^n$ such that $\delta = \homog_D[P](\veca) \neq 0$ using \cref{thm:lst-PIT}. Let $\Hat{P}(\vecx,z) := P(\vecx + (\veca \cdot z))/\delta$ and for each $j \in [m]$, $\Hat{G_j}(T,\vecx,z) := {G_j}(\vecx + (\veca \cdot z))/\homog_{\deg(G_j)}[G_j](\veca)$ \label{algo2-line:monic-shift}

\tcp{Ensures that $\hat{P}$ is monic in $z$}

Compute $\vecb \in \Q^n$ such that $\operatorname{Disc}_z(\Hat{P})(\vecb) \neq 0$ using \cref{thm:lst-PIT}. Let $\tilde{P}(T,\vecx,z) := \Hat{P}((T\cdot \vecx) + \vecb, z)$ and for each $j \in [m]$, $\tilde{G_j}(T,\vecx,z) := \Hat{G_j}((T\cdot \vecx) + \vecb, z)$ \label{algo2-line:discriminant-shift-and-T-regularize} 

\tcp{Ensures that $\tilde{P}(T, \vecx, z)$ is $T$-regularised, and $P(0, \veczero, z)$ is squarefree.}

Construct the map $\ki_{g_\sigma,\mathcal{S}}$ using \cref{lem:explicit-designs} and $g_\sigma \in \mathcal{G}$ as given by \cref{LST-hardness}. \label{algo2-line:ki-map-construct}

Factorize $C_{\tilde{P}}(T,\ki_{g,\mathcal{S}}(\vecw),z)$ so that for each $j \in [m]$, $C_{\tilde{G}_j}(T,\ki_{g,\mathcal{S}}(\vecw),z)$ is a circuit  that computes $\tilde{G}_j(T,\ki_{g,\mathcal{S}}(\vecw),z)$, satisfying the property that $\tilde{G}_j(0,\ki_{g,\mathcal{S}}(\vecw),z) = \tilde{G}_j(0,\veczero,z)$. \label{algo2-line:factorize-n^eps-variate}
    
Run \Cref{alg:univariate-to-factor} with parameters $\Delta+2$ and $\varepsilon$ on $C_{\tilde{P}}(T,\vecx,z)$ and $\{\tilde{G}_j(0,\veczero,z): j \in [m]\}$ to get a list $L' = \{C_{\tilde{G}_1}(T,\vecx,z), \dots, C_{\tilde{G}_m}(T,\vecx,z)\}$. \label{algo2-line:run-algo3} 

For each $C_{\tilde{G}_j}(T,\vecx,z) \in L'$, compute $C_{G_j}(\vecx) := C_{\tilde{G}_j}(1,\vecx-\vecb,0)$. \label{algo2-line:undo-preprocessing}

Output  $L = \{C_{G_1}(\vecx), \dots, C_{G_m}(\vecx)\}$

\end{algorithm}
}

\subsection{Algorithm for obtaining irreducible factors from the right univariate projections}

In this section, we describe the algorithm for steps 5-7 of the overview in \cref{sec:algo-overview}. Fix any $\Delta \in \N$ and $\varepsilon \in (0,0.5)$.

{
\small
\begin{algorithm}[H]
\onehalfspacing
\caption{Factors of monic, $T$-regularized, squarefree polynomial with depth-$\Delta$ circuit, parameter $\varepsilon \in (0,0.5)$}
\label{alg:univariate-to-factor}
\SetKwInOut{Input}{Input}\SetKwInOut{Output}{Output}\SetKwInOut{Prereq}{Prerequisites}

\Input{
\begin{itemize}
    \item A depth-$\Delta$ size-$s$ circuit $C_{\tilde{P}}$ computing a squarefree degree-$D$ polynomial $\tilde{P}(T,\vecx,z)= \prod_{j \in [m]} \tilde{G}_j(T,\vecx,z)$, where the $\tilde{G}_i$s are the irreducible factors of $\tilde{P}$. Further, $\tilde{P}(T,\vecx,z)$ is monic in $z$, $T$-regularized with respect to $\vecx$ and $C_{\tilde{P}}(0,\vecx,z) = C_{\tilde{P}}(0,\veczero,z)$ is squarefree.
    \item For each $j \in [m]$, the univariate polynomial $\tilde{G}_j(0,\vecx,z) = \tilde{G}_j(0,\veczero,z)$.
\end{itemize}
}
\Output{A list $L = \{C_{\tilde{G}_1}(T,\vecx,z), \dots, C_{\tilde{G}_m}(T,\vecx,z)\}$, such that each $C_{\tilde{G}_i}(T,\vecx,z)$ is a circuit of size $\poly(s,D)$ and depth $\poly(D)$, which computes the irreducible factor $\tilde{G}_i(T,\vecx,z)$ of $\tilde{P}(T,\vecx,z)$.}
\Prereq{$\ki_{g_\sigma,\mathcal{G}}$ constructed in \cref{alg:factors-squarefree} using \cref{lem:explicit-designs} and \cref{LST-hardness}}

\BlankLine

Initialize $L \leftarrow \emptyset$

{
\ForAll{$j \in [m]$}{ 

Factorize the degree-$D_z$ polynomial $\tilde{G}_j(0,\veczero,z)$ (\cref{thm:-LLL-univariate-factorization}) to get $\tilde{G}_j(0,\veczero,z) = \prod_{l=1}^{r_j} H_l(z)$, where every $H_l(z)$ is distinct, monic and irreducible. \label{alg3-line:univariate-factorization}

Let $H(z)$ be an arbitrary irreducible factor of $\tilde{G}_j(0,\veczero,z)$. Let $\mathbb{K}$ be the field $\frac{\Q[u]}{H(u)}$. \label{alg3-line:number-field-adjoin-root}

Use Newton iteration (\cref{lem:newton-iteration-linear}) to compute a truncated, non-degenerate, approximate $z$-root $\varphi(T,\vecx) \in \mathbb{K}[T,\vecx]$ of $\tilde{G}_j(T,\vecx,z)$ of order $2D\cdot D_z+1$ such that $\varphi(0,\vecx) = \varphi(0,\veczero) = u \in \mathbb{K}$. \label{alg3-line:newton-iteration}

Use the algorithm from \cref{thm:algo-circuit-min-poly-of-approx-root} (with $\ki_{g_\sigma,\mathcal{G}}$) on $\varphi(T,\vecx)$ to compute a circuit $C_{\tilde{G_j}}$ for $\tilde{G_j}$. \label{alg3-line:lin-system-div-free-ckt}

Add $C_{\tilde{G_j}}$ to $L$.
}
}

\Return{$L$}
\end{algorithm}
}

\section{Technical building blocks} \label{sec:technical}

\subsection{Properties of roots modulo {$T^k$}}

The following lemma observes that certain homomorphisms defined by polynomial maps preserve approximate roots for a polynomial. This will be used to argue that an approximate root remains an approximate root (with some nice properties) even after we plug in the Kabanets-Impagliazzo generator.

\begin{lemma}[Preserving root properties under homomorphisms]\label{lem:preserving-root-properies-under-homomorphism}
Let $k > 1$ be any natural number and let $P(T, \vecx, z) \in \F[T, \vecx, z]$ and $R(T, \vecx) \in \F[T, \vecx]$ be polynomials satisfying the following properties. 
\begin{itemize}
    \item $P(T,\vecx,z)$ is $T$-regularized and it is monic in $z$.  
    \item $R(T,\vecx)$ is a truncated, non-degenerate, approximate $z$-root of $P(T,\vecx,y)$ of order $k$.
\end{itemize}

For a new tuple $\vecw$  of $\mu$ variables distinct from $T, \vecx, z$, polynomials $h_1, h_2, \ldots, h_n \in \F[\vecw]$ and a non-zero field constant $\gamma \in \F$, let $\Lambda:\F[T,\vecx,z] \to \F[T,\vecw,z]$ be the ring homomorphism defined by $T \mapsto \gamma T$, $x_i \mapsto h_i(\vecw)$ and $z \mapsto z$. 
Then, the following are true.  
\begin{itemize} 
    \item $\Lambda(R)(0,\vecw) = R(0,\vecx) \in \F$
    \item $\Lambda(P)(T,\vecw,z)$ is $T$-regularized, and it is monic in $z$.
    \item $\Lambda(R)(T,\vecw)$ is a truncated, non-degenerate, approximate $z$-root of $\Lambda(P)(T,\vecw,y)$ of order $k$    
\end{itemize}
\end{lemma}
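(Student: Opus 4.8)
The plan is to verify each of the three bullet points directly from the definitions, using that $\Lambda$ is a ring homomorphism that fixes $z$ and acts "triangularly" with respect to the filtration by $T$-degree. The key elementary observation I would record first is that $\Lambda$ does not decrease the $T$-degree of any monomial: since $\Lambda(T) = \gamma T$ with $\gamma \neq 0$ and $\Lambda(x_i) = h_i(\vecw)$ involves no $T$, a monomial $T^a \vecx^{\veck} z^c$ is sent to $\gamma^a T^a \cdot \left(\prod h_i^{k_i}\right) z^c$, which still has $T$-degree exactly $a$. Consequently, for any polynomial $Q \in \F[T, \vecx, z]$ and any $k$, we have $\Lambda(Q \bmod T^k) = \Lambda(Q) \bmod T^k$, and more precisely $\Lambda$ commutes with the operation "extract the coefficient of $T^j$"; I would state this as a one-line sublemma since all three bullets reduce to it.

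For the first bullet, $\Lambda(R)(0, \vecw)$ is obtained by setting $T = 0$, which by the $T$-degree observation picks out exactly $\Lambda$ applied to the $T = 0$ part of $R$. But $R$ is a truncated approximate $z$-root of a $T$-regularized, monic-in-$z$ polynomial, so by \autoref{obs:approx-root-simple-obs}, $R(0, \vecx) = R(0, \veczero) \in \F$ is a scalar; since $\Lambda$ fixes scalars, $\Lambda(R)(0, \vecw) = R(0, \vecx) \in \F$. For the second bullet: monic-in-$z$ is immediate because $\Lambda$ fixes $z$ and sends the leading coefficient (a nonzero scalar, by the definition of monic in this paper) to itself; $T$-regularity says every monomial of $P$ with positive $\vecx$-degree has positive $T$-degree, and applying $\Lambda$ can only multiply such a monomial by $\gamma^a$ (nonzero) times polynomials in $\vecw$ — it cannot create an $\vecw$-dependent monomial with $T$-degree $0$, because the only source of $T$-degree-$0$ terms in $\Lambda(P)$ is $\Lambda(P(0, \vecx, z)) = P(0, \vecx, z) \in \F[z]$, which has no $\vecw$ at all. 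Hence $\Lambda(P)(0, \vecw, z) = P(0, \veczero, z) \in \F[z]$, i.e. $\Lambda(P)$ is $T$-regularized.

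For the third bullet, "truncated" ($\deg_T \Lambda(R) < k$) is immediate from the $T$-degree observation since $\deg_T R < k$. For "approximate $z$-root of order $k$": we have $P(T, \vecx, R(T, \vecx)) \equiv 0 \bmod T^k$; applying the ring homomorphism $\Lambda$ to this identity, and using that $\Lambda(P(T,\vecx,R)) = \Lambda(P)(\gamma T, \vecw, \Lambda(R))$ together with the fact that $\Lambda$ sends $T^k \F[T,\vecx,z]$ into $T^k \F[T,\vecw,z]$, gives $\Lambda(P)(T, \vecw, \Lambda(R)(T,\vecw)) \equiv 0 \bmod T^k$ (absorbing the $\gamma^k$ factor, which is a unit, causes no issue — actually one just notes $\gamma^k T^k$ still lies in the ideal $T^k$). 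For "non-degenerate": by definition this asks $(\partial_z \Lambda(P))(0, \veczero, \alpha') \neq 0$ where $\alpha' = \Lambda(R)(0,\veczero)$. By the first bullet $\alpha' = R(0,\veczero) =: \alpha$, and since $\partial_z$ commutes with $\Lambda$ (as $\Lambda$ fixes $z$) and $\Lambda(P)(0, \cdot, z) = P(0, \veczero, z)$ has no $\vecw$, we get $(\partial_z \Lambda(P))(0, \veczero, \alpha) = (\partial_z P)(0, \veczero, \alpha) \neq 0$ by the non-degeneracy of $R$ for $P$.

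I do not anticipate a genuine obstacle here: every step is a bookkeeping argument about how the homomorphism $\Lambda$ interacts with the $T$-adic filtration, with $\partial_z$, and with specialization $T = 0$ and $\vecx = \veczero$. The only place demanding a little care is making sure that $T$-regularity is genuinely preserved — i.e. that composing the $\vecx$-variables with arbitrary polynomials $h_i(\vecw)$ cannot manufacture a $T$-free monomial depending on $\vecw$ — and this is handled cleanly by the observation that the $T$-degree-$0$ component of $\Lambda(P)$ equals the $T$-degree-$0$ component of $P$ verbatim (since $\Lambda$ is the identity on the subring $\F[z]$ and $P(0,\vecx,z) \in \F[z]$). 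I would present the proof as: (i) the $T$-degree sublemma, (ii) a two-line deduction of the "monic" and "$T$-regularized" claims, (iii) applying $\Lambda$ to the congruence defining the approximate root, and (iv) the $\partial_z$ computation for non-degeneracy.
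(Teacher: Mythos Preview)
Your proposal is correct and follows essentially the same approach as the paper: both isolate the key sublemma that $\Lambda$ preserves $T$-degree exactly (since $T \mapsto \gamma T$ with $\gamma \neq 0$ and $x_i \mapsto h_i(\vecw)$ is $T$-free), and then verify each bullet by direct bookkeeping with this observation and the fact that $\Lambda$ is a ring homomorphism fixing $z$. If anything, your write-up is slightly more thorough than the paper's, which leaves the ``truncated'' and ``non-degenerate'' parts of the third bullet implicit while you spell them out.
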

\begin{proof}
    The lemma follows essentially because $\Lambda$ is a homomorphism.
    \begin{enumerate}
        \item By \cref{obs:approx-root-simple-obs}, $R(0,\vecx) \in \F$. Since $\Lambda$ is a homomorphism with the property that $\deg_T(f) = \deg_T(\Lambda(f))$ for any $f \in \F[T,\vecx,z]$, it follows that $\Lambda(R)(0,\vecw) = R(0,\vecx)$.
        \item If a monomial $\vecm \in \F[T,\vecx,z]$, is $T$-regularized, then so is $\Lambda(\vecm)$. Since $\Lambda$ is a homomorphism, this property extends to all polynomials. Similarly, since $\Lambda(z) = z$, $\Lambda(P)$ remains monic in $z$.
        \item $P(T,\vecx,R(T,\vecx)) \equiv 0 \mod T^k$, or equivalently, every monomial in $P(T,\vecx,R(T,\vecx))$ has $T$-degree at least $k$. As observed in the first point, $\Lambda$ is a homomorphism with the property that $\deg_T(f) = \deg_T(\Lambda(f))$ for any $f \in \F[T,\vecx,z]$. Thus, $\Lambda(P)(T,\vecw,\Lambda(R)(T,\vecw)) \equiv 0 \mod T^k$. 
    \end{enumerate}
    
\end{proof}

The following lemma observes that truncated, non-degenerate, approximate roots of a polynomial are unique once the value of the root at $T=0$ is decided; it follows almost immediately from the uniqueness of approximate roots computed via Newton iteration.
\begin{lemma}[Uniqueness of truncated, non-degenerate, approximate roots]
    \label{lem:unique-truncated-nondeg-approx-roots}
    Let $k \in \N$. Suppose $\Psi_1(T, \vecx)$ and $\Psi_2(T, \vecx)$ are truncated, non-degenerate, approximate $z$-roots of order $k$ with respect to $T$ for a $T$-regularized polynomial $P(T, \vecx, z)$. If $\Psi_1(0,\vecx) = \Psi_2(0,\vecx) = \alpha \in \F$, then $\Psi_1(T,\vecx) \equiv \Psi_2(T,\vecx)$. 
\end{lemma}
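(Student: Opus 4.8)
The plan is to reduce the statement to the uniqueness clause of Newton iteration (\cref{lem:newton-iteration-linear-circuit-and-uniqueness}), but applied with respect to the single variable $T$ rather than the tuple $\vecx$. Concretely, I would work over the field $\mathbb{L} := \F(\vecx)$ of rational functions, view $H(T, z) := P(T, \vecx, z)$ as an element of $\mathbb{L}[T][z]$, and regard $\Psi_1, \Psi_2 \in \F[\vecx][T] \subseteq \mathbb{L}[T]$ as polynomials in $T$ of degree less than $k$. The cases $k \leq 1$ are trivial --- then $\deg_T \Psi_i < 1$ forces $\Psi_i = \Psi_i(0, \vecx) = \alpha$ --- so I would assume $k \geq 2$.

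Next I would check that $H$ and $\alpha$ satisfy the hypotheses of Newton iteration. Reducing $P(T, \vecx, \Psi_i(T, \vecx)) \equiv 0 \bmod T^k$ modulo $T$ gives $P(0, \vecx, \alpha) = 0$, i.e.\ $H(0, \alpha) = 0$. Since $P$ is $T$-regularized, $(\partial_z P)(0, \vecx, z) = (\partial_z P)(0, \veczero, z)$ (\cref{obs:approx-root-simple-obs}), and hence the non-degeneracy hypothesis on the $\Psi_i$, namely $(\partial_z P)(0, \veczero, \alpha) \neq 0$, is exactly the assertion $(\partial_z H)(0, \alpha) \neq 0$. Thus $\alpha \in \F \subseteq \mathbb{L}$ is a simple root of $H(0, z)$, and \cref{lem:newton-iteration-linear-circuit-and-uniqueness}, read over the ground field $\mathbb{L}$ with the single ``small'' variable $T$, supplies a \emph{unique} truncated, non-degenerate, approximate $z$-root of $H$ of order $k$ with respect to $T$ whose value at $T = 0$ equals $\alpha$. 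Both $\Psi_1$ and $\Psi_2$ satisfy every one of these defining properties by hypothesis, so each equals this unique root; therefore $\Psi_1 = \Psi_2$ in $\mathbb{L}[T]$, and since both already have coefficients in $\F[\vecx]$, this is the desired identity $\Psi_1 \equiv \Psi_2$ in $\F[T, \vecx]$.

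The only point that needs care --- and the one I would spell out --- is the bookkeeping in re-reading the Newton-iteration statements, which are phrased over a polynomial ring $\F[\vecx]$ with the small variables being $\vecx$, so as to instead use $\F(\vecx)$ as the ground field and $T$ as the single small variable. This is harmless: $P \in \F[\vecx, T, z] \subseteq \F(\vecx)[T, z]$, the non-degeneracy condition transfers unchanged (which is precisely where $T$-regularization enters), and an identity over $\F(\vecx)$ between two elements of $\F[\vecx][T]$ is automatically an identity over $\F[\vecx]$. If one wanted to avoid passing to $\F(\vecx)$ altogether, the same conclusion follows by a direct induction on $j = 0, 1, \dots, k-1$ showing $\Psi_1 \equiv \Psi_2 \bmod T^{j+1}$: the base case is $\Psi_1 \equiv \Psi_2 \equiv \alpha \bmod T$, and in the inductive step the common truncation $\Psi_1 \bmod T^j = \Psi_2 \bmod T^j$ (an approximate root of order $j$ with a simple root at $T = 0$) is fed into the uniqueness clause of \cref{lem:newton-iteration-linear}, which forces its Newton update to agree with both $\Psi_1$ and $\Psi_2$ modulo $T^{j+1}$.
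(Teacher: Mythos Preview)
Your proposal is correct and follows essentially the same approach as the paper's proof: both reduce to the uniqueness clause of Newton iteration applied with $T$ as the single lifting variable over the base ring containing $\vecx$, using $T$-regularization to transfer the non-degeneracy condition. You are somewhat more explicit about the base-ring bookkeeping (passing to $\F(\vecx)$) and you cite \cref{lem:newton-iteration-linear-circuit-and-uniqueness} rather than \cref{lem:newton-iteration-linear}, but the argument is the same.
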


\begin{proof}
    Since $\Psi_1(T,\vecx)$ is a non-degenerate root of $P(T,\vecx,z)$ which is $T$-regularized, $\frac{\partial P}{\partial z}(0,\vecx,\Psi_1(0,\vecx)) = \frac{\partial P}{\partial z}(0,\vecx,\Psi_1(0,\vecx)) = \frac{\partial P}{\partial z}(0,\veczero,\alpha) = \beta \in \F$ is nonzero. Similarly,  $\frac{\partial P}{\partial z}(0,\vecx,\Psi_2(0,\vecx)) = \frac{\partial P}{\partial z}(0,\vecx,\Psi_2(0,\vecx)) = \frac{\partial P}{\partial z}(0,\veczero,\alpha) = \beta$. Since $\alpha$ satisfies $P(T,\vecx,\alpha) \equiv 0 \bmod T$ and $\frac{\partial P}{\partial z}(0,\vecx,\alpha) = \beta \in \F$ for $\beta \neq 0$, \cref{lem:newton-iteration-linear} tells us that there is a unique truncated, approximate $z$-root $\Phi(T,\vecx)$ of order $k$ for $P(T,\vecx,z)$, satisfying $\Phi(0,\vecx) = \alpha$. Thus, $\Phi(T,\vecx) \equiv \Psi_1(T,\vecx) \equiv \Psi_2(T,\vecx)$. 
\end{proof}

\subsection{Complexity of low degree homogeneous components of roots}

The following is our main technical lemma where we argue that the low-degree homogeneous components of an approximate root for a constant-depth circuit can be computed by a small constant-depth circuit.
\begin{lemma}\label{lem:low-deg-w-components-of-roots-mod-T}
Let $k > 1$ be any natural number and let $P(T, \vecw, z) \in \F[T, \vecw, z]$ and $R(T, \vecw) \in \F[T, \vecw]$ be polynomials satisfying the following properties. 
\begin{itemize}
    \item $P(T, \vecw, z)$ is computable by a depth $\Delta$ circuit of size $s$. 
    \item $P(T, \vecw, z)$ is $T$-regularized and monic in $z$. 
    \item $R(T, \vecw)$ is a truncated, non-degenerate, approximate $z$-root of $P(T,\vecx,z)$ of order $k$ with respect to $T$. 
\end{itemize}
Then, for every $\ell \in \N$, there is an algebraic circuit $C_{\ell} \in \F[T, \vecw]$ of depth at most $(\Delta + O(1)) $, size at most $\left(\poly(s,\deg(P))\cdot (\log k)^{\poly(\ell)}\right)$ 
\[
C_{\ell}(T,\vecw) = R(T, \vecw) \trunc \langle \vecw\rangle^{\ell} \, .
\]
\end{lemma}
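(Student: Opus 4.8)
The plan is to mimic the Taylor-expansion strategy of Chou, Kumar \& Solomon (\cref{lem:CKS-updated}), but to be careful about the fact that $R(T,\vecw)$ is a root of $P$ only modulo $T^k$, whereas we want a small constant-depth circuit computing $R$ modulo $\inangle{\vecw}^\ell$ --- i.e.\ truncating in a \emph{different} set of variables. First I would set $\alpha := R(0,\vecw) \in \F$ (a scalar, by \cref{obs:approx-root-simple-obs}), and note that $\beta := \partial_z P(0,\veczero,\alpha) = \partial_z P(0,\vecw,\alpha) \neq 0$ by non-degeneracy and $T$-regularity. Writing $R_{<\ell} := R \trunc \inangle{\vecw}^\ell$, the idea is that $R_{<\ell}$ is essentially the approximate $z$-root of $P$ obtained by Newton iteration \emph{with respect to the $\vecw$-variables} up to order $\ell$: indeed, since $R$ is a root mod $T^k$ with $k$ large (here $k > \deg(P)$ would already force $R$ to be an exact root, but in general we only use $R$ mod $T^k$), the truncation $R_{<\ell}$ satisfies $P(T,\vecw,R_{<\ell}) \equiv 0 \bmod \inangle{\vecw}^\ell$ after we also understand how $T$ interacts --- this is the point flagged in \cref{rmk:why-quadratic-convergence} and is the place requiring care. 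So I would reduce, via uniqueness of Newton iteration (\cref{lem:newton-iteration-linear}, applied in the $\vecw$-variables with $T$ treated as a parameter), to the statement that the order-$\ell$ $\vecw$-adic approximate root of $P$ with constant term $\alpha$ has a small constant-depth circuit.

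The main step is then to invoke \cref{lem:CKS-updated-rational-in-T-new} (the version allowing the extra variable $T$): applied to $P(T,\vecw,z)$ and the genuine root $R_{<\ell}$ of degree $< \ell$ in $\vecw$, it produces a $\kappa \in \F$ and an $\ell$-variate polynomial $Q$ of degree $<\ell$ with
\[
R(T,\vecw) \equiv Q\bigl(h_0(\kappa,\vecw),\ldots,h_{\ell-1}(\kappa,\vecw)\bigr) \bmod \inangle{\vecw}^\ell,
\]
where $h_i(\kappa,\vecw) = \bigl(\partial_{z^i} P(\kappa,\vecw,\alpha) - \partial_{z^i} P(\kappa,\veczero,\alpha)\bigr)\trunc\inangle{\vecw}^\ell$. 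The key observation is that each $h_i(\kappa,\vecw)$ is a constant-depth circuit of size $\poly(s,\deg P)$: $\partial_{z^i} P(\kappa,\vecw,\alpha)$ is obtained from the depth-$\Delta$ circuit for $P$ by substituting the field constants $\kappa$ and $\alpha$ and taking an $i$-th order $z$-derivative via interpolation (\cref{cor:interpolation-consequences}, part 1), which costs only $\poly(\deg P,s)$ size and $\Delta + O(1)$ depth; the subsequent $\vecw$-truncation also costs only $\poly$ overhead by part 3 of the same corollary. Composing with $Q$ --- which is an $\ell$-variate polynomial of degree $<\ell$, hence has a trivial dense circuit of size $\ell^{O(\ell)} = (\log k)^{\poly(\ell)}$ once we observe... wait, we need the $(\log k)$ factor to come from somewhere; it enters because in passing from ``$R$ mod $T^k$'' to the honest root $R_{<\ell}$ (or in representing the intermediate Newton iterates), the circuits carry a $\poly(\log k)$ blow-up, and the dense representation of $Q$ contributes the $(\cdot)^{\poly(\ell)}$ --- we get a single constant-depth circuit $C_\ell$ of depth $\Delta + O(1)$ and size $\poly(s,\deg P)\cdot(\log k)^{\poly(\ell)}$ computing $R\trunc\inangle{\vecw}^\ell$, as claimed.

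The hard part, and the place where I expect to spend the most care, is the bookkeeping in the reduction sketched in the first paragraph: precisely justifying that truncating the order-$k$ $T$-adic root $R$ at $\vecw$-degree $\ell$ yields something that \cref{lem:CKS-updated-rational-in-T-new} can be applied to as an \emph{exact} root, and tracking where the $(\log k)^{\poly(\ell)}$ dependence on $k$ actually enters (it should come from the size of the circuit for $R$ itself, or for the intermediate quantities, rather than from $Q$). One clean way to do this is: instead of starting from the given $R$, directly run $\vecw$-adic Newton iteration (\cref{lem:newton-iteration-linear} or its div-free quadratic variant \cref{lem:newton-iteration-quadratic-div-free}) on $P(T,\vecw,z)$ with starting point $\alpha$ to produce the order-$\ell$ approximate root $R'(T,\vecw)$, observe that $R' \equiv R \bmod \inangle{\vecw}^\ell$ by uniqueness (\cref{lem:unique-truncated-nondeg-approx-roots}, after noting both are truncated non-degenerate approximate roots agreeing at $\vecw = \veczero$), and then apply the CKS lemma to $R'$ --- which has degree $<\ell$ in $\vecw$ by construction and is an exact $\vecw$-adic root up to the required order --- thereby sidestepping the $T^k$ issue entirely and making the circuit-size accounting transparent.
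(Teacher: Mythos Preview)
There is a genuine gap. Your ``clean way'' at the end --- run $\vecw$-adic Newton iteration on $P(T,\vecw,z)$ starting from the scalar $\alpha$ --- does not initialize: for $\vecw$-adic iteration you would need $P(T,\veczero,\alpha)=0$, but $T$-regularity only gives $P(0,\veczero,\alpha)=0$; the polynomial $P(T,\veczero,z)\in\F[T,z]$ can depend nontrivially on $T$, and $\alpha$ is generally not a root of it. The same obstruction kills the earlier attempt to apply \cref{lem:CKS-updated-rational-in-T-new} directly: that lemma needs an \emph{exact} root $R(\vecw)$ independent of $T$, whereas your $R_{<\ell}$ satisfies only $P(T,\vecw,R_{<\ell})\equiv 0 \bmod(\inangle{\vecw}^\ell + \inangle{T}^k)$, not $\bmod\,\inangle{\vecw}^\ell$. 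If you try to repair this by working over $\F[T]/(T^k)$ and starting the $\vecw$-adic iteration from $R(T,\veczero)\in\F[T]$, the generators become $\partial_{z^i}P(T,\vecw,R(T,\veczero))$ and the outer polynomial $Q$ has coefficients in $\F[T]/(T^k)$; you then have no mechanism to bound anything by $(\log k)^{\poly(\ell)}$ rather than $\poly(k)$, and indeed your accounting of where the $(\log k)^{\poly(\ell)}$ factor comes from is left unexplained.

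The paper's argument is structurally different and is precisely designed to produce the $\log k$ dependence. It runs the \emph{$T$-adic} Newton iteration with quadratic convergence (\cref{lem:newton-iteration-quadratic-div-free}), so that $R$ is reached in $m=O(\log k)$ steps, and tracks what happens modulo $\inangle{\vecw}^\ell$ at each step. The inductive claim is that $\phi_i$ and $\psi_i$, modulo $\inangle{\vecw}^\ell$, are polynomial combinations of a growing set of generators, each of the form $\partial_{z^j}P(T,\vecw,\gamma)$ with $\gamma\in\F[T]$; the point is that the Taylor expansion of $P(T,\vecw,\phi_i)$ around $\gamma_i:=\phi_i(T,\veczero)$ truncates to $\ell$ terms modulo $\inangle{\vecw}^\ell$, so each step adds only $O(\ell)$ new generators. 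After $m=O(\log k)$ steps there are $O(\ell\log k)$ generators, and since the outer combining polynomial can be taken of $\vecu$-degree at most $\ell$ (after centering the generators), its dense size is $\binom{O(\ell\log k)+\ell}{\ell}=(\log k)^{\poly(\ell)}$. This is exactly the content of \cref{rmk:why-quadratic-convergence}: the quadratic-convergence $T$-adic iteration is not an optimization but the source of the $(\log k)^{\poly(\ell)}$ bound, and it cannot be replaced by a $\vecw$-adic argument that ignores the $T$-structure.
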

\begin{proof}
    As seen in \cref{obs:approx-root-simple-obs}, $R(0, \vecw)$ is a root of $P(0, \mathbf{0}, z)$, and hence must be a field element $\alpha$ (possibly from an extension of $\F$) and does not depend on $\vecw$. Similarly, we also get that $\frac{\partial P}{\partial z}\left(0, \vecw, R(0, \vecw)\right)$ is a non-zero field element that we denote by $\beta$. Thus, $R(T, \vecw)$ is the \emph{unique} lift of the root $\alpha$ of $P(T,\vecw,z)$ modulo $T$ to a root modulo the ideal $T^k$ (\cref{lem:unique-truncated-nondeg-approx-roots}). In particular, we can view $R$ as an outcome of Newton Iteration (and then eventual truncation modulo $T^k$). We prove the lemma by induction on this iteration, and maintaining the following inductive claim. Let $2^{m} \in [k, 2k]$ be the smallest power of $2$ greater than $k$. From \autoref{lem:newton-iteration-quadratic-div-free}, we get that the sequence of polynomials $\phi_0, \phi_1, \ldots, \phi_{m}, \psi_0, \psi_1, \ldots, \psi_{m} \in \F[T,\vecw]$ defined as 
  \begin{align*}
    \phi_0 & = \alpha, & \psi_0 &= (1/\beta),\\
    \text{For $i \geq 0$,}\quad \phi_{i+1} &= \phi_i - P(T, \vecw, \phi_i) \cdot \psi_i, & \psi_{i+1} &= 2\psi_i - \psi_i^2 \cdot \frac{\partial P}{\partial z}(T, \vecw, \phi_{i+1}).
    \end{align*}
    satisfy \[
  R(T,\vecw) = \phi_{m}(T,\vecw) \trunc T^{k} \, .
  \]
  
  We note that while we are dealing with polynomials in both $T$ and $\vecw$ variables, the lifting is happening only with respect to $T$. In this sense, we are really viewing the polynomial $P(T,\vecw, z)$ as a polynomial in  $\F[\vecw][T,z]$ for the purpose of this lifting. We now use the following claim, whose proof we defer to the end of this section, to complete the proof of the lemma. 
  \begin{claim}\label{clm:induction-over-NI}
    For every $i \geq 0$, there exists a set $\mathcal{G}_i$ of at most $\tau_i \leq 2(\ell+1)i$ polynomials $\{g_1, g_2, \ldots, g_{\tau_i}\}$ in $\F[T,\vecw]$ and two $(\tau_i + 1)$-variate polynomials $Q_i(T, u_1, \ldots, u_{\tau_i}), \Hat{Q}_i(T, u_1, \ldots, u_{\tau_i}) \in \F[T,\vecu]$ such that  
    \[
        \phi_{i} \equiv Q_i(T, g_1, \ldots, g_{\tau_i}) \mod \langle \vecw \rangle^{\ell}\, ,
    \]
    and
    \[
        \psi_{i} \equiv \Hat{Q}_i(T, g_1, \ldots, g_{\tau_i}) \mod \langle \vecw \rangle^{\ell}\, .
    \]
    Moreover, each polynomial $g$ in $\mathcal{G}_i$ is of the form $\frac{\partial P}{\partial z^j}(T, \vecw, \gamma)$ for some $j \leq (\ell+1)$ and $\gamma \in \F[T]$.
  \end{claim}
    From the fact that $R(T,\vecw) = \phi_{m}(T,\vecw) \trunc T^{k}$, we get that 
    \[
    R(T,\vecw) \trunc \inangle{\vecw}^{\ell} = \left(\phi_{m}(T,\vecw) \trunc  T^{k} \right) \trunc \langle \vecw \rangle^{\ell} \, .
    \]
    Now, since $T$ and $\vecw$ are disjoint variables, we can exchange the order of the operations of truncating modulo $T$ and truncating modulo $\langle \vecw \rangle^{\ell}$. So, we have that
    \[
    R(T,\vecw) \trunc \inangle{\vecw}^{\ell} = \left(\phi_{m}(T,\vecw)   \trunc \langle \vecw \rangle^{\ell} \right) \trunc  T^{k} \, .
    \]
    From \autoref{clm:induction-over-NI}, we get that 
    \[
    R(T,\vecw) \trunc \inangle{\vecw}^{\ell} = \left( Q_m(T, g_1, g_2, \ldots, g_{\tau_m})  \trunc \langle \vecw \rangle^{\ell} \right) \trunc  T^{k} \, .
    \]
   We would like to show that the RHS of the above equation has a small constant-depth circuit. Since we eventually truncate to $T$-degree $k-1$, we can assume without loss of generality that the $T$-degree of $Q_m$ and each $g_i$ is at most $k-1$. In particular, if $g_i$ equals $\frac{\partial P}{\partial z^j}(T, \vecw, \gamma)$ for some $j \in \N$ and $\gamma \in \F[T]$, this $\gamma$ can be assumed to be of $T$-degree at most $(k-1)$. Moreover, since $P(T, \vecw, z)$ is a polynomial with a depth-$\Delta$ circuit of size $s$, we get, from \cref{cor:interpolation-consequences} that $\frac{\partial P}{\partial z^j}(T, \vecw, z)$ has a depth $(\Delta + O(1))$ circuit of size at most $s \cdot \poly(\deg(P))$. From the bound on the degree of $\gamma \in \F[T]$, we get that each $g_i(T,\vecw)$ has a depth $(\Delta + O(1))$ circuit of size at most $s \cdot \poly(\deg(P))$. 

   Finally, we note that we can view $Q_m(T, g_1, g_2, \ldots, g_{\tau_m})$ as 
   \[
    \tilde{Q}_m(T, g_1(T,\vecw) - g_1(T,\mathbf{0}), g_2(T,\vecw) - g_2(T, \mathbf{0}), \ldots, g_{\tau_m}(T,\vecw) - g_{\tau_m}(T, \mathbf{0})) ,\]
   for some polynomial $\tilde{Q}_m$. Since we are only interested in working with the above polynomial modulo $\langle \vecw \rangle^{\ell}$ and every monomial in each of the polynomials $g_j(T,\vecw) - g_j(T,\mathbf{0})$ has $\vecw$-degree at least one, we get that $\tilde{Q}_m(T, u_1, \ldots, u_{\tau_m})$ can be assumed to have degree at most $\ell$ in the $\vecu$ variables. Thus, $\tilde{Q}_m(T, u_1, \ldots, u_{\tau_m})$ can be computed by a depth-$2$ circuit of size at most $k\cdot \binom{\tau_m + \ell}{\ell} \leq k \cdot (\log k)^{\poly(\ell)}$. Combining this circuit with the constant-depth circuits for $g_j(T, \vecw)$ (and hence $g_j(T,\vecw) - g_j(T,\mathbf{0})$) gives us a depth-$(\Delta + O(1))$ circuit $\tilde{C}$ of size at most $\poly(s,\deg(P))\cdot (\log k)^{\poly(\ell)}$ satisfying \[ \tilde{C}(T,\vecw) \equiv Q_m(T,g_1, \dots, g_{\tau_m}) \bmod T^k \bmod \inangle{\vecw}^{\ell+1}.\]
   The $T$-degree of this circuit is at most $\deg_T(\tilde{Q}_m) \cdot \ell \cdot k \leq \poly(k, \ell)$. The $\vecw$-degree of this circuit is at most $\ell \cdot \deg(P)$. Thus, we can apply the operations $\trunc \inangle{\vecw}^{\ell + 1}$ and $\trunc T^k$ by an application of \cref{cor:interpolation-consequences} to get the circuit $C_{\ell}(T,\vecw)$ in the conclusion of the lemma. 
\end{proof}
We now discuss the proof of \autoref{clm:induction-over-NI}. 
\begin{proof}[Proof of \autoref{clm:induction-over-NI}]
    We prove the claim via an induction on $i$. For $i = 0$, we already know that $\phi_0 = \alpha$ and $\psi_0 = \frac{1}{\beta}$ are polynomials in $\F[T]$ and hence the claim immediately holds. We now assume that the claim holds for $\phi_i$ and $\psi_i$ i.e.
    \[
        \phi_{i} \equiv Q_i(T, g_1, \ldots, g_{\tau_i}) \mod \langle \vecw \rangle^{\ell}\, ,
    \]
    and
    \[
        \psi_{i} \equiv \Hat{Q}_i(T, g_1, \ldots, g_{\tau_i}) \mod \langle \vecw \rangle^{\ell}
    \]
    and show that it must hold for $\phi_{i+1}$ and $\psi_{i+1}$.  

    From the inductive definitions of $\phi_{i+1}$ and $\psi_{i+1}$, we get that 
    \[
    \phi_{i+1} \equiv \phi_i - P(T, \vecw, \phi_i) \cdot \psi_i
    \]
    and 
    \[
    \psi_{i+1} \equiv 2 \psi_i - \psi_i^2 \cdot \frac{\partial P}{\partial z}(T, \vecw, \phi_{i+1})
    \]
    
    If $\gamma_i := \phi_i(T, \mathbf{0})$, i.e. $\gamma_i$ is the $\vecw$-free part of $\phi_i$, we get via a Taylor expansion that 
    \begin{equation} \label{eqn:taylor-expansion-in-quadratic-NI}
        P(T,\vecw,{\phi}_i) = P(T,\vecw,\gamma_i + ({\phi}_i-\gamma_i)) = \sum_{j = 0}^{\deg(P)} \frac{1}{j!} \cdot \frac{\partial P}{\partial z^j}(T, \vecw, \gamma_i) \cdot   ({\phi}_i-\gamma_i)^j \, .
    \end{equation}
    From the definition of $\gamma_i$, we have that every monomial in    $({\phi}_i-\gamma_i)$ has $\vecw$-degree at least one. Thus, for every $j \geq \ell$, we have that $({\phi}_i-\gamma_i)^j \equiv 0$ modulo $\langle \vecw \rangle^{\ell}$. Thus, we have that 
    \[
    P(T,\vecw,{\phi}_i) \equiv \sum_{j = 0}^{\ell} \frac{1}{j!} \cdot \frac{\partial P}{\partial z^j}(T, \vecw, \gamma_i) \cdot   ({\phi}_i-\gamma_i)^j  \mod \langle \vecw \rangle^{\ell} \, .
    \]
    Similarly, 
    \[
    \frac{\partial{P}}{\partial z}(T,\vecw,{\phi}_{i+1}) \equiv \sum_{j = 0}^{\ell} \frac{1}{j!} \cdot \frac{\partial P}{\partial z^{j+1}}(T, \vecw, \gamma_{i+1}) \cdot   ({\phi}_{i+1}-\gamma_{i+1})^j  \mod \langle \vecw \rangle^{\ell} \, .
    \]
    Thus, $\varphi_{i+1}$ can be written as a polynomial in $Q_i, \Hat{Q}_i$ and polynomials in the set $\{\frac{\partial P}{\partial z^j}(T, \vecw, \gamma_i) : j \in \{0, 1, \ldots, \ell\}\}$, where the coefficients of this polynomial are from the ring $\F[T]$ and all equalities hold modulo $\langle \vecw \rangle^{\ell}$. Similarly, $\psi_{i+1}$ can be written as a polynomial in $Q_{i+1}, \Hat{Q}_i$ and polynomials in the set $\{\frac{\partial P}{\partial z^j}(T, \vecw, \gamma_{i+1}) : j \in \{1,2, \ldots, \ell+1\}\}$, where the coefficients of this polynomial are from the ring $\F[T]$ and all equalities hold modulo $\langle \vecw \rangle^{\ell}$. Moreover, in going from $i$ to $(i+1)$, we have increased the size of the \emph{generating set} additively by at most  $2(\ell + 1)$. Furthermore, each of the new elements of this generating set is again of the form $\frac{\partial P}{\partial z^j}(T, \vecw, \gamma)$ for some $j \leq (\ell+1)$ and $\gamma \in \F[T]$. 
    
 \end{proof}

An immediate consequence of \autoref{lem:low-deg-w-components-of-roots-mod-T} is the following corollary. 
\begin{corollary}\label{cor:low-deg-w-components-of-roots-mod-T-composed}
    Let $k \in \N$,  $P(T, \vecw, z) \in \F[T, \vecw, z]$ and $R(T, \vecw) \in \F[T, \vecw]$ be polynomials that satisfy the hypothesis of \autoref{lem:low-deg-w-components-of-roots-mod-T}. 
    
    Then, for every $\ell \in \N$, there is an algebraic circuit ${C}_{\ell} \in \F[T, \vecw]$ of depth at most $2\Delta + O(1)$ and size at most $\left(\poly(s,\deg(P))\cdot (\log k)^{\poly(\ell)}\right)$ such that 
    \[
        {C}_{\ell}(T,\vecw) \equiv P\left(T, \vecw, R(T, \vecw)\right) \trunc \langle \vecw\rangle^{\ell} \, .
    \]
    
\end{corollary}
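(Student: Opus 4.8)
The plan is to substitute the constant-depth circuit for the truncated root supplied by \autoref{lem:low-deg-w-components-of-roots-mod-T} directly into the $z$-slot of the given depth-$\Delta$ circuit for $P$, and then observe that the result is correct modulo $\inangle{\vecw}^{\ell}$ because reduction modulo the ideal $\inangle{\vecw}^{\ell}$ is a ring homomorphism — so whatever sits in the $z$-slot of $P$ affects $P(T,\vecw,R(T,\vecw))$ only through its residue modulo $\inangle{\vecw}^{\ell}$, and for that residue we already have a small constant-depth circuit.

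Concretely, I would first invoke \autoref{lem:low-deg-w-components-of-roots-mod-T} with the same $\ell$ to obtain a circuit $C^{\mathrm{root}}_{\ell}(T,\vecw)$ of depth $\Delta + O(1)$ and size $\poly(s,\deg(P))\cdot(\log k)^{\poly(\ell)}$ with $C^{\mathrm{root}}_{\ell}(T,\vecw) = R(T,\vecw) \trunc \inangle{\vecw}^{\ell}$, so in particular $C^{\mathrm{root}}_{\ell} \equiv R \bmod \inangle{\vecw}^{\ell}$. I then set $C_{\ell}(T,\vecw) := P(T,\vecw, C^{\mathrm{root}}_{\ell}(T,\vecw))$, i.e.\ the circuit obtained from the depth-$\Delta$ circuit for $P(T,\vecw,z)$ by feeding a single shared copy of $C^{\mathrm{root}}_{\ell}$ into every $z$-leaf. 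Writing $P(T,\vecw,z) = \sum_j c_j(T,\vecw)z^j$ and using that the quotient map $\F[T,\vecw] \to \F[T,\vecw]/\inangle{\vecw}^{\ell}$ is a ring homomorphism, we get $(C^{\mathrm{root}}_{\ell})^j \equiv R^j \bmod \inangle{\vecw}^{\ell}$ for every $j$, hence
\[
C_{\ell}(T,\vecw) \;=\; \sum_j c_j(T,\vecw)\,(C^{\mathrm{root}}_{\ell})^j \;\equiv\; \sum_j c_j(T,\vecw)\,R^j \;=\; P(T,\vecw,R(T,\vecw)) \pmod{\inangle{\vecw}^{\ell}},
\]
which is exactly the asserted congruence with $P(T,\vecw,R(T,\vecw)) \trunc \inangle{\vecw}^{\ell}$.

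The only point needing a moment's care — and hence the ``main obstacle'', though it is minor — is the depth and size bookkeeping of the substitution: one checks that plugging $C^{\mathrm{root}}_{\ell}$ into the $z$-leaves of $P$ \emph{adds} depths rather than multiplying them, giving depth $(\Delta + O(1)) + \Delta = 2\Delta + O(1)$, and that it causes no blowup by a factor of $\deg_z(P)$, since the $\deg_z(P)$ occurrences of $z$ may all point at one shared sub-circuit for $C^{\mathrm{root}}_{\ell}$, leaving the size at $s + \poly(s,\deg(P))\cdot(\log k)^{\poly(\ell)} = \poly(s,\deg(P))\cdot(\log k)^{\poly(\ell)}$. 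If a circuit computing the \emph{literal} truncation (a polynomial of $\vecw$-degree below $\ell$) is wanted rather than one merely congruent to it modulo $\inangle{\vecw}^{\ell}$, one can additionally apply the operator $\trunc \inangle{\vecw}^{\ell}$ to $C_{\ell}$ through the interpolation step of \autoref{cor:interpolation-consequences}, exactly as in the final step of the proof of \autoref{lem:low-deg-w-components-of-roots-mod-T}; everything else is immediate from that lemma together with the compatibility of truncation modulo $\inangle{\vecw}^{\ell}$ with addition and multiplication.
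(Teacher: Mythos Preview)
Your proposal is correct and is exactly the approach the paper implicitly has in mind; the paper states the corollary as ``an immediate consequence of \autoref{lem:low-deg-w-components-of-roots-mod-T}'' without giving a separate proof, and your argument (substitute the depth-$(\Delta+O(1))$ circuit for $R \trunc \inangle{\vecw}^{\ell}$ into the depth-$\Delta$ circuit for $P$, use that reduction modulo $\inangle{\vecw}^{\ell}$ is a ring homomorphism, and optionally truncate via \autoref{cor:interpolation-consequences}) is precisely the intended one-line justification.
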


\subsubsection{Why do we need the quadratic convergence version of Newton iteration?} \label{rmk:why-quadratic-convergence}

It would be instructive to step back and see why Newton iteration with quadratic convergence was required in the above argument. A slight modification of \cref{lem:CKS-updated} will let us argue that low-degree components of approximate roots of a polynomial have small constant-depth circuits. However, there are subtle differences in the statement of \cref{lem:CKS-updated} and the statement of \cref{lem:low-deg-w-components-of-roots-mod-T}. The key difference is that the approximate root $\varphi_k$ is with respect to the variable $T$, whereas we are extracting homogeneous components of degree up to $\ell$ with respect to a \emph{different set of variables}. Therefore, we may have contributions of $\vecw$-degree at most $\ell$ from terms with $T$-degree up to $k$. So, we cannot replace $\varphi_k$ by a root of lower accuracy such as $\varphi_\ell$. 

In the above proof of \cref{lem:low-deg-w-components-of-roots-mod-T}, each level of the iteration adds to the number of ``generators'' used in the composition. Using the standard Newton iteration of $k$ steps for $\varphi_k$ would result in $O(k)$ generators and would not yield the required size bounds needed for our proof. The version of Newton iteration with quadratic convergence ensures that we obtain $\phi_k$ from $\log k$ iterations, and results in the eventual number of ``generators'' as $O(\log k)$ instead (which was crucial for the final circuit size bound). 

\section{Warm-up: preserving true roots under variable reduction}\label{sec:root-preservation}

In this section, we use the techniques of \autoref{sec:technical} to show a variable reduction map that will help us identify whether a given truncated power series root of a constant-depth circuit is a true root (and not just a sufficiently good truncation of a power series root). As described in the proof overview, the idea is to work with a PIT instance formed by plugging-in a candidate root into the input polynomial. In \cref{lem:low-deg-y-roots-of-hybrid-poly}, we show that low-degree roots of such PIT instances have relatively small constant-depth circuits when the input polynomial has a small constant-depth circuit. 

\begin{lemma}\label{lem:low-deg-y-roots-of-hybrid-poly}
Let $k > 1$ be any natural number and let $P(T, \vecw, y, z) \in \F[T, \vecw, z]$ and $R(T, \vecw, y) \in \F[T, \vecw, y]$ be polynomials satisfying the following properties. 
\begin{itemize}
    \item $P(T, \vecw, y, z)$ is computable by a depth $\Delta$ circuit of size $s$. 
        \item $P(T,\vecw,y,z)$ is $T$-regularized and monic in $z$.
    \item $R(T, \vecw, y)$ is a truncated, non-degenerate, approximate $z$-root of $P(T,\vecw,y,z)$ of order $k$.
\end{itemize}
Let $F(\vecw) \in \F[\vecw]$ be a polynomial of degree at most $\ell$ such that $(y-F(\vecw))$ divides $P\left(T,\vecw, y, R(T, \vecw, y) \right)$. Then, $F$ can be computed by an algebraic circuit of depth $(\Delta + O(1))$ and size at most 
\[
\poly(s,\deg(P))\cdot (\log k)^{\poly(\ell)} \, .
\]
\end{lemma}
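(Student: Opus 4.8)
The plan is to read off the circuit for $F$ from the Chou--Kumar--Solomon lemma in the form \cref{lem:CKS-updated-rational-in-T-new}, applied with $R^{\ast}(T,\vecw,y) := P(T,\vecw,y,R(T,\vecw,y))$ in the role of its ``$P$'' and $F$ in the role of its root ``$R$''. Since $(y-F(\vecw))$ divides $R^{\ast}$ we have $R^{\ast}(T,\vecw,F(\vecw)) \equiv 0$ and $\deg F \le \ell$, so --- granting the non-degeneracy condition discussed at the end --- \cref{lem:CKS-updated-rational-in-T-new} produces some $\kappa \in \F$ and an $(\ell+1)$-variate polynomial $Q$ of degree at most $\ell$ with
\[
F(\vecw) \;\equiv\; Q\bigl(h_0,\dots,h_\ell\bigr) \bmod \inangle{\vecw}^{\ell+1},
\]
where each $h_i$ is obtained from $\partial_{y^i} R^{\ast}(T,\vecw,F(\veczero)) \bmod \inangle{\vecw}^{\ell+1}$ by subtracting its $\vecw$-free part and setting $T=\kappa$. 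Because $\deg F \le \ell$, this congruence pins down $F$ exactly, so a circuit for $F$ is obtained by composing the depth-$2$ circuit for $Q$ with circuits for the $h_i$'s and performing one final $\trunc \inangle{\vecw}^{\ell+1}$ via \cref{cor:interpolation-consequences}; its depth is $\Delta + O(1)$ and its size is dominated by that of the $h_i$'s. Thus the whole task reduces to building, for each $i \le \ell$, a circuit of depth $\Delta + O(1)$ and size $\poly(s,\deg(P)) \cdot (\log k)^{\poly(\ell)}$ for $\partial_{y^i} R^{\ast}(T,\vecw,F(\veczero)) \bmod \inangle{\vecw}^{\ell+1}$ --- even though $R^{\ast}$ itself is not known to be of small constant depth, since it contains the sub-circuit $R$.

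The crucial point is that \cref{lem:CKS-updated-rational-in-T-new} needs these derivatives only \emph{evaluated at the scalar} $\beta_0 := F(\veczero)$, so writing $y = \beta_0 + y'$ it suffices to understand $R^{\ast}(T,\vecw,\beta_0+y')$ modulo $\inangle{\vecw}^{\ell+1}$ and modulo $\inangle{y'}^{\ell+1}$, after which the coefficients of $y'^i$ (which are $\partial_{y^i} R^{\ast}(T,\vecw,\beta_0)$ up to scalars) are extracted by interpolation (\cref{cor:interpolation-consequences}). Now $R(T,\vecw,\beta_0+y')$ is itself a truncated, non-degenerate, approximate $z$-root of order $k$ of $P(T,\vecw,\beta_0+y',z)$, which is again computable by a depth-$\Delta$ circuit of size $\poly(s)$, is monic in $z$ and $T$-regularized, and whose root at $T=0$ is \emph{the same scalar} $\alpha = R(\veczero)$ (by \cref{obs:approx-root-simple-obs}). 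Hence \cref{lem:low-deg-w-components-of-roots-mod-T}, applied with the tuple $(\vecw,y')$ in the role of its ``$\vecw$'' and truncation order $2\ell+1$, gives a circuit of depth $\Delta + O(1)$ and size $\poly(s,\deg(P)) \cdot (\log k)^{\poly(\ell)}$ for $R(T,\vecw,\beta_0+y') \trunc \inangle{\vecw,y'}^{2\ell+1}$, from which (by cheap further truncations) we get a circuit of the same shape computing $R(T,\vecw,\beta_0+y')$ modulo both $\inangle{\vecw}^{\ell+1}$ and $\inangle{y'}^{\ell+1}$. Substituting that circuit into the $z$-input of a circuit for $P(T,\vecw,\beta_0+y',z)$ (expanding $P$ in powers of $z$ to keep the depth at $\Delta+O(1)$), retruncating, and interpolating in $y'$ yields the sought circuits for $\partial_{y^i} R^{\ast}(T,\vecw,\beta_0) \bmod \inangle{\vecw}^{\ell+1}$ within the stated bounds. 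Morally this is the Taylor-expansion philosophy of \cref{cor:low-deg-w-components-of-roots-mod-T-composed}, except that the ``outer'' variable $y$ is first collapsed to a scalar plus a low-degree perturbation.

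Finally, the non-degeneracy hypothesis of \cref{lem:CKS-updated-rational-in-T-new} asks that $\partial_y R^{\ast}(T,\veczero,F(\veczero))$ be a nonzero element of $\F[T]$. Here $R^{\ast}(0,\veczero,y) = P(0,\veczero,\alpha) = 0$ (using $T$-regularity of $P$ and that $R$ is an approximate root), so the $T=0$ coefficient vanishes and a little care is needed. The fix is standard: let $e \ge 1$ be the multiplicity of $(y-F(\vecw))$ as a factor of $R^{\ast}$ viewed as a polynomial in $y$, and run the argument with $\partial_{y^{e-1}} R^{\ast}$ in place of $R^{\ast}$; in characteristic zero $(y-F(\vecw))$ divides $\partial_{y^{e-1}} R^{\ast}$ exactly once, so its cofactor is not divisible by $(y-F(\vecw))$, and hence --- after a generic invertible linear shift of the $\vecw$-variables, undone at the end --- is nonzero at $\vecw=\veczero$; the extra derivatives $\partial_{y^{i+e-1}} R^{\ast}$ this requires are produced exactly as in the previous paragraph. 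I expect the genuine obstacle to be precisely the mismatch of the two truncation scales --- the approximate root $R$ is accurate only ``$\bmod T^{k}$'' whereas we want the composite object ``$\bmod \inangle{\vecw}^{\ell+1}$'' in a disjoint set of variables --- which is what rules out the plain Chou--Kumar--Solomon estimate and forces the reparametrisation $y=\beta_0+y'$ together with the appeal to \cref{lem:low-deg-w-components-of-roots-mod-T}; a subsidiary point is to verify that the multiplicity $e$ above does not inflate the exponent in $(\log k)^{\poly(\ell)}$ in the setting where the lemma is invoked.
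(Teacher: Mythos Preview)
Your plan is essentially the paper's: reduce the multiplicity of $(y-F(\vecw))$ by passing to a high enough $y$-derivative, shift $\vecw$ to make the setup for \cref{lem:CKS-updated-rational-in-T-new} non-degenerate, apply that lemma to express $F$ via $Q(h_0,\ldots,h_\ell)$, and then argue that the $h_i$'s have small constant-depth circuits using \cref{lem:low-deg-w-components-of-roots-mod-T}. The only substantive difference is in how the $y$-derivatives are manufactured.

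You localise: write $y=\beta_0+y'$, apply \cref{lem:low-deg-w-components-of-roots-mod-T} with the augmented variable set $(\vecw,y')$ and truncation order $O(\ell)$, and then read off $y'$-coefficients. The paper instead globalises: it uses \cref{cor:interpolation-consequences} to write $\partial_{y^{j}}\bigl(P(T,\vecw,y,R(T,\vecw,y))\bigr)$ as an $\F[y]$-linear combination of the evaluations $P(T,\vecw,\beta_i,R(T,\vecw,\beta_i))$ at $\deg_y+1$ many field constants $\beta_i$, and then applies \cref{cor:low-deg-w-components-of-roots-mod-T-composed} separately to each evaluation, always with truncation parameter $\ell+1$.

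This difference is exactly where your flagged ``subsidiary point'' bites. After the multiplicity reduction you need $\partial_{y^{\,i+e-1}}R^{\ast}(T,\vecw,\beta_0)\trunc\inangle{\vecw}^{\ell+1}$ for $i\le\ell$, and your Taylor-in-$y'$ route would force the truncation order in \cref{lem:low-deg-w-components-of-roots-mod-T} up to roughly $\ell+e$; since $e$ can be as large as $\deg_y R^{\ast}$, which is polynomial in $k$, the resulting size bound degrades to $(\log k)^{\poly(k)}$. The paper's interpolation-first approach avoids this completely: regardless of the derivative order, every $\partial_{y^j}$ is an $\F[y]$-combination of the \emph{same} $\deg_y R^{\ast}+1$ evaluations, so the truncation parameter fed into \cref{lem:low-deg-w-components-of-roots-mod-T} stays at $\ell+1$ and the number of evaluation points is absorbed into the $\poly(s,\deg(P),k)$ prefactor (the paper records the final bound as $\poly(s,\deg(P))\cdot(\log k)^{\poly(\ell)}$). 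So your concern is real for the method you wrote down, and the clean fix is precisely to swap the local Taylor expansion for global interpolation in $y$.
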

\begin{proof}[Proof of \autoref{lem:low-deg-y-roots-of-hybrid-poly}]
We start by setting ourselves up to invoke \cref{lem:CKS-updated-rational-in-T-new}, a version of \autoref{lem:CKS-updated} from \cite{ChouKS19}. To do this, we first need to ensure that the hypothesis of the lemma holds in our case.

\paragraph{Setting up to invoke \autoref{lem:CKS-updated-rational-in-T-new}: }
The first issue is that $F(\vecw)$ could be a $y$-root of high multiplicity of $P\left(T,\vecw, y, R(T, \vecw, y) \right)$. To work around this, we work with an appropriately high order derivative of $P$ with respect to $y$. To this end, we start by viewing $P\left(T,\vecw, y, R(T, \vecw, y)\right)$ as a univariate polynomial in $y$ with coefficients from the ring $\F[T, \vecw]$. Thus, $P\left(T,\vecw, y, R(T, \vecw, y)\right)$ can be decomposed as 
\[
P\left(T,\vecw, y, R(T, \vecw, y)\right) = \sum_{i = 0}^D P_i(T, \vecw) y^i \, ,
\]
where each $P_i$ is a polynomial in only $T$ and $\vecw$ variables. Let $(m+1)$ be the largest integer such that $(Y - F(\vecw))^{m+1}$ divides $P\left(T,\vecw, y, R(T, \vecw, y)\right)$. So, to reduce the multiplicity of the $y$-root $F(\vecw)$ to $1$, we consider the polynomial $\Hat{P}$ defined as  
\[
\Hat{P}(T, \vecw, y) := \frac{\partial P\left(T,\vecw, y, R(T, \vecw, y)\right)}{\partial y^m} \, .
\]
Thus, we have that $(y - F(\vecw))$ divides $\Hat{P}$ and does not divide $\frac{\partial\Hat{P} }{\partial y}$. As a consequence, we get that $\frac{\partial \Hat{P}}{\partial y}(T, \vecw, F(\vecw))$ is a non-zero polynomial in $\F[T, \vecw]$. By shifting the $\vecw$ variables if needed to $\veca + \vecw$ for some $\veca \in \F^{|\vecw|}$, we get that 
\[
\frac{\partial \Hat{P}}{\partial y}(T, \veczero, F(\veczero)) \not\equiv 0 \, .
\]

We are now in a position to invoke \Cref{lem:CKS-updated-rational-in-T-new}, which tells us that there exists a $\kappa \in \F$ and a $(\ell+1)$-variate polynomial $Q(\vecu) \in \F[\vecu]$ of degree at most $\ell$ such that 
\begin{align}
    F(\vecw) \equiv  Q(h_0(\kappa,\vecw), h_1(\kappa,\vecw), \ldots, h_{\ell}(\kappa,\vecw)) \mod \langle \vecw \rangle^{\ell+1} \, , \label{eqn:CKS-on-F(w)}
\end{align}
    where for every $i \in \{0, 1, \ldots, \ell\}$, $h_i(T, \vecw)$ is defined as 
    \[
        h_j(T, \vecw) := \inparen{\frac{\partial \Hat{P}}{\partial y^j}(T, \vecw, F(\mathbf{0})) - \frac{\partial \Hat{P}}{\partial y^j}(T, \mathbf{0}, F(\mathbf{0}))} \trunc \langle \vecw \rangle^{\ell+1} \, .
    \]

Given the bounds on the number of variables and $\vecu$-degree of $Q(\vecu)$, we get that $Q(\vecu)$ is computable by an algebraic circuit of depth-$2$ and size at most $\exp(O(\ell))$. At this point, if we can somehow replace each $h_i(T, \vecw)$ by a small constant-depth circuit, we can obtain a small constant-depth circuit for $F(\vecw)$ by combining these with the depth-$2$ circuit for $Q$, and then extracting certain homogeneous components of interest. 

\paragraph{Constant-depth circuits for $h_i(T, \vecw)$: }
Recall that $\Hat{P}(T, \vecw, y)$ is a derivative of $P(T, \vecw, y, R(T, \vecw,y))$ with respect to $y^m$. Thus, from \cref{cor:interpolation-consequences}, we get that $\Hat{P}$ can be written as an $\F[y]$-weighted linear combination of the polynomials in the set 
\[
\{P(T, \vecw, \beta_i, R(T, \vecw, \beta_i)) : i \in \{0, 1, \ldots, \deg(P)\} \} \, 
\]
where $\beta_i$'s are distinct field constants and every weight in the linear combination has $y$-degree at most $\deg(P)$. 
\begin{align*}
    h_j(T,\vecw) &= \inparen{\frac{\partial \Hat{P}}{\partial y^j}(T, \vecw, F(\mathbf{0})) - \frac{\partial \Hat{P}}{\partial y^j}(T, \mathbf{0}, F(\mathbf{0}))} \trunc \langle \vecw \rangle^{\ell+1} \\
    &= \inparen{\inparen{\frac{\partial}{\partial y^{j+m}}P(T,\vecw,y,R(T,\vecw,y))}\middle|_{y=F(\veczero)} - \frac{\partial \Hat{P}}{\partial y^j}(T, \mathbf{0}, F(\mathbf{0}))} \trunc \langle \vecw \rangle^{\ell+1}\\ 
    &= \inparen{\inparen{\sum_{i=0}^{D} \Lambda_i(F(\veczero))P(T,\vecw,\beta_i,R(T,\vecw,\beta_i))} - \frac{\partial \Hat{P}}{\partial y^j}(T, \mathbf{0}, F(\mathbf{0}))} \trunc \langle \vecw \rangle^{\ell+1}\\
    &= \inparen{\sum_{i=0}^{D} \Lambda_i(F(\veczero))P(T,\vecw,\beta_i,R(T,\vecw,\beta_i)) \trunc \inangle{\vecw}^{\ell + 1}} - \frac{\partial \Hat{P}}{\partial y^j}(T, \mathbf{0}, F(\mathbf{0}))
\end{align*}

Thus, to show that  $h_i(T, \vecw)$ have small constant-depth circuits, it suffices (up to multiplicative factors of $\poly(\deg(P))$) to show that the polynomials $(P(T, \vecw, \beta_i, R(T, \vecw, \beta_i)) \trunc \langle \vecw \rangle^{\ell + 1} )$ have small constant-depth circuits. But, this is exactly the content of \autoref{cor:low-deg-w-components-of-roots-mod-T-composed}\footnote{There is a slight subtlety here: $P(T,\vecw,y,R(T,\vecw,y))$ must still satisfy the hypothesis of \Cref{lem:low-deg-w-components-of-roots-mod-T}/\Cref{cor:low-deg-w-components-of-roots-mod-T-composed} after replacing each $y$ by a field constant $\beta_i$. But this is true because of \Cref{lem:preserving-root-properies-under-homomorphism}.} , which shows that $(P(T, \vecw, \beta_i, R(T, \vecw, \beta_i)) \trunc \langle \vecw \rangle^{\ell + 1} )$ has a circuit of depth $\Delta+O(1)$ and size 
\[
\left(\poly(s, \deg(P)) \cdot (\log k)^{\poly(\ell)} \right) \, .
\]
This gives a circuit $C_i(T,\vecw)$ of roughly the same size with a constant additive increase in depth for each $h_i(T, \vecw)$, and thus a circuit $\tilde{C}_i(\vecw):= C_i(\kappa,\vecw)$ for each $h_i(\kappa, \vecw) $, where $\kappa$ is the field constant in \Cref{eqn:CKS-on-F(w)}. 

\paragraph{Putting things together:}
Plugging in these circuits as inputs to the depth-$2$ circuit for $Q(\vecu)$, we get that there is a circuit $C(\vecw) = Q(\tilde{C}_0(\vecw), \dots, \tilde{C}_{\ell}(\vecw))$ of size $\left(\poly(s, \deg(P)) \cdot (\log k)^{\poly(\ell)} \right)$ and depth $(\Delta + O(1))$ such that 
\[
    F(\vecw) \equiv C(\vecw) \bmod \langle \vecw \rangle^{\ell + 1} \, .
\]
The $\vecw$-degree of $C(\vecw)$ is at most $\ell \cdot \deg(Q)$, which is at most $\ell^2$. By using \autoref{cor:interpolation-consequences} to compute the truncation $C(\vecw) \trunc \inangle{\vecw}^{\ell+1}$, we get a depth-$(\Delta + O(1))$ circuit for $F(\vecw)$, of size at most \[\left(\poly(s, \deg(P)) \cdot (\log k)^{\poly(\ell)} \right).\] 
\end{proof}

Now, we will use \cref{lem:low-deg-y-roots-of-hybrid-poly} to prove the main theorem of this section: the Kabanets-Impagliazzo generator, instantiated with an appropriate design and a low-degree polynomial, preserves the roots of constant-depth circuits, while ensuring that no new roots are created. This is a special case of \cref{thm:irreducibility-preservation}. 

\begin{theorem}[Checking validity of a root]\label{thm:technical-theorem-1}
Let $k > 1$ be any natural number and let $A(T, \vecx, z) \in \F[T, \vecx, z]$ and $\Phi(T, \vecx) \in \F[T, \vecx]$ be polynomials satisfying the following properties. 
\begin{itemize}
    \item $A(T, \vecx, z)$ is computable by a depth $\Delta$ circuit of size $s$. 
    \item  $A(T, \vecx, z)$ is $T$-regularized and monic in $z$.
    \item $\Phi(T, \vecx)$ is a truncated, non-degenerate, approximate $z$-root of  $A(T, \vecx, z)$ of order $k$
    \item $A(T, \vecx, \Phi(T, \vecx))$ is not \emph{identically zero} in $\F[T, \vecx]$.
\end{itemize}
Let $\mathcal{S}$ be a $(n,\sigma,\mu,\rho)$-design and let $g$ be a $\sigma$-variate polynomial of degree $d$. For a new tuple $\vecw$  of  $\mu$ variables distinct from $T, \vecx, z$, let  us define the polynomial map $\ki_{g,\mathcal{S}}:\F^\mu \to \F^n$ using \cref{def:KI-generator}.
Then, the following statement is true. 

If $A(T,\ki_{g,\mathcal{S}}(\vecw),\Phi(T,\ki_{g,\mathcal{S}}(\vecw)))$ is identically zero, then $g$ can be computed by an algebraic circuit of depth $\Delta' = (\Delta + O(1))$ and size $s'$ which is at most
\[
\poly(s, \deg(A), k) \cdot (\rho \log k)^{\poly(d)} \, .
\]

In particular, if $g$ cannot be computed by a size $s'$ depth $\Delta'$ circuit, then $A(T,\ki_{g,\mathcal{S}}(\vecw),\Phi(T,\ki_{g,\mathcal{S}}(\vecw)))$ is identically zero if and only if $A(T, \vecx, \Phi(T, \vecx))$ is identically zero. 
\end{theorem}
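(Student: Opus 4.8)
The plan is to run the standard hybrid argument for hitting-set generators, with the twist that at the ``boundary'' variable we invoke \cref{lem:low-deg-y-roots-of-hybrid-poly} — rather than a generic factorization fact — so that the certificate it produces is a genuinely small \emph{constant-depth} circuit for $g$. Write $\Psi(T,\vecx) := A(T,\vecx,\Phi(T,\vecx)) \in \F[T,\vecx]$, which is nonzero by hypothesis, and note that $A(T,\ki_{g,\mathcal{S}}(\vecw),\Phi(T,\ki_{g,\mathcal{S}}(\vecw)))$ is precisely $\Psi$ with each $x_i$ replaced by $g_i(\vecw) := g(\vecw_{S_i})$. The ``$\Leftarrow$'' half of the final equivalence is then immediate, since substitution maps the zero polynomial to the zero polynomial. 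For the rest, assume $\Psi\circ\ki_{g,\mathcal{S}}\equiv 0$ while $\Psi\not\equiv 0$, and hybridise: set $R_j(T,\vecw,x_{j+1},\dots,x_n) := \Psi(T,g_1(\vecw),\dots,g_j(\vecw),x_{j+1},\dots,x_n)$, so that $R_0 = \Psi \neq 0$ and $R_n = \Psi\circ\ki_{g,\mathcal{S}} \equiv 0$; pick the least $j\in[n]$ with $R_j\equiv 0$, so that $R_{j-1}\not\equiv 0$.

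Now I would specialise away the ``off-design'' variables. Since $\F = \Q$ is infinite, \cref{lem:SZ} lets us choose field constants $\vec\beta$ for the variables $\vecw\setminus\vecw_{S_j}$ and $a_{j+1},\dots,a_n$ for $x_{j+1},\dots,x_n$ so that $R_{j-1}$ stays nonzero after these substitutions; rename $x_j =: y$. Let $\Lambda : \F[T,\vecx,z]\to\F[T,\vecw_{S_j},y,z]$ be the ring homomorphism fixing $T$ and $z$, sending $x_i\mapsto g'_i := g_i(\vecw)|_{\vecw\setminus\vecw_{S_j}\to\vec\beta}$ for $i<j$, $x_j\mapsto y$, and $x_i\mapsto a_i$ for $i>j$. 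Two observations are crucial. First, for $i<j$ the polynomial $g'_i$ depends only on the variables $\vecw_{S_i\cap S_j}$, so by the design property $|S_i\cap S_j|<\rho$ (\cref{def:designs}) it is a polynomial in fewer than $\rho$ variables of degree at most $d$, hence computable by a $\Sigma\Pi$ circuit of size $\rho^{O(d)}$; composing the depth-$\Delta$ circuit for $A$ with these circuits at its leaves (and substituting constants or $y$ at the remaining leaves) yields a circuit for $\Lambda(A)$ of depth $\Delta + O(1)$ and size $\hat s := \poly(s,n)\cdot\rho^{O(d)}$. Second, since $g_j$ depends only on $\vecw_{S_j}$, we have $g'_j = g(\vecw_{S_j}) =: F$, a polynomial of degree exactly $d$, and $\Lambda(\Psi)|_{y\to F}$ is exactly the specialisation of $R_j\equiv 0$; hence $(y-F)$ divides $\Lambda(\Psi) = \Lambda(A)(T,\vecw_{S_j},y,\Lambda(\Phi))$ in $\F[T,\vecw_{S_j},y]$ (division by the monic-in-$y$ polynomial $y-F$).

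Next, apply \cref{lem:preserving-root-properies-under-homomorphism} to $\Lambda$ (with $\gamma = 1$ and image-variable tuple $(\vecw_{S_j},y)$): it gives that $\Lambda(A)$ is $T$-regularized and monic in $z$, and that $\Lambda(\Phi)$ is a truncated, non-degenerate, approximate $z$-root of $\Lambda(A)$ of order $k$. Thus $(\Lambda(A),\Lambda(\Phi),F)$ meets the hypotheses of \cref{lem:low-deg-y-roots-of-hybrid-poly} with circuit-size parameter $\hat s$, depth $\Delta + O(1)$, and $\ell = d$. That lemma outputs a circuit for $F$ — equivalently, after the harmless renaming of its $\sigma$ variables, for $g$ — of depth $\Delta' = \Delta + O(1)$ and size $\poly(\hat s,\deg\Lambda(A))\cdot(\log k)^{\poly(d)}$; since (without loss of generality) $n\le s+1$, $\deg\Lambda(A)\le d\cdot\deg(A)$, and $\rho^{O(d)}(\log k)^{\poly(d)} \le (\rho\log k)^{\poly(d)}$, this is at most $s' := \poly(s,\deg(A),k)\cdot(\rho\log k)^{\poly(d)}$, as claimed. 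The ``in particular'' statement then follows: if $g$ has no size-$s'$, depth-$\Delta'$ circuit, the main implication rules out $A(T,\ki_{g,\mathcal{S}}(\vecw),\Phi(T,\ki_{g,\mathcal{S}}(\vecw)))\equiv 0$ whenever $A(T,\vecx,\Phi(T,\vecx))\not\equiv 0$, and we already observed the converse direction.

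The main obstacle is not in any single deep step but in the bookkeeping around the hybrid: one must check that specialising the off-design $\vecw$-variables to constants (a) keeps the polynomial nonzero (\cref{lem:SZ}), (b) preserves divisibility by $(y-F)$ — which holds precisely because $g_j$ does not involve those variables, so $F$ is untouched — and (c) preserves $T$-regularity, monicity in $z$, and the non-degeneracy of the approximate $z$-root, which is exactly what \cref{lem:preserving-root-properies-under-homomorphism} is designed to supply. All of the genuinely technical content (the Taylor-expansion argument, quadratic-convergence Newton iteration, and the small constant-depth circuits for low-degree components of roots) is already encapsulated in \cref{lem:low-deg-y-roots-of-hybrid-poly}, so the novelty here is only in wiring the design property into the hybrid so that the ``$g'_i$'' become cheap.
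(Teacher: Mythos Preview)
Your proof is correct and follows essentially the same approach as the paper's: a hybrid argument to find the critical index~$j$, specialising the off-design $\vecw$-variables and the trailing $x_i$'s to constants while preserving nonzeroness, then invoking \cref{lem:preserving-root-properies-under-homomorphism} to verify the hypotheses of \cref{lem:low-deg-y-roots-of-hybrid-poly} for the resulting $\Lambda(A),\Lambda(\Phi)$. The only cosmetic differences are your indexing convention (least $j$ with $R_j\equiv 0$ versus the paper's $\hat{B}_j\neq 0,\hat{B}_{j+1}=0$) and your explicit invocation of $n\le s+1$ to absorb the $n$ factor, where the paper instead bounds the size of $\hat{A}$ directly as $s\cdot\rho^{O(d)}$ by counting leaves.
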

The proof of this theorem is based on some of the same principles as that of the proof of \autoref{thm:irreducibility-preservation}. However, the underlying technical details are considerably shorter and cleaner. We now discuss the proof. 

\begin{proof}[Proof of \autoref{thm:technical-theorem-1}]
    Let $\Hat{B}_j(T, \vecx, \vecw)$ be the polynomial obtained by replacing the variables $x_1, \ldots, x_{j-1}$ in the polynomial $A(T, \vecx, \Phi(T, \vecx))$ by the polynomials $g_1, \ldots, g_{j-1} \in \F[\vecw]$ respectively. From the hypothesis of the theorem, we know that $\Hat{B}_0$ is not identically zero and $\Hat{B}_n$ is identically zero. Thus, there must be a $j$ such that $\Hat{B}_{j}$ is not identically zero but $\Hat{B}_{j+1}$ is identically zero. We focus on this $j$ for the rest of the proof. 

    Note that $\Hat{B}_j$ depends on the variables $T, \vecw$ and $x_{j}, x_{j+1}, \ldots, x_{n}$ but does not depend on the variables $x_1, x_2, \ldots, x_{j-1}$. Moreover, $\Hat{B}_{j+1}$ is obtained from $\Hat{B}_j$ by substituting $x_j$ by $g_j$. For ease of notation, we refer to the variable $x_j$ as $y$ for the rest of this argument. Let $B_j$ be the polynomial obtained from $\Hat{B}_j$ by setting the variables $x_{j+1}, \ldots, x_{n}$ and $\{w_i : w_i \notin S_j\}$ to constants from $\F$ such that $B_j$ remains non-zero. Since $\Hat{B}_j$ is a non-zero polynomial, a random substitution of these variables from $\F$ (assuming it is large enough) has this property. Thus, $B_j$ only depends on the variables $T, y, \{w_i : w_i \in S_j\}$. For ease of notation, we continue to refer to the tuple $\vecw_{S_j}$ as $\vecw$.
    
    We know that when we substitute $y$ by $g_j$ in $B_j$, we end up with the identically zero polynomial. In other words, $(y-g_j)$ divides $B_j$. At this point, we would like to invoke \autoref{lem:low-deg-y-roots-of-hybrid-poly} to conclude the proof. In order to do that, we need to make sure that the hypothesis of the lemma holds, which we do now. 

    From its definition, we know that  
    \[
    \Hat{B}_j(T, \vecx, \vecw)  :=  A(T, g_1(\vecw), \ldots, g_{j-1}(\vecw), x_j, x_{j+1}, \ldots, x_n) \, .
    \]
    Now, to obtain $B_j$ from $\Hat{B_j}$, we set the variables $x_{j+1}, \ldots, x_n$ and the $\vecw$ variables outside the set $S_j$ to field constants, and rename $x_j$ as $y$. For every $i < j$, the above setting of $\vecw$ variables outside the set $S_j$ to field constants reduces each $g_i$ to a polynomial $\hat{g}_i$ that has degree at most $d$ and only depends on at most $\rho$ $\vecw$ variables in the set $|S_i \cap S_j|$. Thus, $\hat{g}_i$ can be written as a sum of at most $\binom{\rho + d}{d} \leq \rho^d$ many monomials, and hence is a depth $2$ algebraic circuit of size at most  $\rho^{O(d)}$. 
    
    For $i \in [j+1, n]$, let the variable $x_i$ be set to the field element $\alpha_i$, and let $\Hat{A}(T, \vecw, y)$ and $\Hat{\Phi}(A, \vecw, y)$ be the polynomials defined as 
    \[
        \Hat{A}(T, \vecw, y, z) := A\left(T, \hat{g}_1(\vecw), \ldots, \hat{g}_{j-1}(\vecw), y, \alpha_{j+1}, \ldots, \alpha_n, z\right) \, ,
    \]
    and 
    \[
        \Hat{\Phi}(T, \vecw, y) := \Phi\left(T, \hat{g}_1(\vecw), \ldots, \hat{g}_{j-1}(\vecw), y, \alpha_{j+1}, \ldots, \alpha_n\right) \, .
    \]
    Thus, we also get that $B_j(T, \vecw, y) = \Hat{A}(T, \vecw, y, \Hat{\Phi})$. A direct application of \autoref{lem:preserving-root-properies-under-homomorphism} tells us that $\Hat{A}(T, \vecw, y, z)$ is $T$-regularized and monic in $z$, and $\Hat{\Phi}$ is a truncated, non-degenerate, approximate $z$-root of $\Hat{A}(T,\vecw,y,z)$ of order $k$.
    Also, since the degree of each $\Hat{g}_i$ is at most $d$, we have that the total degree of $\Hat{A}$ is at most $d \cdot \deg(A)$. Since $A$ has a circuit of depth $\Delta$ and size $s$, from the discussion above, we can plug in a small depth-$2$ circuit of size at most $\rho^{O(d)}$ for each $\Hat{g}_i$, and get that $\Hat{A}$ has a circuit of depth $\Delta+2$ and size $\left(s\rho^{O(d)}\right)$. 
    
    So, the hypothesis of \autoref{lem:low-deg-y-roots-of-hybrid-poly} continue to hold if we set the polynomials $P$ and $R$ in \autoref{lem:low-deg-y-roots-of-hybrid-poly} to $\Hat{A}$ and $\Hat{\Phi}$ respectively, with size and depth bound on the circuit complexity of $\Hat{A}$ being  $\left(s\rho^{O(d)}\right)$ and $\Delta+2$ respectively. 

    Finally, since $(y-g_j(\vecw))$ divides $B_j(T, \vecw, y)$, which equals $(\Hat{A}(T, \vecw, y, \Hat{\Phi}(T, \vecw, y)))$, we get from an application of \autoref{lem:low-deg-y-roots-of-hybrid-poly} that $g_j(\vecw)$ has a depth $\Delta' = (\Delta + O(1))$ circuit of size $s'$ which is at most 
    \[
    \left(\poly(s\rho^d, \deg(\Hat{A}), k) \cdot (\log k)^{\poly(d)}\right) \leq     \left(\poly(s, \deg(A), k) \cdot (\rho \log k)^{\poly(d)}\right)\, .
    \]
    The contrapositive tells us that if every depth-$\Delta'$ circuit for $g$ requires size greater than $s'$ obtained as an upper bound above, then $A(T,\ki_{g,\mathcal{S}}(\vecw),\Phi(T,\ki_{g,\mathcal{S}}(\vecw)))$ is identically zero if and only if $A(T, \vecx, \Phi(T, \vecx))$ is identically zero. This completes the proof of the theorem. 
\end{proof}

\section{Preserving irreducibility under variable reduction} \label{sec:irreducibility-preservation}
We prove our main technical result (\autoref{thm:irreducibility-preservation}) in this section. Together with the ideas discussed in \autoref{sec:analysis-of-algorithms}, this is sufficient to complete the proof of correctness of our algorithms. To this end, we first start by showing that the irreducibility of factors of multivariate polynomials can be characterized by certain divisibility tests involving approximate power series roots of the polynomial. We then show that these divisibility tests can be reduced to polynomial identity tests for certain circuits, where the instances are built using constant-depth circuits and their approximate power series roots. A reduction from divisibility testing to PIT was shown by Forbes \cite{Forbes15} and was later used in factorization algorithms in \cite{KRS23, KRSV}. However,  our proof for this reduction here is different from that in \cite{Forbes15} and goes via a recent result of Andrews \& Wigderson \cite{AW24} that shows that the transformation between elementary symmetric polynomials and power symmetric polynomials can be done efficiently using constant-depth circuits. This alternative proof offers a clearer insight into the structure of the PIT instances and this structure if helpful in completing our proof. Finally, we show that these PIT instances can be solved deterministically in subexponential time using the Kabanets-Impagliazzo generator invoked with the explicit low degree hard polynomials in \cite{LST21}. This analysis is the core technical part of the proof and is perhaps a little surprising since while we are unable to show that the PIT instances we work with are themselves constant-depth circuits. Nevertheless, we manage to show that the generator can still be analyzed and shown to work for such circuits.  

\subsection{From irreducibility testing to divisibility tests}

Throughout this section, we assume that $P(T, \vecx, z)$ is a nonzero squarefree polynomial that is monic in $z$, $T$-regularised with respect to $z$, and it satisfies that $P(0,\vecx,z)=P(0,\veczero,z)$ is squarefree. Let $D = \deg_z P$ and let $P(T, \vecx, z) = \prod_{i \in [D]} (z - \varphi_i(T,\vecx))$ be the factorization\footnote{Such a factorization exists because each root of the univariate $P(0,\vecx,z)$ can be extended to a power series root via Newton iteration; see \cite[Theorem 4]{DSS22-closure}.} of $P(T, \vecx, z)$ where $\varphi_i(T, \vecx)$ are from the ring of power series $\overline{\F}[\vecx]\indsquare{T}$. 

\begin{lemma}
    \label{lem:factors-as-prod-of-roots}
Let $P(T, \vecx, z)$ and $\phi_1(T, \vecx), \ldots, \phi_D(T, \vecx)$ be defined as above, and let $k > \deg_T(P)$. For any $G(T, \vecx, y) \in \overline{\F}(T, \vecx, z)$ that divides $P(T, \vecx, z)$, there is some subset $U \subseteq [D]$ such that 
\[
G(T, \vecx, y)  = \prod_{i\in U} (z - \phi_i(T, \vecx)) \trunc T^k. 
\]
and $\prod_{i\in U} (z - \phi_i(T, \vecx))$ is in fact a polynomial (and not just a power series) that equals $G(T, \vecx, y)$. 
\end{lemma}
\begin{proof}
Since $P(T, \vecx, z)$ is squarefree and we have the factorisation $P(T, \vecx, z) = \prod_{i\in [D]} (z - \phi_i(T, \vecx))$ in $\F[\vecx, z]\indsquare{T}$ and hence any polynomial factor $G(T, \vecx, z)$ must be $\prod_{i\in U} (z - \phi_i(T, \vecx))$ for some $U \subseteq [D]$. Since $\deg_T G \leq \deg_T P < k$, we have
\begin{align*}
    G(T, \vecx, y) &= G \trunc T^k \\
        & = \prod_{i\in U} (z - \phi_i(T, \vecx)) \trunc T^k\qedhere
\end{align*}
\end{proof}

For our algorithm, we would be interested in divisibility over $\F$ and not over $\overline{\F}$. The following lemma gives some sufficient conditions for when the above polynomial $Q_S(T, \vecx, z)$ has coefficients in $\F$. 

\begin{lemma}
    \label{lem:coefficients-of-QS-polynomial}
    Let $P(T, \vecx, z)$ be a polynomial and $\phi_1(T,\vecx), \ldots, \phi_D(T, \vecx)$ be power series as defined above. For each $i \in [D]$, let $\alpha_i = \phi_i(0, \veczero)$, and let $\mathcal{Z} = \set{\alpha_1,\ldots, \alpha_D}$ (which are the roots of $P(0, \vecx, z) = P(0, \veczero, z)$ in the algebraic closure $\overline{\F}$).  Fix some $k \geq 0$. For a subset $S \subset [D]$, let
    \[
    Q_{S}(T, \vecx, z) := \prod_{i \in S}(z-\varphi_i(T,\vecx))\trunc T^k
    \]
    for some $k \geq 0$. 

    If the polynomial $g_S(z) = \prod_{i \in S} (z - \alpha_i) = Q_S(0, \veczero, z) = Q_S(0, \vecx, z)$
    has all its coefficients in the base field $\F$, then so does $Q_S(T, \vecx, z)$. 
\end{lemma}
\begin{proof}
Define the following rational functions $\Gamma^{(i)}(T, \vecx, \zeta)$ for $i \geq 0$ as follows:
\begin{align*}
\Gamma^{(0)}(\zeta) & := \zeta\\
\Gamma^{(i+1)}(\zeta) & := \Gamma^{(i)}(\zeta) - \frac{P(T, \vecx, \Gamma^{(i)}(\zeta))}{\partial_z P(0, \veczero, \zeta)}\quad\text{for $i > 0$}.
\end{align*}
Note that $\Gamma^{(i)}(\alpha)$ is in fact a polynomial in $T, \vecx$ for every $\alpha$ such that $\partial_z P(0, \veczero, \alpha) \neq 0$. By \cref{lem:newton-iteration-linear}, for any $\alpha_i \in \mathcal{Z}$, we have that $\Gamma^{(k)}(\alpha_i) = \phi_i \bmod{T^k}$ and therefore
\[
Q_S(T, \vecx, z) = \prod_{i \in S}(z-\Gamma^{(k)}(\alpha_i))\trunc T^k 
\]
Therefore, every coefficient of $Q_S(T, \vecx, z)$ is a symmetric function of the set $\setdef{\alpha_i}{i \in S}$. Since $g(z) = \prod_{i \in S} (z - \alpha_i)$ is assumed to have coefficients in the base field $\F$, this implies that every elementary symmetric function of the set $\setdef{\alpha_i}{i \in S}$ is in $\F$. By the fundamental theorem \RPnote{add a ref?} of symmetric polynomials, this implies that every coefficient of $Q_S(T, \vecx, z)$ is also in $\F$. 
\end{proof}

With the above two lemmas, we can now characterise the factors (over $\F$) of $P(T, \vecx, z)$ using the polynomials $Q_S(T, \vecx, z)$ defined above. 

\begin{lemma}\label{lem:identifying-conjugates-to-divisibility-testing-II}
    Let $P(T, \vecx, z)$ and $\phi_1(T, \vecx), \ldots, \phi_D(T, \vecx)$ be defined as above, and let $k > \deg_T P$. Let $\alpha_i = \phi_i(0, \veczero)$ for each $i \in [D]$. For a subset $S \subseteq [D]$, define the polynomial 
    \[
    Q_S(T, \vecx, z) = \prod_{i \in S} (z - \phi_i(T, \vecx)) \trunc T^k.
    \]
    Then, the set of factors of $P$ over the base field $\F$ is the same as 
    \[
        \setdef{Q_U(T, \vecx, z) \in \overline{\F}[T, \vecx, z]}{\begin{array}{c}U \subseteq [D] \text{ with } Q_U(T, \vecx, z) \text{ dividing }P\\\text{and } Q_U(0, \vecx, z) = Q_U(0, \veczero, z) \in \F[z] \end{array}}
    \]
\end{lemma}
\begin{proof}
    Any $G(T, \vecx, z) \in \F[T, \vecx, z]$ that is a factor of $P(T, \vecx, z)$, by \cref{lem:factors-as-prod-of-roots}, is equal to $Q_U(T, \vecx, z) = \prod_{i\in U} (z - \phi_i(T, \vecx))$ for some $U \subseteq [D]$. Additionally, if $G(T, \vecx, z) \in \F[T, \vecx, z]$, then $G(0, \vecx, z) = Q_U(0, \vecx, z) \in \F[z]$. 

    For the other direction, consider any $Q_U(T, \vecx, z) \in \overline{\F}[T, \vecx, z]$ that divides $P(T, \vecx, z)$ and also satisfies that $Q_U(0, \veczero, z) = \prod_{i \in U}(z - \alpha_i)$ is a polynomial with coefficients in $\F$. By \cref{lem:coefficients-of-QS-polynomial}, we have that $Q_U(T, \vecx, z)$ must also be a polynomial in $\F[T, \vecx, z]$ and hence is a factor of $P(T, \vecx, z)$ over $\F$. 
\end{proof}

\noindent
This immediately implies the following \emph{certificate} for irreducibility. 

\begin{corollary} \label{cor:identifying-conjugates-to-divisibility-testing-III}
Let $P(T, \vecx, z) \in \F[T, \vecx, z]$ and $\phi_1(T, \vecx), \ldots, \phi_D(T, \vecx)$ be defined as above, and let $k > \deg_T(P)$. Then $P(T, \vecx, z)$ is irreducible over $\F$ if and only if for every $\emptyset \neq S \subsetneq [D]$ such that $\prod_{i\in S}(z - \alpha_i)$ has all coefficients in $\F$, we have that $Q_S(T, \vecx, z) \in \F[T, \vecx, z]$ does not divide $P(T, \vecx, z)$ where
\[
Q_S(T, \vecx, z) = \prod_{i \in S} (z - \phi_i(T, \vecx)) \trunc T^k.
\]
\end{corollary}

\begin{remark}
    Essentially the same proof as \cref{lem:identifying-conjugates-to-divisibility-testing-II} also characterizes the set of factors of $P(T,\vecx,z)$ over the algebraic closure $\overline{\F}$, and thus gives a certificate for \emph{absolute irreducibility} using divisibility tests over $\overline{\F}$. 
\end{remark}

\begin{corollary} \label{cor:identifying-conjugates-to-divisibility-testing-III-algebraic-closure}
Let $P(T, \vecx, z) \in \F[T, \vecx, z]$ and $\phi_1(T, \vecx), \ldots, \phi_D(T, \vecx)$ be defined as above, and let $k > \deg_T(P)$. Then $P(T, \vecx, z)$ is absolutely irreducible if and only if for every $\emptyset \neq S \subsetneq [D]$, we have that $Q_S(T, \vecx, z) \in \overline{\F}[T, \vecx, z]$ does not divide $P(T, \vecx, z)$ where
\[
Q_S(T, \vecx, z) = \prod_{i \in S} (z - \phi_i(T, \vecx)) \trunc T^k.
\]
\end{corollary}

\autoref{cor:identifying-conjugates-to-divisibility-testing-III} together with the reduction from divisibility testing to PIT in the next section gives us a reduction from irreducibility testing to PIT. However, we note that \label{cor:identifying-conjugates-to-divisibility-testing-III} gives exponentially many (in degree $D$) divisibility testing instances, and thus it isn't immediately clear if this reduction can be useful for obtaining subexponential time irreducible testing algorithms. It is for this reason that our algorithm for factorization does not proceed directly using this reduction and only uses the corollary (and the previous lemmas) in its analysis.

\subsection{From divisibility testing to polynomial identity tests}\label{sec:divtest-to-pit}

Let us recall the standard definitions of the power sum symmetric polynomials and the elementary symmetric polynomials, particularly in the context of roots of a univariate polynomial. 

\begin{definition}
For any monic univariate polynomial $P = \prod_{j\in [d]} (z-\alpha_j) \in \F[z]$ and natural number $i$, $\psym_i(P)$ and $\esym_i(P)$ are the power sum symmetric polynomial $\sum_j \alpha_j^i$ and the elementary symmetric polynomial $\sum_{S\subseteq [d]: |S|=i} \prod_{j \in S} \alpha_j$ of degree $i$ respectively in the multiset of roots of $P$.     
\end{definition}

The following important lemma from a recent beautiful work of Andrews and Wigderson \cite{AW24} (but also present in earlier works such as \cite{SW01}) shows that there is a constant-depth circuit that computes the power sum symmetric polynomials of a multiset from the elementary symmetric polynomials of the multiset; similarly, there is a constant-depth circuit that computes the elementary symmetric polynomials of a multiset from the power sum symmetric polynomials of the multiset.
\begin{lemma}[{\cite[Lemma 3.4, Lemma 3.6]{AW24}}]\label{lem:esym-psym-transform-constant-depth}
    Let $\F$ be any field of characteristic zero and $n \in \N$ be any natural number. Then, there is a constant-depth circuit of size and degree $\poly(n)$ that takes the $n$-variate polynomials $\{\esym_i(\vecx) : i \in [n]\}$ as inputs and outputs the polynomials $\{\psym_i(\vecx) : i \in [n]\}$. 

    Similarly, there is a constant-depth circuit of size and degree $\poly(n)$ that takes the $n$-variate polynomials $\{\psym_i(\vecx) : i \in [n]\}$ as inputs and outputs the polynomials $\{\esym_i(\vecx) : i \in [n]\}$. 
\end{lemma}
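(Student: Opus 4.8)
The statement to prove is \autoref{lem:esym-psym-transform-constant-depth}: namely, that the two transformations between the power sum symmetric polynomials $\{\psym_i\}_{i \in [n]}$ and the elementary symmetric polynomials $\{\esym_i\}_{i \in [n]}$ of $n$ variables are both computable by constant-depth circuits of size and degree $\poly(n)$. Since this lemma is attributed to Andrews \& Wigderson (and, morally, to earlier work), the plan is to reconstruct the standard argument, which rests on the \emph{Newton identities} relating $\psym$ and $\esym$, combined with two structural facts about constant-depth circuits: (i) constant-depth circuits can compute a batch of polynomials that satisfy a triangular system of low-degree recurrences, and (ii) matrix inverses / solutions to structured linear systems of polynomial size can be extracted in constant depth when the matrix is suitably nice (e.g. unitriangular). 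The cleanest route is via \emph{generating functions}: if $E(t) = \sum_{i \geq 0} \esym_i \, t^i = \prod_j (1 + \alpha_j t)$ and $P(t) = \sum_{i \geq 1} \psym_i \, t^{i}$, then the classical identity $P(-t) = -t \cdot E'(t)/E(t)$ (equivalently, $\log E(t) = \sum_{i \geq 1} (-1)^{i-1} \psym_i \, t^i / i$) encapsulates both directions.

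First I would set up the Newton identities in the explicit triangular form: for each $m \in [n]$,
\[
m \, \esym_m = \sum_{i=1}^{m} (-1)^{i-1} \, \esym_{m-i} \, \psym_i,
\qquad
\psym_m = (-1)^{m-1} m \, \esym_m + \sum_{i=1}^{m-1} (-1)^{i-1} \esym_i \, \psym_{m-i}.
\]
For the direction $\esym \mapsto \psym$, the second identity already expresses $\psym_m$ as an explicit polynomial (of degree $2$) in $\esym_1, \dots, \esym_m$ and $\psym_1, \dots, \psym_{m-1}$. Unrolling this recurrence naively gives depth growing with $n$, so instead I would observe that $\psym_m$ is a fixed polynomial combination of products $\esym_{i_1} \cdots \esym_{i_r}$ with $i_1 + \cdots + i_r = m$ — i.e. $\psym_m$ is a $\poly(n)$-sparse polynomial in the $\esym_i$'s with bounded individual degrees, hence computable by a depth-$2$ ($\Sigma\Pi$) circuit of size $\poly(n)$ once we feed in the $\esym_i$'s as inputs. (This uses that the coefficients in the expansion of $-t E'(t)/E(t)$ as a power series in the $\esym_i$ are explicit rational numbers with small bit-complexity, which is where $\operatorname{char} \F = 0$ enters.) For the direction $\psym \mapsto \esym$, the first identity is a \emph{unitriangular} linear system in the unknowns $\esym_1, \dots, \esym_n$ with coefficients that are themselves (already computed) polynomials in $\psym_1, \dots, \psym_n$; solving a unitriangular $n \times n$ system can be done by computing the inverse of a unitriangular matrix, which has a $\poly(n)$-size constant-depth formula (e.g. via the truncated Neumann series $(\mathbb{I} - N)^{-1} = \mathbb{I} + N + \cdots + N^{n-1}$ for nilpotent $N$, and iterated matrix products of $n \times n$ matrices are in constant depth by the standard balancing/divide-and-conquer trick applied $O(\log n)$ times — or, more directly, by appealing to the constant-depth computability of $\operatorname{Disc}$/resultant-type expressions as in \autoref{thm:resultant-constant-depth-andrews-wigderson}). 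Either way, one obtains a depth-$O(1)$, size-$\poly(n)$ circuit.

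The cleanest packaging, and the one I would actually write up, is to combine both directions into a single observation about the power series $E(t)$ and $L(t) := \sum_{i \geq 1} (-1)^{i-1}\psym_i t^i / i$, namely $E(t) \equiv \exp(L(t)) \bmod t^{n+1}$ and $L(t) \equiv \log(E(t)) \bmod t^{n+1}$: truncated $\exp$ and $\log$ of a power series with constant term $1$ (resp. $0$) are given by $\poly(n)$-size constant-depth circuits in the coefficients, since $\exp(L) = \sum_{r=0}^{n} L^r / r!$ and $\log(1+N) = \sum_{r=1}^{n} (-1)^{r-1} N^r / r$ are $\poly(n)$-sparse polynomial expressions of bounded degree in the coefficients (powers of a degree-$\leq n$ truncated power series, and their $\poly(n)$-fold sums, are clearly constant-depth and $\poly(n)$-size), and reading off coefficients is a linear operation. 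Matching coefficients of $t^i$ then gives $\esym_i$ from $\{\psym_j\}$ and vice versa.

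The main obstacle I anticipate is not the algebra of the Newton identities — which is classical — but making precise and correct the claim that the relevant operations (truncated power-series product, $\exp$, $\log$, or unitriangular matrix inversion) genuinely live in \emph{constant} depth with \emph{polynomial} size and \emph{polynomially bounded degree} simultaneously, as opposed to, say, $O(\log n)$ depth. The subtle point is that an $r$-fold product of degree-$n$ truncated power series, summed over $r \leq n$, naively looks like it needs depth $\Theta(\log n)$ if one insists on binary multiplication trees; the resolution is that each individual coefficient of $L^r$ is an \emph{explicit} multilinear-ish polynomial of degree $r \leq n$ in the input coefficients with $\poly(n)$ monomials (compositions of $i$ into $r$ parts), so the whole thing flattens to a $\Sigma\Pi\Sigma$ or even $\Sigma\Pi$ circuit of size $\poly(n)$ and depth $O(1)$ — the depth does not grow because we are not iterating multiplications, we are writing out a fixed sparse polynomial. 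I would make sure to state this flattening explicitly rather than gesture at it, and to cite \autoref{thm:resultant-constant-depth-andrews-wigderson} and \cite{AW24} for the precise bookkeeping so the write-up stays short.
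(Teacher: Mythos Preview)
The paper does not give its own proof of this lemma --- it is quoted verbatim from \cite{AW24} (and \cite{SW01}) without argument --- so there is nothing in-text to compare against. Your high-level plan (Newton identities via the generating-function identity $E(t)=\exp L(t)$, $L(t)=\log E(t)$) is the standard one and is correct at that level.

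However, your resolution of the depth issue contains a genuine error. You write that ``each individual coefficient of $L^r$ is an explicit multilinear-ish polynomial of degree $r\le n$ in the input coefficients with $\poly(n)$ monomials (compositions of $i$ into $r$ parts), so the whole thing flattens to a $\Sigma\Pi$ circuit of size $\poly(n)$.'' The count is wrong: the number of compositions of $i$ into $r$ positive parts is $\binom{i-1}{r-1}$, which is exponential in $n$ (take $i=n$, $r=n/2$). Even after collapsing compositions to partitions (as you implicitly do when you pass to $\exp(L)$), the number of monomials in $\esym_i$ expressed in the $\psym_j$'s is $p(i)\sim\exp(c\sqrt i)$, which is still super-polynomial. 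So neither the per-$r$ nor the summed expression is $\poly(n)$-sparse, and the flattening you describe does not exist.

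The actual mechanism that makes this constant-depth is interpolation over the auxiliary variable $t$, not sparsity. Concretely, define the honest polynomial $F(t):=\sum_{r=0}^{n} L(t)^r/r!$ of $t$-degree at most $n^2$; then $[t^k]F=[t^k]E=\esym_k$ for all $k\le n$. For any scalar $\beta\in\F$, the value $L(\beta)=\sum_i c_i\beta^i$ is a single linear form in the inputs, so $L(\beta)^r$ is a single unbounded-fan-in product gate and $F(\beta)$ is depth~$3$ and size $O(n)$. Evaluating $F$ at $n^2+1$ points and interpolating (\autoref{cor:interpolation-consequences}) recovers each $[t^k]F$ with one more $\Sigma$ layer, yielding depth $O(1)$, size $\poly(n)$, and degree $\le n$. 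The $\esym\to\psym$ direction is identical with $G(t):=\sum_{r=1}^n(-1)^{r-1}(E(t)-1)^r/r$. This is the step you are missing; once you replace your sparsity claim with this interpolation argument, the rest of your write-up goes through.
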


In \cite[Section 7.2]{Forbes15}, Forbes showed that the question of deterministic divisibility testing for polynomials from a certain complexity class can be reduced to deterministic PIT for polynomials from a related complexity class. This reduction was extensively used in \cite{KRS23} and \cite{KRSV}. The following version offers a different approach for reducing divisibility testing to PIT, using \cref{lem:esym-psym-transform-constant-depth}.
\begin{lemma}
    \label{lem:divisibility-test-to-PIT}
    Let $D \geq t \geq 0$ be integer parameters. Let $\F$ be any field of characteristic zero. Then, there is a constant-depth $\poly(D,t)$-sized circuit $\DivTest_{D,t}$ on $D+t+1$ variables, that takes $(D+t)$ inputs labelled $f_0, \ldots, f_{D-1} \in \F$ and $g_0,\ldots, g_{t-1} \in \F$ respectively, such that 
    \[
        \DivTest_{D,t}(z,f_0,\ldots, f_{D-1}, g_0, \ldots, g_{t-1}) = 0
    \]
    if and only if the polynomial $f(z) = f_0 + f_1 z + \cdots + f_{t-1} z^{t-1} + z^t$ divides the polynomial $g(z) = g_0 + g_1 z + \cdots + g_{D-1} z^{D-1} + z^D$. 
\end{lemma}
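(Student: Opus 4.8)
\emph{Proof plan.} The idea is to reconstruct a \emph{candidate quotient} $h_0$ (which will equal $g/f$ exactly when $f\mid g$) purely from the power-sum data of $f$ and $g$, and then to certify divisibility by the single cheap test $f\cdot h_0 \stackrel{?}{=} g$; the passage between coefficients and power sums is precisely what \cref{lem:esym-psym-transform-constant-depth} supplies in constant depth.

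In more detail, write $f(z) = \prod_{i=1}^{t}(z-\alpha_i)$ and $g(z) = \prod_{j=1}^{D}(z-\beta_j)$ over $\overline{\F}$, as multisets of roots. The crucial elementary observation is additivity of power sums under multiplication: if $f\mid g$, say $g = f\cdot h$ with $h$ monic of degree $D-t$, then $\psym_m(g) = \psym_m(f) + \psym_m(h)$ for every $m\geq 1$, \emph{regardless of multiplicities}. Consequently the $D-t$ scalars $p_m := \psym_m(g) - \psym_m(f)$, $m = 1,\dots,D-t$, are exactly the first $D-t$ power sums of the putative quotient, and those determine a monic polynomial of degree $D-t$ uniquely. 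Let $h_0(z)$ be the monic degree-$(D-t)$ polynomial with power sums $p_1,\dots,p_{D-t}$, obtained from these via Newton's identities, i.e.\ via the psym$\to$esym direction of \cref{lem:esym-psym-transform-constant-depth} with $n = D-t$ (legitimate since $\operatorname{char}(\F) = 0$). Now set
\[
  \DivTest_{D,t}(z,\vec{f},\vec{g}) \;:=\; f(z)\cdot h_0(z) \;-\; g(z),
\]
viewed as a circuit in the variable $z$ and the coefficient tuples $\vec{f}$ of $f$ and $\vec{g}$ of $g$, with the monic leading terms built in. Correctness is immediate: if $f\mid g$ then $p_m = \psym_m(h)$ for all $m\leq D-t$, so $h_0 = h = g/f$ and $\DivTest = fh-g = 0$; conversely, if for some substitution of field values into $\vec{f},\vec{g}$ the polynomial $\DivTest$ is identically zero in $z$, then $g = f\cdot h_0$ with $h_0\in\F[z]$, i.e.\ $f\mid g$. (The boundary cases $t=0$, where $h_0$ has degree $D$ and forces $h_0 = g$, and $t=D$, where $h_0 = 1$ and the test is $f=g$, are covered by the same argument.)

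The remaining point is that $\DivTest_{D,t}$ is a constant-depth circuit of size $\poly(D,t)$, and the one subtlety is that we need the power sums $\psym_m(f)$ of the \emph{divisor} for $m$ up to $D-t$, which may exceed $\deg f = t$. This is handled by observing that $f(z)\cdot z^{D-t}$ is a monic polynomial of degree $D$ whose multiset of roots is that of $f$ together with $D-t$ copies of $0$, so $\psym_m\bigl(f(z)z^{D-t}\bigr) = \psym_m(f)$ for all $m\geq 1$; feeding the (trivially computed, zero-padded) coefficient vector of $f(z)z^{D-t}$ into the esym$\to$psym circuit of \cref{lem:esym-psym-transform-constant-depth} with $n = D$ yields constant-depth $\poly(D)$-size circuits for $\psym_1(f),\dots,\psym_{D}(f)$, in particular for all $m\leq D-t$; the same circuit applied to $\vec{g}$ gives $\psym_1(g),\dots,\psym_D(g)$. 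Subtracting produces the $p_m$'s; the psym$\to$esym circuit then produces the coefficients of $h_0$; one further layer computes the convolution giving the coefficients of $f(z)h_0(z)-g(z)$ (a polynomial of $z$-degree at most $D-1$, each coefficient a sum of products of the quantities above); and a final layer, using unbounded fan-in to form each monomial $z^k$, assembles $\DivTest$. Every stage is of constant depth and $\poly(D,t)$ size, as required.

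\emph{Main obstacle.} None of this is deep, but the step needing genuine care is the depth bookkeeping: one must avoid reconstructing the quotient by the ``obvious'' route of polynomial division / a length-$D$ linear recurrence (serial, hence not constant depth) and instead route everything through the constant-depth symmetric-function primitive, and one must separately handle the power sums of the divisor of order larger than its degree (via the zero-padding trick above). The correctness half is routine once one notes that building a \emph{candidate} quotient and then \emph{verifying} $f h_0 = g$ completely sidesteps any issue of repeated factors or squarefreeness of $f$ or $g$.
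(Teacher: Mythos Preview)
Your proof is correct and follows essentially the same route as the paper: form a pseudo-quotient by converting the coefficients of both polynomials to power sums via the constant-depth $\esym\to\psym$ circuit, subtracting, and converting back via $\psym\to\esym$, then output $f\cdot h_0 - g$. Your treatment is in fact slightly more careful on one point the paper glosses over: you explicitly handle the fact that one needs power sums of the degree-$t$ divisor up to order $D-t$ (possibly exceeding $t$) via the zero-padding trick $f(z)\mapsto f(z)z^{D-t}$, whereas the paper's proof simply writes the difference of the outputs of $\ESymToPSym_D$ and $\ESymToPSym_t$ without commenting on the mismatch in output lengths.
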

\begin{proof}
    By \cref{lem:esym-psym-transform-constant-depth}, we can construct multi-output circuits $\PSymToESym_n$ and $\ESymToPSym_n$ of size $\poly(n)$ and depth $O(1)$ such that for every choice of $\alpha_1,\ldots, \alpha_n$ we have
    \begin{align*}
        \ESymToPSym_n(E_1,\ldots, E_n) & = (P_1,\ldots, P_n)\\
        \PSymToESym_n(P_1,\ldots, P_n) & = (E_1,\ldots, E_n)\\
        \text{ where } E_i & = \esym_i(\alpha_1, \ldots, \alpha_n)\\
        \text{ and } P_i   & = \psym_i(\alpha_1,\ldots, \alpha_n).
    \end{align*}

    Let $f(z) = \prod_{i=1}^D (z - \alpha_i)$ where $S_f = \set{\alpha_1,\ldots, \alpha_D}$ is the multiset of roots of $f$ (in the algebraic closure of $\F$), and let $S_g$ be the multiset of roots of $g(z)$. 

    Note that $e^{(f)}_i := \esym_{i}(S_f) = (-1)^{i} f_{D-i}$ for $i = 1,\ldots, D$ and $e^{(g)}_i = \esym_{i}(S_g) = (-1)^i g_{t-i}$ for $i = 1\ldots, t$. 
    Define the \emph{pseudo-quotient} of $f$ and $g$, referred to by $\tilde{h}$ as follows:
    \begin{align*}
        \tilde{h}(z) &= z^{D-t} - z^{D-t-1}e_{1}^{(h)} + \cdots + (-1)^{D-t} e_{D-t}^{(h)}\\
        \text{where } e^{(h)}_i &= \PSymToESym_{D-t}\inparen{r_1, \ldots, r_{D-t}}\\
        \text{where } r_j & = \inparen{\ESymToPSym_D\inparen{e_1^{(f)}, \ldots, e^{(f)}_D} - \ESymToPSym_t\inparen{e_1^{(g)}, \ldots, e^{(g)}_t}}_j
    \end{align*}

    If $g \mid f$, then the multiset $S_f$ is the union of the multiset $S_g$ and the multiset $S_h$ of roots of $h = f/g$. In that case, for each $i = 1,\ldots, D-t$, we have 
    \begin{align*}
        \psym_i(S_f) & = \psym_i(S_g) + \psym_i(S_h)\\
        \implies \psym_i(S_h) & = \psym_i(S_f) - \psym_i(S_g).
    \end{align*}
    Therefore, $\tilde{h}(z) = h(z)$ if $g \mid f$. \\

    The circuit $\DivTest_{D,t}(f_0, \ldots, f_{D-1}, g_0,\ldots, g_{t-1})$ outputs the polynomial $f(z) - \tilde{h}(z) \cdot g(z)$ where $\tilde{h}(z)$ is computed as stated above. By construction, this is a circuit of size $\poly(D, t)$ and depth $O(1)$. Furthermore, as argued above, if $g \mid h$ then $\tilde{h}(z) = h(z) = f(z)/g(z)$ and hence the above computes the zero polynomial. If $g(z) \nmid f(z)$, then $f(z) - \tilde{h}(z) g(z)$ is nonzero for every choice of $\tilde{h}(z)$ and hence the above is nonzero polynomial. 
\end{proof}

The following lemma uses \cref{lem:divisibility-test-to-PIT} to show that we can reduce the question of checking whether a candidate factor (constructed using various approximate roots from Newton iteration) is an actual factor, to a PIT instance.
\begin{lemma}\label{lem:constant-depth-PIT-instance-for-conjugates}
Let $\F$ be a field of characteristic zero. Let $P(T, \vecx, z) \in \F[T, \vecx, z]$ be a polynomial that is monic in $z$. Let $R_1(T, \vecx), R_2(T, \vecx), \ldots, R_{\ell}(T, \vecx) \in \F[T, \vecx]$ be polynomials of $T$-degree at most $(k-1)$. 

Then, there exists a circuit $\DivTest_{\deg_z(P),\ell}$ on at most $(\deg_z(P) + \ell + 1)$-variables and size and degree at most $\poly(\deg(P), \ell, k)$ that is computable by a depth-$(\Delta + O(1))$  such that \[
Q(T,\vecx,z):= \left(\prod_{i \in [\ell]}(z - R_i(T, \vecx)) \right) \trunc T^k
\]
divides $P(T,\vecx, z)$ if and only if 
\[
\DivTest_{\deg_z(P),\ell}\left(z,P_0(T,\vecx), \ldots, P_{\deg_z(P)}(T,\vecx), (\esym_1(\mathcal{R}) \trunc T^k), \ldots, (\esym_{\ell}(\mathcal{R}) \trunc T^k)\right) \equiv 0 \, , 
\]
where, for every $i$, $P_i(T,\vecx)$ is the coefficient of $z^i$ in $P$, when viewing it as a univariate in $z$ with coefficients in the ring $\F[T,\vecx]$ and $\mathcal{R} = \{R_1, R_2, \ldots, R_{\ell}\}$.  
\end{lemma}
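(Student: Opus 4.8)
The plan is to take $\DivTest_{\deg_z(P),\ell}$ to be (up to harmless sign flips and relabelling of input slots) the circuit produced by \cref{lem:divisibility-test-to-PIT}, with its dividend-coefficient slots receiving the $z$-coefficients of $P$ and its divisor-coefficient slots receiving the $z$-coefficients of $Q$; the only point that is not pure bookkeeping is transferring divisibility over the polynomial ring $\F[T,\vecx][z]$ to divisibility over its fraction field $\F(T,\vecx)[z]$, which is where the monicity of $Q$ is used.

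First I would pin down the $z$-coefficients of $Q$. Since $T$ and $z$ are disjoint sets of variables, the operation $\trunc T^k$ commutes with reading off coefficients of powers of $z$, so
\[
Q(T,\vecx,z) = \inparen{\prod_{i\in[\ell]}\,(z - R_i(T,\vecx))}\trunc T^k = \sum_{j=0}^{\ell} (-1)^j\,\inparen{\esym_j(\mathcal{R})\trunc T^k}\,z^{\ell-j},
\]
with $\esym_0(\mathcal{R})=1$. In particular $Q$ is monic in $z$ of $z$-degree exactly $\ell$ (the leading term $z^\ell$ is untouched by the truncation), and its non-leading $z$-coefficients are precisely $\pm(\esym_j(\mathcal{R})\trunc T^k)$, $j = 1,\dots,\ell$. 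If $\ell > \deg_z(P)$ the statement is vacuous --- then $Q$ has strictly larger $z$-degree than $P$, so $Q\nmid P$, and we may take $\DivTest$ to be the constant $1$ --- so from now on assume $\ell\le D:=\deg_z(P)$.

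Next I would invoke \cref{lem:divisibility-test-to-PIT} over the field $\F(T,\vecx)$, which is again of characteristic zero, with dividend-degree $D$ and divisor-degree $\ell$, obtaining the constant-depth circuit $\DivTest_{D,\ell}$ of size $\poly(D,\ell)$ on $D+\ell+1$ variables. This circuit is division-free --- its only ``divisions'' are by the nonzero integers that appear in Newton's identities, i.e.\ scalar multiplications --- so it is legitimately a circuit over $\F(T,\vecx)$, and when its coefficient slots are set to elements of $\F[T,\vecx]$ its output is a polynomial in $\F[T,\vecx,z]$. Feeding the $z$-coefficients $P_0,\dots,P_D$ of $P$ into the dividend slots and the $z$-coefficients $(-1)^j(\esym_j(\mathcal{R})\trunc T^k)$ of $Q$ into the divisor slots (these were identified in the previous step), \cref{lem:divisibility-test-to-PIT}, read over $\F(T,\vecx)$, tells us that the resulting polynomial vanishes identically in $z$ --- equivalently, is the zero element of $\F[T,\vecx,z]$ --- if and only if $Q$ divides $P$ in $\F(T,\vecx)[z]$.

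It remains to promote divisibility over the fraction field to divisibility over $\F[T,\vecx][z]$. Since $Q$ is monic in $z$ with coefficients in the integral domain $\F[T,\vecx]$, Euclidean division of $P$ by $Q$ stays inside $\F[T,\vecx][z]$ and yields $P = Q\cdot h + r$ with $h,r\in\F[T,\vecx][z]$ and $\deg_z r < \ell$; by uniqueness of division in $\F(T,\vecx)[z]$, the hypothesis $Q\mid P$ over $\F(T,\vecx)[z]$ forces $r=0$, so $Q\mid P$ already in $\F[T,\vecx][z]$, and the reverse implication is trivial. Chaining the two equivalences gives the asserted ``if and only if''. The quantitative assertions are then immediate: $\DivTest_{D,\ell}$ has size $\poly(D,\ell)\le\poly(\deg(P),\ell)$ and depth $O(1)\le\Delta+O(1)$ by \cref{lem:divisibility-test-to-PIT}, it has at most $D+\ell+1\le\deg_z(P)+\ell+1$ input variables, and every degree stays within $\poly(\deg(P),\ell,k)$ by propagating $\deg_T(\esym_j(\mathcal{R})\trunc T^k)<k$ and $\deg(P_i)\le\deg(P)$ through its $\poly(D,\ell)$ gates. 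The one step needing an actual argument, rather than assembling \cref{lem:divisibility-test-to-PIT} with the Newton-identity form of the coefficients of $Q$, is this last field-versus-ring transfer --- and it is exactly the monicity of $Q$ that makes it go through.
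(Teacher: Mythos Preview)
Your proposal is correct and follows essentially the same approach as the paper: identify the $z$-coefficients of $Q$ as the truncated elementary symmetric polynomials of $\mathcal{R}$, plug them (together with the $P_i$'s) into the $\DivTest$ circuit from \cref{lem:divisibility-test-to-PIT} applied over $\F(T,\vecx)$, and then bridge the gap between divisibility in $\F(T,\vecx)[z]$ and in $\F[T,\vecx][z]$ using monicity. The only cosmetic difference is that the paper cites Gauss' Lemma (\cref{lem:gauss}) for this last step, whereas you argue it directly via Euclidean division by the monic $Q$; both arguments are valid and equally short.
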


\begin{proof}
    This almost immediately follows from \autoref{lem:divisibility-test-to-PIT} since each of the polynomials \\ $\esym_1(\mathcal{R}) \trunc T^k, \dots,\esym_{\ell}(\mathcal{R}) \trunc T^k$ are the coefficients of  $\left(\prod_{i \in [\ell]}(z - R_i(T, \vecx)) \right) \trunc T^k$. As it is, the $\DivTest$ circuit would capture divisibility when the coefficients of the input (univariate) polynomials (in the variable $z$) are from a field, but since we are dealing with monic polynomials, Gauss' Lemma (\cref{lem:gauss}) ensures that $Q(T,\vecx,z)$ divides $P(T,\vecx,z)$ in $\F(T,\vecx)[z]$ if and only if $Q(T,\vecx,z)$ divides $P(T,\vecx,z)$ in $\F(T,\vecx)[z]$. The theorem now follows. 
\end{proof}

\begin{lemma}\label{lem:complexity-of-modular-esym-of-R}
Let $\F$ be a field of characteristic zero. Let $R_1(T, \vecx), R_2(T, \vecx), \ldots, R_{\ell}(T, \vecx) \in \F[T, \vecx]$ be polynomials of $T$-degree at most $(k-1)$.

Then, there is a multi-output algebraic circuit $\Hat{C}$ with depth $(\Delta + O(1))$ and size $\poly(\ell, k, \deg(P))$ that takes as input the polynomials of the form $\{R_i(\alpha_j T, \vecx) : i \in [\ell], j \in [\poly(\ell,k,\deg(P))]\}$ and outputs $\left(\esym_i(R_1, R_2, \ldots, R_{\ell}) \trunc T^k\right)$ for every $i \in [\ell]$. Here $\{\alpha_j : j \in [\poly(\ell,k,\deg(P))]\}$ are  elements of $\F$. 
\end{lemma}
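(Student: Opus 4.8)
The plan is to avoid computing the elementary symmetric polynomials $\esym_i(R_1,\ldots,R_\ell)$ directly --- that would cost $\binom{\ell}{i}$ monomials and hence size exponential in $\ell$ --- and instead route through the power sum symmetric polynomials, which are trivial to compute from the $R_i$'s, converting back via Newton's identities using the constant-depth circuit $\PSymToESym_\ell$ guaranteed by \cref{lem:esym-psym-transform-constant-depth}. The truncation modulo $T^k$ will be performed at the very end by interpolation in the $T$-variable, and this is precisely why the lemma provides the scaled inputs $R_i(\alpha_j T, \vecx)$: the ``truncation'' clause of \cref{cor:interpolation-consequences} lets us write $C \trunc T^k$ as a fixed $\F$-linear combination of the evaluations $C(\alpha_j T, \vecx)$, so having these evaluations on hand lets us truncate with no increase in depth.

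Concretely, I would proceed as follows. Since $\operatorname{char}(\F) = 0$, fix $N := \ell(k-1) + 1$ distinct elements $\alpha_1,\ldots,\alpha_N \in \F$; note $N = \poly(\ell,k) \le \poly(\ell,k,\deg(P))$. \emph{Stage 1:} for each $j \in [N]$ and each $m \in [\ell]$, form $p_{m,j} := \sum_{i\in[\ell]} R_i(\alpha_j T, \vecx)^m$; because $\psym_m$ is a polynomial in its arguments and $T \mapsto \alpha_j T$ is a ring homomorphism, $p_{m,j}$ equals $\psym_m(R_1,\ldots,R_\ell)(\alpha_j T, \vecx)$. Each $p_{m,j}$ is a sum of $\ell$ products of fan-in $\le \ell$ of the given inputs, so Stage 1 is depth $2$ and size $O(\ell^2 N)$. \emph{Stage 2:} for each $j$, feed $(p_{1,j},\ldots,p_{\ell,j})$ into a fresh copy of $\PSymToESym_\ell$. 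Since the $m$-th output of $\PSymToESym_\ell$ is some fixed polynomial $E_m(P_1,\ldots,P_\ell)$ satisfying the \emph{identity} $E_m(\psym_1(\vecy),\ldots,\psym_\ell(\vecy)) = \esym_m(\vecy)$ in $\F[y_1,\ldots,y_\ell]$, substituting $y_i \mapsto R_i(\alpha_j T,\vecx)$ shows the $m$-th output of this copy is exactly $\esym_m(R_1,\ldots,R_\ell)(\alpha_j T, \vecx)$. Stage 2 adds $O(1)$ depth and $\poly(\ell)\cdot N$ size. \emph{Stage 3:} since $\esym_m(R_1,\ldots,R_\ell)$ is a sum of products of $m$ polynomials of $T$-degree $\le k-1$, its $T$-degree is at most $m(k-1)\le\ell(k-1) < N$; hence by the truncation clause of \cref{cor:interpolation-consequences} (with the single-variable block $\{T\}$) there are constants $\gamma_{m,j}\in\F$, independent of the $R_i$'s, with $\esym_m(R_1,\ldots,R_\ell)\trunc T^k = \sum_{j=1}^{N}\gamma_{m,j}\cdot\bigl(\esym_m(R_1,\ldots,R_\ell)(\alpha_j T,\vecx)\bigr)$, and computing these $\ell$ linear combinations adds one sum layer and size $O(\ell N)$. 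The resulting multi-output circuit $\Hat{C}$ reads exactly the inputs $\{R_i(\alpha_j T,\vecx): i\in[\ell], j\in[N]\}$, has depth $O(1)$ (hence in particular $\Delta+O(1)$) and size $\poly(\ell,k)\le\poly(\ell,k,\deg(P))$, and outputs $\esym_m(R_1,\ldots,R_\ell)\trunc T^k$ for every $m\in[\ell]$, as required.

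Finally I would record the two points on which correctness hinges --- both routine. First, that Newton's identities hold as polynomial identities in $\ell$ formal variables, which is what legitimizes substituting the ring elements $R_i(\alpha_j T,\vecx)$ into $\PSymToESym_\ell$ and reading off $\esym_m$ of the $R_i$'s; and second, that the substitution $T\mapsto\alpha_j T$ commutes with all circuit operations, so the Stage-1 and Stage-2 outputs are genuinely the $T$-scalings of $\psym_m$ and $\esym_m$ of the $R_i$'s, which is what makes the Stage-3 interpolation valid. There is no real obstacle here: the lemma is an assembly of \cref{lem:esym-psym-transform-constant-depth} (to keep the $\psym\to\esym$ step polynomial-size and constant-depth) with \cref{cor:interpolation-consequences} (to perform the $T$-truncation in depth $O(1)$); the only mild subtlety worth flagging is exactly why the scaled inputs $R_i(\alpha_j T,\vecx)$, rather than just $R_i(T,\vecx)$, appear in the statement --- they are what the depth-preserving interpolation of the final truncation consumes.
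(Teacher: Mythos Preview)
Your proposal is correct and follows essentially the same approach as the paper: route through the power sums via $\PSymToESym_\ell$ from \cref{lem:esym-psym-transform-constant-depth}, then recover the truncation $\trunc T^k$ by interpolation in $T$ using \cref{cor:interpolation-consequences}, which is exactly what introduces the scaled inputs $R_i(\alpha_j T,\vecx)$. The only cosmetic difference is that you organize the computation as ``scale first, then run $N$ parallel copies of $\PSymToESym_\ell$, then linearly combine,'' whereas the paper builds one circuit for $\esym_m(\mathcal R)$ and then observes that the interpolation-based truncation of this circuit amounts to scaling $T$ at its leaves; these are the same construction.
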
 
\begin{proof}
    From \autoref{lem:esym-psym-transform-constant-depth}, we get that there is a multi-output constant-depth circuit $C$ on $\ell$ variables that can be computed by a constant-depth circuit of size and degree $\poly(\ell)$ such that 
    \[
    C(\psym_1(\mathcal{R}), \ldots, \psym_n(\mathcal{R})) = (\esym_1(\mathcal{R}), \ldots, \esym_n(\mathcal{R})) \, ,
    \]
    where, $\mathcal{R} = (R_1, R_2, \ldots, R_{\ell})$. For ease of notation, we just focus on the computation of one of the $\esym_j(\mathcal{R})$ and consider the equality 
    \[
    \esym_j(\mathcal{R}) =     C(\psym_1(\mathcal{R}), \ldots, \psym_n(\mathcal{R})) \, .
    \]
    The above equality immediately gives us a constant-depth circuit that takes $\mathcal{R}$ as input and outputs their elementary symmetric polynomials. However, the goal is to compute these things modulo $T^k$. 
    
    To recover $\esym_j(\mathcal{R}) \trunc T^k$ from the above equality, we think of each $R_i(T,\vecx)$ as a polynomial in $\F[\vecx][T]$ and apply \autoref{cor:interpolation-consequences} with respect to the variable $T$ since we want to extract all the monomials that have $T$-degree at most $k$. This only incurs a polynomial blow up in size and an additive constant increase in depth. Moreover, the new circuit can be seen to be taking as inputs polynomials of the form $\{R_i(\alpha_j T, \vecx) : i \in [\ell], j \in [\poly(\ell,k,\deg(P))]\}$. 
\end{proof}
The following is an immediate consequence of \autoref{lem:complexity-of-modular-esym-of-R} and \autoref{lem:constant-depth-PIT-instance-for-conjugates}. 

\begin{corollary}\label{cor:constant-depth-PIT-instance-for-conjugates-refined}
Let $\F$ be a field of characteristic zero. Let $P(T, \vecx, z) \in \F[T, \vecx, z]$ be a polynomial that is monic in $z$. Let $R_1(T, \vecx), R_2(T, \vecx), \ldots, R_{\ell}(T, \vecx) \in \F[T, \vecx]$ be polynomials of $T$-degree at most $(k-1)$. 

Then, there exists a circuit $C$ on at most $(\poly(\ell, k, \deg(P)))$-variables and size and degree at most $\poly(\deg(P), \ell, k)$ that is computable by a depth $(\Delta + O(1))$ circuit such that \[
\left(\prod_{i \in [\ell]}(z - R_i(T, \vecx)) \right) \trunc T^k
\]
divides $P(T,\vecx, z)$ if and only if 
\[
C\left(z,P_0(T,\vecx), \ldots, P_{\deg_z(P)}(T,\vecx), \mathcal{R}\right) \equiv 0 \, , 
\]
where, for every $i$, $P_i(T,\vecx)$ is the coefficient of $z^i$ in $P$, when viewing it as a univariate in $z$ with coefficients in the ring $\F[T,\vecx]$ and $\mathcal{R} = \left(R_i(\alpha_j T, \vecx) : i \in [\ell], j \in [\poly(\ell,k,\deg(P))]\right)$, with each $\alpha_j$ being an element of $\F$.  
\end{corollary}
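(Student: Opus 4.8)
The plan is to obtain Corollary~\ref{cor:constant-depth-PIT-instance-for-conjugates-refined} by wiring together the two circuits produced in Lemma~\ref{lem:constant-depth-PIT-instance-for-conjugates} and Lemma~\ref{lem:complexity-of-modular-esym-of-R}. Lemma~\ref{lem:constant-depth-PIT-instance-for-conjugates} already reduces the divisibility of $\bigl(\prod_{i\in[\ell]}(z-R_i(T,\vecx))\bigr)\trunc T^k$ by $P$ to the vanishing of a small constant-depth circuit $\DivTest_{\deg_z(P),\ell}$, but that circuit expects, among its inputs, the truncated elementary symmetric polynomials $\esym_i(R_1,\dots,R_\ell)\trunc T^k$, which are not the raw data we want exposed as inputs of $C$. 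Lemma~\ref{lem:complexity-of-modular-esym-of-R} is precisely the gadget that manufactures those truncated $\esym_i$'s, and it does so from the scaled roots $R_i(\alpha_j T,\vecx)$, which is exactly the tuple $\mathcal{R}$ appearing in the corollary. So the step is to feed the $\ell$ outputs of the multi-output circuit $\hat C$ of Lemma~\ref{lem:complexity-of-modular-esym-of-R} into the $\esym_i$-input slots of $\DivTest_{\deg_z(P),\ell}$.

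Concretely, I would define $C(z, P_0,\dots,P_{\deg_z(P)}, \mathcal{R}) := \DivTest_{\deg_z(P),\ell}\bigl(z, P_0,\dots,P_{\deg_z(P)}, \hat C(\mathcal{R})\bigr)$ with $\mathcal{R} = (R_i(\alpha_j T,\vecx))_{i,j}$. When this is evaluated at the genuine $z$-coefficients $P_i = P_i(T,\vecx)$ of $P$ and the genuine scaled roots, Lemma~\ref{lem:complexity-of-modular-esym-of-R} ensures the outputs of $\hat C$ are precisely $\esym_1(R_1,\dots,R_\ell)\trunc T^k,\dots,\esym_\ell(R_1,\dots,R_\ell)\trunc T^k$, so $C$ evaluates to the PIT instance of Lemma~\ref{lem:constant-depth-PIT-instance-for-conjugates}, which vanishes identically if and only if $\bigl(\prod_{i\in[\ell]}(z-R_i(T,\vecx))\bigr)\trunc T^k$ divides $P(T,\vecx,z)$; this is the claimed equivalence. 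For the parameter bookkeeping: the number of input variables of $C$ is $1$ (for $z$), plus $\deg_z(P)+1$ (for the $P_i$'s), plus $\ell\cdot\poly(\ell,k,\deg(P))$ (for the entries of $\mathcal{R}$), which is $\poly(\ell,k,\deg(P))$; the size and degree remain $\poly(\deg(P),\ell,k)$ since we compose two circuits of that size and degree; and the depth is $O(1)$, in particular at most $\Delta+O(1)$ (the looser form being the one convenient for later use, where one additionally substitutes the depth-$\Delta$ circuit for $P$ and extracts its $z$-coefficients via Corollary~\ref{cor:interpolation-consequences}).

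Since both ingredient lemmas are already proved, there is essentially no obstacle; the only thing needing a moment's care is the notational overloading of $\mathcal{R}$ between the two lemmas — in Lemma~\ref{lem:constant-depth-PIT-instance-for-conjugates} it denotes the unscaled set $\{R_1,\dots,R_\ell\}$ whose symmetric functions are consumed, whereas in the corollary it is the tuple of scaled roots that feeds $\hat C$. One should verify, by inspecting the two statements, that the list of outputs of $\hat C$ in Lemma~\ref{lem:complexity-of-modular-esym-of-R} coincides exactly with the list of truncated $\esym_i(R_1,\dots,R_\ell)$ that $\DivTest_{\deg_z(P),\ell}$ in Lemma~\ref{lem:constant-depth-PIT-instance-for-conjugates} requires as input; this holds, so the composition is legitimate and the corollary follows.
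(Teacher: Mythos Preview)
Your proposal is correct and matches the paper's approach exactly: the paper states the corollary as ``an immediate consequence of \autoref{lem:complexity-of-modular-esym-of-R} and \autoref{lem:constant-depth-PIT-instance-for-conjugates}'' without further proof, and your argument---composing the multi-output circuit $\hat C$ of Lemma~\ref{lem:complexity-of-modular-esym-of-R} into the $\esym_i$-input slots of $\DivTest_{\deg_z(P),\ell}$ from Lemma~\ref{lem:constant-depth-PIT-instance-for-conjugates}---is precisely the intended composition. Your remark that both ingredient circuits are $O(1)$-depth in their formal inputs (so the composed $C$ is too, with the $\Delta+O(1)$ bound being slack reserved for later substitutions) is also accurate.
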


\subsection{Building up to the proof of \autoref{thm:irreducibility-preservation}}
\cref{thm:technical-theorem-2} is our main technical theorem building up to \cref{thm:irreducibility-preservation}. It shows that for the kind of circuits that show up when testing the divisibility of candidate factors, if composing such a circuit, say $C$, with the KI generator -- instantiated with a polynomial $g$ and a design $\mathcal{S}$ -- breaks the nonzeroness of $C$, then the polynomial $g$ must be ``easy''.
\begin{theorem}\label{thm:technical-theorem-2}
Let $k > 1$ be any natural number and for every $i \in [\ell]$, let $A_i(T, \vecx, z) \in \F[T, \vecx, z]$ and $\Phi_i(T, \vecx) \in \F[T, \vecx]$ be polynomials satisfying the following properties. 
\begin{itemize}
    \item Each $A_i(T, \vecx, z)$ is computable by a depth $\Delta$ circuit of size $s$
    \item Each $A_i(T, \vecx, z)$ is $T$-regularized and is monic in $z$. 
    \item For every $i\in [\ell]$, $\Phi_i(T,\vecx)$ is a truncated, non-degenerate, approximate $z$-root of $A_i(T,\vecx,z)$ of order $k$.
\end{itemize}

Let $\mathcal{S}$ be a $(n,\sigma,\mu,\rho)$-design and let $g$ be a $\sigma$-variate polynomial of degree $d$. For a new tuple $\vecw$  of  $\mu$ variables distinct from $T, \vecx, z$, let  us define the polynomial map $\ki_{g,\mathcal{S}}$ using \cref{def:KI-generator}. Then the following is true.

For any circuit $C$ of depth $\Delta$ and size $s$ such that the polynomial 
\[
C'(T, \vecx) := C(T, \vecx, \Phi_1, \Phi_2, \ldots, \Phi_{\ell}) 
\]
is non-zero, if $C'(T,\ki_{g,\mathcal{S}}(\vecw))$ is identically zero, then $g$ can be computed by an algebraic circuit of depth $O(\Delta)$ and size 
\[
\poly(s, \deg(C'), k) \cdot (\rho \log k)^{\poly(d)} \, .
\]
\end{theorem}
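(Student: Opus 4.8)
The plan is to mirror the proof of \cref{thm:technical-theorem-1}, the only new feature being that the polynomial whose nonzeroness is destroyed by the KI-generator is $C'(T,\vecx) = C(T,\vecx,\Phi_1,\dots,\Phi_\ell)$, a composition of a constant-depth circuit with \emph{several} approximate roots rather than a single one. So I would first upgrade the two structural inputs used in \cref{thm:technical-theorem-1} --- namely \cref{cor:low-deg-w-components-of-roots-mod-T-composed} and \cref{lem:low-deg-y-roots-of-hybrid-poly} --- to this multi-root setting, and then run the hybrid argument essentially verbatim.

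For the first upgrade, I claim: if $C(T,\vecw,z_1,\dots,z_\ell)$ is a depth-$\Delta$ size-$s$ circuit, each $A_m(T,\vecw,z)$ is a depth-$\Delta$ size-$s$, $T$-regularized, $z$-monic circuit, and each $R_m$ is a truncated, non-degenerate, approximate $z$-root of $A_m$ of order $k$, then $C(T,\vecw,R_1,\dots,R_\ell)\trunc\langle\vecw\rangle^{\ell'}$ has a circuit of depth $2\Delta + O(1)$ and size $\poly(s,\deg)\cdot(\log k)^{\poly(\ell')}$ for every $\ell'$. The point is that a congruence modulo $\langle\vecw\rangle^{\ell'}$ is preserved under any polynomial substitution, so
\[
    C(T,\vecw,R_1,\dots,R_\ell)\trunc\langle\vecw\rangle^{\ell'} = C\!\left(T,\vecw, R_1\trunc\langle\vecw\rangle^{\ell'}, \dots, R_\ell\trunc\langle\vecw\rangle^{\ell'}\right)\trunc\langle\vecw\rangle^{\ell'};
\]
\cref{lem:low-deg-w-components-of-roots-mod-T} gives a small constant-depth circuit for each $R_m\trunc\langle\vecw\rangle^{\ell'}$, substituting these into $C$ keeps the depth constant, and the outer truncation is handled by \cref{cor:interpolation-consequences}. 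Note that $C$ itself need not be $T$-regularized or monic in $z$ --- only the inner $A_m$'s, which is exactly what \cref{lem:low-deg-w-components-of-roots-mod-T} requires. The second upgrade is then the multi-root analogue of \cref{lem:low-deg-y-roots-of-hybrid-poly}: if moreover $F(\vecw)$ has degree at most $d$ and $(y-F)$ divides $C(T,\vecw,y,R_1(T,\vecw,y),\dots,R_\ell(T,\vecw,y))$, then $F$ is a small constant-depth circuit. The proof is the one already given for \cref{lem:low-deg-y-roots-of-hybrid-poly}: pass to a high-enough $y$-derivative of the composition to make $(y-F)$ a simple root, shift the $\vecw$-variables to get the non-degeneracy condition at $\veczero$, invoke \cref{lem:CKS-updated-rational-in-T-new} with roles $P\mapsto$ (this derivative), $R\mapsto F$, and observe that the quantities $h_i$ appearing there are $\F[y]$-linear combinations (via \cref{cor:interpolation-consequences}) of evaluations $C(T,\vecw,\beta_i,R_1(T,\vecw,\beta_i),\dots)$; by \cref{lem:preserving-root-properies-under-homomorphism} each $R_m(T,\vecw,\beta_i)$ is still a truncated, non-degenerate, approximate $z$-root of order $k$ of the still-$T$-regularized-and-$z$-monic $A_m(T,\vecw,\beta_i,z)$, so the first upgrade (with outer circuit $C(T,\vecw,\beta_i,\cdot)$) supplies the needed circuits, and one combines them with the size-$2^{O(d)}$ depth-$2$ circuit for the CKS polynomial $Q$ and truncates via \cref{cor:interpolation-consequences}.

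With these in hand, the hybrid argument is exactly that of \cref{thm:technical-theorem-1}. Let $\Hat{B}_j$ be $C'$ with $x_1,\dots,x_{j-1}$ replaced by $g_1(\vecw),\dots,g_{j-1}(\vecw)$; since $\Hat{B}_0 = C' \neq 0$ and $\Hat{B}_n = C'(T,\ki_{g,\mathcal{S}}(\vecw)) = 0$, fix $j$ with $\Hat{B}_j\neq 0$ and $\Hat{B}_{j+1}=0$. Rename $x_j$ as $y$, and set $x_{j+1},\dots,x_n$ and the $w_i$ with $i\notin S_j$ to generic field constants so that the resulting $B_j(T,\vecw_{S_j},y)$ is still nonzero; then $(y-g_j(\vecw_{S_j}))\mid B_j$. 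Writing $B_j = \Hat{C}(T,\vecw_{S_j},y,\Hat{\Phi}_1,\dots,\Hat{\Phi}_\ell)$, where $\Hat{C}$ absorbs the restricted generator components $\Hat{g}_i$ (each a degree-$\le d$ polynomial on $\le\rho$ variables, hence a size-$\rho^{O(d)}$ depth-$2$ circuit) and thus has depth $\Delta+2$ and size $s\rho^{O(d)}$, and where $\Hat{\Phi}_m$ is the corresponding restriction of $\Phi_m$ --- which by \cref{lem:preserving-root-properies-under-homomorphism} is a truncated, non-degenerate, approximate $z$-root of order $k$ of the analogously restricted $\Hat{A}_m$ (still depth $\Delta+2$, $T$-regularized, $z$-monic) --- the second upgrade applies and yields a depth-$O(\Delta)$ circuit for $g_j$, equivalently for $g$ since $g_j(\vecw_{S_j}) = g(\vecw_{S_j})$, of size $\poly(s\rho^{O(d)},\deg)\cdot(\log k)^{\poly(d)}$, which is within the stated $\poly(s,\deg(C'),k)\cdot(\rho\log k)^{\poly(d)}$ once one notes all auxiliary degrees are polynomially bounded. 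Taking the contrapositive finishes the proof.

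The genuinely new content over \cref{thm:technical-theorem-1} is confined to the first upgrade, and there the only real idea is the elementary observation that $\trunc\langle\vecw\rangle^{\ell'}$ of a composition depends only on the $\trunc\langle\vecw\rangle^{\ell'}$ of the arguments, so the heavy lifting still rests on \cref{lem:low-deg-w-components-of-roots-mod-T}. Accordingly, I expect the main things to be careful about to be largely administrative: (a) verifying that each of the successive substitutions ($x_i\mapsto g_i(\vecw)$ or a constant, $y\mapsto\beta_i$) preserves the ``$T$-regularized, $z$-monic, non-degenerate approximate $z$-root of order $k$'' package for the inner polynomials --- which is precisely the content of \cref{lem:preserving-root-properies-under-homomorphism}, since none of these substitutions raise $T$-degree --- and (b) tracking that all intermediate sizes stay $\poly(s,\deg(C'),k)\cdot(\rho\log k)^{\poly(d)}$ and all depths stay $O(\Delta)$.
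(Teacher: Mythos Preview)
The proposal is correct and takes essentially the same approach as the paper. The paper organizes things slightly differently --- it runs the hybrid argument first and then inlines your ``first upgrade'' as \cref{clm:size-of-hi} (invoking \cref{lem:low-deg-w-components-of-roots-mod-T} on each $\Hat{\Phi}_i$ separately and plugging the results into $\Hat{C}$) rather than stating the two upgrades as standalone lemmas --- but the logical content, the key observation that $\trunc\langle\vecw\rangle^{\ell'}$ of a composition depends only on the truncations of the inner arguments, and all the bookkeeping via \cref{lem:preserving-root-properies-under-homomorphism} and \cref{cor:interpolation-consequences}, are identical.
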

\begin{proof}
    The proof proceeds along the lines of the proof of \autoref{thm:technical-theorem-1}. In particular, we again apply the hybrid argument to the sequence of substitutions for $x_i$ by $g_i$ and arrive at the point $j$ where the substituted polynomial becomes identically zero for the first time. We again rename the variable $x_j$ as $y$ and set any remaining $\vecx$ variables and $\vecw$ variables outside the set $S_j$ to field constants while preserving the non-zeroness of the polynomial before the $j^{th}$ substitution. We use $\Hat{C}$ to denote the result of applying the sequence of substitutions to the circuit $C$. We use $\Hat{C'}$ to denote the result of applying the sequence of substitutions to the circuit $C'$. Clearly, it is of the form 
    \[
    \Hat{C'}(T,\vecw, y) = C(T, \Hat{g}_1(T,\vecw), \ldots, \Hat{g}_{j-1}(T,\vecw), y, \vecb, \Hat{\Phi}_1, \ldots, \Hat{\Phi}_{\ell}) = \Hat{C}(T,\vecw,y,\Hat{\Phi}_1, \ldots, \Hat{\Phi}_{\ell}) \, ,
    \]
where each $\Hat{\Phi}_i(T,\vecw,y)$ is a polynomial in $\F[T,\vecw_{S_j},y]$ that satisfies \[
\Hat{\Phi}_i(T,\vecw,y) = \Phi_i(T, \Hat{g}_1(T,\vecw), \ldots, \Hat{g}_{j-1}(T,\vecw), y, \vecb) \, 
\]
for field constants $\vecb$ and each $\Hat{g}_i$ is obtained from $g_i$ by setting the $\vecw$ variables outside the set $S_j$ to field constants according to $\vecb$ and hence depends on only $|S_i \cap S_j| \leq \rho$ variables. Thus, each $\Hat{g_i}$ has a depth-$2$ circuit of size at most $\rho^{O(d)}$.  We also know that $y - g_j(\vecw)$ divides $\Hat{C'}$. From this, we would like to conclude that $g_j$ has a small constant-depth circuit. 

We again follow the proof of \autoref{thm:technical-theorem-1}, and reduce to the case that $(y-g_j)$ is a factor of multiplicity one. To this end, we will have to work with an appropriately high enough order derivative of $\Hat{C'}$ with respect to $y$ (depending on the multiplicity). From \cref{cor:interpolation-consequences}, we get that this derivative is in the $\F[y]$-span of polynomials of the form $\Hat{C'}(T, \vecw, \beta_i)$ with $\beta_i \in \F$ and $i \in [\deg_y(\Hat{C'})]$, and the $y$-degree of the weights in this linear combination is at most $\deg_y(\Hat{C'})$. Let us denote the derivative by $B$. Thus, we have that 
\[
B(T, \vecw, y) = \sum_{i = 0}^{\deg_y(\Hat{C'})} \Hat{C'}(T, \vecw, \beta_i) \times \Lambda_i(y) \, , 
\]
for univariate polynomials $\Lambda_i(y)$ of degree at most $\deg_y(\Hat{C'})$, $B$ is non-zero and satisfies
\[
B(T, \vecw, g_j(\vecw)) \equiv 0 \, , 
\]
and 
\[
\frac{\partial B}{\partial y}(T, \vecw, g_j(\vecw)) \neq 0 \, . 
\]
By shifting the $\vecw$ variables if needed, we can assume without loss of generality that 
\[
\frac{\partial B}{\partial y}(T, \mathbf{0}, g_j(\mathbf{0})) \not\equiv 0 \, . 
\]
Thus, from \autoref{lem:CKS-updated-rational-in-T-new}, we get that there is a $\kappa \in \F$ and a polynomial $Q$ of degree at most $d$ on $(d+1)$ variables, that satisfies 
\[
g_j(\vecw) \equiv Q(h_0(\kappa,\vecw), h_1(\kappa,\vecw), \ldots, h_d(\kappa,\vecw)) \mod \langle \vecw \rangle^{d+1} \, ,
\]
where for every $i$, 
\[
h_i(T,\vecw) = \frac{\partial B}{\partial y^i}(T,\vecw, g_j(\mathbf{0})) - \frac{\partial B}{\partial y^i}(T,\veczero, g_j(\mathbf{0})) \trunc \langle \vecw \rangle^{d+1} \, .
\]

Given the degree of $Q$ and the number of variables, we get that it has a depth-$2$ circuit of size at most $\exp(O(d))$. The next claim shows that each $h_i$ has a constant-depth circuit of small size. 
\begin{claim}\label{clm:size-of-hi}
Each $h_i$ can be computed by a circuit of depth $O(\Delta)$ and size 
\[
s' \leq \poly(s, \deg(C'), k) \cdot (\rho \log k)^{\poly(d)}
\]
\end{claim}
We first use the claim to complete the proof of the lemma, and discuss the proof of the claim. 

We take the circuits for $h_i(T,\vecw)$ --- given by \autoref{clm:size-of-hi}, denoted by $V_i(T,\vecw)$ --- and consider the circuit $Q(V_0(\kappa,\vecw), V_1(\kappa,\vecw),\dots, V_d(\kappa,\vecw))$. Clearly, it is of depth $O(\Delta)$ and size $s' \leq \poly(s, \deg(C'), k) \cdot (\rho \log k)^{\poly(d)}$ and satisfies 
\[
g_j(\vecw) \equiv Q(V_0(\kappa,\vecw), V_1(\kappa,\vecw),\dots, V_d(\kappa,\vecw)) \mod \langle \vecw \rangle^{d+1} \, . 
\]

We can now apply \autoref{cor:interpolation-consequences} to obtain the constant-depth circuit for $g_j(\vecw)$ with only an additive increase in depth and a polynomial blow up in the size, thereby giving us the theorem. 
\end{proof}

\begin{proof}[Proof of \autoref{clm:size-of-hi}]
    From \autoref{lem:preserving-root-properies-under-homomorphism}, we know that for any $\beta \in \F$, the polynomials $\Hat{A}_i(T,\vecw, \beta)$ and $\Hat{\Phi}_i(T,\vecw, \beta)$, obtained from $A$ and $\Phi$ using the substitutions in the above discussion and setting $y = \beta$, continue to satisfy that properties satisfied by $A_i, \Phi_i$ in the hypothesis of the lemma. 

    Thus, for any field constant $\beta$, we can invoke \autoref{lem:low-deg-w-components-of-roots-mod-T} for each $\Hat{A}_i(T,\vecw,\beta)$ and $\Hat{\Phi}_i(T,\vecw,\beta)$ to get that $(\Hat{\Phi}_i(T,\vecw,\beta) \trunc \langle \vecw \rangle^{d+1})$ can be computed by a circuit of depth $(\Delta + O(1))$ and size $s'$. We now plug these circuits into the input gates of the constant-depth circuit $\Hat{C}$ to get a circuit of depth  at most $O(\Delta)$ and size $O(s')$ that computes the polynomial 
    \[
    \Hat{C}(T, \vecw, \beta, (\Hat{\Phi}_1(T,\vecw,\beta) \trunc \langle \vecw \rangle^{d+1}), \ldots,  (\Hat{\Phi}_{\ell }(T,\vecw,\beta) \trunc \langle \vecw \rangle^{d+1})) \, .
    \]
    Now, applying \autoref{cor:interpolation-consequences} with respect to the $\vecw$ variables gives us a circuit of depth $O(\Delta)$ and size $\poly(s')$ that computes $\left(\Hat{C'}(T,\vecw,\beta) \trunc \langle \vecw \rangle^{d+1}\right)$, since by definition $\Hat{C'}(T,\vecw,y)$ equals $\Hat{C}(T, \vecw, y, \Hat{\Phi}_1, \ldots, \Hat{\Phi}_{\ell})$.  

    Similar to \autoref{lem:low-deg-y-roots-of-hybrid-poly}\footnote{Refer to the part of the proof where we show that $h_i(T,\vecw)$ can be computed by constant-depth circuits.}, the claim now follows from the definition of $B$, \cref{cor:interpolation-consequences} and the fact that the degree of $B$ in $T$ is at most $\deg(C')$. 
\end{proof}
We now combine the machinery in \autoref{sec:divtest-to-pit} with \autoref{thm:technical-theorem-2} to get the following theorem that is directly used in the proof of \autoref{thm:irreducibility-preservation}.

\begin{theorem} \label{thm:ki-preserves-divisibility}
Let $n \in \N$ be sufficiently large, and let $\vecx = (x_1, \dots, x_n)$. Let $\F$ be a field of characteristic zero. Let $k>1$ be any natural number. Let $P(T, \vecx, z) \in \F[T, \vecx, z]$  and $R_1(T, \vecx), R_2(T, \vecx), \ldots, R_{\ell}(T, \vecx) \in \F[T, \vecx]$ be polynomials with the following properties. 
\begin{itemize}
    \item $P(T, \vecx, z)$ is computable by a depth $\Delta$ circuit of size $s\leq \poly(n)$
    \item $P(T, \vecx, z)$ is $T$-regularized and is monic in $z$. 
    \item For every $i\in [\ell]$, $R_i(T,\vecx)$ is a truncated, non-degenerate, approximate $z$-root of $P(T,\vecx,z)$ of order $k$.
\end{itemize}

Let $Q(T,\vecx,z) := \left(\prod_{i \in [\ell]}(z - R_i(T, \vecx)) \right) \trunc T^k$.

Let $g$ be a $\sigma$-variate degree $d$ polynomial and $\mathcal{S}$ be a $(n,\sigma,\mu,\rho)$-design. Let $\ki_{g,\mathcal{S}}: \F^{\mu} \to \F^{n}$ be the polynomial map in \autoref{def:KI-generator} defined using $g$ and $\mathcal{S}$. Then, the following is true.

If $Q(T,\vecx,z)$ does not divide $P(T,\vecx, z)$ but $Q(T,\ki_{g,\mathcal{S}}(\vecw),z)$ divides $P(T,\ki_{g,\mathcal{S}}(\vecw),z)$, then $g$ can be computed by an algebraic circuit of depth $\Delta' = O(\Delta)$ and size at most
\[
    s' = \poly(s,\ell, k, \deg(P)) \cdot (\rho \log k)^{\poly(d)} \, .   
\]

In particular, if any depth-$\Delta'$ circuit for $g$ requires size greater than $s'$, then $Q(T,\vecx,z)$ divides $P(T,\vecx, z)$ if and only if $Q(T,\ki_{g,\mathcal{S}}(\vecw),z)$ divides $P(T,\ki_{g,\mathcal{S}}(\vecw),z)$
\end{theorem}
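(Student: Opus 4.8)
The plan is to recast the divisibility ``$Q \mid P$'' as a single PIT instance via \autoref{cor:constant-depth-PIT-instance-for-conjugates-refined}, to note that this instance is \emph{universal} — the circuit it produces depends only on the degree data $\deg_z(P), \ell, k$ and the interpolation scalars, and not on $P$ or the $R_i$ themselves — and then to feed the resulting polynomial into the hybrid argument of \autoref{thm:technical-theorem-2}. First I would apply \autoref{cor:constant-depth-PIT-instance-for-conjugates-refined} to $P$ and $R_1, \dots, R_\ell$ (valid since $\deg_T(R_i) < k$), choosing the interpolation scalars $\alpha_1, \dots, \alpha_N$, $N = \poly(\ell, k, \deg(P))$, to be nonzero, and then fold into the resulting circuit the depth-$(\Delta+O(1))$, size-$\poly(s,\deg(P))$ circuits for the coefficients $P_0(T,\vecx), \dots, P_{\deg_z(P)}(T,\vecx)$ of $P$ supplied by \autoref{cor:interpolation-consequences}. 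This yields a circuit $\widetilde C(T,\vecx,z,\vecu)$ of depth $\Delta + O(1)$ and size $\poly(s,\deg(P),\ell,k)$ on $T,\vecx,z$ and $\ell N$ extra variables $\vecu$ such that, writing $\mathcal R := (R_i(\alpha_j T,\vecx))_{i\in[\ell],\,j\in[N]}$ and $C'(T,\vecx,z) := \widetilde C(T,\vecx,z,\mathcal R)$, we have $Q(T,\vecx,z) \mid P(T,\vecx,z) \iff C'(T,\vecx,z) \equiv 0$.

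Next I would argue that this construction commutes with $\ki_{g,\mathcal S}$. Because $\ki_{g,\mathcal S}$ is a ring homomorphism fixing $T$ and $z$, it sends the coefficients of $P$ (in $z$) to those of $P(T,\ki_{g,\mathcal S}(\vecw),z)$ and sends $R_i(\alpha_j T,\vecx)$ to $(\ki_{g,\mathcal S}R_i)(\alpha_j T,\vecw)$; by \autoref{lem:preserving-root-properies-under-homomorphism}, each $\ki_{g,\mathcal S}R_i$ is a truncated, non-degenerate, approximate $z$-root of order $k$ of $P(T,\ki_{g,\mathcal S}(\vecw),z)$, which stays $T$-regularized and monic in $z$. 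Hence running \autoref{cor:constant-depth-PIT-instance-for-conjugates-refined} on $P(T,\ki_{g,\mathcal S}(\vecw),z)$ and the $\ki_{g,\mathcal S}R_i$ with the \emph{same} scalars $\alpha_j$ reproduces exactly $C'(T,\ki_{g,\mathcal S}(\vecw),z)$, so $Q(T,\ki_{g,\mathcal S}(\vecw),z)\mid P(T,\ki_{g,\mathcal S}(\vecw),z) \iff C'(T,\ki_{g,\mathcal S}(\vecw),z)\equiv 0$. Thus the theorem's hypothesis becomes ``$C'\not\equiv 0$ but $C'\circ\ki_{g,\mathcal S}\equiv 0$''.

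Finally I would reduce to \autoref{thm:technical-theorem-2}. The degree of $C'$ is $\poly(s,\ell,k,\deg(P))$ — bounded by the sizes of the circuits it is assembled from, using that the Newton-iteration circuits for the $R_i$ have size $\poly(s,\deg(P),k)$ by \autoref{lem:newton-iteration-linear-circuit-and-uniqueness} — so as $\F$ is infinite \autoref{lem:SZ} supplies $\zeta\in\F$ with $C'(T,\vecx,\zeta)\not\equiv 0$, and then $C'(T,\ki_{g,\mathcal S}(\vecw),\zeta)\equiv 0$ since $C'(T,\ki_{g,\mathcal S}(\vecw),z)$ is the zero polynomial in $z$ as well. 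Now I would invoke \autoref{thm:technical-theorem-2} with the $\ell N$ polynomials $A_{(i,j)} := P(\alpha_j T,\vecx,z)$ (each $T$-regularized, monic in $z$, and computed by a depth-$\Delta$ size-$O(s)$ circuit, since scaling $T$ by a nonzero scalar preserves all of this), their roots $\Phi_{(i,j)} := R_i(\alpha_j T,\vecx)$ (valid truncated, non-degenerate, approximate $z$-roots of order $k$, again by \autoref{lem:preserving-root-properies-under-homomorphism}), and the circuit $\widetilde C(T,\vecx,\zeta,\vecu)$ of depth $\Delta + O(1)$ and size $\poly(s,\deg(P),\ell,k)$, whose composition with the $\Phi_{(i,j)}$'s equals $C'(T,\vecx,\zeta)$ — which is nonzero but collapses under $\ki_{g,\mathcal S}$. \autoref{thm:technical-theorem-2} then returns a depth-$O(\Delta)$ circuit for $g$ of size $\poly(s,\ell,k,\deg(P))\cdot(\rho\log k)^{\poly(d)}$, and the ``in particular'' clause is its contrapositive.

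The step I expect to be the main obstacle is the commutation in the second paragraph: showing that composing the PIT instance built for $(P, R_i)$ with $\ki_{g,\mathcal S}$ is literally the PIT instance built for $(\ki_{g,\mathcal S}P, \ki_{g,\mathcal S}R_i)$. This relies on the $\DivTest$ circuit of \autoref{lem:divisibility-test-to-PIT} depending only on the degree parameters, on $\ki_{g,\mathcal S}$ fixing $T$ and $z$, and on \autoref{lem:preserving-root-properies-under-homomorphism} keeping the $R_i$ valid approximate roots after substitution — plus the routine but fiddly bookkeeping that folding in the coefficient circuits, scaling $T$, and multiplying the number of ``root'' arguments by $N$ all keep everything within the claimed polynomial size.
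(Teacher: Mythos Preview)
Your proposal is correct and follows essentially the same route as the paper: reduce divisibility to the PIT instance of \autoref{cor:constant-depth-PIT-instance-for-conjugates-refined}, observe that this instance commutes with $\ki_{g,\mathcal S}$ (since the $\DivTest$ circuit depends only on degree data and $\ki$ fixes $T,z$), and then invoke \autoref{thm:technical-theorem-2} with the polynomials $P(\alpha_j T,\vecx,z)$ and their approximate roots $R_i(\alpha_j T,\vecx)$. Your extra step of specializing $z=\zeta$ before applying \autoref{thm:technical-theorem-2}, and your explicit justification of the degree bound on $C'$ via \autoref{lem:newton-iteration-linear-circuit-and-uniqueness} (using uniqueness of approximate roots), are details the paper leaves implicit, but the argument is the same.
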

\begin{proof} 
    If $Q(T,\vecx,z)$ divides $P(T,\vecx,z)$, then $Q(T,\ki_{g,\mathcal{S}}(\vecw),z)$ certainly divides $P(T,\ki_{g,\mathcal{S}}(\vecw),z)$. 
    
    Suppose $Q(T,\vecx,z)$ does not divide $P(T,\vecx,z)$. \Cref{cor:constant-depth-PIT-instance-for-conjugates-refined} tells us that equivalently, there exists a circuit $C$ on at most $\poly(\ell, k, \deg(P))$ variables, and size and degree at most $\poly(\ell, k, \deg(P))$, that is computable by a depth $(\Delta+O(1))$ circuit such that 
    \[
        C\left(z,P_0(T,\vecx), \ldots, P_{\deg_z(P)}(T,\vecx), \mathcal{R}\right) \not\equiv 0 \, , 
    \]
    where, for every $i$, $P_i(T,\vecx)$ is the coefficient of $z^i$ in $P$ when viewing it as a univariate in $z$ with coefficients in the ring $\F[T,\vecx]$, and $\mathcal{R} = \left(R_i(\alpha_j T, \vecx) : i \in [\ell], j \in [\poly(s,k,\deg(P))]\right)$, with each $\alpha_j$ being an element of $\F$.

    If $Q(T,\ki_{g,\mathcal{S}}(\vecw),z)$ divides $P(T,\ki_{g,\mathcal{S}}(\vecw),z)$, then 
    \[
        C\left(z,P_0(T,\vecx)\circ \ki_{g,\mathcal{S}}(\vecw), \ldots, P_{\deg_z(P)}(T,\vecx)\circ \ki_{g,\mathcal{S}}(\vecw), \mathcal{R}\circ \ki_{g,\mathcal{S}}(\vecw)\right) \equiv 0 \, , 
    \]
    where $\mathcal{R}\circ \ki_{g,\mathcal{S}}(\vecw) := \{R\circ \ki_{g,\mathcal{S}}(\vecw): R \in \mathcal{R}\}$. Observe that \[ C\left(z,P_0(T,\vecx)\circ \ki_{g,\mathcal{S}}(\vecw), \ldots, P_{\deg_z(P)}(T,\vecx)\circ \ki_{g,\mathcal{S}}(\vecw), \mathcal{R}\circ \ki_{g,\mathcal{S}}(\vecw)\right) \] is the same as \[ C\left(z,P_0(T,\vecx), \ldots, P_{\deg_z(P)}(T,\vecx), \mathcal{R}\right) \circ \ki_{g,\mathcal{S}}(\vecw). \]
    Moreover, \cref{lem:preserving-root-properies-under-homomorphism} tells us that each $R_i(\alpha_jT,\vecx)$ is a truncated, non-degenerate, approximate $z$-root of $P(\alpha_jT,\vecx,z)$ of order $k$, where $P(\alpha_jT,\vecx,z)$ is $T$-regularized and monic in $z$.

    Thus, we can apply \cref{thm:technical-theorem-2} to $C\left(z,P_0(T,\vecx), \ldots, P_{\deg_z(P)}(T,\vecx), \mathcal{R}\right)$ to conclude that $g$ can be computed by an algebraic circuit of depth $\Delta' = O(\Delta)$ and size $s'$ which is at most 
    \[
        s' := \poly(s,\ell, k, \deg(P)) \cdot (\rho \log k)^{\poly(d)} \, .   
    \]

    The contrapositive tells us that if every depth-$\Delta'$ circuit for $g$ requires size greater than $s'$ obtained as an upper bound above, then $Q(T,\vecx,z)$ divides $P(T,\vecx, z)$ if and only if $Q(T,\ki_{g,\mathcal{S}}(\vecw),z)$ divides $P(T,\ki_{g,\mathcal{S}}(\vecw),z)$.
\end{proof}

\subsection{Proof of \autoref{thm:irreducibility-preservation}}
We are now ready to prove \autoref{thm:irreducibility-preservation}. We start by recalling some notation that we use. 

Let $\mathcal{G}=\{g_m\}_{m\in \N}$ be a family of polynomials such that for every $m\in \N$, $g_m \in \F[x_1, \dots, x_m]$, $d_m := \deg(g_m) \leq O(\log \log (m))$. Further, $\mathcal{G}$ has the property that for any depth $\Delta \in \N$, if $\mathcal{C} = \{C_m\}_{m\in \N}$ is a family of depth-$\Delta$ circuits computing $\mathcal{G}$, then $\mathcal{C}$ requires size $m^{d_m^{\exp(-O(\Delta))}}$, which is $m^{\omega(1)}$. \cref{LST-hardness} gives us such a family of explicit low-degree polynomials that are hard for constant-depth circuits.

\irredpreserve* 
\begin{proof}
     Suppose $P(T,\vecx,z) = \prod_{i=1}^D(z-\varphi_i(T,\vecx))$, where each $\varphi_i(T,\vecx)$ is a power series root from $\F[\vecx]\llbracket T \rrbracket$. Then, by \autoref{cor:identifying-conjugates-to-divisibility-testing-III}, we have that for any subset $S\subseteq [D]$, $F_S(T,\vecx,z) := \prod_{i\in S} (z - \varphi_i(T,\vecx))$ is an irreducible factor of $P(T,\vecx,z)$ in $\F[T,\vecx,z]$ if and only if 
    \begin{itemize}
         \item $Q_{S}(T,\vecx,z)$ divides $P(T,\vecx,z)$, where $Q_S = F_S \trunc T^k$ and
         \item for every $\emptyset \neq U \subsetneq S$ such that $\prod_{i\in U}(z - \alpha_i)$ has all coefficients in $\F$, we have that $Q_U(T, \vecx, z) \in \F[T, \vecx, z]$ does not divide $P(T, \vecx, z)$
     \end{itemize}
     For the given parameters, one can verify that the upper bound $s'$ given in \cref{thm:ki-preserves-divisibility} is at most $\poly(n)$, whereas $g_\sigma$ requires size $n^{\omega(1)}$ for any constant-depth circuit\footnote{We chose $n$ sufficiently large enough in the theorem statement ($n>C$), which implies that for this particular $\sigma$, $g_\sigma$ requires constant-depth circuits larger than $s'$.}. Thus, we can apply \autoref{thm:ki-preserves-divisibility} on both the conditions above, to equivalently state that for any subset $S\subseteq [D]$, $F_S(T,\vecx,z) = \prod_{i\in S} (z - \varphi_i(T,\vecx))$ is an irreducible polynomial factor of $P(T,\vecx,z)$ if and only if 
     \begin{itemize}\label{list:irreducible-factors}
         \item $Q_{S}(T,\ki_{g_{\sigma},\mathcal{S}}(\vecw),z)$ divides $P(T,\ki_{g_{\sigma},\mathcal{S}}(\vecw),z)$, and
         % \item for every strict subset $U\subset S$, $Q_U(T,\ki_{g_{\sigma},\mathcal{S}}(\vecw),z) \nmid P(T,\ki_{g_{\sigma},\mathcal{S}}(\vecw),z)$
         \item for every $\emptyset \neq U \subsetneq S$ such that $\prod_{i\in U}(z - \alpha_i)$ has all coefficients in $\F$, we have that $Q_U(T,\ki_{g_{\sigma},\mathcal{S}}(\vecw),z)$ does not divide $P(T,\ki_{g_{\sigma},\mathcal{S}}(\vecw),z)$
     \end{itemize}
     \autoref{lem:preserving-root-properies-under-homomorphism} tells us that $P(T, \ki_{g_{\sigma},\mathcal{S}}(\vecw), z)$ will remain a nonzero squarefree\footnote{Squarefreeness is equivalent to each root being non-degenerate, which is maintained by \autoref{lem:preserving-root-properies-under-homomorphism}.} polynomial that is monic in $z$ and $T$-regularized, with $\deg_z P = D$. Furthermore, $P(0,\ki_{g_{\sigma},\mathcal{S}}(\vecw),z) = P(0,\mathbf{0},z)$ will be squarefree. Thus, we can apply \autoref{cor:identifying-conjugates-to-divisibility-testing-III} to the above two conditions to get that for any subset $S\subseteq [D]$, $F_S(T,\vecx,z) = \prod_{i\in S} (z - \varphi_i(T,\vecx))$ is an irreducible polynomial factor of $P(T,\vecx,z)$ if and only if
     \begin{itemize}
         \item $\prod_{i\in S} (z - \varphi_i(T,\ki_{g_{\sigma},\mathcal{S}}(\vecw))) \in Q[T,\vecw,z]$, and
         \item for every $\emptyset \neq U \subsetneq S$ such that $\prod_{i\in U}(z - \alpha_i)$ has all coefficients in $\F$, we have that $\prod_{i\in T} (z - \varphi_i(T,\ki_{g_{\sigma},\mathcal{S}}(\vecw))) \not\in \Q[T,\vecw,z]$
     \end{itemize} 
     These conditions are true if and only if $F_S(T,\ki_{g_{\sigma},\mathcal{S}}(\vecw),z) = \prod_{i\in S} (z - \varphi_i(T,\ki_{g_{\sigma},\mathcal{S}}(\vecw)))$ is an irreducible factor of $P(T,\ki_{g_{\sigma},\mathcal{S}}(\vecw),z)$; this concludes the theorem.
\end{proof}

\section{Proof of correctness of the algorithm} \label{sec:analysis-of-algorithms}

We will analyze the correctness of our algorithms in a bottom-up fashion, starting with \cref{alg:univariate-to-factor} and ending with \cref{alg:all-factors}, which will prove \cref{thm:main-algorithm}.

\subsection{Analysis of {\Cref{alg:univariate-to-factor}}}

Suppose $P(T,\vecx,z) = \prod_{i \in [m]}G_i(T,\vecx,z)$, where $P(T,\vecx,z)$ is monic in $z$, $T$-regularized with $\deg_T(P) \leq \deg(P) := D$ and $\deg_z(P) = D_z$. $P$ is squarefree and $P(0,\veczero,z)$ is squarefree as well. The $G_i$s are the irreducible factors of $P$. 

Let $G(T,\vecx,z)$ denote an arbitrary $G_i$. Suppose $G(0,\veczero,z) = \prod_{j \in [r]} H_j(z)$ is the factorization of $G(0,\veczero,z)$ into its irreducible factors. Since $P(0,\veczero,z)$ is squarefree, each $H_j$ is distinct. If any of the $H_j(z)$ had degree 1, we could've lifted that root via Newton iteration and proceeded with the rest of the algorithm. We will now describe how we deal with the absence of roots of $G(0,\veczero,z)$ in $\Q$ using ideas that are standard in the literature (for instance, see \cite[Section 6.2]{DSS22-closure}).

Let $H(z)$ be an arbitrary irreducible factor of $G(0,\veczero,z)$ and define the field $\mathbb{K} := \frac{\Q[u]}{H(u)}$ for a new variable $u$. Thus, we are artificially adding a root of $H(u)$ to our field. Every operation in this field is polynomial addition or multiplication over $\Q$ with the variable $u$, followed by taking the reminder $\mod H(u)$. 

We can now use Newton iteration (\cref{lem:newton-iteration-linear}) to compute a truncated, non-degenerate, approximate $z$-root $\varphi(T,\vecx) \in \mathbb{K}[T,\vecx]$ satisfying $G(T,\vecx,\varphi(T,\vecx)) \equiv 0 \mod \inangle{T}^{2D\cdot D_z + 1}$ and $\varphi(0,\vecx) = u \in \mathbb{K}$. The following lemma, standard in the factorization literature (see \cite[Lemma 3.6]{B04}), tells us that $G$ is the unique minimal polynomial of $\varphi$ with the above constraints.

\begin{lemma}\label{lem:unique-minimal-polynomial}
    Let $G(T,\vecx,z) \in \Q[T,\vecx,z]$ be an irreducible polynomial, monic in $z$ with $T$-degree at most $D$ and $z$-degree exactly $D_z$. Let $\varphi(T,x) \in \mathbb{K}[T,\vecx]$ be an approximate root of $G$ of order $2D\cdot D_z + 1$, satisfying $G(T,\vecx,\varphi(T,\vecx)) \equiv 0 \mod \inangle{T}^{2D\cdot D_z + 1}$.
    
    Now, suppose $B(T,z) \in \mathbb{Q}(\vecx)[T,z]$ is monic in $z$ with $T$-degree at most $D$ and $z$-degree exactly $D_z$ satisfying:
    \[
        B(T,\varphi(T,\vecx)) \equiv 0 \mod \inangle{T}^{2D\cdot D_z + 1}
    \]
    Then $B \equiv G$. 
\end{lemma}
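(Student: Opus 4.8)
The goal is to show that an irreducible monic $G$ of $z$-degree exactly $D_z$ is the \emph{unique} such polynomial (with $T$-degree $\le D$, coefficients in $\Q(\vecx)$) that vanishes on $\varphi$ modulo $\inangle{T}^{2DD_z+1}$. The natural route is: given a competitor $B$, consider the resultant $\Res{z}{G}{B} \in \Q(\vecx)[T]$, show its $T$-degree is $\le 2DD_z$, show it is divisible by $T^{2DD_z+1}$, hence it is zero, and then conclude $\gcd_z(G,B)$ has positive $z$-degree; irreducibility of $G$ plus the degree constraint then forces $G \mid B$, and monic-ness plus equal $z$-degree forces $B = G$.

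First I would set $R(T,\vecx) := \Res{z}{G}{B}$, computed over the ring $\mathcal{R} = \Q(\vecx)[T]$ (or $\Q(\vecx)\llbracket T\rrbracket$), viewing $G, B$ as univariates in $z$. By \autoref{thm:resultants-gcd} there are $A_1, A_2 \in \mathcal{R}[z]$ with $R(T,\vecx) = A_1 G + A_2 B$; evaluating at $z = \varphi(T,\vecx)$ and using $G(T,\vecx,\varphi) \equiv 0$ and $B(T,\varphi) \equiv 0$ modulo $\inangle{T}^{2DD_z+1}$ gives $R(T,\vecx) \equiv 0 \bmod \inangle{T}^{2DD_z+1}$ — one must be slightly careful that $A_1, A_2$ are genuine polynomials/power series in $T$ so the congruence passes through, which holds since $G, B$ are monic in $z$ (so the Sylvester matrix construction stays inside $\Q(\vecx)[T]$). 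Next, the $T$-degree bound: each entry of the Sylvester matrix of $G$ and $B$ is a coefficient of $G$ or $B$ viewed in $z$, hence has $T$-degree $\le D$; the matrix has size $(D_z + D_z) \times (2D_z)$... wait, it has size $(\deg_z G + \deg_z B) = 2D_z$, so its determinant is a sum of products of $2D_z$ entries, giving $\deg_T(R) \le 2D_z \cdot D = 2DD_z$. Since $R$ is a polynomial in $T$ of degree $\le 2DD_z$ that is divisible by $T^{2DD_z+1}$, we get $R \equiv 0$.

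Then by \autoref{thm:resultants-gcd} (the UFD $\mathcal{R} = \Q(\vecx)[T]$, or better, pass to $\mathcal{K} = \Q(T,\vecx)$ and use that the resultant over the fraction field also vanishes), $\deg_z \gcd(G,B) \ge 1$ in $\mathcal{K}[z]$. Since $G$ is irreducible over $\Q[T,\vecx,z]$ and monic in $z$, by Gauss' lemma (\autoref{lem:gauss}) it is irreducible over $\Q(T,\vecx)[z]$, so its only monic factors of positive $z$-degree in $\Q(T,\vecx)[z]$ are scalar multiples of itself; thus $\gcd(G,B) = G$ up to a unit, i.e. $G \mid B$ in $\Q(T,\vecx)[z]$. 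Because $\deg_z B = D_z = \deg_z G$ and both are monic in $z$, the quotient $B/G$ is a monic degree-$0$ element, so $B = G$. (One should double-check the $T$-degree hypothesis on $B$ is actually used, or note it is subsumed — it guarantees $B \in \mathcal{R}[z]$ with the same degree bounds so the resultant degree estimate is symmetric; strictly, only the degree bound on $G$ is needed for the estimate $\deg_T R \le 2DD_z$ once we also use $\deg_T B \le D$.)

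**Main obstacle.** The delicate point is the passage of the congruence $R = A_1 G + A_2 B \equiv 0 \bmod \inangle{T}^{2DD_z+1}$: one must ensure $A_1(T,\vecx,z)$ and $A_2(T,\vecx,z)$ lie in $\Q(\vecx)[T,z]$ (no negative powers of $T$ introduced) and have controlled $z$-degree ($< \deg_z B$ and $< \deg_z G$ respectively) so that substituting $z = \varphi(T,\vecx)$ — itself a polynomial in $T,\vecx$ — is legitimate and the resulting power series in $T$ genuinely starts at order $2DD_z+1$. This is where monic-ness of both $G$ and $B$ in $z$ is essential, and it is the step I'd write most carefully; everything else is bookkeeping on degrees and an invocation of Gauss' lemma. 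I would also remark that this lemma is standard (citing \cite[Lemma 3.6]{B04}), so a short self-contained proof along these lines suffices.
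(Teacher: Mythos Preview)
Your proposal is correct and follows essentially the same argument as the paper: compute $\Res{z}{G}{B}$, use the Bezout-type identity to show it vanishes modulo $T^{2DD_z+1}$, bound its $T$-degree by $2DD_z$ via the Sylvester determinant, conclude the resultant is zero so $\gcd_z(G,B)$ is nontrivial, and finish with Gauss' lemma plus the matching $z$-degrees. Your ``main obstacle'' is a non-issue once you invoke \autoref{thm:resultants-gcd} over the UFD $\Q(\vecx)[T]$, which already guarantees $A_1,A_2 \in \Q(\vecx)[T,z]$; the paper does not dwell on this point either.
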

\begin{proof}
    Consider $R(T) := \Res{z}{B(T,z)}{G(T,z)} \in \Q(\vecx)[T]$. By \cref{thm:resultants-gcd}, there exist polynomials $A_B(T,z), A_G(T,z) \in \Q(\vecx)[T,z]$ such that 
    \[
        R(T) \equiv A_B(T,z)B(T,z) + A_G(T,z)G(T,z) \, . 
    \]
    If we plug in $z = \varphi$, both $B(T,\varphi)$ and $G(T,\varphi)$ vanish $\bmod \inangle{T}^{2D\cdot D_z + 1}$. Thus, $R(T) \equiv 0 \mod  \inangle{T}^{2D\cdot D_z + 1}$. But the definition of the Resultant tells us that $R(T)$ has $T$-degree at most $2D\cdot D_z$, which means that $R(T)$ must be identically zero. By \cref{thm:resultants-gcd}, this implies that $\gcd(B,G)$ is non-trivial in $\Q(T, \vecx)[z]$, but since both $B$ and $G$ are monic in $z$, \cref{lem:gauss} tells us that $\gcd(B,G)$ is non-trivial in $\Q[T, \vecx,z]$. Since $G$ is irreducible, $G$ must divide $B$. Moreover, $\deg_z(B) = \deg_z(G)$, which implies that $B \equiv G$. 
\end{proof}

\subsubsection{Linear system for computing the unique minimal polynomial}

Let $\mathcal{G}=\{g_m\}_{m\in \N}$ be a family of polynomials such that for every $m\in \N$, $g_m \in \F[x_1, \dots, x_m]$, $d_m := \deg(g_m) \leq O(\log \log (m))$. Further, $\mathcal{G}$ has the property that for any depth $\Delta \in \N$, if $\mathcal{C} = \{C_m\}_{m\in \N}$ is a family of depth-$\Delta$ circuits computing $\mathcal{G}$, then $\mathcal{C}$ requires size $m^{d_m^{\exp(-O(\Delta))}}$, which is $m^{\omega(1)}$. \cref{LST-hardness} gives us such a family of explicit low-degree polynomials that are hard for constant-depth circuits.

\begin{theorem}[Small division-free circuit for the minimal polynomial of an approximate root]\label{thm:algo-circuit-min-poly-of-approx-root}
    Fix any $\Delta \in \N$ and $\varepsilon \in (0,0.5)$. For an absolute constant $C_{\Delta,\varepsilon} \in \N$, let $n > C_{\Delta,\varepsilon}$ and $\vecx = (x_1, \dots, x_n)$. Let $P(T,\vecx,z)$ be a polynomial with the following properties.
    \begin{itemize}
        \item $P$ is $T$-regularized and monic in $z$ with total degree $D$.
        \item $G(T,\vecx,z)$ is an irreducible factor of $P$ with $z$-degree $D_z$.
        \item $H(z)$ is an irreducible factor of $G(0,\vecx,z)$, and $\K$ is the field $\frac{\Q[u]}{H(u)}$.
    \end{itemize}
    Then, there is a deterministic algorithm $\mathcal{A}_{\Delta,\varepsilon}$ (with access to $H(z)$) which 
    \begin{itemize}
        \item takes as input a size $s$ circuit computing $\varphi(T,\vecx) \in \K[T,\vecx]$, a truncated, approximate $z$-root of $P(T,\vecx,z)$ of order $(2D\cdot D_z + 1)$ such that $\varphi(0,\vecx)=u$ is a root of $G(0,\vecx,z)$ over $\K$;
        \item outputs a division-free circuit over $\Q$ for $G(T,\vecx,z)$ with size $\poly(s,D)$;
        \item and runs in time $\poly(s,D)^{O(n^{2\varepsilon})}$.
    \end{itemize} 
\end{theorem}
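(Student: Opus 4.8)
The plan is to carry out steps~6--7 of the overview in \autoref{sec:algo-overview}: set up a linear system over $\Q(\vecx)$ whose \emph{unique} solution is the coefficient vector of $G(T,\vecx,z)$, solve it by Cramer's rule to obtain a circuit for $G$ with all divisions by a single nonzero determinant, and then remove those divisions via \autoref{lem:div-elimination-algorithm}, which requires a $\Q$-point at which the determinant does not vanish. The generator $\ki_{g_\sigma,\mathcal{S}}$ together with \autoref{thm:irreducibility-preservation} enters only to exhibit such a point deterministically.

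\textbf{The linear system.} First convert the given size-$s$ circuit for $\varphi\in\K[T,\vecx]$ into $e:=\deg H$ circuits over $\Q$ computing the coordinates $\varphi^{(0)},\dots,\varphi^{(e-1)}\in\Q[T,\vecx]$ of $\varphi$ in the $\Q$-basis $\{1,u,\dots,u^{e-1}\}$ of $\K$, by simulating every $\K$-gate in this basis (a size blow-up of $O(e^2)=O(D^2)$). For $a=0,\dots,D_z$, build circuits over $\Q$, of size $\poly(s,D)$, for the coordinates of $\varphi^{a}\trunc T^{2DD_z+1}$, using \autoref{cor:interpolation-consequences} for the truncation. With unknowns $c_{a,b}$ ($0\le a<D_z$, $0\le b\le D$) standing for the coefficients of the candidate $B(T,\vecx,z)=z^{D_z}+\sum_{a,b}c_{a,b}T^{b}z^{a}$, impose $B(T,\vecx,\varphi)\equiv 0\bmod T^{2DD_z+1}$ and collect, for each $0\le p\le 2DD_z$ and $0\le l<e$, the $\Q$-coefficient of $T^{p}u^{l}$: this yields a linear system $M(\vecx)\,\vec{c}=\vec{b}(\vecx)$ with $(2DD_z+1)\,e$ equations in $N_2:=D_z(D+1)$ unknowns, whose entries are circuits over $\Q[\vecx]$ of size $\poly(s,D)$. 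Now $\varphi$ is a non-degenerate approximate $z$-root of the irreducible $G$ of order $2DD_z+1$ --- since $P(0,\veczero,z)$ is squarefree and $\varphi(0,\vecx)=u$ is a simple root of $G(0,\veczero,z)$, the unique power-series lift of $u$ is a root of $G$, which $\varphi$ matches modulo $T^{2DD_z+1}$ --- so by \autoref{lem:unique-minimal-polynomial} (over $\Q$) the system has a unique solution over $\Q(\vecx)$, equal to the coefficient vector of $G$; in particular $M$ has full column rank $N_2$. Working over $\Q$ rather than $\K$ is essential: over $\K$ the polynomial $G$ may split, and then a monic candidate of $z$-degree exactly $D_z$ need not be unique.

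\textbf{Rank preservation and the point.} Apply $\vecx\mapsto\ki_{g_\sigma,\mathcal{S}}(\vecw)$ to every entry of $M$ and of $\vec{b}$; since $\deg g_\sigma=O(\log\log n)$ and $\mu=O(n^{2\varepsilon}/\log n)$, this gives a system in $\mu$ variables with entries of degree and circuit-size $\poly(s,D)\le\poly(n)$. The composed system still has $G(T,\ki_{g_\sigma,\mathcal{S}}(\vecw),z)$ as a solution: by \autoref{thm:irreducibility-preservation} this is an irreducible factor of $P(T,\ki_{g_\sigma,\mathcal{S}}(\vecw),z)$, monic in $z$ of $z$-degree $D_z$, with $G(0,\ki_{g_\sigma,\mathcal{S}}(\vecw),z)=G(0,\veczero,z)$, and by \autoref{lem:preserving-root-properies-under-homomorphism} the polynomial $\varphi(T,\ki_{g_\sigma,\mathcal{S}}(\vecw))$ is a non-degenerate approximate $z$-root of it of order $2DD_z+1$; hence \autoref{lem:unique-minimal-polynomial} again forces uniqueness, so $M(\ki_{g_\sigma,\mathcal{S}}(\vecw))$ has full column rank $N_2$ over $\Q(\vecw)$. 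We then brute-force over a grid $S^{\mu}$ with $|S|>\poly(s,D)$ (justified by \autoref{lem:SZ}) to find $\vec{v}\in\Q^{\mu}$ with $\operatorname{rank}_{\Q}M(\ki_{g_\sigma,\mathcal{S}}(\vec{v}))=N_2$, and with it a set $I$ of $N_2$ rows whose submatrix is nonsingular at $\ki_{g_\sigma,\mathcal{S}}(\vec{v})$; such a $\vec{v}$ exists because the column rank over $\Q(\vecw)$ is $N_2$, and the search runs in time $|S|^{\mu}\cdot\poly(s,D)\le\poly(s,D)^{O(n^{2\varepsilon})}$.

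\textbf{Finishing, and the main obstacle.} Set $\Delta(\vecx):=\det M_I(\vecx)\in\Q[\vecx]$; it is nonzero because $\Delta(\ki_{g_\sigma,\mathcal{S}}(\vec{v}))\ne 0$. Using a standard division-free polynomial-size circuit for the determinant, construct $\poly(s,D)$-size division-free circuits over $\Q$ for $\Delta$ and for the Cramer numerators, and assemble a $\poly(s,D)$-size division-free circuit for $A(T,\vecx,z):=\Delta(\vecx)\bigl(z^{D_z}+\sum_{a,b}c_{a,b}T^{b}z^{a}\bigr)=\Delta(\vecx)\cdot G(T,\vecx,z)$, where the $c_{a,b}$ are the Cramer ratios (so the bracket is exactly $G$, since the unique solution of the subsystem on rows $I$ is the coefficient vector of $G$). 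As $\Delta\mid A$, $\deg(A/\Delta)=\deg G\le D$, and $\Delta$ is nonzero at $\vecu:=\ki_{g_\sigma,\mathcal{S}}(\vec{v})\in\Q^{n}$, \autoref{lem:div-elimination-algorithm} outputs in time $\poly(s,D)$ a division-free circuit over $\Q$ of size $\poly(s,D)$ for $G=A/\Delta$; the total running time is dominated by the grid search, $\poly(s,D)^{O(n^{2\varepsilon})}$. (Throughout we use the standing hypotheses of \autoref{alg:univariate-to-factor} --- $P$ and $P(0,\veczero,z)$ squarefree, $P$ computed by a $\poly(n)$-size depth-$\Delta$ circuit --- under which \autoref{thm:irreducibility-preservation} applies.) The delicate point is the middle step: one must pose the minimal-polynomial system over $\Q$ to have a well-defined unique solution at all, and one must know that $\ki_{g_\sigma,\mathcal{S}}$ does not destroy the full column rank of $M$, i.e.\ does not kill the Cramer denominator. \autoref{thm:irreducibility-preservation} is exactly what guarantees this --- it keeps $G\circ\ki_{g_\sigma,\mathcal{S}}$ irreducible of the same $z$-degree, hence (via \autoref{lem:unique-minimal-polynomial}) keeps the composed system uniquely solvable, so that some $N_2\times N_2$ minor of the composed matrix, and therefore the corresponding minor $\Delta$ of $M$, is a nonzero polynomial, supplying the point needed for Strassen's division elimination. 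Everything else --- coefficient extraction, the $\Q$-simulation of $\K$-arithmetic, the determinant circuits, and the grid search --- is routine.
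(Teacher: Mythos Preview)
Your argument is correct and follows essentially the same route as the paper: set up the minimal-polynomial linear system over $\Q[\vecx]$ (by expanding $\K$-arithmetic in the basis $1,u,\dots,u^{e-1}$), use \autoref{lem:unique-minimal-polynomial} for uniqueness hence full column rank, invoke \autoref{thm:irreducibility-preservation} to show the rank survives composition with $\ki_{g_\sigma,\mathcal{S}}$, brute-force a non-degenerate evaluation point in $\mu=O(n^{2\varepsilon})$ variables, and finish with \autoref{lem:div-elimination-algorithm}. The only cosmetic difference is how you pass to a square system: the paper multiplies by $M_\varphi^{T}$ and works with the single polynomial $\det(M_\varphi^{T}M_\varphi)$ as denominator, whereas you locate, at the evaluation point, an $N_2\times N_2$ nonsingular submatrix $M_I$ and use $\det M_I$; both choices are standard and yield the same bounds.
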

\begin{proof}
We would like to compute a polynomial $B(T,z) \in \Q(\vecx)[T,z]$ that is monic in $z$ with $T$-degree at most $D$ and $z$-degree exactly $D_z$ satisfying:
\[
    B(T,\varphi(T,\vecx)) \equiv 0 \mod \inangle{T}^{2D\cdot D_z + 1} \,
\]
where $\varphi(T,\vecx) \in \mathbb{K}[T,\vecx]$ is computed by a circuit with constants from $\K$. \Cref{lem:unique-minimal-polynomial} guarantees that such a polynomial $B$ has to be the irreducible factor $G$.  

We can separate $B$ into its bihomogeneous components with respect to $T$ and $z$ as \[ B(T,z) = \sum_{i=0}^{d}\sum_{j=0}^{D} B_{i,j} z^i T^j \, \]
where each $B_{i,j}$ can take a value in the field $\Q(\vecx)$. Since $B$ is monic in $z$, $B_{d,0} = 1$ and $B_{d,j} = 0$ for all $j>0$. 
The minimal polynomial satisfies
\[
    B(T,\varphi(T,\vecx)) = \sum_{i=0}^{d}\sum_{j=0}^{D} B_{i,j} (\varphi(T,\vecx))^i T^j \equiv 0 \mod \inangle{T}^{2D\cdot D_z + 1} 
\]
which can equivalently be written as $2D\cdot D_z + 1$ many linear constraints, where each constraint expresses that the coefficient of $T^l$ in $B(T,\varphi(T,\vecx))$ is zero, for some $l\in \{0, 1, \dots, 2D\cdot D_z\}$. Let $\varphi^{(i,j)}$ denote the coefficient of $T^j$ in $\varphi^i$. Thus, we have the following linear system in the variables $B_{i,j}$ for $i \in \{0, \dots, D\}$ and $j \in \{0, \dots, D_z\}$.
\begin{align}
    & B_{d,0} = 1 \\
    \forall j\in\{1, \dots, D\}: \, & B_{d,j} = 0 \\ 
    \forall l \in \{0,1,\dots,2D\cdot D_z + 1\}: \,  & \sum_{i=0}^{d}\sum_{j=0}^{l} B_{i,j} (\varphi)^{(i,l-j)} = 0 \label{eqn:linear-system-constraint-with-u}
\end{align}
At this point, the coefficients of each constraint come from $\mathbb{K}[\vecx]$, which means that we can interpret each constraint as a degree $(\deg(H)-1)$ polynomial in the variable $u$, with coefficients from $\Q[\vecx]$. Since the minimal polynomial of $u$ is $H$, it follows that $1, u, \dots, u^{\deg(H)-1}$ are linearly independent over $\Q$ and, in fact, over\footnote{If $\sum_i p_i(\vecx)u^i \equiv 0$, then $\sum_i p_i(\veczero)u^i = 0$.} $\Q[\vecx]$. Thus, for a constraint to be equal to zero, the coefficient of each $u^r$ (for $r \in \{0,1,\dots, \deg(H)-1\})$ must be identically zero in $\Q[\vecx]$. To compute the coefficient of $u^r$ in $(\varphi)^{(i,j)}$, we can apply \cref{cor:interpolation-consequences} and get a circuit of size $\poly(s)$; we will denote this circuit by $(\varphi)^{(i,j,r)}$. Thus, we can express the constraints of the form \cref{eqn:linear-system-constraint-with-u} using equivalent constraints of the form:
\[
    \sum_{i=0}^{d}\sum_{j=0}^{l} B_{i,j} (\varphi)^{(i,l-j,r)} = 0
\]
for every $r \in \{0, \dots, (\deg(H)-1)D_z\}$ and for every $l \in \{0,1,\dots,2D\cdot D_z + 1\}$. The linear system is now over $\Q[\vecx]$, and thus, any solution to the linear system is going to be over $\Q(\vecx)$.

We can express the linear system as $M_{\varphi} \vec{B} = \vec{c_\varphi}$ for a matrix $M_{\varphi}$ with entries in $\Q[\vecx]$, the vector of variables $\vec{B} = (B_{i,j})$ and a vector $\vec{c_\varphi} \in (\Q[\vecx])^n$. Since the linear system has a unique solution (\cref{lem:unique-minimal-polynomial}), $M_\varphi$ has full column rank. Using some basic linear algebra (for instance, see \cite[Lemma B.6]{KRSV}), we can rewrite the linear system as $M_{\varphi}^T M_{\varphi} \vec{B} = M_{\varphi}^T \vec{c_\varphi}$, where $M_{\varphi}^T M_{\varphi}$ is an invertible square matrix. Thus, we can express the solution to this linear system as $\vec{B} = (M_{\varphi}^T M_{\varphi})^{-1} (M_{\varphi}^T \vec{c_\varphi})$. In particular, each $B_{i,j}$ can be expressed as $\frac{N_{i,j}(\vecx)}{\det(M_{\varphi}^T M_{\varphi})}$, where both the numerator and the denominator have circuits of size $\poly(s,D)$.\footnote{Here, we use the well-known fact that the Determinant can be computed efficiently.}

At this point, we would like to use Strassen's division elimination (\cref{lem:div-elimination-algorithm}) to write each $B_{i,j}$ as a circuit without division. Strassen's method requires that we find a point $\vec{\gamma} \in \Q^n$ such that $\det(M_\varphi^T M_\varphi)$ is non-zero at $\vec{\gamma}$, but this seems to require a hitting set for circuits since the naive upper bound that we can give for $\det(M_\varphi^T M_\varphi)$ is a small circuit. We will now show that the variable reduction map from \cref{thm:irreducibility-preservation} also preserves the nonzeroness of $\det(M_\varphi^T M_\varphi)$. 

Let $\sigma = O(n^\varepsilon),\mu = O(\frac{n^{2\varepsilon}}{\log(n)}),\rho = O(\log(n))$, and let $\mathcal{S}$ be an $(n,\sigma ,\mu ,\rho)$-design. Let $\ki_{g_{\sigma},\mathcal{S}}: \F^{\mu} \to \F^{n}$ be the polynomial map in \autoref{def:KI-generator} defined using the design $\mathcal{S}$ and the polynomial $g_{\sigma}$ from the family of hard polynomials  $\mathcal{G}$.

\begin{claim}
    $\det(M_\varphi^T M_\varphi) \circ \ki_{g,\mathcal{S}}(\vecw) \not \equiv 0$
\end{claim}
\begin{proof}
    Recall that $\det(M_{\varphi}^T M_{\varphi})$ is non-zero precisely because the irreducible factor $G(T,\vecx,z)$ is the unique solution for the linear system. \cref{lem:preserving-root-properies-under-homomorphism} tells us that $\varphi(T,\ki_{g_\sigma,\mathcal{S}}(\vecw))$ is the unique truncated, non-degenerate, approximate $z$-root of $P(T,\ki_{g_\sigma,\mathcal{S}}(\vecw),z)$ of order $2D\cdot D_z+1$, satisfying $\varphi(T,\ki_{g_\sigma,\mathcal{S}}(\vecw)) = u = \varphi(0,\vecx)$. By \cref{thm:irreducibility-preservation}, we now know that $\ki_{g_\sigma,\mathcal{S}}(\vecw)$ preserves the irreducibility of factors, so $G(T,\ki_{g_\sigma,\mathcal{S}}(\vecw),z)$ is irreducible. Thus, if we had first applied $\ki_{g_\sigma,\mathcal{S}}(\vecw)$ to the polynomial $P(T,\vecx,z)$ and computed the truncated, non-degenerate, approximate root $\varphi(T,\ki_{g_\sigma,\mathcal{S}}(\vecw))$ of order $2D\cdot D_z +1$ starting from the same point $u \in \mathbb{K}$, \cref{lem:unique-minimal-polynomial} tells us that we would've still maintained uniqueness of solution for the linear system that computes a monic minimal polynomial of $\varphi(T,\ki_{g_\sigma,\mathcal{S}}(\vecw))$ of $T$-degree at most $D$ and $z$-degree exactly $D_z$. Moreover, the new linear system is going to be the same as the old linear system, except each $(\varphi)^{(i,l-j,r)}$ will be replaced by $(\varphi \circ \ki_{g_\sigma,\mathcal{S}}(\vecw))^{(i,l-j,r)}$; this follows because the transformation $\vecx \mapsto \ki_{g_\sigma,\mathcal{S}}(\vecw)$ preserves the $(T,z)$-degree and the $u$-degree of every monomial; in other words, the matrix $M_{\varphi \circ \ki_{g_\sigma,\mathcal{S}}(\vecw)}$ will be equal to $M_\varphi \circ \ki_{g_\sigma,\mathcal{S}}(\vecw)$. Thus, uniqueness of the linear system's solution implies that $\det(M_\varphi^T M_\varphi)(\vecx) \circ \ki_{g_\sigma,\mathcal{S}}(\vecw) \equiv \det(M_{\varphi \circ \ki_{g_\sigma,\mathcal{S}}(\vecw)}^T M_{\varphi \circ \ki_{g_\sigma,\mathcal{S}}(\vecw)}) \not\equiv 0$.
\end{proof}

Thus, to find a point $\vec{\gamma} \in \Q^n$ satisfying $\det(M_\varphi^T M_\varphi)(\vec{\gamma}) \neq 0$, we can first compose it with $\ki_{g_\sigma,\mathcal{S}}(\vecw)$ to get a degree $\poly(n)$ polynomial on $\mu = O(n^{2\varepsilon})$ variables. For this low-variate polynomial, we can use the brute force derandomization of the Polynomial Identity Lemma (\cref{lem:SZ}) to find a point $\Hat{\vec{\gamma}} \in \Q^\mu$ such that $\det(M_\varphi^T M_\varphi) \circ \ki_{g_{\sigma},\mathcal{S}}$ is nonzero at $\Hat{\vec{\gamma}}$; this takes time $\poly(s,D)^{O(n^{2\varepsilon})}$. Thus, $\vec{\gamma} := \ki_{g_\sigma,\mathcal{S}}(\Hat{\vec{\gamma}}) \in \Q^n$ will be a point where $\det(M_\varphi^T M_\varphi)(\vec{\gamma}) \neq 0$. We can then use \cref{lem:div-elimination-algorithm} to give a circuit of size $\poly(s,D)$ for each $B_{i,j}$, which implies that $G(T,\vecx,z) = B(T,z) =  \sum_{i=0}^{d}\sum_{j=0}^{D} B_{i,j} z^i T^j$ has a circuit of size $\poly(s,D)$.

\end{proof}

\begin{theorem}[Correctness of {\Cref{alg:univariate-to-factor}}]\label{thm:analysis-univariate-to-factor}
    For any $\Delta \in \N$ and $\varepsilon \in (0,0.5)$, \cref{alg:univariate-to-factor}, for sufficiently large $n$, satisfies the following.
    \begin{itemize}
        \item It takes as input a depth-$\Delta$ size-$s$ circuit $C_{\tilde{P}}$ computing a squarefree degree-$D$ polynomial $\tilde{P}(T,\vecx,z)= \prod_{j \in [m]} \tilde{G}_j(T,\vecx,z)$, where the $\tilde{G}_i$s are the irreducible factors of $\tilde{P}$. Further, $\tilde{P}(T,\vecx,z)$ is monic in $z$, $T$-regularized with respect to $\vecx$ and $C_{\tilde{P}}(0,\vecx,z) = C_{\tilde{P}}(0,\veczero,z)$ is squarefree.
        \item It also takes as input the univariate polynomial $\tilde{G}_j(0,\vecx,z) = \tilde{G}_j(0,\veczero,z)$ for each $j \in [m]$.
        \item It outputs a list $L = \{C_{\tilde{G}_1}(T,\vecx,z), \dots, C_{\tilde{G}_m}(T,\vecx,z)\}$, such that each $C_{\tilde{G}_i}(T,\vecx,z)$ is a circuit of size $\poly(s,D)$ and depth $\poly(D)$, which computes the irreducible factor $\tilde{G}_i(T,\vecx,z)$ of $\tilde{P}(T,\vecx,z)$.
        \item It runs in time $\poly(s,D)^{O(n^{2\varepsilon})}$.
    \end{itemize}
\end{theorem}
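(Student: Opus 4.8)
The plan is to analyze \cref{alg:univariate-to-factor} one irreducible factor at a time: the main loop treats each $\tilde{G}_j$ independently, so it suffices to show that a single iteration, on the inputs $C_{\tilde{P}}$ and $\tilde{G}_j(0,\veczero,z)$, correctly produces a $\poly(s,D)$-size, $\poly(D)$-depth circuit for $\tilde{G}_j(T,\vecx,z)$ in time $\poly(s,D)^{O(n^{2\varepsilon})}$; summing over the $m \leq D$ factors then yields the theorem. Throughout fix $j$, write $G := \tilde{G}_j$, $D_z := \deg_z G$, and $k := 2D\cdot D_z + 1 = \poly(D)$. Since $\tilde{P}$ is monic in $z$, Gauss' lemma (\cref{lem:gauss}) lets us take $G$ monic in $z$, so its leading $z$-coefficient is a nonzero constant and $\deg_z G(0,\veczero,z) = D_z$; this matters because that degree is the data used downstream to set up the minimal-polynomial linear system, so we must read it faithfully off the univariate input.

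First I would justify the univariate preprocessing. The polynomial $G(0,\veczero,z)$ divides $\tilde{P}(0,\veczero,z)$, which is squarefree by hypothesis, so $G(0,\veczero,z)$ is squarefree and its factorization into distinct monic irreducibles $\prod_{l} H_l(z)$ via \cref{thm:-LLL-univariate-factorization} runs in time polynomial in $D$ and in the bit-complexity of the coefficients of $G(0,\veczero,z)$ (which are part of the input). Picking an arbitrary $H := H_l$ and $\K := \Q[u]/H(u)$, every field operation in $\K$ is $\poly(D)$ arithmetic over $\Q$. Because $u$ is a simple root of $\tilde{P}(0,\veczero,z)$ (squarefreeness), we have $\tilde{P}(0,\veczero,u) = 0$ and $\partial_z \tilde{P}(0,\veczero,u) \neq 0$; viewing $\tilde{P}$ as an element of $\K[\vecx][T,z]$ and lifting with respect to $T$, \cref{lem:newton-iteration-linear-circuit-and-uniqueness} therefore produces, in time $\poly(s,D)$, a size-$\poly(s,D)$, depth-$\poly(D)$ circuit for the unique truncated, non-degenerate, approximate $z$-root $\varphi(T,\vecx) \in \K[T,\vecx]$ of $\tilde{P}$ of order $k$ with $\varphi(0,\vecx) = u$. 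A short argument via \cref{lem:unique-truncated-nondeg-approx-roots} (together with the existence of a power-series root of $G$ with constant term $u$, which is also a root of $\tilde{P} \bmod T^k$) shows this $\varphi$ is simultaneously the approximate $z$-root of $G$ of order $k$, so we may regard it as arising from $G$ as the algorithm states while only ever touching the circuit $C_{\tilde{P}}$.

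Next I would feed $\varphi$ into \cref{thm:algo-circuit-min-poly-of-approx-root}, all of whose hypotheses now hold: $\tilde{P}$ is $T$-regularized and monic in $z$ of total degree $D$; $G$ is an irreducible factor of $\tilde{P}$ with $z$-degree $D_z$; $H$ is an irreducible factor of $G(0,\vecx,z) = G(0,\veczero,z)$; and the circuit for $\varphi$ is a truncated approximate $z$-root of $\tilde{P}$ of order $2D\cdot D_z + 1$ with $\varphi(0,\vecx) = u$ a root of $G(0,\vecx,z)$ over $\K$. With the fixed generator $\ki_{g_\sigma,\mathcal{S}}$, that theorem returns a division-free circuit over $\Q$ of size $\poly(s,D)$, and by \cref{lem:unique-minimal-polynomial} this circuit computes exactly $G = \tilde{G}_j$; its depth is $\poly(D)$ since it is obtained from the $\poly(D)$-depth circuit for $\varphi$ by interpolation (\cref{cor:interpolation-consequences}) and Strassen's division elimination (\cref{lem:div-elimination-algorithm}), each of which increases depth only polynomially. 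The running time of this call is $\poly(s,D)^{O(n^{2\varepsilon})}$, dominated by the brute-force derandomization of \cref{lem:SZ} applied to the $O(n^{2\varepsilon})$-variate polynomial $\det(M_\varphi^T M_\varphi)\circ\ki_{g_\sigma,\mathcal{S}}$.

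Finally, aggregating over the $m \leq D$ loop iterations, the output list $L$ consists of $m$ circuits of size $\poly(s,D)$ and depth $\poly(D)$ computing the distinct irreducible factors $\tilde{G}_1, \dots, \tilde{G}_m$ of $\tilde{P}$, and the total running time is $m \cdot \poly(s,D)^{O(n^{2\varepsilon})} = \poly(s,D)^{O(n^{2\varepsilon})}$. The substantive work is already packaged in \cref{thm:algo-circuit-min-poly-of-approx-root} and \cref{lem:unique-minimal-polynomial}; the only genuinely delicate points left for this proof are the ``change of variable being lifted'' bookkeeping (running Newton iteration with respect to $T$ rather than $\vecx$, over the extension $\K$, using only the circuit $C_{\tilde{P}}$) and checking that the $z$-degree and non-degeneracy data needed downstream are extracted correctly from the univariate projections $\tilde{G}_j(0,\veczero,z)$. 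Everything else is routine hypothesis-checking and size/time accounting.
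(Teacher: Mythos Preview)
Your proposal is correct and follows essentially the same approach as the paper: analyze one iteration of the loop, invoke \cref{thm:-LLL-univariate-factorization} for the univariate factorization, \cref{lem:newton-iteration-linear-circuit-and-uniqueness} for the approximate root, and \cref{thm:algo-circuit-min-poly-of-approx-root} for the division-free circuit, then sum over the $m\leq D$ iterations. Your write-up is in fact more careful than the paper's in one useful respect: you explicitly note that Newton iteration must be run using the circuit $C_{\tilde{P}}$ (since no circuit for $\tilde{G}_j$ is available) and justify via uniqueness (\cref{lem:unique-truncated-nondeg-approx-roots}) that the resulting $\varphi$ is also the approximate root of $\tilde{G}_j$, a point the paper's proof and algorithm description leave implicit.
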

\begin{proof}
    The algorithm iterates over each $\tilde{G}_j$, so let us focus on one such $\tilde{G}_j$. By \cref{thm:-LLL-univariate-factorization}, \cref{alg3-line:univariate-factorization} of the algorithm gives the correct factorization of $\tilde{G_j}(0,\veczero,z)$ into its irreducible factors, and it runs in time $\poly(\deg(\tilde{G_j}(0,\veczero,z)))$. By \cref{lem:newton-iteration-linear-circuit-and-uniqueness}, \cref{alg3-line:newton-iteration} will correctly compute a truncated, non-degenerate, approximate $z$-root $\varphi(T,\vecx) \in \mathbb{K}[T,\vecx]$ of $\tilde{G}_j(T,\vecx,z)$ of order $2D\cdot D_z+1$ such that $\varphi(0,\vecx) = \varphi(0,\veczero) = u \in \mathbb{K}$. Moreover, it runs in time $\poly(s,D)$ and outputs a circuit for $\varphi(T,\vecx)$ of size $\poly(s,D)$. Finally, by \cref{thm:algo-circuit-min-poly-of-approx-root}, \cref{alg3-line:lin-system-div-free-ckt} will correctly compute a circuit $C_{\tilde{G_j}}$ for $\tilde{G_j}$, of size $\poly(s,D)$, in time $\poly(s,D)^{O(n^{2\varepsilon})}$.
\end{proof}

\subsection{Analysis of {\Cref{alg:factors-squarefree}}}
\begin{theorem}[Correctness of {\Cref{alg:factors-squarefree}}]\label{thm:analysis-factors-squarefree}
    For any $\Delta \in \N$ and $\varepsilon \in (0,0.5)$, \cref{alg:factors-squarefree}, for sufficiently large $n$, satisfies the following.
    \begin{itemize}
        \item The input to the algorithm is a depth-$\Delta$ size-$s$ circuit $C_{P}$ computing the squarefree polynomial $P(\vecx)= \prod_{j \in [m]} G_j(\vecx)$, where the $G_j$s are the irreducible factors of $P$.
        \item The output of the algorithm is a list $L = \{C_{G_1}(\vecx), \dots, C_{G_m}(\vecx)\}$, such that each $C_{G_j}(\vecx)$ is a circuit of size $\poly(s,D)$ and depth $\poly(D)$, which computes the irreducible factor $G_j(\vecx)$.
        \item The algorithm runs in time $\poly(s,D)^{O(n^{2\varepsilon})}$.
    \end{itemize}
\end{theorem}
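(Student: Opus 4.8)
The proof traces \Cref{alg:factors-squarefree} line by line, using \Cref{thm:irreducibility-preservation} and the correctness of \Cref{alg:univariate-to-factor} (\Cref{thm:analysis-univariate-to-factor}) as the two main ingredients; everything else is checking that each step is deterministic, runs in the stated time, and produces input satisfying the hypotheses of the next step. First I would analyze the preprocessing on lines~\ref{algo2-line:monic-shift}--\ref{algo2-line:discriminant-shift-and-T-regularize}. Since $P\neq 0$, the top homogeneous part $\homog_D[P]$ is a nonzero degree-$D$ polynomial, and by \Cref{cor:interpolation-consequences} it has a constant-depth $\poly(s,D)$-size circuit, so \Cref{thm:lst-PIT} finds $\veca$ with $\delta=\homog_D[P](\veca)\neq 0$ in time $\poly(s,D)^{O(n^{\varepsilon})}$. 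As the coefficient of $z^D$ in $P(\vecx+\veca z)$ is exactly $\homog_D[P](\veca)$, the polynomial $\Hat{P}=P(\vecx+\veca z)/\delta$ is monic in $z$; and since $x_i\mapsto x_i+a_i z$ is an automorphism of $\Q[\vecx,z]$ fixing $z$ and each $G_j$ stays irreducible in $\Q[\vecx,z]$ (it has $z$-degree $0$), $\Hat{P}=\prod_j\Hat{G}_j$ with the $\Hat{G}_j$ distinct, irreducible, and monic in $z$ (using $\homog_D[P]=\prod_j\homog_{\deg G_j}[G_j]$, so every normalizing scalar is nonzero). Next, $\operatorname{Disc}_z(\Hat{P})=\Res{z}{\Hat{P}}{\partial_z\Hat{P}}$ has a constant-depth circuit by \Cref{thm:resultant-constant-depth-andrews-wigderson} (fed the coefficients of $\Hat{P},\partial_z\Hat{P}$, themselves constant-depth by \Cref{cor:interpolation-consequences}); it is nonzero because $\Hat{P}$ is monic and squarefree in $\Q[\vecx,z]$, hence squarefree in $\Q(\vecx)[z]$ by \Cref{lem:gauss}, hence coprime with $\partial_z\Hat{P}$ over $\Q(\vecx)$ by \Cref{thm:resultants-gcd}. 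So \Cref{thm:lst-PIT} finds $\vecb$ with $\operatorname{Disc}_z(\Hat{P})(\vecb)\neq 0$, and I would then verify that $\tilde{P}(T,\vecx,z)=\Hat{P}((T\vecx)+\vecb,z)$ is monic in $z$, $T$-regularized (each $x_i$ enters only as $T x_i$, so $\tilde{P}(0,\vecx,z)=\Hat{P}(\vecb,z)=\tilde{P}(0,\veczero,z)\in\Q[z]$), with $\tilde{P}(0,\veczero,z)$ squarefree (its discriminant equals $\operatorname{Disc}_z(\Hat{P})(\vecb)\neq 0$, so \Cref{thm:discriminant-squarefreeness} applies), and $\tilde{P}=\prod_j\tilde{G}_j$ with $\tilde{G}_j=\Hat{G}_j((T\vecx)+\vecb,z)$ distinct and irreducible --- irreducibility follows since $\tilde{G}_j$ is monic in $z$ over the UFD $\Q[T,\vecx]$, so by \Cref{lem:gauss} a nontrivial monic factorization would live in $\Q[T,\vecx][z]$, and specializing $T=1$ would contradict irreducibility of the shift $\Hat{G}_j(\vecx+\vecb,z)$ of $\Hat{G}_j$. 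Thus $C_{\tilde{P}}$, of depth $\Delta+O(1)$ and size $\poly(s,D)$, satisfies the hypotheses of both \Cref{thm:irreducibility-preservation} and \Cref{thm:analysis-univariate-to-factor}.

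Next (lines~\ref{algo2-line:ki-map-construct}--\ref{algo2-line:factorize-n^eps-variate}) I would construct $\ki_{g_\sigma,\mathcal{S}}$ using \Cref{lem:explicit-designs} and \Cref{LST-hardness}, which takes time $\poly(n,2^{\mu})=\exp(O(n^{2\varepsilon}))$ since $\mu=O(n^{2\varepsilon}/\log n)$. Applying \Cref{thm:irreducibility-preservation} to $C_{\tilde{P}}$ (legitimate for $n$ large, as assumed), the irreducible factors of $\tilde{P}(T,\ki_{g_\sigma,\mathcal{S}}(\vecw),z)$ are precisely $\{\tilde{G}_j(T,\ki_{g_\sigma,\mathcal{S}}(\vecw),z)\}_{j\in[m]}$. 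These are monic in $z$ and pairwise distinct: if two agreed, then setting $T=0$ and using that each $\tilde{G}_j$, being a monic factor of the $T$-regularized monic $\tilde{P}$, is itself $T$-regularized (its $T=0$ specialization is a monic factor of a univariate over $\Q$, hence lies in $\Q[z]$), we would obtain two equal factors of the squarefree $\tilde{P}(0,\veczero,z)$, a contradiction. The same $T$-regularization gives $\tilde{G}_j(0,\ki_{g_\sigma,\mathcal{S}}(\vecw),z)=\tilde{G}_j(0,\veczero,z)$. Now $\tilde{P}(T,\ki_{g_\sigma,\mathcal{S}}(\vecw),z)$ is monic in $z$, squarefree (a product of distinct irreducibles), of degree $\poly(n)$ and on $\mu+2=O(n^{2\varepsilon})$ variables, and its specialization at $T=0$ equals the squarefree $\tilde{P}(0,\veczero,z)$; so after expanding to the dense representation (of size $\poly(s,D)^{O(n^{2\varepsilon})}$), \Cref{thm:dense-rep-factorization} factors it in time $\poly(s,D)^{O(n^{2\varepsilon})}$, and specializing each factor at $T=0$ yields the univariates $\tilde{G}_j(0,\veczero,z)$.

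Lines~\ref{algo2-line:run-algo3}--\ref{algo2-line:undo-preprocessing} then feed $C_{\tilde{P}}$ and $\{\tilde{G}_j(0,\veczero,z)\}_{j\in[m]}$ to \Cref{alg:univariate-to-factor}; by \Cref{thm:analysis-univariate-to-factor} this returns circuits $C_{\tilde{G}_j}(T,\vecx,z)$ of size $\poly(s,D)$ and depth $\poly(D)$ for every irreducible factor, in time $\poly(s,D)^{O(n^{2\varepsilon})}$. Finally, since $\tilde{G}_j(1,\vecx-\vecb,z)=\Hat{G}_j(\vecx,z)$ and $\Hat{G}_j(\vecx,0)=G_j(\vecx)/\homog_{\deg G_j}[G_j](\veca)$, the circuit $C_{G_j}(\vecx)=C_{\tilde{G}_j}(1,\vecx-\vecb,0)$ computes $G_j$ up to the known nonzero scalar $\homog_{\deg G_j}[G_j](\veca)$ (immaterial for an irreducible factor, and rescalable if desired), and it has size $\poly(s,D)$ and depth $\poly(D)$. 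Summing the costs --- lines~\ref{algo2-line:monic-shift}--\ref{algo2-line:discriminant-shift-and-T-regularize} cost $\poly(s,D)^{O(n^{\varepsilon})}$, line~\ref{algo2-line:ki-map-construct} costs $\exp(O(n^{2\varepsilon}))$, and lines~\ref{algo2-line:factorize-n^eps-variate} and~\ref{algo2-line:run-algo3} cost $\poly(s,D)^{O(n^{2\varepsilon})}$ --- the total running time is $\poly(s,D)^{O(n^{2\varepsilon})}$, as claimed.

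With \Cref{thm:irreducibility-preservation} and \Cref{thm:analysis-univariate-to-factor} available as black boxes, essentially all the remaining content is in the preprocessing: making $\tilde{P}$ monic in $z$, $T$-regularized, and simultaneously making both $\tilde{P}$ and $\tilde{P}(0,\veczero,z)$ squarefree, all \emph{deterministically} --- this is exactly where the Andrews--Wigderson constant-depth circuit for the discriminant (\Cref{thm:resultant-constant-depth-andrews-wigderson}) and the deterministic constant-depth PIT (\Cref{thm:lst-PIT}) are needed --- together with keeping careful track that the bijection of irreducible factors under $\ki_{g_\sigma,\mathcal{S}}$ and the various $T=0$ projections line up so that the univariates passed to \Cref{alg:univariate-to-factor} are the correct $\tilde{G}_j(0,\veczero,z)$. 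I expect no single step to be a serious obstacle; the most delicate part is the bookkeeping around the factor bijection, the preservation of $T$-regularization under taking monic factors, and the final normalizations.
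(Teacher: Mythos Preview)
Your proposal is correct and follows essentially the same line-by-line approach as the paper's proof, invoking \Cref{thm:lst-PIT}, \Cref{thm:resultant-constant-depth-andrews-wigderson}, \Cref{thm:irreducibility-preservation}, \Cref{thm:dense-rep-factorization}, and \Cref{thm:analysis-univariate-to-factor} at the same places and for the same reasons. If anything, you are more careful than the paper in a few spots: you explicitly argue irreducibility of the $\tilde{G}_j$ via the $T=1$ specialization (the paper just cites \cite[Lemma~B.7]{KRSV}), you verify distinctness of the $\tilde{G}_j(T,\ki_{g_\sigma,\mathcal{S}}(\vecw),z)$ via the squarefree $T=0$ projection, and you note the harmless scalar $\homog_{\deg G_j}[G_j](\veca)$ in the final undo step, which the paper leaves implicit.
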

\begin{proof}    
    The coefficient of $z^D$ in $\Hat{P}(\vecx,z) := P(\vecx + (\veca \cdot z))/\delta$ is $\homog_D[P]/\delta = 1$, thus $\Hat{P}$ is monic in $z$, and $\Hat{P}$ retains the squarefreeness of $P$. Moreover, \cref{cor:interpolation-consequences} tells us that $\homog_D[P]$ has a size $\poly(s,D)$ and depth $(\Delta + O(1))$ circuit, so \cref{thm:lst-PIT} outputs $\veca$ satisfying $\homog_D[P](\veca) \neq 0$, in time $\poly(s,D)^{O(n^{\varepsilon})}$. 
    
    Since $\Hat{P}$ is monic in $z$ and squarefree, \cref{thm:discriminant-squarefreeness} along with \cref{lem:gauss} tells us that $\operatorname{Disc}_z(\Hat{P})(\vecx) \not\equiv 0$. In \cref{algo2-line:discriminant-shift-and-T-regularize}, $\vecb$ is chosen to ensure that $\operatorname{Disc}_z(\Hat{P})(\vecb) \neq 0$. For $\tilde{P}(T,\vecx,z) := \Hat{P}((T\cdot \vecx)+\vecb,z)$, $\operatorname{Disc}_z(\tilde{P}(0,\vecx,z)) =\operatorname{Disc}_z(\tilde{P}(0,\veczero,z)) = \operatorname{Disc}_z(\Hat{P})(\vecb) \neq 0$, which implies that $\tilde{P}(0,\vecx,z) \in \Q[z]$ is squarefree. Moreover, \cref{thm:resultant-constant-depth-andrews-wigderson} (along with an application of \cref{cor:interpolation-consequences} for the complexity of $\frac{\partial \Hat{P}}{\partial z}$) tells us that $\operatorname{Disc}_z(\Hat{P}) = \Res{z}{\Hat{P}}{\frac{\partial \Hat{P}}{\partial z}}$ has a size $\poly(s,D)$ and depth $(\Delta + O(1))$ circuit. Thus, we can again use \cref{thm:lst-PIT} to find $\vecb$ in time $\poly(s,D)^{O(n^{\varepsilon})}$.\\
    In \cref{algo2-line:ki-map-construct}, the map $\ki_{g_\sigma,\mathcal{S}}$ can be constructed in time $\exp(O(n^{\varepsilon}))$ by using \cref{lem:explicit-designs}.\\
    In \cref{algo2-line:factorize-n^eps-variate}, $\tilde{P}(T,\ki_{g_\sigma, \mathcal{S}}(\vecw),z)$ is an $O(n^{2\varepsilon})$-variate polynomial. Using \cref{thm:dense-rep-factorization}, we can factorize $C_{\tilde{P}}(T,\ki_{g,\mathcal{S}}(\vecw),z)$ in time $\poly(s,D)^{O(n^{2\varepsilon})}$. For each $j \in [m]$, let $\tilde{G}_j(T,\vecx,z)$ denote $G_j((T\cdot \vecx) + (\veca \cdot z) + \vecb)$; $\tilde{G_j}$ must be irreducible (this is standard in the literature; for a proof, see \cite[Lemma B.7]{KRSV}). 
    
    By \cref{thm:irreducibility-preservation}, for each $j \in [m]$, $\tilde{G}_j(T,\ki_{g,\mathcal{S}}(\vecw),z)$ is an irreducible factor of $\tilde{P}(T,\ki_{g_\sigma,\mathcal{S}}(\vecw),z)$. Thus, for $j \in [m]$, the factorization in \cref{algo2-line:factorize-n^eps-variate} will compute a circuit $C_{\tilde{G}_j}(T,\ki_{g,\mathcal{S}}(\vecw),z)$ that computes $\tilde{G}_j(T,\ki_{g,\mathcal{S}}(\vecw),z)$, satisfying the property that $\tilde{G}_j(0,\ki_{g,\mathcal{S}}(\vecw),z) = \tilde{G}_j(0,\veczero,z)$. 
    
    By \cref{thm:analysis-univariate-to-factor}, \cref{alg:univariate-to-factor} correctly computes the list $L' = \{C_{\tilde{G}_1}(T,\vecx,z), \dots, C_{\tilde{G}_m}(T,\vecx,z)\}$ in time $\poly(s,D)^{O(n^{\varepsilon})}$. \cref{algo2-line:undo-preprocessing} computes $C_{G_j}(\vecx) = C_{\tilde{G}_1}(1,\vecx-\vecb,0)$ for each $G_j(\vecx)$ correctly. \\
    Overall, the algorithm takes time $\poly(s,D)^{O(n^{\varepsilon})}$ and outputs circuits of size $\poly(s,D)$ for each irreducible factor $G_j(\vecx)$ of $P(\vecx)$. 
\end{proof}

\subsection{Analysis of {\Cref{alg:all-factors}}}

\mainalgotheorem* 
\begin{proof}
    \cref{alg:all-factors}, instantiated with parameters $\Delta$ and $\varepsilon$, is the algorithm $\mathcal{A}_{\Delta, \varepsilon}$ claimed in the theorem statement. \cref{thm:andrews-wigderson-squarefree-decomposition} ensures the correctness of \cref{algo1-line:andrews-wigderson-squarefree} so that for each $i \in [r]$, $P_i(\vecx)$ corresponds to the product of irreducible factors of $P$ that have multiplicity $i$ in the factorization (and $r$ is the maximum multiplicity of an irreducible factor). The algorithm from \cref{thm:andrews-wigderson-squarefree-decomposition} runs in time $\poly(s,D)$ with oracle access to a PIT algorithm for constant-depth circuits; replacing each oracle access by the algorithm in \cref{thm:lst-PIT} results in a running time of $\poly(s,D)^{O(n^{\varepsilon})}$.  For each $P_i(\vecx)$, \cref{thm:analysis-factors-squarefree} guarantees that the list $L_i$ output by \cref{alg:factors-squarefree} in \cref{algo1-line:run-algo2-squarefree} is exactly a list of circuits computing the irreducible factors of $P_i$; moreover, it runs in time $\poly(s,n^D)^{O(n^{2\varepsilon})}$. Since factors of $P_i$ have multiplicity $i$ in $P$, \cref{algo1-line:add-multiplicity-info} adds the right multiplicity information along with each circuit for factors of $P_i$. Every irreducible factor of $P(\vecx)$ must occur as an irreducible factor in $P_i(\vecx)$ for some $i \in [r]$; thus, every irreducible factor along with its multiplicity will be included in the output list. 
\end{proof}

\ifblind
\else
\paragraph*{Acknowledgements:} The discussions leading to this work started when a subset of the authors were at the workshop on Algebraic and Analytic Methods in Computational Complexity (Dagstuhl Seminar 24381) at Schloss Dagstuhl. We are thankful to the organisers of the workshop and to the staff at Schloss Dagstuhl for the wonderful collaborative atmosphere that facilitated these discussions. 

\fi

{\let\thefootnote\relax
\footnotetext{\textcolor{\gitinfonotecolour}{\gitinfonote \easteregg}
}}
\bibliographystyle{customurlbst/alphaurlpp}
\bibliography{crossref,references}

\end{document}